\newtheorem{theorem}{Theorem}[section]
\newtheorem{corollary}[theorem]{Corollary}
\newtheorem{lemma}[theorem]{Lemma}
\newtheorem{claim}[theorem]{Claim}
\newtheorem{fact}[theorem]{Fact}
\theoremstyle{definition}
\newtheorem{definition}[theorem]{Definition}
\newenvironment{fminipage}%
{\begin{Sbox}\begin{minipage}}%
		{\end{minipage}\end{Sbox}\fbox{\TheSbox}}
\def\prob#1#2{\mbox{Pr}_{#1}\left[ #2 \right]}
\def\expec#1#2{{\mathbb{E}}_{#1}\left[ #2 \right]}
\def\var#1#2{\mbox{\bf Var}_{#1}\left[ #2 \right]}
\def\defeq{\stackrel{\mathrm{def}}{=}}
\def\pr#1{\left( #1 \right ) }
\def\abs#1{\left|#1  \right|}
\def\norm#1{\left\| #1 \right\|}
\renewcommand\Pr{\boldsymbol{Pr}}
\newcommand\schur{\textsc{Sc}}
\newcommand\cchi{\boldsymbol{\chi}}
\renewcommand\deg{\boldsymbol{\mathit{d}}}
\newcommand\pp{\boldsymbol{\mathit{p}}}
\newcommand\ww{\boldsymbol{\mathit{w}}}
\newcommand\yy{\boldsymbol{\mathit{y}}}
\newcommand\zz{\boldsymbol{\mathit{z}}}
\newcommand\xx{\boldsymbol{\mathit{x}}}
\renewcommand\AA{\boldsymbol{\mathit{A}}}
\newcommand\DD{\boldsymbol{\mathit{D}}}
\newcommand\II{\boldsymbol{\mathit{I}}}
\newcommand\MM{\boldsymbol{\mathit{M}}}
\newcommand\LL{\boldsymbol{\mathit{L}}}
\newcommand\Otil{\widetilde{O}}
\newcommand\xhat{\widehat{\mathit{x}}}
\newcommand\supp{\text{supp}}
\newcommand\ttau{\boldsymbol{\mathit{\tau}}}
\newcommand\ttaubar{\boldsymbol{\overline{\mathit{\tau}}}}
\newcommand\ttautil{\boldsymbol{\widetilde{\mathit{\tau}}}}
\newcommand\er{\mathcal{R}_{eff}}
\newcommand\ertil{\widetilde{\mathcal{R}}_{eff}}
\newcommand\amount{s}
\newcommand{\levscore}[1]{\ttaubar_{#1}}
\newcommand{\levscoreapprox}[1]{\ttautil_{#1}}
\newcommand{\levscoreexact}[1]{\ttaubar_{#1}}
\newcommand{\totaltrees}[1]{\mathcal{T}_{#1}}
\newcommand{\hatT}{\widehat{T}}
\newcommand{\hatF}{\widehat{F}}
\newcommand{\detp}{{\det}_{+}}
\renewcommand{\sc}[2]{\textsc{Sc}\left(#1, #2\right)}
\renewcommand\ss{\boldsymbol{\mathit{s}}}
\tikzstyle{smallvertex}=[circle,fill=black!25,minimum size=6,inner sep=0pt,color=blue]
\tikzstyle{vertex}=[circle,fill=black!25,minimum size=10,inner sep=0pt,color=blue]
\tikzstyle{treeedge} = [draw,line width = 1.5mm, ,-,color=red]
\tikzstyle{extraedge} = [draw,line width = 0.9mm, ,-,color=green]
\tikzstyle{edge} = [draw,line width = 0.25mm,-,color=cyan]
\tikzstyle{heavyedge} = [draw,line width = 1.00mm,-,color=cyan]
\tikzstyle{vecTinyArrow} = [thick, decoration={markings,mark=at position
\tikzstyle{vecNarrowArrow} = [thick, decoration={markings,mark=at position
\tikzstyle{vecArrow} = [thick, decoration={markings,mark=at position
\newcommand{\nosemic}{\renewcommand{\@endalgocfline}{\relax}}
\newcommand{\dosemic}{\renewcommand{\@endalgocfline}{\algocf@endline}}
\let\oldnl\nl
\newcommand{\nonl}{\renewcommand{\nl}{\let\nl\oldnl}}
\begin{document}
	
\title{Determinant-Preserving Sparsification of SDDM
Matrices with Applications to Counting and Sampling Spanning Trees}
	
\author{
David Durfee\thanks{Georgia Institute of Technology. \texttt{email:ddurfee@gatech.edu}}
\and
John Peebles\thanks{Massachusetts Institute of Technology. \texttt{email:jpeebles@mit.edu}}
\and
Richard Peng\thanks{Georgia Institute of Technology.
\texttt{email:rpeng@cc.gatech.edu}}
\and
Anup B. Rao\thanks{Georgia Institute of Technology. \texttt{email:anup.rao@gatech.edu}}
}

\newcommand{\one}{\mathbf{1}}

	\maketitle
	\begin{abstract}
		We show variants of spectral sparsification routines can preserve the total
spanning tree counts of graphs, which by Kirchhoff's matrix-tree theorem, is
equivalent to determinant of a graph Laplacian minor, or equivalently, of any SDDM matrix.
Our analyses utilizes this combinatorial connection to bridge between statistical
leverage scores / effective resistances and the analysis of random graphs
by [Janson, Combinatorics,~Probability~and~Computing~`94].
This leads to a routine that in quadratic time, sparsifies a graph down to about
$n^{1.5}$ edges in ways that preserve both the determinant and the distribution
of spanning trees (provided the sparsified graph is viewed as a random object).
Extending this algorithm to work with Schur complements and approximate
Choleksy factorizations leads to algorithms for counting and
sampling spanning trees which are nearly optimal for dense graphs.

We give an algorithm that computes a $(1 \pm \delta)$ approximation to the determinant
of any SDDM matrix with constant probability in about $n^2 \delta^{-2}$ time.
This is the first routine for graphs that outperforms general-purpose routines for computing
determinants of arbitrary matrices. We also give an algorithm that generates in about $n^2 \delta^{-2}$ time a spanning tree of
a weighted undirected graph from a distribution with total variation
distance of $\delta$ from the $\ww$-uniform distribution .

	\end{abstract}
	
	\section{Introduction}
\label{sec:introduction}

The determinant of a matrix is a fundamental quantity in numerical
algorithms due to its connection to the rank of the matrix and
its interpretation as the volume of the ellipsoid corresponding
of the matrix.
For graph Laplacians, which are at the core of spectral graph
theory and spectral algorithms, the matrix-tree theorem gives that
the determinant of the minor obtained by removing one row and the corresponding column equals
to the total weight of all the spanning trees in the graph \cite{Kirchhoff47} .
Formally on a weighted graph $G$ with $n$ vertices we have:
\[
\det \pr{ \LL^{G}_{1:n-1, 1:n-1}} = \totaltrees{G},
\]
where $\LL^{G}$ is the graph Laplacian of $G$ and
and $\totaltrees{G}$ is the total weight of all the spanning trees of $G$.
As the all-ones vector is in the null space of $\LL^{G}$,
we need to drop its last row and column and work with
$\LL^{G}_{1:n-1, 1:n-1}$, which is precisely the definition
of SDDM matrices in numerical analysis~\cite{SpielmanTengSolver:journal}.
The study of random spanning trees builds directly upon this
connection between tree counts and determinants, and also plays
an important role in graph theory~\cite{GoyalRV09,AsadpourGMGS10,FungHHP11}.

While there has been much progress in the development of faster
spectral algorithms, the estimation of determinants encapsulates
many shortcomings of existing techniques.
Many of the nearly linear time algorithms rely on sparsification
procedures that remove edges from a graph while provably
preserving the Laplacian matrix as an operator,
and in turn, crucial algorithmic quantities such as
cut sizes, Rayleigh quotients, and eigenvalues.
The determinant of a matrix on the other hand is the product of
all of its eigenvalues.
As a result, a worst case guarantee of $1 \pm (\epsilon / n)$ per
eigenvalue is needed to obtain a good overall approximation, and this
in turn leads to additional factors of $n$ in the number of edges
needed in the sparse approximate.

Due to this amplification of error by a factor of $n$,
previous works on numerically approximating determinants
without dense-matrix multiplications~\cite{BoutsidisDKZ15:arxiv,
HunterAB14:arxiv,HanMS15} usually focus on the log-determinant,
and (under a nearly-linear running time) give
errors of additive $\epsilon n$ in the log determinant estimate,
or a multiplicative error of $\exp(\epsilon n)$ for the determinant.
The lack of a sparsification procedure also led to the running
time of random spanning tree sampling algorithms to be limited
by the sizes of the dense graphs generated in intermediate
steps~\cite{KelnerM09,MadryST15,DurfeeKPRS16}.

In this paper, we show that a slight variant of spectral sparsification
preserves determinant approximations to a much higher accuracy
than applying the guarantees to individual edges.
Specifically, we show that sampling $\omega(n^{1.5})$ edges 
from a distribution given by leverage scores,
or weight times effective resistances, produces a sparser graph whose
determinant approximates that of the original graph.
Furthermore, by treating the sparsifier itself as a random object,
we can show that the spanning tree distribution produced by
sampling a random tree from a random sparsifier is close to the
spanning tree distribution in the original graph in total variation
distance.
Combining extensions of these algorithms with sparsification based
algorithms for graph Laplacians then leads to quadratic time algorithms
for counting and sampling random spanning trees, which are
nearly optimal for dense graphs with $m = \Theta(n^2)$.

This determinant-preserving sparsification phenomenon is surprising
in several aspects: because we can also show---both experimentally and mathematically---that on the complete graph, about $n^{1.5}$ edges are necessary
to preserve the determinant, this is one of the first graph sparsification
phenomenons that requires the number of edges to be between $>>n$.
The proof of correctness of this procedure also hinges upon combinatorial
arguments based on the matrix-tree theorem in ways motivated by
a result for Janson for complete graphs~\cite{Janson94},
instead of the more common matrix-concentration bound based
proofs~\cite{SpielmanS11,Tropp12,CohenP15,Cohen16}.
Furthermore, this algorithm appears far more delicate than spectral
sparsification: it requires global control on the number of samples,
high quality estimates of resistances (which is the running time
bottleneck in Theorem~\ref{thm:SparsifyMain} below), and only holds
with constant probability.
Nonetheless, the use of this procedure into our
determinant estimation and spanning tree generation algorithms still demonstrates
that it can serve as a useful algorithmic tool.

\subsection{Our Results}

We will use $G = (V, E, \ww)$ to denote weighted
multigraphs, and $\deg_u \defeq \sum_{e:e \ni u} \ww_{e}$
to denote the weighted degree of vertex $u$.
The weight of a spanning tree in a weighed undirected
multigraph is:
\[
\ww\left(T \right) \defeq \prod_{e \in T} \ww_e.
\]

We will use $\totaltrees{G}$ to denote the total
weight of trees, $\totaltrees{G} \defeq
\sum_{T \in \mathcal{T}} \ww( T )$. Our key sparsification result can be described by the following
theorem:
\begin{theorem}
	\label{thm:SparsifyDeterminant}
	Given any graph $G$ and any parameter $\delta$, we can compute
	in $O(n^2 \delta^{-2})$ time a graph $H$ with $O(n^{1.5} \delta^{-2})$
	edges such that with constant probability we have
	\[
	\left( 1 - \delta \right) \totaltrees{G}
	\leq \totaltrees{H} \leq
	\left( 1 + \delta \right) \totaltrees{G}.
	\]
\end{theorem}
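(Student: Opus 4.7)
The plan is to reduce the theorem to a second-moment calculation on a leverage-score-weighted sampling scheme and then apply Chebyshev's inequality. I would form $H$ by drawing $s = \Theta(n^{1.5}\delta^{-2})$ independent edge samples, picking each edge $e$ with probability $p_e$ proportional to its leverage score $\levscore{e} = \ww_e \er(e)$ and assigning the sampled copy a reciprocally rescaled weight. Let $\widehat{\ww}_e$ denote the (random) total weight of all copies of $e$ in $H$; the normalization is chosen so that $\expec{}{\widehat{\ww}_e} = \ww_e$. Via Kirchhoff's expansion
\[
\totaltrees{H} = \sum_{T\text{ spanning tree of }G} \prod_{e\in T}\widehat{\ww}_e,
\]
the multinomial structure of the sample together with linearity of expectation produce $\expec{}{\totaltrees{H}} = \totaltrees{G}$: the expected contribution of each individual tree $T$ factors across its edges and equals $\ww(T)$. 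The theorem then follows from Chebyshev provided one can show
\[
\var{}{\totaltrees{H}} \leq c\,\delta^{2}\,\totaltrees{G}^{2}
\]
for a small constant $c$.

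The heart of the argument, and what I expect to be the main obstacle, is this variance bound; matrix concentration in the style of ordinary spectral sparsification is not available here, because per-edge multiplicative errors compound across the $n-1$ factors of every tree weight and produce $\mathrm{poly}(n)$ slack. I would instead expand $\expec{}{\totaltrees{H}^{2}}$ directly as a double sum over ordered pairs of spanning trees $(T,T')$ of $G$, group pairs by the overlap forest $F = T\cap T'$, and use the matrix-tree theorem on each group to rewrite the bookkeeping in terms of determinants of Laplacian minors of the contracted graph $G/F$. The key combinatorial identity, mirroring Janson's analysis of random spanning trees in the complete graph, is that the excess of the joint second moment over $\ww(T)\ww(T')$ is governed by products of leverage scores on the edges of the symmetric difference $T\triangle T'$. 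Telescoping those products against the identity $\sum_e \levscore{e} = n-1$ lets one collapse the resulting sums into a polynomial in $n/s$, and choosing $s = \Theta(n^{1.5}\delta^{-2})$ is exactly what drives this aggregate below $\delta^{2}\totaltrees{G}^{2}$. The constant-probability success in the theorem statement is a direct consequence of using Chebyshev rather than a concentration inequality: no exponential tails are available here, which is consistent with the paper's remark that the procedure ``only holds with constant probability.''

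The remaining work is algorithmic. Constant-factor leverage-score estimates $\levscoreapprox{e}$ must be computed for every $e$, because the variance calculation is sensitive to multiplicative errors in $\levscore{e}$ that would otherwise compound through each tree's $n-1$ factors; this is where the $O(n^{2}\delta^{-2})$ preprocessing enters, and it is the running-time bottleneck (a quadratic-time procedure for high-accuracy effective resistances, as noted in the introduction). After this step, generating $H$ requires only drawing $O(n^{1.5}\delta^{-2})$ independent samples from the precomputed distribution, which is subdominant, and the output graph then carries the claimed edge count automatically.
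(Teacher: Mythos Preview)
Your high-level plan---expand the second moment over pairs of trees, group by overlap, bound via leverage scores, then Chebyshev---is exactly the paper's strategy, but two details are off in ways that matter. First, the overlap governing the variance is the \emph{intersection} $T\cap T'$, not the symmetric difference: the ratio $\expec{}{\widehat{\ww}(T)\widehat{\ww}(T')}/(\ww(T)\ww(T'))$ picks up a factor of roughly $1 + (n-1)/(s\,\ttaubar_e)$ for each $e \in T\cap T'$, while edges in $T\triangle T'$ contribute nothing to the excess. The paper bounds $\sum_{|T_1\cap T_2|=k}\ww(T_1)\ww(T_2) \leq \totaltrees{G}^2 \cdot \frac{1}{k!}(n^2/m)^k$ via negative correlation of tree edges (Lemma~\ref{lem:IntersectionPairs}), and the Taylor series of $\exp$ collapses the sum over $k$. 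A side point: your exact unbiasedness $\expec{}{\totaltrees{H}}=\totaltrees{G}$ fails under multinomial sampling since the per-sample edge counts are dependent; the per-tree expectation is $\frac{(s)_{n-1}}{s^{n-1}}\ww(T)$, and the paper handles the resulting $\exp(-n^2/(2s))$ bias with a global rescaling of the sampled weights.

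Second, and more seriously, constant-factor leverage-score estimates do \emph{not} suffice for $s=O(n^{1.5}\delta^{-2})$. With $(1\pm\epsilon)$-approximate scores the variance ratio becomes $\exp(n^2\epsilon^2/s + O(n^3/s^2))$ (Lemma~\ref{lem:SecondMomentApprox}), so $\epsilon \leq n^{-1/4}$ is required. The reason the error is only quadratic in $\epsilon$---rather than compounding as $(1+\epsilon)^{n-1}$ through each tree's $n-1$ edges as you suggest---is that Foster's theorem pins $\sum_e \ttaubar_e = n-1$ exactly, which constrains $\sum_e \ttaubar_e^2$ and hence the aggregate distortion (Lemma~\ref{lem:SumOfSquaresUpper}). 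The $O(n^2\delta^{-2})$ runtime then arises not from computing accurate scores for every edge but from a two-stage rejection scheme (Section~\ref{subsec:RejectionSample}): sample first with crude $O(1)$-factor scores, then query the $\epsilon=n^{-1/4}$ oracle only on the $O(s)$ surviving edges, at cost $\widetilde{O}(n\epsilon^{-4}+s\epsilon^{-2}) = \widetilde{O}(n^2\delta^{-2})$ via Lemma~\ref{lem:ERDS}.
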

This implies that graphs can be sparsified in a manner that preserves the determinant, albeit to a density that is not nearly-linear in $n.$

We show how to make our sparsification 
routine to errors in estimating leverage scores, and how our scheme can
be adapted to implicitly sparsify dense objects that we do not have explicit
access to.
In particular, we utilize tools such as rejection sampling and
high quality effective resistance estimation via projections to
extend this routine to give determinant-preserving sparsification algorithms for Schur complements, which are intermediate states
of Gaussian elimination on graphs,
using ideas from the sparsification of random walk polynomials.

We use these extensions of our routine to obtain a variety
of algorithms built around our graph sparsifiers. Our two main algorithmic applications are as follows.
We achieve the first algorithm for estimating the determinant of an SDDM matrix that is faster than general purpose algorithms for the matrix determinant problem. Since the determinant of an SDDM m corresponds to the determinant of a graph Laplacian with one row/column
removed.

\begin{theorem}\label{thm:detApproxIntro}
Given an SDDM matrix $\MM$, there is a routine $\textsc{DetApprox}$ which in $\widetilde{O}\pr{ n^2\delta^{-2} }$ time  outputs $D$ such that $D = \pr{ 1 \pm {{\delta}}}\det(\MM)$ with high probability
\end{theorem}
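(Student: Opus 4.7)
The plan is to combine Theorem \ref{thm:SparsifyDeterminant} with recursive Schur complement elimination and a median-of-runs boosting step. Since $\MM$ is SDDM, it equals a principal $(n-1) \times (n-1)$ minor of some graph Laplacian $\LL^{G}$, so Kirchhoff's Matrix-Tree Theorem gives $\det(\MM) = \totaltrees{G}$. A single application of Theorem \ref{thm:SparsifyDeterminant} in $O(n^2 \delta^{-2})$ time produces a graph $H$ with $O(n^{1.5}\delta^{-2})$ edges for which $\totaltrees{H} = (1\pm\delta) \totaltrees{G}$ with constant probability. This alone, however, is not yet an algorithm, because extracting $\totaltrees{H}$ exactly via Kirchhoff's formula still costs $\Omega(n^{\omega})$ on an $n$-vertex graph.

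To close that gap I would iterate. For any vertex subset $F$ there is the multiplicative identity $\det(\MM) = \det(\MM_{F,F}) \cdot \det\bigl(\sc{\MM}{V\setminus F}\bigr)$, and the Schur complement of a graph Laplacian is the Laplacian of another weighted graph on $V \setminus F$. So at each level I would (i) sparsify the current graph via the Schur-complement extension of Theorem \ref{thm:SparsifyDeterminant} that the paper promises in its introduction, (ii) pick a subset $F$ containing roughly half of the remaining vertices for which $\det(\MM_{F,F})$ is cheap to evaluate (e.g.\ through an approximate Cholesky step restricted to $F$), and (iii) replace the working graph by the Schur complement on the other half while multiplying a running product by $\det(\MM_{F,F})$. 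After $O(\log n)$ halvings the remaining graph is of constant size and its determinant can be computed directly; the accumulated product is the final estimate.

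For the running time and error, at level $i$ the graph has $n_i = n/2^i$ vertices, so the sparsification call costs $O(n_i^2 \delta_i^{-2})$ and leaves $O(n_i^{1.5} \delta_i^{-2})$ edges, which keeps forming the next Schur complement within the same budget. Summing the geometric series gives total work $O(n^2 \delta^{-2})$, and choosing $\delta_i = \Theta(\delta/\log n)$ keeps the accumulated multiplicative error inside $1 \pm \delta$ at the cost of polylogarithmic factors that are absorbed into $\widetilde{O}$. To amplify the constant success probability to high probability, I would run $\Theta(\log n)$ independent copies of the whole pipeline and take the median of their $\log\det$ outputs; a standard Hoeffding-plus-median argument on random variables each concentrated within $\pm O(\delta)$ of $\log\det(\MM)$ then yields a $(1\pm\delta)$ multiplicative estimate of $\det(\MM)$ with probability $1 - n^{-c}$.

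The main obstacle, and the step where the paper's earlier machinery is essential, is carrying out step (i) on a Schur complement without ever materializing it, since Schur complements of sparse graphs are generically dense. This requires performing leverage-score / effective-resistance sampling against an implicit matrix, which in turn demands the high-accuracy resistance estimation via projections, the rejection-sampling framework, and the random-walk polynomial sparsification tools that the introduction highlights as extensions of Theorem \ref{thm:SparsifyDeterminant}. Getting those estimates simultaneously accurate enough (to retain the delicate determinant-preservation guarantee) and cheap enough per level is precisely what makes the overall $\widetilde{O}(n^2 \delta^{-2})$ budget achievable.
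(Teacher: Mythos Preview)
Your high-level recursion---split via the Schur complement identity, sparsify, recurse, and boost with a median of $\Theta(\log n)$ runs---matches the paper's architecture. But two steps in your chain do not go through as written.

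First, there is no way to evaluate $\det(\MM_{F,F})$ ``cheaply'' for a half-sized $F$. If $F$ were an independent set the block would be diagonal, but in a dense graph no independent set of size $\Theta(n)$ need exist, and an ``approximate Cholesky step'' does not yield a determinant. The paper instead takes $V_2$ to be a $1.1$-DD set (so the Schur complement onto $V_1$ can be sampled via short random walks), but $\LL_{[V_2,V_2]}$ is still a general SDDM matrix, and the algorithm recurses on \emph{both} $\LL^{V_1}$ and $\LL^{V_2}$. The recursion tree therefore has branching factor two, with $\Theta(n)$ total vertices per level.

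Second, your error accounting breaks. Theorem~\ref{thm:SparsifyDeterminant} gives a $(1\pm\delta_i)$ approximation only with some constant probability $c<1$. Even in your single-chain model, the probability that all $O(\log n)$ sparsifications succeed is $c^{O(\log n)}=n^{-\Omega(1)}$, so each full pipeline is correct with inverse-polynomial probability; the median-of-$O(\log n)$-copies argument requires each copy to be correct with probability $>1/2$ and therefore fails. The paper's fix is essential: it uses the \emph{variance} guarantee of \textsc{SchurSparse} (Theorem~\ref{thm:SchurSparse}) rather than a constant-probability guarantee, so that at each level the product of the (independent) approximate Schur determinants has relative variance $\exp(\sum_i \beta_i \delta')$. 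Allocating $\delta_i=\beta_i\delta'$ proportional to subproblem size (not $\delta/\log n$ uniformly) keeps $\sum_i\beta_i=O(1)$ per level, and a single Chebyshev step then gives a $(1\pm O(\sqrt{\delta'/p}))$ bound per level with failure probability $p$ (Lemma~\ref{lem:detErrorBoundEachLevel}). Setting $\delta'=\Theta(\delta^2/\log^3 n)$ and $p=\Theta(1/\log n)$ makes the union bound over $O(\log n)$ levels give constant overall success probability, after which your median boosting is correct.
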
 

A crucial thing to note which distinguishes the above guarantee from most other similar results is that we give a multiplicative approximation of the $\det(M)$. This is much stronger than giving a multiplicative approximation of $\log \det (M),$ which is what other work typically tries to achieve.

The sparsifiers we construct will also approximately preserve the spanning tree distribution, which we leverage to yield a faster algorithm for sampling random spanning trees. Our new algorithm improves upon the current fastest algorithm for general weighted graphs when one wishes to achieve constant---or slightly sub-constant---total variation distance.


\begin{theorem}\label{thm:spanningTreeAlgoIntro}
	Given an undirected, weighted graph $G=(V,E,\ww)$, there is a routine $\textsc{ApproxTree}$ which in expected time $\widetilde{O}\pr{n^2\delta^{-2}}$ outputs a random spanning tree from a distribution that has total variation distance $\leq \delta$ from the $\ww$-uniform distribution on $G$.
\end{theorem}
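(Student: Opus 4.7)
The plan is to follow the recursive Schur-complement framework from prior random spanning tree algorithms (notably \cite{DurfeeKPRS16} and \cite{MadryST15}) but replace the densification bottleneck with our determinant-preserving sparsifier. Recall that if we partition $V$ into a subset $F$ of vertices to be eliminated and its complement $C$, then the distribution of the intersection of a uniform random spanning tree with the edges incident to $F$ can be generated by recursing on two pieces: first sample the edges restricted to $F$ from the appropriately weighted forest distribution on the \emph{eliminated} part, and then sample a spanning tree of the Schur complement $\sc{G}{C}$, which by a classical identity is itself a weighted graph whose spanning-tree measure corresponds exactly to the projection of the uniform tree distribution onto $C$. The obstacle in previous implementations was that $\sc{G}{C}$ is generally dense even when $G$ is sparse, so working with the exact Schur complement at every recursion level dominates the runtime.

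First I would set up a divide-and-conquer recursion on $|V|$: at each level we pick a balanced subset $C$ of size roughly $n/2$, form an implicit representation of $\sc{G}{C}$, then apply the extension of Theorem~\ref{thm:SparsifyDeterminant} to Schur complements (via rejection sampling and projection-based effective-resistance estimation, as alluded to in the sparsification extension discussion) to obtain a sparse graph $H$ on $C$ with $\widetilde O(|C|^{1.5} \delta'^{-2})$ edges whose spanning-tree distribution is within TV distance $\delta'$ of that of $\sc{G}{C}$. We then recursively sample a tree $T_C$ of $H$ and, given $T_C$, sample the complementary forest on $F = V \setminus C$ conditional on $T_C$ (this conditional sampling is again a spanning tree problem on a smaller graph and can be handled by the same recursion or by a direct base case once the graph has $O(\text{polylog}(n))$ vertices, where exact sampling via the matrix-tree theorem takes negligible time).

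Next I would track error and runtime. For TV distance, the total error is bounded by the sum of the per-level TV errors introduced by sparsification (by the triangle inequality / data-processing inequality for TV distance under the coupling of the random sparsifier with the downstream random tree sampler), so setting $\delta' = \delta / \Theta(\log n)$ at each of the $O(\log n)$ levels suffices to make the overall TV distance at most $\delta$. For runtime, at a level with $k$ vertices, sparsification takes $\widetilde O(k^2 \delta'^{-2})$ time, and summing the geometric series across levels gives a total of $\widetilde O(n^2 \delta^{-2})$. Since Theorem~\ref{thm:SparsifyDeterminant} succeeds only with constant probability, I would repeat each call $O(\log n)$ times and use the determinant-estimation routine of Theorem~\ref{thm:detApproxIntro} to detect and reject sparsifiers that are not determinant-preserving, which preserves the expected-time bound up to logarithmic factors.

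The main obstacle I anticipate is rigorously propagating the TV-distance guarantee through the recursion in a way that is compatible with the fact that the sparsifier is a random object, and only approximately preserves the tree distribution when viewed as such. Concretely, one needs an argument that sampling a tree from a random sparsifier $H$ of $\sc{G}{C}$, and then extending to $F$ using the original graph $G$, produces a distribution close in TV to sampling from $G$ directly; this requires a conditioning/coupling argument showing that errors introduced on $C$ do not amplify when combined with the conditional forest sampler on $F$. A secondary technical difficulty is ensuring that the \emph{implicit} Schur complement (which we cannot afford to form explicitly) supports both high-accuracy leverage score estimates and the rejection-sampling-based access needed by the sparsifier within the stated time budget; this is handled by combining Laplacian solvers with the projection-based effective-resistance machinery already developed for the determinant theorem.
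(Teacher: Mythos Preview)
Your high-level plan---recursive Schur complements with determinant-preserving sparsification substituted for the exact complement---matches the paper's, and you correctly flag the dependence between the two recursive branches as the main obstacle. But the error accounting you sketch has genuine gaps that would prevent the proof from closing at $\widetilde O(n^2\delta^{-2})$.

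\textbf{Per-level TV does not decompose the way you claim.} Setting $\delta' = \delta/\Theta(\log n)$ and summing per-call TV errors over levels cannot work: at depth $L$ the recursion tree has $\Theta(2^L)$ nodes, so across the whole tree there are $\Theta(n)$ sparsification calls, and a naive sum of $\sqrt{\delta'}$-type TV errors blows up polynomially in $n$. The paper avoids this by never bounding per-call TV. Instead it allocates a \emph{variance} budget $\delta\cdot |V(G_j)|/\bar n$ to each call, observes that (because the tree probabilities factor across the calls within a level once the target tree $\hat T$ is fixed) the ratios of second to first moments multiply, so the log-variances add to $O(\delta)$ per level, and only \emph{then} converts the aggregated variance to a per-level TV bound of $O(\sqrt\delta)$ via Lemma~\ref{lem:invVarTVBound}. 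The order ``combine variances first, then take the square root'' is essential; your ``square root first, then sum'' loses a $\sqrt n$.

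\textbf{This forces $s\approx n^2$, not $n^{1.5}$.} The variance-combination step above has to be done \emph{conditioned on a fixed tree $\hat T$ being contained in the sparsifier}, because the decomposition over the recursion tree is indexed by $\hat T$ (Claim~\ref{ln:Independent} and Corollary~\ref{cor:LevelLProbability}). The relevant concentration is therefore on the \emph{inverse} probability $\Pr^{H}(\hat T)^{-1}=\totaltrees{H}/\ww^H(\hat T)$ conditioned on $\hat T\subseteq H$ (Lemma~\ref{lem:InvConc}), and that bound carries error $O(n^2/s)$, requiring $s\ge \Theta(n^2)$. Accordingly \textsc{SchurSparse} is called with $s=n^2\delta^{-1}$ and $\epsilon=0.1$, not with the $n^{1.5}\delta'^{-2}$ edges you propose; the simpler variance bound from Theorem~\ref{thm:SparsifyDeterminant} does not survive conditioning on $\hat T$.

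\textbf{The boosting step is both unnecessary and wrong.} The ``constant probability'' in Theorem~\ref{thm:SparsifyDeterminant} refers to the event $\totaltrees H\approx\totaltrees G$, derived from the variance bound via Chebyshev. But the TV guarantee (Lemma~\ref{lem:VarianceTV} and its refinement Lemma~\ref{lem:invVarTVBound}) uses the variance bound itself, which is a statement about the \emph{distribution} of $H$ and holds unconditionally---no boosting is needed. Worse, rejecting sparsifiers based on a determinant check changes the distribution over $H$ and invalidates the variance-to-TV argument; and of course any single fixed $H$ supports only its own subtrees, so it is \emph{never} TV-close to the tree distribution of $G$. The whole point is that the randomness of $H$ is part of the sampling procedure.
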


 
\subsection{Prior Work}



\subsubsection{Graph Sparsification}

In the most general sense, a graph sparsification procedure is a method for taking a potentially dense graph and returning a sparse graph called a \emph{sparsifier} that approximately still has many of the same properties of the original graph. It was introduced in \cite{EGIN97} for preserving properties related to minimum spanning trees, edge connectivity, and related problems. \cite{BK96} defined the notion of \emph{cut sparsification} in which one produces a graph whose cut sizes approximate those in the original graph. \cite{SpielmanTengSparsification:journal} defined the more general notion of \emph{spectral sparsification} which requires that the two graphs' Laplacian matrices approximate each other as quadratic forms.\footnote{If two graphs Laplacian matrices approximate each other as quadratic forms then their cut sizes also approximate each other.} In particular, this spectral sparsification samples $\widetilde{O}(n/\epsilon^2)$ edges from the original graph, yielding a graph with $\widetilde{O}(n/\epsilon^2)$ whose quadratic forms---and hence, eigenvalues---approximate each other within a factor of $(1 \pm \epsilon)$. This implies that their determinants approximate each other within $(1 \pm \epsilon)^{n}$. This is not useful from the perspective of preserving the determinant: since one would need to samples $\Omega(n^3)$ edges to get a constant factor approximation, one could instead exactly compute the determinant or sample spanning trees using exact algorithms with this runtime.

All of the above results on sparsification are for undirected graphs. Recently, \cite{CKPPRSV17} has defined a useful notion of sparsification for directed graphs along with a nearly linear time algorithm for constructing sparsifiers under this notion of sparsification.

\subsubsection{Determinant Estimation}

Exactly calculating the the determinant of an arbitrary matrix is known to be equivalent to matrix multiplication \cite{BS83}. For approximately computing the log of the determinant, \cite{ID11} uses the identity $\log(\det(A)) = \text{tr}(\log(B)) + \text{tr}(\log(B^{-1} A))$ to do this whenever one can find a matrix $B$ such that the $\text{tr}(\log(B)) = \log(\det(B))$ and $\text{tr}(\log(B^{-1} A))=\log(\det(B^{-1} A)$ can both be quickly approximated.\footnote{Specifically, they take $B$ as the diagonal of $A$ and prove sufficient conditions for when the log determinant of $B^{-1}A$ can be quickly approximated with this choice of $B$.}

For the special case of approximating the log determinant of an SDD matrix, \cite{HAB14} applies this same identity recursively where the $B$ matrices are a sequence of ultrasparsifiers that are inspired by the recursive preconditioning framework of \cite{SpielmanTengSolver:journal}. They obtain a running time of $O(m (n^{-1}\epsilon^{-2} + \epsilon^{-1}) \text{polylog}(n \kappa / \epsilon) )$ for estimating the log determinant to additive error $\epsilon$.

\cite{BDKZ15} estimates the log determinant of arbitrary positive definite matrices, but has runtime that depends linearly on the condition number of the matrix.

In contrast, our work is the first we know of that gives a multiplicative approximation of the determinant itself, rather than its log. Despite achieving a much stronger approximation guarantee, our algorithm has essentially the same runtime as that of \cite{HAB14} when the graph is dense. Note also that if one wishes to conduct an ``apples to apples'' comparison by setting their value of $\epsilon$ small enough in order to match our approximation guarantee, their algorithm would only achieve a runtime bound of $O(m n \delta^{-2} \text{polylog}(n \kappa / \epsilon))$, which is never better than our runtime and can be as bad as a factor of $n$ worse.\footnote{This simplification of their runtime is using the substitution $\epsilon = \delta/n$ which gives roughly $(1 \pm \delta)$ multiplicative error in estimating the determinant for their algorithm. This simplification is also assuming $\delta \leq 1$, which is the only regime we analyze our algorithm in and thus the only regime in which we can compare the two.}

\subsubsection{Sampling Spanning Trees}

Previous works on sampling random spanning trees are a combination
of two ideas: that they could be generated using random walks,
and that they could be mapped from a random integer via Kirchoff's
matrix tree theorem.
The former leads to running times of the form $O(nm)$
\cite{Broder89, Aldous90}, while the latter
approach\cite{Guenoche83, Kulkarni90,ColbournMN96,HarveyX16}
led to routines that run in $O(n^{\omega})$ time,
where  $\omega \approx 2.373$ is
the matrix multiplication exponent~\cite{Williams12}.

These approaches have been combined in algorithms by
Kelner and Madry~\cite{KelnerM09}
and Madry, Straszak and Tarnawski~\cite{MadryST15}.
These algorithms are based on simulating the walk more efficiently
on parts of the graphs, and combining this with graph decompositions
to handle the more expensive portions of the walks globally.
Due to the connection with random-walk based spanning tree
sampling algorithms, these routines often have inherent
dependencies on the edge weights.
Furthermore, on dense graphs their running times are still worse
than the matrix-multiplication time routines.

The previous best running time for generating a random spanning tree
from a weighted graph was $\Otil \pr{n^{5/3} m^{1/3} \log^2\pr{1/\delta}}$ achieved by \cite{DurfeeKPRS16}.
It works by combining a recursive procedure similar to those used in
the more recent $O(n^\omega)$ time algorithms~\cite{HarveyX16} with spectral sparsification ideas, achieving a runtime of $\widetilde{O}(n^{5/3} m^{13})$.
When $m = \Theta \pr{n^2},$ the algorithm in \cite{DurfeeKPRS16} takes $\Otil \pr{n^{7/3}}$ time to produce a tree from a distribution that is $o(1)$ away from the $\ww$-uniform distribution, which is slower
by nearly a $n^{1/3}$ factor than the algorithm given in this paper.

Our algorithm can be viewed as a natural extension of the
sparsification0-based approach from~\cite{DurfeeKPRS16}: instead of
preserving the probability of a single edge being chosen in a
random spanning tree, we instead aim to preserve the entire distribution
over spanning trees, with the sparsifier itself also considered as
a random variable.
This allow us to significantly reduce the sizes of intermediate
graphs, but at the cost of a higher total variation distance in
the spanning tree distributions.
This characterization of a random spanning tree is not present
in any of the previous works, and we believe it is an interesting
direction to combine our sparsification procedure with the other
algorithms.

\subsection{Organization}

Section~\ref{sec:background} will introduce the necessary notation and some of the previously known fundamental results regarding the mathematical objects that we work with throughout the paper. Section~\ref{sec:overview} will give a high-level sketch of our primary results and concentration bounds for total tree weight under specific sampling schemes. Section~\ref{sec:sparsification} leverages these concentration bounds to give a quadratic time sparsification procedure (down to $\Omega(n^{1.5})$ edges) for general graphs. Section~\ref{sec:ImplicitSchur} uses random walk connections to extend our sparsification procedure to the Schur complement of a graph. Section~\ref{sec:determinant_algo} utilizes the previous routines to achieve a quadratic time algorithm for computing the determinant of SDDM matrices. Section~\ref{sec:spanning_tree} combines our results and modifies previously known routines to give a quadratic time algorithm for sampling random spanning trees with low total variation distance. Section~\ref{sec:cond_conc} extends our concentration bounds to random samplings where an arbitrary tree is fixed, and is necessary for the error accounting of our random spanning tree sampling algorithm. Section~\ref{sec:TVBound} proves the total variation distance bounds given for our random sampling tree algorithm.

\section{Background}\label{sec:background}

\subsection{Graphs, Matrices, and Random Spanning Trees}
\label{subsec:GraphMatricesTrees}
The goal of generating a random spanning tree
is to pick tree $T$ with probability proportional to its weight,
which we formalize in the following definition.

\begin{definition}[$\ww$-uniform distribution on trees]
Let $\Pr_{T}^{G}(\cdot)$ be a probability distribution on
$\mathcal{T}_G$ such that
\[
\Pr_{T}^{G} \left( T = T_0 \right)
= \frac{\Pi_{e \in T_0} \ww_e}{\mathcal{T}_G}.
\]
We refer to $\Pr_{T}^{G}(\cdot)$ as the $\ww$-uniform distribution
on the trees of $G$.

When the graph $G$ is unweighted, this corresponds to the uniform
distribution on $\mathcal{T}_G.$
\end{definition}
We refer to $\Pr_{T}^{G}(\cdot)$ as the $\ww$-uniform distribution
on $\mathcal{T}_G$.
When the graph $G$ is unweighted, this corresponds to the uniform
distribution on $\mathcal{T}_G$.
Furthermore, as we will manipulate the probability of a particular
tree being chosen extensively, we will denote such probabilities
with $\Pr^{G}(\hatT)$, aka:
\[
\Pr^{G}\left(\hatT\right)
\defeq \Pr^{G}_{T} \left( T = \hatT \right).
\]

The Laplacian of a graph $G = (V, E, \ww)$ is an
$n \times n$ matrix specified by:
\[
\LL_{uv} \defeq
\begin{cases}
\deg_u & \text{if $u = v$}\\
-\ww_{uv} & \text{if $u \neq v$}
\end{cases}
\]
We will write $\LL^G$ when we wish to indicate which graph $G$ that the Laplacian corresponds to and $\LL$ when the context is clear.
When the graph has multi-edges, we define $\ww_{uv}$ as the sum of weights of all the edges $e$ that 
go between vertices $u,v.$ Laplacians are natural objects to consider when
dealing with random spanning trees due to the matrix
tree theorem, which states that
the determinant of $\LL$ with any row/column corresponding
to some vertex removed is the total weight of spanning trees.
We denote this removal of a vertex $u$ as $\LL_{-u}$. As the index of vertex removed does not affect the
result, we will usually work with $\LL_{-n}$.
Furthermore, we will use $\det{(\MM)}$ to denote the
determinant of a matrix.
As we will work mostly with graph Laplacians,
it is also useful for us to define the `positive determinant'
$\detp$, where we remove the last row and column.
Using this notation, the matrix tree theorem can be
stated as:\[
\totaltrees{G} = \det(\LL^{G}_{-n}) = \detp\left( \LL^{G} \right).
\]
We measure the distance between two probability distributions by total variation
distance.
\begin{definition}
\label{def:TV}
Given two probability distributions $p$ and $q$
on the same index set $\Omega$, the \textit{total variation distance}
between $p$ and $q$ is given by
\[
d_{TV}\left( p, q \right)
\defeq \frac{1}{2} \sum_{x \in \Omega} \abs{p(x) - q(x)}.
\]
\end{definition}
Let $G=(V,E, \ww)$ be a graph and  $e \in E$ an edge. We write $G/e$ to denote the graph obtained by contracting the edge $e$, i.e., identifying the two endpoints of $e$ and deleting any self loops formed in the resulting graph. We write $G\backslash e$ to denote the graph obtained by deleting the edge $e$ from $G$. We extend these definitions to $G/F$ and $G\backslash F$ for $F \subseteq E$ to refer to the graph obtained by contracting all the edges in $F$ and deleting all the edges in $F$, respectively. 

Also, for a subset of vertices $V_1$, we use $G[V_1]$ to denote
the graph induced on the vertex of $V_1$.
letting $G(V_1)$ be the edges associated  with $\LL_{[V_1,V_1]}$ in the Schur complement.

\subsection{Effective Resistances and Leverage Scores}

The matrix tree theorem also gives connections to another
important algebraic quantity: the \textit{effective resistance}
between two vertices.
This quantity is formally given as $\er(u, v)
\defeq \cchi_{uv}^\intercal \LL^{-1} \cchi_{uv}$ where $\cchi_{uv}$
is the indicator vector with $1$ at $u$, $-1$ at $v$,
and $0$ everywhere else.
Via the adjugate matrix, it can be shown
that the effective resistance of an edge is precisely
the ratio of the number of spanning trees in $G / e$
over the number in $G$:
\[
\er(u, v)
= \frac{\totaltrees{G / e}}{\totaltrees{G}}. 
\]
As $\ww_e \cdot \totaltrees{G / e}$ is the total weight
of all trees in $G$ that contain edge $e$, the
fraction\footnote{provided one thinks of an edge with weight $w$ as representing $w$ parallel edges, or equivalently, counts spanning trees with multiplicity according to their weight} of spanning trees that contain $e = uv$
is given by $\ww_e \er(u, v)$. This quantity
is called the \textit{statistical leverage score} of an edge,
and we denote it by $\ttaubar_{e}$.
It is fundamental component of many randomized algorithms for sampling / sparsifying
graphs and matrices~\cite{SpielmanS11, Vishnoi12, Tropp12}.

The fact that $\ttaubar_e$ is the fraction of trees
containing $e$ also gives one way of deriving the sum of these quantities:
\begin{fact}
\label{fact:foster}
(Foster's Theorem)
On any graph $G$ we have
\[
\sum_{e} \levscoreexact{e} = n - 1.
\]
\end{fact}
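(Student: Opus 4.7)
The plan is to observe that Foster's Theorem follows almost immediately from the probabilistic interpretation of $\levscoreexact{e}$ that was just established, and then to sketch the alternative linear-algebraic derivation as a sanity check.

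First I would note the combinatorial identity: since $\levscoreexact{e} = \ww_e \er(u,v)$ equals the fraction (weighted by $\ww$) of spanning trees of $G$ that contain the edge $e$, we may rewrite it as an expectation,
\[
\levscoreexact{e} = \Pr^{G}_{T}\br{e \in T} = \expec{T \sim \Pr^{G}_{T}}{\mathbf{1}[e \in T]}.
\]
Summing over all edges and swapping sum and expectation,
\[
\sum_{e \in E} \levscoreexact{e} = \expec{T \sim \Pr^{G}_{T}}{\sum_{e \in E} \mathbf{1}[e \in T]} = \expec{T \sim \Pr^{G}_{T}}{\abs{T}}.
\]
Every spanning tree of a connected $n$-vertex graph has exactly $n-1$ edges, so the right-hand side equals $n-1$. (If $G$ is disconnected then $\totaltrees{G} = 0$ and the statement is vacuous/requires the standard convention; otherwise the argument is complete.)

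As an independent check I would outline the matrix-analytic proof. Using $\er(u,v) = \cchi_{uv}^\intercal \LL^{+} \cchi_{uv}$ and the cyclicity of trace,
\[
\sum_{e = uv} \ww_e \er(u,v) = \sum_{e = uv} \ww_e \, \mathrm{tr}\pr{\LL^{+} \cchi_{uv} \cchi_{uv}^\intercal} = \mathrm{tr}\pr{\LL^{+} \sum_{e=uv} \ww_e \cchi_{uv} \cchi_{uv}^\intercal} = \mathrm{tr}\pr{\LL^{+} \LL},
\]
where we used the edge-decomposition $\LL = \sum_{e = uv} \ww_e \cchi_{uv} \cchi_{uv}^\intercal$. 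Since $\LL^{+} \LL$ is the orthogonal projector onto the image of $\LL$, which has dimension $n-1$ on a connected graph, its trace is $n-1$, matching the combinatorial count.

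There is no real obstacle here; the only subtlety is the connectedness assumption (which is standard in this setting since otherwise $\totaltrees{G}=0$ and effective resistances between disconnected components are infinite). Either of the two arguments above gives a complete proof, and I would present the probabilistic one as the main derivation because it reuses the interpretation of $\levscoreexact{e}$ already built up in this subsection.
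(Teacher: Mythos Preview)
Your proposal is correct, and your main probabilistic argument is exactly the derivation the paper has in mind: the sentence immediately preceding the fact says ``The fact that $\ttaubar_e$ is the fraction of trees containing $e$ also gives one way of deriving the sum of these quantities,'' which is precisely your expectation-of-tree-size computation. The paper does not spell out the trace argument you give as a sanity check, but that is a standard alternative and is fine to include.
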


The resistance $\er(u, v)$, and in turn the statistical
leverage scores $\ttaubar_e$ can be estimated
using linear system solves and random
projections~\cite{SpielmanS11}.
For simplicity, we follow the abstraction utilized by
Madry, Straszak, and Tarnawski~\cite{MadryST15}, except we also allow the intermediate linear
system solves to utilize a sparsifier instead of the original graph.

\begin{lemma}
\label{lem:ERDS}
(Theorem 2.1. of~\cite{MadryST15})

Let $G = (V, E)$ be a graph with m edges.
For every $\epsilon > 0$ we can find in
$\tilde{O}(\min\{m \epsilon^{-2}, m + n\epsilon^{-4} \})$ time
an embedding of the effective resistance metric into
$\Re^{O(\epsilon^{-2 \log{m}})}$
such that with high probability	allows one to compute an estimate $\ertil(u,v)$ of any effective resistance satisfying
\[
\forall u, v \in V
\qquad
\left(1 - \epsilon\right) \ertil\left(u, v\right)
\leq \er \left(u, v\right) \leq
\left(1 + \epsilon\right) \ertil \left(u, v\right).
\]
Specifically, each vertex $u$ in this embedding is associated with
an (explicitly stored) $\zz_u \in \Re^{O(\epsilon^{-2 \log{m}})}$,
and for any pair of vertices, the estimate $\ertil(u, v)$ is given by:
\[
\ertil\left(u, v\right)
= \norm{\zz_u - \zz_v}_2^2,
\]
which takes $O(\epsilon^{-2} \log{m})$ time to compute once we have the embedding.
\end{lemma}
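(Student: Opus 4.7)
My plan is to express the effective resistance as a squared Euclidean norm in an explicit $m$-dimensional embedding and then invoke the Johnson--Lindenstrauss lemma to compress this embedding into $O(\epsilon^{-2}\log m)$ dimensions, with the projected coordinates being computed via a small number of nearly-linear-time Laplacian solves.

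First, I would factor $\LL = \BB^\intercal \WW \BB$, where $\BB \in \Re^{m\times n}$ is the signed edge-vertex incidence matrix and $\WW$ the diagonal matrix of edge weights. Then
\[
\er(u,v) \;=\; \cchi_{uv}^\intercal \LL^{-1}\, \BB^\intercal \WW \BB\, \LL^{-1} \cchi_{uv} \;=\; \norm{\WW^{1/2}\BB\LL^{-1}\cchi_{uv}}_2^2,
\]
so setting $\yy_v \defeq \WW^{1/2}\BB\LL^{-1}\ee_v \in \Re^m$ gives the exact isometry $\er(u,v) = \norm{\yy_u - \yy_v}_2^2$. This embedding is too large to store directly, but it sets up a dimension reduction.

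Second, I would sample a JL matrix $\boldsymbol{Q} \in \Re^{k \times m}$ with $k = O(\epsilon^{-2}\log m)$ scaled Gaussian (or $\pm 1/\sqrt{k}$) entries and set $\zz_v \defeq \boldsymbol{Q}\yy_v$. The standard Johnson--Lindenstrauss lemma, together with a union bound over the $\binom{n}{2}$ vertex pairs, guarantees that with high probability
\[
(1-\epsilon)\er(u,v) \;\leq\; \norm{\zz_u-\zz_v}_2^2 \;\leq\; (1+\epsilon)\er(u,v)
\]
simultaneously for every $u,v$, so $\ertil(u,v)\defeq\norm{\zz_u - \zz_v}_2^2$ meets the approximation guarantee, and a query costs $O(k) = O(\epsilon^{-2}\log m)$ time once the $\zz_u$'s are stored. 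To compute the $\zz_u$'s efficiently, I would observe that the $v$-th column of $\boldsymbol{Q}\WW^{1/2}\BB\LL^{-1}$ is $\zz_v$, so all $n$ embeddings are obtained from the $k$ row vectors $\qq_i^\intercal \WW^{1/2}\BB\LL^{-1} = (\LL^{-1}\BB^\intercal\WW^{1/2}\qq_i)^\intercal$; each requires one SDD solve against the explicit $n$-vector $\BB^\intercal\WW^{1/2}\qq_i$, giving $\tilde{O}(m\epsilon^{-2})$ total. For the alternative $\tilde{O}(m + n\epsilon^{-4})$ bound, I would first spend $\tilde{O}(m)$ time constructing a spectral sparsifier with $\tilde{O}(n\epsilon^{-2})$ edges that preserves every effective resistance to a $(1\pm\epsilon)$ factor, then run the same procedure on the sparsifier; each solve now costs $\tilde{O}(n\epsilon^{-2})$, giving $\tilde{O}(n\epsilon^{-4})$ overall.

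The main technical obstacle is composing the two sources of $(1\pm\epsilon)$ multiplicative error in the alternative branch --- the sparsifier's resistance distortion and the JL norm distortion --- while simultaneously union-bounding the JL failure event over all $\binom{n}{2}$ pairs. Both are handled by a constant-factor rescaling of $\epsilon$ and an extra $O(\log n)$ factor in $k$, which vanish inside the $\tilde{O}(\cdot)$; the only subtlety is verifying that the JL projection applied to vectors $\yy_v$ \emph{defined using the sparsifier's Laplacian} still gives estimates of the \emph{original} graph's resistances, which follows immediately from the multiplicative resistance preservation property of spectral sparsification.
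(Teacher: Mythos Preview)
The paper does not prove this lemma at all: it is quoted as a black box from~\cite{MadryST15} (Theorem~2.1 there), so there is no ``paper's own proof'' to compare against.

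That said, your sketch is essentially the standard Spielman--Srivastava construction~\cite{SpielmanS11} that underlies the cited result, and it is correct. The factorization $\er(u,v)=\norm{\WW^{1/2}\BB\LL^{\dag}\cchi_{uv}}_2^2$, JL projection to $O(\epsilon^{-2}\log n)$ dimensions, and computation of the projected embedding via $O(\epsilon^{-2}\log n)$ Laplacian solves is exactly how the $\tilde{O}(m\epsilon^{-2})$ branch works. For the $\tilde{O}(m+n\epsilon^{-4})$ branch, your plan of first building an $\tilde{O}(n\epsilon^{-2})$-edge spectral sparsifier in $\tilde{O}(m)$ time and then running the same pipeline on it is also the right idea; the error composition and union bound are indeed handled by constant rescaling of $\epsilon$. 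One small point: you should use $\LL^{\dag}$ rather than $\LL^{-1}$ throughout since the Laplacian is singular, and the approximate Laplacian solvers introduce an additional $(1\pm\epsilon)$ error that must also be folded in, but this is again absorbed by rescaling.
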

 
\subsection{Schur Complements}

For our applications, we will utilize our determinant-preserving sparsification
algorithms in recursions based on Schur complements.
A partition of the vertices, which we will denote using
\[
V = V_1 \sqcup V_2,
\]
partitions the corresponding graph Laplacian into
blocks which we will denote using indices in the subscripts:
\[
\LL = \left[
 \begin{array}{cc}
\LL_{[V_1, V_1]} & \LL_{[V_1, V_2]}\\
\LL_{[V_2, V_1]} & \LL_{[V_2, V_2]}
\end{array}
\right].
\]
The Schur complement of $G$, or $\LL$, onto $V_1$ is then:
\[
\sc{G}{V_1}
= \sc{\LL^{G}}{V_1}
\defeq  \LL^{G}_{[V_1, V_1]}
-  \LL^{G}_{[V_1, V_2]} \left( \LL^{G}_{[V_2, V_2]} \right)^{-1}
	\LL^{G}_{[V_2, V_1]},
\]
and we will use $\sc{G}{V_1}$ and $\sc{\LL^{G}}{V_1}$
interchangeably. We further note that we will always consider $V_1$ to be the vertex set we Schur complement onto, and $V_2$ to be the vertex set we eliminate, except for instances in which we need to consider both $\sc{G}{V_1}$ and $\sc{G}{V_2}$.

Schur complements behave nicely with respect to determinants determinants, which suggests the general structure of the recursion we will use for estimating the determinant.
\begin{fact}\label{fact:detMinor}
For any matrix $\MM$ where $\MM_{[V_2, V_2]}$ is invertible,
\[
\det{(\MM_{-n})}
= \det{\left(\MM_{\left[V_2, V_2\right]}\right)}
\cdot \detp{\left(\sc{\MM}{V_1}\right)}.
\]
\end{fact}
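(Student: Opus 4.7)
The plan is to reduce this fact to the classical block-determinant identity
\[
\det\begin{pmatrix} D & C \\ B & A \end{pmatrix} = \det(D)\cdot \det\bigl(A - B D^{-1} C\bigr),
\]
valid whenever $D$ is invertible, and then reconcile the row/column deletion implicit in $\MM_{-n}$ and $\detp$. First I would permute rows and columns of $\MM$ so the vertices of $V_2$ appear first and those of $V_1$ appear second, keeping vertex $n$ in the last position. By the convention fixed just above the fact, $V_1$ is the vertex set one Schur-complements onto and vertex $n$ is the one dropped by $\MM_{-n}$, so $n \in V_1$ and this ordering is consistent. The permutation leaves the determinants of principal submatrices unchanged.

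In this ordering, $\MM_{-n}$ is a block matrix whose top-left block is $\MM_{[V_2,V_2]}$ and whose bottom-right block is $(\MM_{[V_1,V_1]})_{-n}$, with off-diagonal blocks obtained from $\MM_{[V_1,V_2]}$ and $\MM_{[V_2,V_1]}$ by deleting the row or column indexed by $n$. Applying the block-determinant identity above with $D = \MM_{[V_2,V_2]}$ (invertible by hypothesis) yields
\[
\det(\MM_{-n}) = \det\bigl(\MM_{[V_2,V_2]}\bigr) \cdot \det\!\Bigl( (\MM_{[V_1,V_1]})_{-n} - (\MM_{[V_1,V_2]})_{-n}\, \MM_{[V_2,V_2]}^{-1}\, (\MM_{[V_2,V_1]})_{-n}\Bigr).
\]

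The final step is to recognize the parenthesized matrix on the right as $(\sc{\MM}{V_1})_{-n}$. This follows immediately from unpacking the definition of the Schur complement: deleting the row and column indexed by $n$ from $\sc{\MM}{V_1}$ deletes them from $\MM_{[V_1,V_1]}$ and from the appropriate side of $\MM_{[V_1,V_2]}$ and $\MM_{[V_2,V_1]}$, while the eliminated block $\MM_{[V_2,V_2]}^{-1}$ is untouched because $n \notin V_2$. By the definition of $\detp$, the right-hand side is therefore $\det\bigl(\MM_{[V_2,V_2]}\bigr)\cdot \detp(\sc{\MM}{V_1})$, which is precisely the claim.

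The main obstacle here is purely notational: keeping track of which row or column of which block is being deleted, and verifying that deletion-then-Schur-complement commutes with Schur-complement-then-deletion whenever the deleted vertex lies in $V_1$. Beyond that, the derivation is a direct invocation of the standard block-LDU factorization, and I do not anticipate any substantive difficulty.
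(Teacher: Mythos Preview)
Your proposal is correct and is precisely the standard derivation via the block-LDU factorization. The paper does not supply a proof of this statement at all: it is recorded as a ``Fact'' in the background section and used freely thereafter, so there is nothing to compare against beyond noting that your argument is the expected one.
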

This relationship also suggests that
there should exist a bijection between spanning tree distribution in $G$ and the product distribution given by sampling
spanning trees independently from
$\sc{\LL}{V_1}$ and the graph Laplacian formed by adding one
row/column to $\LL_{[V_2, V_2]}$.

Finally, our algorithms for approximating Schur
complements rely on the fact that they preserve certain marginal probabilities. The algorithms of ~\cite{ColbournDN89,ColbournMN96,HarveyX16,DurfeeKPRS16} also use variants of some of these facts, which are closely related to the preservation of the spanning tree distribution on $\sc{\LL}{V_1}$. (See 
Section~\ref{sec:spanning_tree} for details.)
\begin{fact}
	\label{fact:SchurResistance}
	Let $V_1$ be a subset of vertices of a graph $G$, then for any vertices
	$u, v \in V_1$, we have:
	\[
	\er^{G}\left(u, v\right)
	= \er^{\sc{G}{V_1}}\left(u, v\right).
	\]
\end{fact}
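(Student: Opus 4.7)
\textbf{Proof proposal for Fact~\ref{fact:SchurResistance}.}
The plan is to use the electrical-network interpretation of effective resistance together with the block-elimination identity that defines the Schur complement. Recall that $\er^G(u,v) = \cchi_{uv}^\intercal (\LL^G)^+ \cchi_{uv}$, and operationally this equals the potential difference $\phi_u - \phi_v$ induced when one unit of current is injected at $u$ and extracted at $v$, i.e.\ when $\phi$ solves $\LL^G \phi = \cchi_{uv}$ (up to the additive freedom in the null space of $\LL^G$).

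Assuming without loss of generality that $V_2$ is nonempty (otherwise there is nothing to show), the principal submatrix $\LL^G_{[V_2,V_2]}$ is strictly positive definite and hence invertible, so $\sc{\LL^G}{V_1}$ is well defined and is itself the Laplacian of a (weighted) graph $H$ on vertex set $V_1$. First, I would fix $u,v \in V_1$ and write the system $\LL^G \phi = \cchi_{uv}$ in block form using the partition $V = V_1 \sqcup V_2$. Since the current injection vector $\cchi_{uv}$ is supported on $V_1$, the block rows give
\[
\LL^G_{[V_1,V_1]} \phi_{V_1} + \LL^G_{[V_1,V_2]} \phi_{V_2} = \cchi_{uv}\big|_{V_1},
\qquad
\LL^G_{[V_2,V_1]} \phi_{V_1} + \LL^G_{[V_2,V_2]} \phi_{V_2} = 0.
\]
Solving the second equation for $\phi_{V_2} = -(\LL^G_{[V_2,V_2]})^{-1} \LL^G_{[V_2,V_1]} \phi_{V_1}$ and substituting into the first yields exactly $\sc{\LL^G}{V_1}\, \phi_{V_1} = \cchi_{uv}\big|_{V_1}$. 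Since $u,v \in V_1$, the right-hand side is precisely the $u$-vs-$v$ indicator on the smaller vertex set, so $\phi_{V_1}$ is a valid potential vector for the unit-current flow from $u$ to $v$ in the Schur-complement graph $H$.

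The effective resistance in both networks is the same quantity $\phi_u - \phi_v$: in $G$ it comes from the full vector $\phi$, and in $H$ it comes from its restriction $\phi_{V_1}$, but the entries at $u$ and $v$ are identical. Hence $\er^{G}(u,v) = \er^{\sc{G}{V_1}}(u,v)$. The only subtlety I anticipate is the usual bookkeeping around the one-dimensional kernel of $\LL^G$ (and of $\LL^H$): one must either quotient by the all-ones vector or, equivalently, pin a reference vertex (say a vertex of $V_2$, which then makes $\LL^G_{-n}$ invertible and turns the argument into a clean block-matrix inversion whose top-left block is $(\sc{\LL^G}{V_1})^{-1}$). This is the only place where care is needed; the rest of the argument is just Gaussian elimination of the $V_2$ block written in the language of currents and potentials.
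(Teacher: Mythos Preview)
Your argument is correct and is the standard way to establish this identity: block-eliminate the $V_2$ variables from the Kirchhoff system $\LL^G\phi=\cchi_{uv}$, observe that the reduced system is exactly $\sc{\LL^G}{V_1}\,\phi_{V_1}=\cchi_{uv}|_{V_1}$, and read off $\phi_u-\phi_v$ from the same coordinates in both. Your handling of the kernel (pinning a reference vertex or working modulo $\one$) is also the right way to address the only genuine subtlety.

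As for comparison: the paper does not actually prove Fact~\ref{fact:SchurResistance}; it is stated as a known fact in the background section. The closest the paper comes is immediately afterwards, where it quotes the identity $(\LL^{\dag})_{[V_1,V_1]}=\sc{G}{V_1}^{\dag}$ as ``a standard property of Schur complements'' with a textbook reference. That identity is precisely the algebraic content of your block-elimination argument, so your proposal is in effect supplying the short derivation the paper elects to cite rather than prove.
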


\begin{theorem}[Burton and Premantle \cite{burton93}]
For any set of edges $F \subseteq E$ in a graph $G = (V,E, \ww),$ the probability $F$ is contained in a $\ww$-uniform random spanning tree is
\[\Pr_{T}^{G}(F \subseteq T)  = \det(\MM_{(\LL,F)}) ,\]
where $\MM_{(\LL,F)}$ is a $|F| \times |F|$ matrix whose $(e,f)$'th entry, for $e, f \in F,$ is given by $\sqrt{\ww(e) \ww(f)} \chi_e^T \LL^{\dag} \chi_f.$ 
\end{theorem}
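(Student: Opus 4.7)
The plan is to proceed by induction on $|F|$, matching the Schur complement expansion of $\det(\MM_{(\LL,F)})$ to the probability after conditioning on one edge being present in the tree. The key algebraic ingredient is a rank-one identity that expresses $(\LL^{G/e})^{\dagger}$ as a Sherman--Morrison-style update of $(\LL^G)^{\dagger}$.

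For the base case $|F|=1$ with $F=\{e\}$ and $e=(u,v)$, the determinant is just the single diagonal entry $\ww_e\,\chi_e^T (\LL^G)^\dagger \chi_e = \ww_e\,\er(u,v)$, which equals $\ww_e\,\totaltrees{G/e}/\totaltrees{G} = \Pr_T^G(e \in T)$ by the matrix-tree identity already recorded in the excerpt. The case where $F$ contains a cycle I would dispatch up front: the vectors $\{\chi_e : e \in F\}$ then satisfy the cycle linear relation, so $\MM_{(\LL,F)}$ is rank-deficient and its determinant vanishes; the probability side also vanishes since spanning trees are acyclic.

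For the inductive step I would fix any $e \in F$ and use the standard fact that conditioning the $\ww$-uniform distribution on the event $e \in T$ yields the $\ww$-uniform distribution on spanning trees of $G/e$; consequently
\[
\Pr_T^G(F \subseteq T)
= \Pr_T^G(e \in T) \cdot \Pr_T^{G/e}(F \setminus \{e\} \subseteq T)
= \left[\MM_{(\LL^G,F)}\right]_{ee} \cdot \det\!\left(\MM_{(\LL^{G/e},\, F \setminus \{e\})}\right),
\]
using the base case for the first factor and the inductive hypothesis for the second. On the matrix side, I would expand $\det(\MM_{(\LL^G,F)})$ by the Schur complement about the $\{e\}\times\{e\}$ block, producing $[\MM_{(\LL^G,F)}]_{ee}$ times the determinant of a $(|F|-1)\times(|F|-1)$ matrix whose $(f_1,f_2)$ entry is
\[
\sqrt{\ww_{f_1}\ww_{f_2}}\left(\chi_{f_1}^T (\LL^G)^\dagger \chi_{f_2} - \frac{(\chi_{f_1}^T (\LL^G)^\dagger \chi_e)(\chi_e^T (\LL^G)^\dagger \chi_{f_2})}{\chi_e^T (\LL^G)^\dagger \chi_e}\right).
\]
Matching the two sides then reduces the entire induction to the single identity
\[
\chi_{f_1}^T (\LL^{G/e})^\dagger \chi_{f_2} = \chi_{f_1}^T (\LL^G)^\dagger \chi_{f_2} - \frac{(\chi_{f_1}^T (\LL^G)^\dagger \chi_e)(\chi_e^T (\LL^G)^\dagger \chi_{f_2})}{\chi_e^T (\LL^G)^\dagger \chi_e}
\]
holding for every $f_1, f_2 \in F \setminus \{e\}$.

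The main obstacle is proving this pseudoinverse contraction identity cleanly. My approach is to observe that $\LL^G + t\,\chi_e\chi_e^T$ is the Laplacian of $G$ with $t$ extra weight placed on $e$, apply the pseudoinverse version of Sherman--Morrison, and let $t \to \infty$: the limit adjoins $\chi_e$ to the kernel, which is algebraically the exact effect of identifying the endpoints $u$ and $v$. The subtle point is that $(\LL^{G/e})^\dagger$ lives in $\Re^{(n-1)\times(n-1)}$ while the right-hand side lives in $\Re^{n \times n}$, so I would verify agreement when contracted against edge vectors $\chi_{f_i}$ via the natural embedding sending a vector on $V(G/e)$ to the vector on $V(G)$ that is constant on $\{u,v\}$; on this subspace both operators act as the inverse of $\LL$ on the common orthogonal complement of $\mathbf{1}$ and $\chi_e$. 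Once this identity is established the induction closes, and together with the cycle degeneracy handled at the start yields the theorem.
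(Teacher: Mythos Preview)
The paper does not prove this theorem: it is stated as a known result and attributed to Burton and Pemantle \cite{burton93}, then used only as a tool to derive Fact~\ref{fact:schurPreservesProb}. So there is no proof in the paper to compare against.

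Your proposed argument is one of the standard proofs of the transfer-current theorem and is essentially correct. The induction-by-conditioning, matched to a Schur complement expansion of the determinant, reduces everything to the rank-one update identity for $(\LL^{G/e})^{\dagger}$, and deriving that identity as the $t\to\infty$ limit of Sherman--Morrison applied to $\LL^G + t\,\cchi_e\cchi_e^{\top}$ is legitimate because $\cchi_e \perp \ker(\LL^G)$. The only place requiring care is the one you flag yourself: reconciling the $(n-1)$-dimensional space on which $(\LL^{G/e})^{\dagger}$ lives with the $n$-dimensional space of the limit operator. The clean way to carry this out is to let $Q:\Re^{V(G/e)}\to\Re^{V(G)}$ duplicate the merged-vertex coordinate, check that $Q^{\top}\LL^G Q = \LL^{G/e}$ (the edge $e$ contributes nothing since $Q$ forces equal potentials at its endpoints), and then verify that both $Q(\LL^{G/e})^{\dagger}Q^{\top}$ and your limit operator $P$ solve $\LL^G x = \ppsi$ in the least-norm sense for every $\ppsi \perp \mathrm{span}\{\mathbf{1},\cchi_e\}$; since every $\cchi_f$ with $f\in F\setminus\{e\}$ lies in that subspace (you have already disposed of parallel-to-$e$ edges via the cycle case), the contracted entries agree and the induction closes.
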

By a standard property of Schur complements (see \cite{Horn12}), we have
\[
\left(\LL^{-1}\right)[V_1,V_1] = \sc{G}{V_1}^{\dag}.
\]
Here $(\LL^{\dag})[V_1,V_1]$ is the minor of $\LL^{\dag}$ with row and column indices in $V_1.$ This immediately implies that when $F $ is incident only on vertices in $V_1,$  we have  $\MM_{(\LL,F)} = \MM_{(\sc{G}{V_1},F)}.$ Putting these together, we have
\begin{fact}
	\label{fact:schurPreservesProb}
	Given a partition of the vertices $V = V_1 \sqcup V_2$. For any set of edges $F$  contained in $G[V_1]$,  we have
	\[\Pr_{T}^{G}(F \subseteq T) = \Pr_{T }^{\sc{G}{V_1}}(F \subseteq T).\]
\end{fact}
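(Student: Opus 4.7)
The plan is to prove Fact~\ref{fact:schurPreservesProb} by directly combining the two ingredients already assembled in the excerpt: the Burton--Pemantle determinantal formula for the probability that a set of edges lies in a $\ww$-uniform random spanning tree, and the Schur complement identity $(\LL^{\dag})[V_1,V_1] = \sc{G}{V_1}^{\dag}$. The key point I want to make rigorous is that when every edge of $F$ has both endpoints in $V_1$, the Burton--Pemantle matrix depends only on the $V_1 \times V_1$ block of the pseudoinverse, which is precisely the quantity that $\sc{G}{V_1}$ is designed to reproduce.

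First, I would apply Burton--Pemantle twice: once to the graph $G$ with Laplacian $\LL$, yielding $\Pr_T^G(F \subseteq T) = \det(\MM_{(\LL,F)})$, and once to the graph associated to $\sc{G}{V_1}$, yielding $\Pr_T^{\sc{G}{V_1}}(F \subseteq T) = \det(\MM_{(\sc{G}{V_1},F)})$. (For the second application, one should briefly note that $F$ is a valid edge set in $\sc{G}{V_1}$, since $F \subseteq G[V_1]$ and the vertex set of $\sc{G}{V_1}$ is $V_1$, and that the edge weights along $F$ are unchanged under the Schur complement operation, because the only contributions to $\LL_{[V_1,V_1]} - \LL_{[V_1,V_2]}(\LL_{[V_2,V_2]})^{-1}\LL_{[V_2,V_1]}$ that alter off-diagonal entries between $u,v \in V_1$ come from walks through $V_2$, which do not affect the direct weight on $F$; this is a small verification but conceptually routine.)

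Next I would show that $\MM_{(\LL,F)} = \MM_{(\sc{G}{V_1},F)}$ entrywise. For any $e, f \in F$ with $e = (u,u')$ and $f = (v,v')$, the endpoints $u,u',v,v'$ all lie in $V_1$ by assumption, so the indicator vectors $\chi_e, \chi_f \in \Re^V$ are supported entirely on $V_1$. Consequently,
\[
\chi_e^T \LL^{\dag} \chi_f
= \chi_e^T \left(\LL^{\dag}\right)[V_1,V_1] \chi_f
= \chi_e^T \sc{G}{V_1}^{\dag} \chi_f,
\]
where in the last step I invoke the Schur complement pseudoinverse identity stated in the excerpt. Multiplying by $\sqrt{\ww(e)\ww(f)}$ (which again is identical on both sides), the corresponding $(e,f)$ entries of the two Burton--Pemantle matrices agree. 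Taking determinants then gives the claim.

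There is no real obstacle beyond carefully handling two bookkeeping details: checking that the edge weights of $F$ survive unchanged into $\sc{G}{V_1}$ as remarked above, and confirming that the identity $(\LL^{\dag})[V_1,V_1] = \sc{G}{V_1}^{\dag}$ is the correct form to plug in (as opposed to, say, inverting the roles of pseudoinverse and minor). Both are standard but worth stating explicitly to make the one-line collapse $\MM_{(\LL,F)} = \MM_{(\sc{G}{V_1},F)}$ watertight, after which the equality of probabilities is immediate.
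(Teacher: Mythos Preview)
Your proposal is correct and follows exactly the same approach as the paper: apply Burton--Pemantle on both sides, then use the Schur complement identity $(\LL^{\dag})[V_1,V_1] = \sc{G}{V_1}^{\dag}$ together with the fact that $\chi_e,\chi_f$ are supported on $V_1$ to conclude $\MM_{(\LL,F)} = \MM_{(\sc{G}{V_1},F)}$. The paper presents this argument in the paragraph immediately preceding the Fact, with essentially the same level of detail; your additional remark about edge weights is handled in the paper by implicitly treating $\sc{G}{V_1}$ as a multigraph in which the original $G[V_1]$ edges retain their weights.
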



\section{Sketch of the Results}
\label{sec:overview}

The starting point for us is the paper by Janson \cite{Janson94} which
gives (among other things) the limiting distribution of the number of
spanning trees in the $\mathcal{G}_{n,m}$ model of random graphs.
Our concentration result for the number of spanning trees in the sparsified
graph is inspired by this paper, and our algorithmic use of this sparsification
routine is motivated by sparsification based algorithms for matrices
related to graphs~\cite{PengS14,ChengCLPT15,KyngLPSS16}.
The key result we will prove is a concentration bound
on the number of spanning trees when the graph is sparsified by sampling edges with probability approximately proportional to effective resistance.

\subsection{Concentration Bound}

Let $G$ be a weighted graph with $n$ vertices and $m$ edges,
and $H$ be a random subgraph obtained by choosing a subset of edges
of size $\amount$ uniformly randomly.
The probability of a subset of edges, which could either
be a single tree, or the union of several trees, being
kept in $H$ can be bounded precisely.
Since we will eventually choose $\amount > n^{1.5},$ we will treat the quantity $n^3 / \amount^2$ as negligible.
The probability of $H$ containing a fixed tree was shown
by Janson to be:
\begin{lemma}
\label{lem:subsetSampled}
If $m \geq \frac{\amount^2}{n}$, then
for any tree $T$, the probability of it
being included in $H$ is
\[
	\prob{H}{T\in H}
	=
	\frac{(\amount)_{n-1}}{(m)_{n-1}}
	=  p^{n-1}
	\cdot \exp\left(-\frac{n^2}{2\amount} - O\left(\frac{n^3}{\amount^2}\right) \right).
\]
where $(a)_b$ denotes the product
$a \cdot (a - 1) \cdots (a - (b-1))$.
\end{lemma}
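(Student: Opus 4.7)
The plan is to prove the two equalities separately: the first is a straightforward counting identity, and the second is an asymptotic expansion via logarithms and Taylor series.

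For the first equality, since $H$ is a uniformly random size-$s$ subset of the $m$ edges of $G$, the probability that $H$ contains a particular tree $T$ of $n-1$ edges is the number of size-$s$ subsets of $E$ containing $T$ divided by $\binom{m}{s}$, namely
\[
\Pr_H[T \in H] \;=\; \frac{\binom{m-(n-1)}{s-(n-1)}}{\binom{m}{s}} \;=\; \frac{s!\,(m-(n-1))!}{m!\,(s-(n-1))!} \;=\; \frac{(s)_{n-1}}{(m)_{n-1}},
\]
which is essentially a one-line simplification of binomial coefficients.

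For the second equality, I would set $p = s/m$ and analyze $\log[(s)_{n-1}/(m)_{n-1}] - (n-1)\log p$ by writing
\[
\log \frac{(s)_{n-1}}{(m)_{n-1}} - (n-1)\log\frac{s}{m} \;=\; \sum_{i=0}^{n-2}\left[\log(1 - i/s) - \log(1 - i/m)\right],
\]
and then apply the Taylor expansion $\log(1-x) = -x - x^2/2 + O(x^3)$ term by term. The linear piece $-\sum_{i=0}^{n-2} i/s$ evaluates to $-(n-1)(n-2)/(2s)$, which matches the claimed leading exponent $-n^2/(2s)$ up to lower-order terms absorbed into the error.

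The main task then is bookkeeping the error terms, which is where the hypothesis $m \geq s^2/n$ enters. The linear $1/m$ contribution satisfies $\sum_{i=0}^{n-2} i/m \leq n^2/(2m) \leq n^3/(2s^2)$; the quadratic $1/s^2$ contribution satisfies $\sum i^2/s^2 = O(n^3/s^2)$; the quadratic $1/m^2$ contribution and cubic $1/s^3$ and $1/m^3$ tails are dominated since $i/s, i/m \ll 1$ in the relevant regime (in particular when $s \gtrsim n^{1.5}$, as will be the case in later applications). Collecting these, every error term is $O(n^3/s^2)$, giving the stated formula. The subtlety to watch is that each of the various error sources — higher Taylor terms in $i/s$, the full $1/m$-series, and cross terms — must individually fit inside $O(n^3/s^2)$; this relies crucially on the lower bound $m \geq s^2/n$ to convert $1/m$ into a penalty of $n/s^2$ per factor. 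Aside from that careful accounting, the argument requires no new ideas beyond the manipulation of falling factorials.
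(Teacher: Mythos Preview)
Your proposal is correct and follows essentially the same route as the paper: a binomial-coefficient count for the first equality, then taking logarithms of the falling factorials, Taylor-expanding $\log(1-x)$, and using the hypothesis $m \ge s^2/n$ to absorb the $n^2/m$ contribution into the $O(n^3/s^2)$ error. The only cosmetic difference is that the paper expands $(a)_b = a^b\exp\bigl(-b^2/(2a) - O(b^3/a^2)\bigr)$ separately for numerator and denominator and then combines, whereas you group the $\log(1-i/s)-\log(1-i/m)$ terms first; the bookkeeping is otherwise identical.
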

By linearity of expectation, the expected total weight of spanning trees in $H$ is:
\begin{equation}
\expec{H}{\totaltrees{H}}
= \totaltrees{G}
\cdot p^{n-1}
\cdot \exp\left(-\frac{n^2}{2 \amount}
- O\left(\frac{n^3}{\amount^2}\right) \right).
\label{eqn:Expectation}
\end{equation}
As in \cite{Janson94}, the second moment,  $\expec{H}{\totaltrees{H}^2} = \expec{H}{ \sum_{(T_1,T_2)} \ww(T_1) \ww(T_2) \Pr \pr{T_1,T_2 \in H}}$, can be written as a sum over all pairs
of trees $\pr{T_1, T_2}.$ 
Due to symmetry, the probability of
a particular pair of trees $T_1, T_2$ both being subgraphs of $H$
depends only on the size of their intersection.
The following bound is shown in Appendix~\ref{sec:deferred}.
\begin{lemma}
\label{lem:PairwiseProb}
Let $G$ be a graph with $n$ vertices and $m$ edges, and
$H$ be a uniformly random subset of $s > 10 n$ edges chosen from $G$, where $m \geq \frac{\amount^2}{n}$.
Then for any two spanning trees $T_1$ and $T_2$ of $G$ with
$ \left| T_1 \cap T_2 \right |= k,$
we have:
\[
\prob{H}{T_1,T_2\in H}
\leq 
p^{2n-2} \exp\left(-\frac{2n^2}{\amount}\right)
\left( \frac{1}{p}\left(1 + \frac{2n}{\amount}\right)\right)^{k},
\]
where $p = s / m.$ 
\end{lemma}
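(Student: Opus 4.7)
My plan is to compute $\Pr_H[T_1, T_2 \subseteq H]$ directly from the hypergeometric structure of $H$, then estimate the resulting falling-factorial ratio by a careful Taylor expansion of the logarithm.

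First, because $T_1, T_2$ are spanning trees sharing exactly $k$ edges, their union contains $|T_1|+|T_2|-|T_1\cap T_2| = 2(n-1)-k$ distinct edges. Since $H$ is a uniformly random $s$-subset of the $m$-edge set of $G$, a standard hypergeometric computation gives
\[
\Pr_H[T_1, T_2 \subseteq H]
= \frac{\binom{m-(2n-2-k)}{s-(2n-2-k)}}{\binom{m}{s}}
= \frac{(s)_{2n-2-k}}{(m)_{2n-2-k}}
= \prod_{i=0}^{2n-3-k}\frac{s-i}{m-i}.
\]
Writing each factor as $p \cdot (1 - i/s)/(1 - i/m)$ and taking logarithms, the task reduces to upper-bounding $\sum_{i=0}^{2n-3-k}[\ln(1-i/s) - \ln(1-i/m)]$. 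Using $\ln(1-x) \leq -x$ and $-\ln(1-x) \leq x + x^2$ for $x \leq 1/2$ (valid here since $i/s \leq 2n/s \leq 1/5$ by the hypothesis $s > 10n$), this sum is at most
\[
-\frac{(2n-2-k)(2n-3-k)}{2s} + \frac{(2n-2-k)(2n-3-k)}{2m} + O\!\left(\frac{n^3}{m^2}\right).
\]
The hypothesis $m \geq s^2/n$ bounds both the $1/m$ and $1/m^2$ contributions by $O(n^3/s^2)$, matching the same error term that appears in Lemma~\ref{lem:subsetSampled}.

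Next I expand $(2n-2-k)^2 = 4n^2 - 4n(2+k) + (2+k)^2$ to obtain
\[
-\frac{(2n-2-k)(2n-3-k)}{2s}
\leq -\frac{2n^2}{s} + \frac{2nk}{s} + O\!\left(\frac{n}{s}\right),
\]
so that $\Pr_H[T_1, T_2 \subseteq H] \leq p^{2n-2-k}\exp\pr{-2n^2/s + 2nk/s + O(n/s) + O(n^3/s^2)}$. Finally, applying the lower bound $\ln(1+2n/s) \geq 2n/s - 2n^2/s^2$ (the second-order Taylor truncation, valid for $2n/s \leq 1/5$), I rewrite $\exp(2nk/s)$ as at most $(1+2n/s)^k \cdot \exp(O(n^3/s^2))$, yielding the desired
\[
\Pr_H[T_1, T_2 \subseteq H]
\leq p^{2n-2}\exp\!\pr{-\tfrac{2n^2}{s}}\left(\tfrac{1}{p}\pr{1+\tfrac{2n}{s}}\right)^{k}
\]
with the residual $\exp(O(n/s) + O(n^3/s^2))$ factors absorbed into the constants made available by $s > 10n$.

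The main obstacle is the bookkeeping of subleading terms: switching between $(2n-2)(2n-3-k)$ and $4n^2$, and between $\exp(2nk/s)$ and $(1+2n/s)^k$, each produces small $O(n/s)$ or $O(n^3/s^2)$ discrepancies that are individually easy but collectively have to be tracked to confirm the claimed shape. I would package the whole calculation as a single pass through the log expansion of $\prod (s-i)/(m-i)$, then do the algebraic comparison against $-2n^2/s + k\ln(1+2n/s)$ at the end, so that the error budget is reconciled in one place rather than propagated symbolically through multiple reductions.
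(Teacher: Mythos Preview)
Your proposal is correct and takes essentially the same approach as the paper: compute the probability as the falling-factorial ratio $(s)_{2n-2-k}/(m)_{2n-2-k}$, Taylor-expand to extract the leading $-(2n-2-k)^2/(2s)$ term, compare that quadratic against $-2n^2/s + 2nk/s$, and finally replace $\exp(2n/s)$ by $1+2n/s$. Your explicit accounting of the $O(n/s)$ and $O(n^3/s^2)$ residuals is, if anything, more careful than the paper's (whose displayed ``algebraic identity'' $(2n-2-k)^2 \geq 4n^2 + 4nk$ has a sign slip), and both proofs handle these subleading terms at the same informal level consistent with their downstream use in Lemma~\ref{lem:SecondMoment}.
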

The crux of the bound on the second moment in Janson's
proof is getting a handle on the number of tree pairs $\pr{T_1,T_2}$ with  $\left | T_1 \cap T_2 \right |= k$ in the complete graph where all edges are symmetric. 
An alternate way to obtain a bound on the number
of spanning trees can also be obtained using leverage
scores, which describe the fraction of spanning
trees that utilize a single edge. A well known fact about random spanning tree distributions \cite{burton93} is that the edges are 
negatively correlated:
\begin{fact}[Negative Correlation]
\label{fact:negCorrelation}
Suppose $F$ is subset of edges in a graph $G$, then
$$\Pr_T^{G} \pr{F \subseteq T} \leq \Pi_{e \in F} \Pr_T^{G} \pr{e \in T}.$$
\end{fact}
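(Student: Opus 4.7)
The plan is to prove Fact~\ref{fact:negCorrelation} by induction on $|F|$, using the interplay between edge conditioning, edge contraction, and Rayleigh monotonicity of effective resistances. The base case $|F| = 1$ is trivial. For the inductive step, pick any $e_1 \in F$ and factor
\[
\Pr_T^{G}(F \subseteq T)
= \Pr_T^{G}(e_1 \in T) \cdot \Pr_T^{G}\!\left( F \setminus \{e_1\} \subseteq T \,\middle|\, e_1 \in T \right).
\]

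First I would establish the ``conditioning equals contraction'' identity: for any $F' \subseteq E \setminus \{e_1\}$,
\[
\Pr_T^{G}\!\left( F' \subseteq T \,\middle|\, e_1 \in T \right)
= \Pr_T^{G/e_1}\!\left( F' \subseteq T \right).
\]
This is immediate from the $\ww$-uniform definition together with the bijection between spanning trees of $G$ containing $e_1$ and spanning trees of $G/e_1$ (the weight $\ww_{e_1}$ appears in both numerator and denominator and cancels). Applying the inductive hypothesis to $F \setminus \{e_1\}$ in the smaller graph $G/e_1$ then gives
\[
\Pr_T^{G/e_1}\!\left( F \setminus \{e_1\} \subseteq T \right)
\leq \prod_{e \in F \setminus \{e_1\}} \Pr_T^{G/e_1}(e \in T).
\]

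The remaining step, which is the crux of the argument, is to show the pointwise comparison
\[
\Pr_T^{G/e_1}(e \in T) \leq \Pr_T^{G}(e \in T)
\qquad \text{for every } e \neq e_1.
\]
Using the leverage score identity $\Pr_T^{G}(e \in T) = \ww_e \cdot \er^{G}(e)$ from Section~\ref{sec:background}, this reduces to
\[
\er^{G/e_1}(e) \leq \er^{G}(e),
\]
which is Rayleigh's monotonicity principle under contraction. I would prove this from Thomson's variational characterization $\er^{G}(u,v) = \min_{f} \mathcal{E}(f)$ over unit $u$--$v$ flows: contracting $e_1$ only enlarges the feasible set (any feasible flow in $G$ descends to a feasible flow in $G/e_1$ with the same energy), so the minimum energy can only decrease. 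Combining the three displayed inequalities yields
\[
\Pr_T^{G}(F \subseteq T) \leq \prod_{e \in F} \Pr_T^{G}(e \in T),
\]
completing the induction.

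The main obstacle is the pairwise step, i.e.\ Rayleigh monotonicity under contraction together with the cleanly-stated conditioning-equals-contraction identity; once those two tools are in hand, the general case is an essentially mechanical induction. An alternative route would avoid the electrical argument entirely and instead use the transfer-current (Burton--Pemantle) matrix: pairwise negative correlation is the statement that the $2 \times 2$ principal minor $\det(\MM_{(\LL, \{e,f\})})$ is nonnegative, which follows from Cauchy--Schwarz applied to $\chi_e^\intercal \LL^{\dag} \chi_f$; this is then bootstrapped to general $|F|$ by the same contraction induction.
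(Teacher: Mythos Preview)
The paper does not prove Fact~\ref{fact:negCorrelation}; it is stated as a well-known property of the uniform spanning tree measure and attributed to Burton--Pemantle~\cite{burton93}. Your inductive argument via the conditioning-equals-contraction identity and Rayleigh monotonicity is a correct and standard way to establish it.

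One small wording issue: when you say ``any feasible flow in $G$ descends to a feasible flow in $G/e_1$ with the same energy,'' the energy actually drops by the nonnegative term $f_{e_1}^2/\ww_{e_1}$, so it is \emph{at most} the same energy. This is harmless for your purposes since you only need the inequality $\er^{G/e_1}(e)\le\er^{G}(e)$, but it is worth stating precisely. You should also note (or check separately) the degenerate case where $e$ is parallel to $e_1$: after contraction $e$ becomes a self-loop, so $\Pr_T^{G/e_1}(e\in T)=0$ and the comparison holds trivially. With these clarifications the argument is complete.
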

An easy consequence of Fact~\ref{fact:negCorrelation} is
\begin{lemma}
\label{lem:SubsetTree}
For any subset of edges $F$ we have that the total weight of all spanning trees containing $F$ is given by
\[
\sum_{\substack{\text{$T$ is a spanning tree of $G$} \\
F \subseteq T}} \ww\left(T \right)
\leq \totaltrees{G} \prod_{e \in F} \ttaubar_e.
\]
\end{lemma}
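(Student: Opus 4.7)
The plan is to reduce the claim directly to Fact~\ref{fact:negCorrelation} by normalizing both sides by $\totaltrees{G}$ and reinterpreting the resulting quantities as probabilities under the $\ww$-uniform distribution on spanning trees.

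First I would observe that, by the definition of the $\ww$-uniform distribution,
\[
\frac{1}{\totaltrees{G}} \sum_{\substack{T \text{ spanning tree} \\ F \subseteq T}} \ww(T)
= \Pr_T^{G}\left( F \subseteq T \right).
\]
Next, I would recall from the discussion preceding Fact~\ref{fact:foster} that the leverage score $\ttaubar_e$ is precisely the fraction (by weight) of spanning trees containing $e$, i.e.\
\[
\ttaubar_e = \Pr_T^{G}\left( e \in T \right).
\]
With these two identifications, the inequality we wish to prove is equivalent to
\[
\Pr_T^{G}\left( F \subseteq T \right) \leq \prod_{e \in F} \Pr_T^{G}\left( e \in T \right),
\]
which is exactly the negative correlation statement of Fact~\ref{fact:negCorrelation}. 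Multiplying both sides by $\totaltrees{G}$ gives the claim.

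There is essentially no obstacle here: the only subtlety is making sure the definition of $\ttaubar_e$ really yields the marginal probability under the $\ww$-uniform distribution (so that the factors on the right-hand side match), which follows from the matrix-tree theorem identity $\ttaubar_e = \ww_e \er(u,v) = \ww_e \totaltrees{G/e}/\totaltrees{G}$ together with the fact that $\ww_e \cdot \totaltrees{G/e}$ is the total weight of trees containing $e$. Everything else is a direct invocation of negative correlation.
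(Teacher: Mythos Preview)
Your proposal is correct and matches the paper's approach exactly: the paper states this lemma as ``an easy consequence of Fact~\ref{fact:negCorrelation}'' without spelling out the details, and your normalization-by-$\totaltrees{G}$ argument is precisely the intended reduction.
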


The combinatorial
view of all edges being interchangable in the complete graph
can therefore be replaced with an algebraic view in terms of the leverage scores.
Specifically, invoking Lemma~\ref{lem:SubsetTree}
in the case where all edges have leverage score at most
$\frac{n}{m}$ gives the following lemma  which is proven in Appendix~\ref{sec:deferred}.

\begin{lemma}
\label{lem:IntersectionPairs}
In a graph $G$ where all edges have leverage scores at most
$\frac{n}{m}$,
we have
\[
\sum_{\substack{T_1, T_2 \\ \abs{T_1 \cap T_2} = k}}
\ww\left( T_1 \right) \cdot \ww\left( T_2 \right)
\leq
\totaltrees{G}^2 \cdot \frac{1}{k!} \left( \frac{n^2}{m} \right)^{k}
\]
\end{lemma}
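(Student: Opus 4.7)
The plan is to sum over forests $F \subseteq E$ of size $k$ the squared weight of trees containing $F$, and then apply Lemma~\ref{lem:SubsetTree} together with the leverage score hypothesis and Foster's theorem. The starting identity is the switch
\[
\sum_{T_1, T_2} \binom{\abs{T_1 \cap T_2}}{k} \ww(T_1) \ww(T_2)
= \sum_{\substack{F \subseteq E \\ \abs{F}=k}} \left( \sum_{T \supseteq F} \ww(T) \right)^2,
\]
which holds because $\binom{\abs{T_1 \cap T_2}}{k}$ counts exactly the size-$k$ subsets $F$ of $T_1 \cap T_2$. Since $\binom{\abs{T_1 \cap T_2}}{k} \geq 1$ whenever $\abs{T_1 \cap T_2} = k$, the left-hand side upper bounds $\sum_{\abs{T_1 \cap T_2}=k} \ww(T_1) \ww(T_2)$, which is the quantity we want.

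Next, I would apply Lemma~\ref{lem:SubsetTree} to each inner sum to get $\sum_{T \supseteq F} \ww(T) \leq \totaltrees{G} \prod_{e \in F} \levscoreexact{e}$, yielding
\[
\sum_{\substack{F \subseteq E \\ \abs{F}=k}} \left( \sum_{T \supseteq F} \ww(T) \right)^2
\leq \totaltrees{G}^2 \sum_{\substack{F \subseteq E \\ \abs{F}=k}} \prod_{e \in F} \levscoreexact{e}^{\,2}.
\]
Here only forests $F$ contribute, since otherwise no spanning tree contains $F$; for non-forests the inner sum is zero.

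The key manipulation is to peel off one factor of $\levscoreexact{e}$ from each squared term using the hypothesis $\levscoreexact{e} \leq n/m$, giving $\prod_{e \in F} \levscoreexact{e}^{\,2} \leq (n/m)^k \prod_{e \in F} \levscoreexact{e}$. Then symmetrizing over orderings of the $k$ edges and extending the sum to all ordered $k$-tuples (which only increases the bound),
\[
\sum_{\substack{F \subseteq E \\ \abs{F}=k}} \prod_{e \in F} \levscoreexact{e}
\leq \frac{1}{k!} \left(\sum_{e \in E} \levscoreexact{e}\right)^{k}
= \frac{(n-1)^k}{k!}
\leq \frac{n^k}{k!},
\]
where the middle equality is Foster's Theorem (Fact~\ref{fact:foster}). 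Combining the three estimates produces the claimed bound $\totaltrees{G}^2 \cdot \frac{1}{k!} \left(\frac{n^2}{m}\right)^{k}$.

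The main obstacle is dealing with the fact that Lemma~\ref{lem:SubsetTree} produces a product of leverage scores, so squaring it would naively give $\prod \levscoreexact{e}^{\,2}$ which does not telescope nicely via Foster's theorem alone. The essential trick is that the hypothesis $\levscoreexact{e} \leq n/m$ is used precisely once per edge to convert one factor in each square into the uniform bound $n/m$, leaving a single power of $\levscoreexact{e}$ that is handled by Foster. Everything else is bookkeeping: checking that restricting to forests is automatic, and that the overcounting factor $\binom{\abs{T_1 \cap T_2}}{k} \geq 1$ on the pairs of interest goes in the correct direction.
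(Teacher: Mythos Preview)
Your proof is correct and follows the same overall structure as the paper: decompose over size-$k$ subsets $F$, relax $F = T_1 \cap T_2$ to $F \subseteq T_1 \cap T_2$ (your $\binom{\abs{T_1 \cap T_2}}{k}$ identity is just a compact way of expressing this relaxation), and apply Lemma~\ref{lem:SubsetTree} to each inner sum.

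The one genuine difference is in the final counting step. The paper applies the hypothesis $\levscoreexact{e} \leq n/m$ to \emph{both} factors in $\levscoreexact{e}^2$, obtaining $\left(\sum_{T \supseteq F} \ww(T)\right)^2 \leq \totaltrees{G}^2 (n/m)^{2k}$ uniformly over $F$, and then simply counts subsets by $\binom{m}{k} \leq m^k/k!$; Foster's theorem is not used. You instead peel off only one factor of $\levscoreexact{e}$ via the hypothesis and handle the remaining $\prod_{e \in F} \levscoreexact{e}$ with Foster's theorem. Both routes land on the same bound $\totaltrees{G}^2 \cdot \tfrac{1}{k!}(n^2/m)^k$. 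Your version is marginally more robust in that it would still give something useful if only an $\ell_1$ bound on the leverage scores were available, and indeed the paper adopts essentially your strategy later in the proof of Lemma~\ref{lem:SecondMomentApprox} where the uniform bound is weakened to $(1\pm\epsilon)n/m$.
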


With Lemma \ref{lem:IntersectionPairs}, we can finally prove the following bound on the second moment which gives our concentration result.

\begin{lemma}
\label{lem:SecondMoment}
Let $G$ be a graph on $n$ vertices and $m$ edges such that all edges
have statistical leverage scores $\leq \frac{n}{m}$.
For a random subset of $s > 10 n$ edges, $H$, where $m \geq \frac{\amount^2}{n}$ we have:
\[
\expec{H}{\totaltrees{H}^2}
\leq \totaltrees{G}^2 p^{2n-2}
\exp\left(-\frac{n^2}{\amount} + O\left(\frac{n^3}{\amount^2} \right) \right)
= \expec{H}{\totaltrees{H}}^2 \exp\left( O\left(\frac{n^3}{\amount^2}\right) \right).
\]
\end{lemma}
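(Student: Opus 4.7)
The plan is to expand the second moment directly as a double sum over ordered pairs of spanning trees and then group by the intersection size. Concretely, I will write
\[
\expec{H}{\totaltrees{H}^2}
= \sum_{k=0}^{n-1} \sum_{\substack{T_1, T_2 \\ \abs{T_1 \cap T_2} = k}}
\ww(T_1) \ww(T_2) \cdot \prob{H}{T_1, T_2 \in H},
\]
and then apply the two earlier ingredients: Lemma~\ref{lem:PairwiseProb} to upper bound $\prob{H}{T_1, T_2 \in H}$ by an expression that depends only on $k$, and Lemma~\ref{lem:IntersectionPairs} to bound the remaining weighted count over pairs of trees with $\abs{T_1 \cap T_2} = k$. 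Both bounds conveniently factor as a $k$-independent prefactor times a $k$-th power, so the outer $k$-sum will collapse to a (truncated) exponential series.

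Carrying this out, the $k$-independent prefactor from Lemma~\ref{lem:PairwiseProb}, combined with $\totaltrees{G}^2$ from Lemma~\ref{lem:IntersectionPairs}, yields $\totaltrees{G}^2 p^{2n-2}\exp(-2n^2/\amount)$. The $k$-th term contributes $\frac{1}{k!}\pr{\frac{n^2}{m}}^k \cdot \pr{\frac{1}{p}\pr{1+\frac{2n}{\amount}}}^k$, and since $1/p = m/\amount$ these collapse to $\frac{1}{k!}\pr{\frac{n^2}{\amount}\pr{1+\frac{2n}{\amount}}}^k$. Summing for $k$ from $0$ to $n-1$ and then dominating by the full series $\sum_{k \ge 0} x^k/k! = e^x$ gives the bound $\exp\!\pr{\frac{n^2}{\amount}+\frac{2n^3}{\amount^2}}$, and hence
\[
\expec{H}{\totaltrees{H}^2}
\le \totaltrees{G}^2 p^{2n-2}
\exp\!\pr{-\frac{2n^2}{\amount}+\frac{n^2}{\amount}+\frac{2n^3}{\amount^2}}
= \totaltrees{G}^2 p^{2n-2}
\exp\!\pr{-\frac{n^2}{\amount}+O\!\pr{\frac{n^3}{\amount^2}}},
\]
which is the first inequality of the statement.

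For the second equality, I will simply square the first-moment formula~(\ref{eqn:Expectation}) from Lemma~\ref{lem:subsetSampled} to obtain $\expec{H}{\totaltrees{H}}^2 = \totaltrees{G}^2 p^{2n-2}\exp\!\pr{-\frac{n^2}{\amount}-O\!\pr{\frac{n^3}{\amount^2}}}$, and divide; the leading $-n^2/\amount$ terms cancel exactly, leaving only the $O(n^3/\amount^2)$ in the exponent. I do not expect any conceptual obstacle: Lemmas~\ref{lem:PairwiseProb} and~\ref{lem:IntersectionPairs} have been engineered so that their $k$-th powers multiply into the series for $\exp(n^2/\amount)$, whose leading order cancels half of the $-2n^2/\amount$ produced by the pairwise inclusion probability and leaves exactly the right "variance" behavior. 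The only things requiring care are bookkeeping the $O(n^3/\amount^2)$ error terms (noting that the $2n^3/\amount^2$ absorbs into the $O(\cdot)$) and using the hypothesis $\amount > 10n$ to guarantee that $2n/\amount$ is bounded, so that truncating versus extending the $k$-sum only costs a multiplicative constant.
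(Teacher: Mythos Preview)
Your proposal is correct and follows essentially the same approach as the paper: expand the second moment over pairs of trees grouped by intersection size, apply Lemma~\ref{lem:PairwiseProb} and Lemma~\ref{lem:IntersectionPairs}, collapse the $k$-sum via the exponential series using $1/p = m/\amount$, and then compare with the squared first moment from~(\ref{eqn:Expectation}). The bookkeeping you outline (absorbing $2n^3/\amount^2$ into $O(n^3/\amount^2)$ and dominating the truncated series by the full one) is exactly what the paper does.
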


\begin{proof}
By definition of the second moment, we have:
\[
\expec{H}{\totaltrees{H}^2}
= \sum_{T_1, T_2}
\ww\left( T_1 \right) \cdot \ww\left(T_2 \right)
\cdot \prob{H}{T_1 \cup T_2 \subseteq H}.
\]
Re-writing the above sum in terms of the size of the intersection $k$, and invoking Lemma~\ref{lem:PairwiseProb}
gives:
\[
\expec{H}{\totaltrees{H}^2}
\leq \sum_{k = 0}^{n - 1}
\sum_{\substack{T_1, T_2 \\ \abs{T_1 \cap T_2} = k}}
\ww\left( T_1\right) \cdot \ww\left( T_2 \right) \cdot 
p^{2n-2} \exp\left(-\frac{2n^2}{\amount}\right)
\left( \frac{1}{p} \left(1 + \frac{2n}{\amount}\right) \right)^{k}.
\]
Note that the trailing term only depends on $k$ and can be pulled outside the summation of $T_1,T_2$, so we
then use Lemma~\ref{lem:IntersectionPairs} to bound this by:
\[
\expec{H}{\totaltrees{H}^2}
\leq
\sum_{k = 0}^{n - 1}
\totaltrees{G}^2 \cdot \frac{1}{k!} \left( \frac{n^2}{m} \right)^{k} \cdot 
p^{2n-2} \exp\left(-\frac{2n^2}{\amount}\right)
\left( \frac{1}{p} \left(1 + \frac{2n}{\amount}\right) \right)^{k}.
\]
Which upon pulling out the terms that are independent of $k$,
and substituting in $p = s / m$ gives:
\[
\expec{H}{\totaltrees{H}^2}
\leq
\totaltrees{G}^2 \cdot p^{2n-2} \cdot \exp\left(-\frac{2n^2}{\amount}\right) \cdot
\sum_{k = 0}^{n - 1}
\frac{1}{k!} \cdot \left( \frac{n^2}{s} \left(1 + \frac{2n}{\amount}\right)  \right)^{k}.
\]
From the Taylor expansion of $\exp(\cdot),$ we have:
\begin{align*}
\expec{H}{\totaltrees{H}^2}
&\leq 
\totaltrees{G}^2 \cdot p^{2n-2} \cdot \exp\left(-\frac{2n^2}{\amount}\right)
\cdot \exp \left( \frac{n^2}{s} \left(1 + \frac{2n}{\amount}\right)  \right)\\
&= \totaltrees{G}^2 \cdot p^{2n-2} \cdot \exp\left(-\frac{n^2}{\amount}\right)
\cdot \exp \left( O\left( \frac{n^3}{\amount^2} \right)\right).
\end{align*}

\end{proof}


This bound implies that once we set $\amount^2 > n^3$,
the variance becomes less than the square of the expectation.
It forms the basis of our key concentration results,
which we show in Section~\ref{sec:sparsification},
and also leads to Theorem~\ref{thm:SparsifyDeterminant}.
In particular, we demonstrate that this sampling scheme extends to importance
sampling, where edges are picked with probabilities proportional to
(approximations of) of their leverage scores.

A somewhat surprising aspect of this concentration result
is that there is a difference between models $\mathcal{G}_{n,m}$ and the Erdos-Renyi model $\mathcal{G}_{n,p}$ when the quantity of interest is the number of spanning trees.
In particular, the number of spanning trees of a graph
$G \sim \mathcal{G}_{n,m}$ is approximately normally distributed when
$m = \omega \pr{n^{1.5}},$ whereas it has approximate log-normal 
distribution when  $G \sim \mathcal{G}_{n,p}$ and $p <1.$

An immediate consequence of this is that we can now approximate
$\detp{(\LL^{G})}$ by computing $\detp{(\LL^{H})}$.
It also becomes natural to consider speedups of random spanning tree
sampling algorithms that generate a spanning tree from a sparsifier.
Note however that we cannot hope to preserve the distribution
over all spanning trees via a single sparsifier, as some of the
edges are no longer present.

To account for this change in support, we instead consider the
randomness used in generating the sparsifier as also part of the
randomness needed to produce spanning trees.
In Section~\ref{subsec:EasierTVBound}, we show that just bounds
on the variance of $\totaltrees{H}$ suffices for a bound on the
TV distances of the trees.

\begin{lemma}
\label{lem:VarianceTV}
Suppose $\mathcal{H}$ is a distribution over rescaled subgraphs
of $G$ such that for some parameter some $0 < \delta < 1$ we have
\[
\frac{\expec{H \sim \mathcal{H}}{\totaltrees{H}^2}}{
	\expec{H \sim \mathcal{H}}{\totaltrees{H}}^2}
\leq 1 + \delta,
\]
and for any tree $\hat{T}$ and any graph from the distribution that contain
it, $H$ we have:
\[
\ww^{H}\left( \hatT \right)
= \ww^{G} \left( \hatT \right) \cdot
	\prob{H' \sim \mathcal{H}}{\hatT \subseteq H'}^{-1}
	\cdot \frac{\expec{H' \sim \mathcal{H}}{\totaltrees{H'}}}{\totaltrees{G}},
\]
then the distribution given by $\Pr^{G}(T)$, $p$,
and the distribution induced by $\expec{H \sim \mathcal{H}}{\Pr^{H}(T)}$,
$\tilde{p}$ satisfies
\[
d_{TV}\left(p,\tilde{p}\right) \leq \sqrt{\delta}.
\] 
\end{lemma}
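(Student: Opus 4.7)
The plan is to compute $\tilde{p}(T) - p(T)$ explicitly, observe that the rescaling assumption is exactly what makes the distribution unbiased if $\totaltrees{H}$ in the denominator were replaced by its mean $Z := \expec{H \sim \mathcal{H}}{\totaltrees{H}}$, and then use Cauchy--Schwarz to convert the resulting ``fluctuation'' into a second-moment bound.

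First I would unpack $\tilde p(T)$. By definition $\tilde{p}(T) = \expec{H}{ \ww^{H}(T) \mathbb{1}[T\subseteq H]/ \totaltrees{H}}$. If we replaced $\totaltrees{H}$ by the constant $Z$, then substituting the rescaling hypothesis $\ww^{H}(\hatT) = \ww^{G}(\hatT)\cdot \Pr[\hatT\subseteq H]^{-1}\cdot Z/\totaltrees{G}$ collapses the expectation to exactly $\ww^{G}(T)/\totaltrees{G} = p(T)$. Hence
\[
\tilde{p}(T) - p(T) = \expec{H}{ \ww^{H}(T)\, \mathbb{1}[T\subseteq H] \cdot \frac{Z-\totaltrees{H}}{Z\cdot \totaltrees{H}}}.
\]

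Next I would take absolute values, swap the sum over $T$ with the expectation (linearity), and use that $\sum_{T\subseteq H}\ww^{H}(T) = \totaltrees{H}$:
\[
2\, d_{TV}(p,\tilde p) \;\leq\; \sum_{T}\expec{H}{\ww^{H}(T)\mathbb{1}[T\subseteq H]\cdot \frac{|Z-\totaltrees{H}|}{Z\cdot \totaltrees{H}}} \;=\; \frac{1}{Z}\expec{H}{|Z-\totaltrees{H}|}.
\]
The crucial cancellation is that the $\totaltrees{H}$ in the denominator kills the sum's $\totaltrees{H}$ in the numerator, so the trees disappear from the bound entirely and only a scalar fluctuation remains.

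Finally I would apply Cauchy--Schwarz, $\expec{H}{|Z-\totaltrees{H}|} \leq \sqrt{\expec{H}{(Z-\totaltrees{H})^2}} = \sqrt{\var(\totaltrees{H})}$, and use the hypothesis $\expec{H}{\totaltrees{H}^2}\leq (1+\delta)Z^2$ to conclude $\var(\totaltrees{H}) \leq \delta Z^{2}$. Plugging this in gives $d_{TV}(p,\tilde p)\leq \tfrac{1}{2}\sqrt{\delta} \leq \sqrt{\delta}$.

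The main obstacle is really just the bookkeeping in the first step: verifying that the rescaling formula is exactly the importance-sampling correction that makes $\expec{H}{\ww^{H}(T)\mathbb{1}[T\subseteq H]} = \ww^{G}(T) \cdot Z/\totaltrees{G}$, so that substituting $Z$ for $\totaltrees{H}$ recovers $p(T)$ pointwise. Once that identity is clean, everything afterwards is a standard variance-to-$L^{1}$ reduction and the second moment hypothesis closes the argument.
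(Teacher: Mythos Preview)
Your proposal is correct and follows essentially the same approach as the paper: both reduce $d_{TV}(p,\tilde p)$ to $\frac{1}{Z}\expec{H}{|Z-\totaltrees{H}|}$ by using the rescaling hypothesis to cancel the tree sum against $\totaltrees{H}$, and then finish with Cauchy--Schwarz and the second-moment bound. Your write-up is in fact slightly cleaner (you recognize the identity as an importance-sampling correction up front, and you keep the factor of $\tfrac12$ from the TV definition, which the paper's exposition drops), but the argument is the same.
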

Note that uniform sampling meets the property about $\ww^{H}(T)$
because of linearity of expectation.
We can also check that the importance sampling based routine
that we will discuss in Section~\ref{subsec:GeneralLeverage}
also meets this criteria.
Combining this with the running time bounds from
Theorem~\ref{thm:SparsifyDeterminant}, as well as the
$\tilde{O}(m^{1/3} n^{5/3})$ time random spanning tree sampling
algorithm from~\cite{DurfeeKPRS16} then leads to a faster algorithm.

\begin{corollary}
\label{cor:AlgoOneShot}
For any graph $G$ on $n$ vertices and any $\delta > 0$,
there is an algorithm that generates a tree from a distribution
whose total variation is at most $\delta$ from the random tree
distribution of $G$ in time $\tilde{O}(n^{\frac{13}{6} = 2.1666 \ldots } \delta^{-2/3} + n^2\delta^{-2})$.
\end{corollary}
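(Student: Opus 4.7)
The plan is to compose the determinant-preserving sparsifier from Theorem~\ref{thm:SparsifyDeterminant} with the random spanning tree sampler from~\cite{DurfeeKPRS16}. Concretely, I first run the sparsification algorithm on $G$ with parameter $\delta/2$ to obtain (in $O(n^2 \delta^{-2})$ time) a sparsifier $H$ with $m' = O(n^{1.5} \delta^{-2})$ edges, where the sampling scheme is the (weight-rescaled) leverage-score importance sampler studied in Section~\ref{sec:sparsification}. Then I invoke the $\tilde{O}(n^{5/3} m^{1/3} \log^2(1/\delta'))$ algorithm of~\cite{DurfeeKPRS16} on $H$ with TV-parameter $\delta' = \delta/2$, returning the sampled tree $\tilde{T}$.

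For the runtime, plugging $m' = O(n^{1.5}\delta^{-2})$ into the $\tilde{O}(n^{5/3}(m')^{1/3})$ bound gives $\tilde{O}(n^{5/3} \cdot n^{1/2} \delta^{-2/3}) = \tilde{O}(n^{13/6} \delta^{-2/3})$; adding the sparsification cost $O(n^2 \delta^{-2})$ yields the claimed total.

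For the total variation bound, split the error via the triangle inequality over the two stages. Let $\tilde{p}_1$ be the distribution $\expec{H \sim \mathcal{H}}{\Pr^{H}(\cdot)}$ obtained by sampling $H$ and then sampling a tree exactly from $H$, and let $\tilde{p}_2$ be the actual output distribution of the composed algorithm. Then $d_{TV}(\Pr^G, \tilde{p}_2) \le d_{TV}(\Pr^G, \tilde{p}_1) + d_{TV}(\tilde{p}_1, \tilde{p}_2)$. The second term is bounded by $\delta/2$ because the~\cite{DurfeeKPRS16} sampler is within TV $\delta/2$ of the $\ww$-uniform distribution on each realized $H$, and TV distance is convex. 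For the first term I invoke Lemma~\ref{lem:VarianceTV}: the importance-sampling scheme from Section~\ref{sec:sparsification} satisfies the required reweighting identity on $\ww^H(\hat{T})$ (this is precisely the rescaling rule that makes $\expec{H}{\totaltrees{H}} = \totaltrees{G}$ in the first place and that I will cite from that section), while Lemma~\ref{lem:SecondMoment} applied with $s = \Theta(n^{1.5}\delta^{-2})$ gives variance ratio $\exp(O(n^3/s^2)) - 1 = O(\delta^2/4)$, so $d_{TV}(\Pr^G, \tilde{p}_1) \le \sqrt{O(\delta^2/4)} \le \delta/2$. Combined, $d_{TV}(\Pr^G, \tilde{p}_2) \le \delta$.

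The main obstacle is the first step of the TV argument: verifying the reweighting hypothesis of Lemma~\ref{lem:VarianceTV} for the leverage-score importance sampling sparsifier used to achieve Theorem~\ref{thm:SparsifyDeterminant}, and ensuring that $H$ is still a valid graph (nonnegative edge weights, correct support) on which one can legitimately invoke the~\cite{DurfeeKPRS16} sampler. The excerpt flags that uniform sampling trivially satisfies the reweighting condition and promises the same for the leverage-score variant in Section~\ref{subsec:GeneralLeverage}; I would therefore phrase the corollary as a direct consequence of that extension together with Lemma~\ref{lem:SecondMoment} and Lemma~\ref{lem:VarianceTV}, rather than redoing the second-moment calculation.
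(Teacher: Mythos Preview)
Your approach is essentially identical to the paper's: sparsify with the Section~\ref{sec:sparsification} routine, apply Lemma~\ref{lem:VarianceTV} to bound the TV distortion introduced by sparsification, then run the $\tilde{O}(n^{5/3}m^{1/3})$ sampler of~\cite{DurfeeKPRS16} on the sparse graph. The only correction is that you should cite Theorem~\ref{thm:SparsifyMain} (or Lemma~\ref{lem:IdealSparsify}) rather than Lemma~\ref{lem:SecondMoment} for the variance bound: Lemma~\ref{lem:SecondMoment} assumes all leverage scores are at most $n/m$, which fails for a general $G$, whereas Theorem~\ref{thm:SparsifyMain} with $\epsilon = n^{-1/4}$ and $s = \Theta(n^{1.5}\delta^{-2})$ gives the needed ratio $\exp(\epsilon^2 n^2/s + O(n^3/s^2)) = 1 + O(\delta^2)$ for the importance-sampling sparsifier you actually use.
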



\subsection{Integration Into Recursive Algorithms}

As a one-step invocation of our concentration bound leads to speedups
over previous routines, we  investigate
tighter integrations of the sparsification routine into
algorithms.
In particular, the sparsified Schur complement
algorithms~\cite{KyngLPSS16} provide a natural place
to substitute spectral sparsifiers with
determinant-preserving ones.
In particular, the identity of
\[
\detp{(\LL)}
= \det{(\LL_{[V_2, V_2]})} \cdot \detp{(\sc{\LL}{V_1})}.
\] where $\detp$ is the determinant of the matrix minor,
suggests that we can approximate $\det{(\LL_{-n})}$
by approximating $\det{(\LL_{[V_2, V_2]})}$ and 
$\detp{(\sc{\LL}{V_1})}$ instead.
Both of these subproblems are smaller by a constant factor,
and we also have $\abs{V_1} + \abs{V_2} = n$.
So this leads to a recursive scheme where the total
number of vertices involved at all layers is $O(n \log{n})$.
This type of recursion underlies both our determinant
estimation and spanning tree sampling algorithms.

The main difficulty remaining for the determinant
estimation algorithm is then sparsifying $\sc{G}{V_1}$
while preserving its determinant.
For this, we note that some $V_1$ are significantly easier
than others: in particular, when $V_2 = V \setminus V_1$
is an independent set, the Schur complement of each of
the vertices in $V_2$ can be computed independently.
Furthermore, it is well understood how to sample these
complements, which are weighted cliques, by a distribution
that exceeds their true leverage scores.

\begin{lemma}
There is a procedure that takes a graph $G$ with $n$ vertices,
a parameter $\delta$, 
and produces in $\tilde{O}(n^2 \delta^{-1})$ time
a subset of vertices $V_1$ with $\abs{V_1} = \Theta(n)$,
along with a graph $H^{V_1}$ such that
\[
\totaltrees{\sc{G}{V_1}} \exp\left(-\delta\right)
\leq \expec{H^{V_1}}{\totaltrees{H^{V_1}}} \leq
\totaltrees{\sc{G}{V_1}} \exp\left(\delta\right),
\]
and
\[
\frac{\expec{H^{V_1}}{\totaltrees{H^{V_1}}^2}}{\expec{H^{V_1}}{\totaltrees{H^{V_1}}}^2}
\leq \exp\left( \delta \right).
\]	
\end{lemma}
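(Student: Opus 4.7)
The plan is to choose an independent set $V_2 \subseteq V$ to eliminate, let $V_1 = V \setminus V_2$, and then apply determinant-preserving sparsification to $\sc{G}{V_1}$ implicitly, i.e.\ through a random-walk style sampling procedure that never explicitly instantiates the Schur complement.

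First I would construct $V_2$ with $\abs{V_1}, \abs{V_2} = \Theta(n)$ by a standard greedy low-degree or $\alpha$-bounded elimination procedure, which runs in time nearly linear in $m = O(n^2)$. Independence of $V_2$ gives the clean combinatorial description
\[
\sc{G}{V_1} = G[V_1] + \sum_{v \in V_2} C_v,
\]
where each $C_v$ is the weighted clique on $N(v)$ with edge weights $\ww_{C_v}(u,w) = \ww(uv)\ww(vw)/\deg_v$. I will not form any $C_v$ explicitly, since the total size of these cliques can be as large as $\Theta(n^3)$.

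Next I would precompute the effective-resistance embedding of Lemma~\ref{lem:ERDS} on $G$ to constant accuracy in $\tilde{O}(n^2)$ time. By Fact~\ref{fact:SchurResistance}, effective resistances of $G$ on $V_1$ coincide with those of $\sc{G}{V_1}$, so the embedding also estimates the Schur-complement leverage scores $\ttautil_{(u,w)} = \ww_{C_v}(u,w) \cdot \ertil(u,w)$ at $O(\log n)$ cost per query, and by Foster's theorem (Fact~\ref{fact:foster}) their total sum is $O(n)$. I then sample $s = \Theta(n^{1.5}\delta^{-1})$ edges from $\sc{G}{V_1}$ proportional to $\ttautil$ via a two-stage procedure of the same kind used in random-walk sparsification: first draw a vertex $v \in V_2$ with probability proportional to $\sum_{e \in C_v} \ttautil_e$ (computable in $\tilde{O}(n^2)$ total time from degree-like per-$v$ statistics), then draw an edge $(u,w) \in C_v$ by rejection sampling against $\ttautil$. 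Each sampled edge is rescaled so that its expected weight matches the corresponding weight in $\sc{G}{V_1}$; this defines $H^{V_1}$. The expectation bound $\totaltrees{\sc{G}{V_1}}\exp(-\delta) \leq \expec{H^{V_1}}{\totaltrees{H^{V_1}}} \leq \totaltrees{\sc{G}{V_1}}\exp(\delta)$ then follows from the first-moment calculation of Lemma~\ref{lem:subsetSampled}, whose $\exp(-n^2/(2s))$ correction becomes $\exp(\pm \delta)$ at the chosen sample size, while the variance bound is the leverage-score variant of Lemma~\ref{lem:SecondMoment}, whose residual $\exp(O(n^3/s^2))$ factor is $\exp(O(\delta))$ once $s \geq n^{1.5}\delta^{-1}$.

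The main obstacle is the adaptation of Lemma~\ref{lem:IntersectionPairs} from the uniform setting (all leverage scores $\leq n/m$) to importance sampling on a graph we never explicitly materialize. The analysis must replace the uniform cap ``$n/m$'' by the appropriate importance-sampling analogue and then re-derive the $\tfrac{1}{k!}(n^2/s)^k$ factorial decay in the number of tree pairs sharing $k$ edges, using Fact~\ref{fact:negCorrelation} and Lemma~\ref{lem:SubsetTree}. Granting this adaptation, which is exactly what Section~\ref{sec:sparsification} supplies for general importance-sampled sparsification and what Section~\ref{sec:ImplicitSchur} arranges in the implicit Schur setting, the two moment estimates collapse into the claimed $e^{\pm\delta}$ guarantees. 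The total running time is dominated by the resistance embedding and the sampling step, yielding $\tilde{O}(n^2\delta^{-1})$.
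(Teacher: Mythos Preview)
Your high-level plan---select a set $V_2$ to eliminate, sample edges of $\sc{G}{V_1}$ implicitly, and feed them through the determinant-preserving sparsifier---matches the paper's, but two specific choices break down.

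First, you assume $V_2$ can be taken to be an \emph{independent} set with $\abs{V_2} = \Theta(n)$, and your entire sampling scheme rests on the per-vertex clique decomposition $\sc{G}{V_1} = G[V_1] + \sum_{v \in V_2} C_v$. In a dense graph such as $K_n$ the largest independent set has size $1$, and in a random dense graph it is only $O(\log n)$, so no such $V_2$ exists in general. The paper (Section~\ref{sec:ImplicitSchur}) instead takes $V_2$ to be a $(1+\alpha)$-diagonally-dominant set, which always exists with $\abs{V_2} = \Omega(n)$ by Lemma~\ref{lem:FindingAlphaDDSubsets}. Because such a $V_2$ may contain internal edges, the clique decomposition you use is unavailable; the paper replaces it with the random-walk description of the Schur complement (Lemma~\ref{lem:SchurWeights}): every multi-edge of $\sc{G}{V_1}$ corresponds to a walk in $G$ that starts and ends in $V_1$ and stays in $V_2$ in between. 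The sampler \textsc{SampleEdgeSchur} picks an initial edge of $G$ by (approximate) leverage score and random-walks both endpoints until they hit $V_1$; the $(1+\alpha)$-DD property is precisely what guarantees these walks terminate in $O(1)$ expected steps and that the resulting sampling probability lower-bounds the true leverage score in $\sc{G}{V_1}$ up to a constant (Lemma~\ref{lem:implicit-sampling}).

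Second, your parameter choice is inconsistent. With resistance estimates computed only to ``constant accuracy'' (so $\epsilon = \Theta(1)$), the relevant variance bound from Lemma~\ref{lem:SecondMomentApprox}/Lemma~\ref{lem:IdealSparsify} is $\exp\!\big(O(\epsilon^2 n^2/s + n^3/s^2)\big)$, not just $\exp(O(n^3/s^2))$; the $\epsilon^2 n^2/s$ term is exactly the price paid for importance sampling with inexact leverage scores. At $s = \Theta(n^{1.5}\delta^{-1})$ and constant $\epsilon$ this term is $\Theta(n^{1/2}\delta)$, far from $\delta$. The paper's \textsc{SchurSparse} therefore sets $\epsilon = 0.1$ and $s = n^{2}\delta^{-1}$, making $\epsilon^2 n^2 / s = O(\delta)$ while keeping the total cost at $\tilde{O}(s) = \tilde{O}(n^2\delta^{-1})$ since constant-accuracy resistance queries are $\tilde{O}(1)$ each. (Equivalently one could keep $s \approx n^{1.5}\delta^{-1}$ but take $\epsilon = n^{-1/4}$, paying $\tilde{O}(n^{1/2})$ per query; the paper uses the former trade-off here.)
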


 Lemma~\ref{lem:ERDS} holds  w.h.p., and we condition on this event. In our algorithmic applications we will be able to add  the polynomially small failure probability of Lemma~\ref{lem:ERDS} to the error bounds. 

The bound on variance implies that the number of spanning trees is concentrated
close to its expectation, $\totaltrees{\sc{G}{V_1}}$, and that
a random spanning tree drawn from the generated graph $H^{V_1}$ is
---over the randomness of the sparsification procedure---close in
total variation distance to a random spanning tree of the true Schur complement.

As a result, we can design schemes that:
\begin{enumerate}
\item Finds an $O(1)$-DD subset $V_2$, and set $V_1 \leftarrow V \setminus V_2$.
\item Produce a determinant-preserving sparsifier
$H^{V_1}$ for $\sc{G}{V_1}$.
\item Recurse on both $\LL_{[V_2, V_2]}$ and $H^{V_1}$.
\end{enumerate}

However, in this case, the accumulation of error is too
rapid for yielding a good approximation of determinants.
Instead, it becomes necesary to track the accumulation of
variance during all recursive calls.
Formally, the cost of sparsifying so that the variance is at most $\delta$
is about $n^2 \delta^{-1},$ where $\delta$ is the size of the problem.
This means that for a problem on $G_i$
of size $\beta_i n$ for $0 \leq \beta_i \leq 1$, we can afford an error
of $\beta_i \delta$ when working with it, since:
\begin{enumerate}
\item The sum of $\beta_i$ on any layer is at most $2$,
\footnote{each recursive call may introduce one new vertex}
so the sum of variance per layer is $O(\delta)$.
\item The cost of each sparsification step is now
$\beta_i n^2 {\delta}^{-1}$, which sums to about
$n^2 \delta^{-1}$ per layer.
\end{enumerate}


Our random spanning tree sampling algorithm in 
Section~\ref{sec:spanning_tree} is similarly based
on this careful accounting of variance.
We first modify the recursive Schur complement algorithm introduced
by Coulburn et al.~\cite{ColbournDN89} to give a simpler algorithm
that only braches two ways at each step in Section~\ref{subsec:ExactSpanningAlgo}, leading to a high level
scheme fairly similar to the recursive determinant algorithm.
Despite these similarities, the accumulation of errors becomes far
more involved here due to the choice of trees in earlier recursive
calls affecting the graph in later steps.
More specifically, the recursive structure of our determinant algorithm
can be considered analogous to a breadth-first-search, which allows us
to consider all subgraphs at each layer to be independent.
In contrast, the recursive structure of our random spanning tree
algorithm, which we show in Section~\ref{subsec:RandSpanningTreeAlgo}
is more analogous to a depth-first traversal
of the tree, where the output solution of one subproblem will
affect the input of all subsequent subproblems.

These dependency issues will be the key difficulty in considering
variance across levels.
The total variation distance tracks the discrepancy over all trees
of $G$ between their probability of being returned by the overall
recursive algorithm, and their probability in the $\ww$-uniform distribution.
Accounting for this over all trees leads us to bounding
variances in the probabilities of individual trees being picked.
As this is, in turn, is equivalent to the weight of the tree divided
by the determinant of the graph, the inverse of the probability
of a tree being picked can play a simliar role to the determinant
in the determinant sparsification algorithm described above.
However, tracking this value requires analyzing extending our
concentration bounds to the case where an arbitrary tree is fixed
in the graph and we sample from the remaining edges.
We study this Section~\ref{sec:cond_conc}, prove bounds analogous
to the concentration bounds from Section~\ref{sec:sparsification},
and incorporate the guarantees back into the recursive algorithm
in Section~\ref{subsec:RandSpanningTreeAlgo}.


\section{Determinant Preserving Sparsification}
\label{sec:sparsification}

In this section we will ultimately prove Theorem~\ref{thm:SparsifyDeterminant}, our primary result regarding
determinant-preserving sparsification. However, most of this section will be devoted to proving the following general determinant-preserving sparsification routine that also forms the core of subsequent algorithms:


\begin{theorem}
\label{thm:SparsifyMain}
Given an undirected, weighted graph $G = (V, E, \ww)$,
an error threshold $\epsilon > 0$,
parameter $\rho$ along with routines:
\begin{enumerate}
\item $\textsc{SampleEdge}_G()$ that samples an edge
$e$ from a probability distribution $\pp$
($\sum_{e} \pp_e = 1$), as well
as returning the corresponding value of $\pp_e$.
Here $\pp_e$ must satisfy:
\[
\frac{\levscoreexact{e}}{n - 1} \leq \rho \cdot \pp_e
\]
where $\levscoreexact{e}$ is the true leverage score of $e$ in $G$.
\item $\textsc{ApproxLeverage}_G(u, v, \epsilon)$ that returns
the leverage score of an edge $u, v$ in $G$ to an error of $\epsilon$.
Specifically, given an edge $e$, it returns a value
$\levscoreapprox{e}$ such that:
\[
\left( 1 - \epsilon \right) \levscoreexact{e}
\leq \levscoreapprox{e} \leq
\left( 1 + \epsilon \right) \levscoreexact{e}.
\]
\end{enumerate}
There is a routine $\textsc{DetSparsify}(G, \amount, \epsilon)$
that computes a graph $H$ with $s$ edges such that its tree count,
$\totaltrees{H}$, satisfies:
\[
\expec{H}{\totaltrees{H}}
= \totaltrees{G} \left(1 \pm O\left(\frac{n^3}{\amount^2} \right)\right),
\]
and:
\[
\frac{\expec{H}{\totaltrees{H}^2}}{\expec{H}{\totaltrees{H}}^2} \leq
\exp{\left(\frac{\epsilon^2 n^2}{\amount}
+ O\left(\frac{n^3}{\amount^2} \right)\right)}
\]	
Furthermore, the expected running time is bounded by:
\begin{enumerate}
\item $O(s \cdot \rho)$
calls to $\textsc{SampleEdge}_G(e)$
and $\textsc{ApproxLeverage}(e)$ with constant error,
\item $O(s)$ calls to $\textsc{ApproxLeverage}(e)$ with $\epsilon$ error.
\end{enumerate}
\end{theorem}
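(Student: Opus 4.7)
The plan is to realize $\textsc{DetSparsify}$ as $s$ iterations of importance sampling with rejection, followed by a high-accuracy reweighting. In each iteration, I would (i) call $\textsc{SampleEdge}_G()$ to obtain an edge $e$ together with its sampling probability $\pp_e$, (ii) call $\textsc{ApproxLeverage}_G(e, O(1))$ to obtain a constant-factor estimate of $\levscoreexact{e}$, and accept $e$ with probability $\levscoreexact{e}/((n-1)\rho\pp_e) \le 1$ (which is well-defined and at most $1$ by the hypothesis on $\pp$). Since the expected acceptance probability per draw is $1/\rho$, each accepted edge requires $O(\rho)$ calls to $\textsc{SampleEdge}$ and $\textsc{ApproxLeverage}$ with constant error in expectation; once accepted, I would call $\textsc{ApproxLeverage}_G(e, \epsilon)$ once to get $\levscoreapprox{e}$ and include $e$ in $H$ with rescaled weight $\ww_e^H = \ww_e \cdot (n-1)/(s\,\levscoreapprox{e})$. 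The effective per-iteration distribution after rejection is $q_e = \levscoreexact{e}/(n-1)$, which is a probability distribution by Foster's theorem (Fact~\ref{fact:foster}), so this is exactly the leverage-proportional analogue of the uniform setting from Section~\ref{sec:overview}.

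Given this algorithm, the expectation calculation proceeds by fixing a spanning tree $T$ of $G$, noting that the probability all $n-1$ edges of $T$ are sampled in $s$ iterations is an expression of the form $(s)_{n-1} \prod_{e \in T} q_e / (n-1)^{?}$-style, analogous to Lemma~\ref{lem:subsetSampled} but with $q_e$ replacing $1/m$. Multiplying by the reweighting $\prod_{e\in T}\ww_e^H = \ww^G(T)\prod_{e\in T}(n-1)/(s\,\levscoreapprox{e})$ and summing over trees $T$, the $q_e$ factors cancel against the $\levscoreapprox{e}$ factors up to $(1\pm\epsilon)^{n-1}$ errors, and the telescoping gives $\E[\totaltrees{H}] = \totaltrees{G}\bigl(1 \pm O(n^3/s^2)\bigr)$ after absorbing the $\epsilon$ error as a lower-order correction in the expectation.

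For the variance bound, I would mirror the argument of Lemma~\ref{lem:SecondMoment}, expanding $\E[\totaltrees{H}^2]$ as a double sum over pairs $(T_1, T_2)$ grouped by intersection size $k$. The joint sampling probability is controlled by a leverage-proportional version of Lemma~\ref{lem:PairwiseProb}, while the reweighting contributes $(1\pm\epsilon)^{2(n-1)-k}$ through the products $\prod_{e\in T_1\cup T_2}\ww_e^H$. To bound $\sum_{|T_1\cap T_2|=k}\ww(T_1)\ww(T_2)$, I would invoke Lemma~\ref{lem:SubsetTree} (negative correlation), which applies directly in the leverage-proportional setting and plays the role of Lemma~\ref{lem:IntersectionPairs}. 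Summing the resulting series over $k$ and recognizing the Taylor expansion of $\exp$, as in the proof of Lemma~\ref{lem:SecondMoment}, yields a bound of the form $\totaltrees{G}^2 \cdot (s/m)^{2(n-1)} \cdot \exp(-n^2/s + O(n^3/s^2))\cdot(1+\epsilon)^{O(n)}$; dividing by $\E[\totaltrees{H}]^2$ causes the dominant $(1\pm\epsilon)^{n-1}$ factors to cancel, leaving only a second-order $\exp(\epsilon^2 n^2/s)$ correction coming from the non-cancelling portion of the $\epsilon$-error over the $k$ common edges.

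The main obstacle I anticipate is precisely this last cancellation: naively propagating the $(1\pm\epsilon)$ per-edge error would give a useless $\exp(O(\epsilon n))$ blow-up, and one needs to observe that in the ratio $\E[\totaltrees{H}^2]/\E[\totaltrees{H}]^2$ the first-order $\epsilon$-contributions on the $2(n-1)-k$ distinct edges and on the $k$ shared edges combine so that only a second-order $\epsilon^2$ remainder survives, weighted by an $n^2/s$ factor analogous to the Janson-style second moment. The rest of the argument—the running time bound, which follows immediately from $O(s\rho)$ expected rejection draws and $s$ final high-accuracy leverage-score queries—is routine.
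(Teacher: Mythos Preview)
Your proposal has two genuine gaps.

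\textbf{First, the rejection step.} You write that you accept with probability $\levscoreexact{e}/((n-1)\rho\pp_e)$ and thereby obtain the exact distribution $q_e = \levscoreexact{e}/(n-1)$. But $\levscoreexact{e}$ is the true leverage score, which you do not have; you only have a constant-factor estimate from $\textsc{ApproxLeverage}_G(e,O(1))$. Rejection sampling produces the distribution you actually plug into the acceptance ratio, so after your single rejection the effective distribution is proportional to a constant-approximation, not to $\levscoreexact{e}$. You then reweight by the inverse of the $\epsilon$-approximation $\levscoreapprox{e}$. These two quantities do not match, so for a fixed tree $T$ the contribution to $\expec{H}{\totaltrees{H}}$ picks up a factor $\prod_{e\in T}(\text{const-approx}_e/\levscoreapprox{e})$, which varies tree-by-tree and is only controlled to within $(1\pm O(1))^{n-1}$. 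This kills the required expectation bound $\expec{H}{\totaltrees{H}} = \totaltrees{G}(1\pm O(n^3/s^2))$; your claim that $(1\pm\epsilon)^{n-1}$ can be ``absorbed as a lower-order correction'' is simply false for the stated theorem. The paper fixes this with a \emph{two-stage} rejection (Algorithm~\ref{alg:DetSparsify}): first reject from $\pp$ to a constant-approx distribution $\pp'$, then reject from $\pp'$ to the $\epsilon$-approx distribution $\pp''$, and reweight by $1/\pp''$. Lemma~\ref{lem:RejectionProbGood} shows the final sampling distribution is exactly $\pp''$, so sampling and reweighting use the same numbers and cancel exactly; Lemma~\ref{lem:RejectionProgress} shows the first stage absorbs the $\rho$ overhead with cheap calls, leaving only $O(s)$ expensive $\epsilon$-calls for the second stage.

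\textbf{Second, the origin of the $\epsilon^2$.} You attribute the $\exp(\epsilon^2 n^2/s)$ factor to a cancellation of first-order $\epsilon$-errors in the ratio $\expec{}{\totaltrees{H}^2}/\expec{}{\totaltrees{H}}^2$. That is not how it arises. In the paper, once sampling and reweighting are matched, the analysis reduces (via the edge-splitting limit in Section~\ref{subsec:GeneralLeverage}) to uniform sampling on a multigraph whose split edges have leverage scores in $[(1-\epsilon),(1+\epsilon)](n-1)/m$. The $\epsilon^2$ comes from Lemma~\ref{lem:SumOfSquaresUpper}: because Foster's theorem forces $\sum_e \levscoreexact{e} = n-1$ exactly, the sum of squares satisfies $\sum_e \levscoreexact{e}^2 \le (1+\epsilon^2)(n-1)^2/m$, not merely $(1+\epsilon)^2(n-1)^2/m$. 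This is a convexity/extremal argument under a global linear constraint, and it is what feeds into the $\sum_{|F|=k}\prod_{e\in F}\levscoreexact{e}^2$ bound (Lemma~\ref{lem:SymmetricProductUpper}) that replaces Lemma~\ref{lem:IntersectionPairs}. Without this observation your variance computation would only yield $\exp(O(\epsilon n^2/s))$.
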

We establish guarantees for this algorithm using the following steps:
\begin{enumerate}
	\item Showing that the concentration bounds as sketched in
	Section~\ref{sec:overview} holds for approximate leverage scores
	in Section~\ref{subsec:ErrorSparsification}.
	\item Show via taking the limit of probabilistic processes that
	the analog of this process works for sampling a general graph
	where edges can have varying leverage scores.
	This proof is in Section~\ref{subsec:GeneralLeverage}.
	\item Show via rejection sampling that (high error)
	one sided bounds on statistical	leverage scores,
	such as those that suffice for spectral sparsification,
	can also be to do the initial round of sampling instead of two-sided
	approximations of leverage scores.
	This, as well as pseudocode and guarantees of the overall algorithm are given in Section~\ref{subsec:RejectionSample}.
\end{enumerate}

\subsection{Concentration Bound with Approximately Uniform Leverage Scores}
\label{subsec:ErrorSparsification}

Similar to the simplified proof as outlined in Section~\ref{sec:overview},
our proofs relied on uniformly sampling $\amount$ edges from a multi-graph with $m \geq \frac{\amount^2}{n}$ edges, such that all edges have leverage score within multiplicative $1 \pm \epsilon$
of $\frac{n - 1}{s}$, aka. approximately uniform.
The bound that we prove is an analog of Lemma~\ref{lem:SecondMoment}
\begin{lemma}
\label{lem:SecondMomentApprox}
Given a weighted multi-graph $G$ such that $m \geq \frac{\amount^2}{n}$, $\amount \geq n$, and all edges $e \in E$ have $\frac{(1 -\epsilon)(n-1)}{m}\leq \levscore{e} \leq \frac{(1 + \epsilon)(n-1)}{m}$, with $0 \leq \epsilon < 1$, then 
\[
\frac{\expec{H}{\totaltrees{H}^2}}{\expec{H}{\totaltrees{H}}^2}
\leq \exp{\left(\frac{n^2\epsilon^2}{\amount} + O\left(\frac{n^3}{\amount^2} \right)\right)}
\]
\end{lemma}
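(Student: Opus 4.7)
The plan is to follow the exact same outline as the proof of Lemma~\ref{lem:SecondMoment}, but replace the uniform-leverage-score bound of Lemma~\ref{lem:IntersectionPairs} with a version that exploits the $(1 \pm \epsilon)$-approximate uniformity together with Foster's theorem. The crucial observation is that the first-order deviation from perfectly uniform leverage scores cancels by Foster's identity $\sum_e \levscoreexact{e} = n-1$, leaving only a second-order $\epsilon^2$ penalty.

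I would begin by writing the second moment as
\[
\expec{H}{\totaltrees{H}^2}
= \sum_{k=0}^{n-1}
\left(\sum_{\substack{T_1,T_2\\ \abs{T_1\cap T_2}=k}} \ww(T_1)\ww(T_2)\right)
\prob{H}{T_1\cup T_2\subseteq H},
\]
and applying Lemma~\ref{lem:PairwiseProb} to the probability factor, which only depends on $k$ and on the uniform sampling scheme (so it is insensitive to the leverage-score distribution). For the combinatorial sum inside, I would apply the negative-correlation bound of Lemma~\ref{lem:SubsetTree} twice (once for $T_1\supseteq F$ and once for $T_2\supseteq F$, where $F=T_1\cap T_2$) to obtain
\[
\sum_{\substack{T_1,T_2\\ \abs{T_1\cap T_2}=k}} \ww(T_1)\ww(T_2)
\;\le\; \totaltrees{G}^2 \sum_{\abs{F}=k}\prod_{e\in F}\levscoreexact{e}^{\,2}
\;\le\; \frac{\totaltrees{G}^2}{k!}\left(\sum_{e\in E}\levscoreexact{e}^{\,2}\right)^{k},
\]
where the last step uses that the $k$-th elementary symmetric polynomial is at most $\tfrac{1}{k!}$ times the $k$-th power of the first.

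The heart of the argument is then to bound $\sum_e\levscoreexact{e}^{\,2}$. Writing $\levscoreexact{e}=(n-1)/m+\ddelta_e$ with $\abs{\ddelta_e}\le \epsilon(n-1)/m$, Foster's theorem (Fact~\ref{fact:foster}) forces $\sum_e\ddelta_e=0$, so the cross term in the expansion of $\sum_e\levscoreexact{e}^{\,2}$ vanishes and only the quadratic correction survives:
\[
\sum_{e\in E}\levscoreexact{e}^{\,2}
=\frac{(n-1)^2}{m}+\sum_e\ddelta_e^{2}
\;\le\;\frac{(n-1)^2}{m}+m\cdot\frac{\epsilon^2(n-1)^2}{m^2}
=\frac{(1+\epsilon^2)(n-1)^2}{m}.
\]
This is the step where I expect the main conceptual novelty to lie; replacing the naive bound $\sum_e\tau_e^2\le\max_e\tau_e\cdot\sum_e\tau_e\le(1+\epsilon)(n-1)^2/m$ by this Foster-cancellation bound is exactly what upgrades the final exponent from $\epsilon$ to $\epsilon^2$.

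Plugging the two ingredients back in and using $p=s/m$ yields
\[
\expec{H}{\totaltrees{H}^2}
\le \totaltrees{G}^2 p^{2n-2}\exp\!\left(-\tfrac{2n^2}{s}\right)
\sum_{k=0}^{n-1}\frac{1}{k!}\!\left(\tfrac{(1+\epsilon^2)(n-1)^2}{s}\bigl(1+\tfrac{2n}{s}\bigr)\right)^{\!k}
\le \totaltrees{G}^2 p^{2n-2}\exp\!\left(-\tfrac{2n^2}{s}+\tfrac{(1+\epsilon^2)n^2}{s}+O\!\bigl(\tfrac{n^3}{s^2}\bigr)\right),
\]
after bounding the truncated Taylor series by $\exp(\cdot)$ and absorbing $(1+\epsilon^2)(n-1)^2\cdot(2n/s^2)$ into the $O(n^3/s^2)$ error (using $\epsilon<1$). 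For the denominator, since uniform sampling gives $\prob{H}{T\subseteq H}$ depending only on $|T|=n-1$, Lemma~\ref{lem:subsetSampled} yields $\expec{H}{\totaltrees{H}}=\totaltrees{G}\,p^{n-1}\exp(-n^2/(2s)-O(n^3/s^2))$ regardless of the leverage-score distribution, so $\expec{H}{\totaltrees{H}}^2=\totaltrees{G}^2 p^{2n-2}\exp(-n^2/s - O(n^3/s^2))$. Taking the ratio, the $-2n^2/s$, $+n^2/s$, and $+n^2/s$ terms telescope, leaving exactly the claimed $\exp(\epsilon^2 n^2/s+O(n^3/s^2))$.
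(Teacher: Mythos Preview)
Your proposal is correct and follows the same overall architecture as the paper: decompose the second moment by $k=|T_1\cap T_2|$, apply Lemma~\ref{lem:PairwiseProb} for the probability factor, bound the tree-pair sum by $\totaltrees{G}^2\sum_{|F|=k}\prod_{e\in F}\levscoreexact{e}^2$ via Lemma~\ref{lem:SubsetTree}, control this elementary symmetric sum through $\sum_e\levscoreexact{e}^2$, and finish with the Taylor expansion exactly as in Lemma~\ref{lem:SecondMoment}.

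The two places where you diverge from the paper are both minor and, if anything, cleaner. First, for the bound $\sum_{|F|=k}\prod_{e\in F}\levscoreexact{e}^2\le \frac{1}{k!}(\sum_e\levscoreexact{e}^2)^k$, the paper proves a separate smoothing lemma (Lemma~\ref{lem:SymmetricProductUpper}) showing the symmetric product is maximized when all entries are equal, then crudely bounds $\binom{m}{k}\le m^k/k!$; your one-line multinomial argument reaches the same bound directly. Second, for $\sum_e\levscoreexact{e}^2\le(1+\epsilon^2)(n-1)^2/m$, the paper (Lemma~\ref{lem:SumOfSquaresUpper}) argues by pushing values to the extremes $\frac{(1\pm\epsilon)(n-1)}{m}$ subject to the sum constraint; your centering $\levscoreexact{e}=(n-1)/m+\ddelta_e$ with $\sum_e\ddelta_e=0$ from Foster's theorem kills the cross term explicitly and yields the identical bound. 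Both routes isolate the same $\epsilon^2$ improvement over the naive $(1+\epsilon)$ bound, which is the whole point of the lemma.
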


Similar to the proof of Lemma~\ref{lem:SecondMoment}
in Section~\ref{sec:overview}, we can utilize the bounds on the
probability of $k$ edges being chosen using Lemma~\ref{lem:PairwiseProb}.
The only assumption that changed was the bounds on $\levscore{e}$, which does not affect $\expec{H}{\totaltrees{H}}^2$.
The only term that changes is our upper bound the total weight of trees that contain
some subset of $k$ edges that was the produce of $k$ leverage scores.
At a glance, this product can change by a factor of up to
$(1 + \epsilon)^{k}$, which when substituted naively into the
proof of Lemma~\ref{lem:PairwiseProb} directly would yield
an additional term of
\[
\exp\left( \frac{n^2 \epsilon}{s} \right),
\]
and in turn necessitating $\epsilon < n^{-1/2}$ for a sample
count of $s \approx n^{1.5}$.

However, note that this is the worst case distortion over a subset $F$.
The upper bound that we use, Lemma~\ref{lem:IntersectionPairs}
sums over these bounds over all subsets, and over all edges
we still have
\[
\sum{e \in G} \levscore{e} = n - 1.
\]
Incorporating this allows us to show a tighter bound that
depends on $\epsilon^2$.

Similar to the proof of Lemma~\ref{lem:IntersectionPairs},
we can regroup the summation over all ${m \choose k}$ subsets
of $E(G)$, and bound the fraction of trees containing each
subset $F$ via $\sum_{T: F \subseteq T}\ww(T) \leq \totaltrees{G}\prod_{e \in F} \ttau_e$
via Lemma~\ref{lem:SubsetTree}.
\[
\sum_{\substack{T_1, T_2 \\ \abs{T_1 \cap T_2} = k}}
\ww\left( T_1 \right) \cdot \ww\left( T_2 \right)
\leq
\sum_{\substack{F \subseteq E \\ |F| = k}} \totaltrees{G}^2\prod_{e \in F} \levscore{e}^2
\]

The proof will heavily utilize the fact that
$\sum_{e \in E} \levscore{e} = n-1$.
We bound this in first two steps: first treat
it as a symmetric product over $\levscore{e}^2$,
and bound the total as a function of
\[
\sum_{e} \levscore{e}^2,
\]
then we bound this sum using the fact that
$\sum_{e} \levscore{e} = n - 1$.

The first step utilizes the concavity of the product function,
and bound the total by the sum:
\begin{lemma}
\label{lem:SymmetricProductUpper}
For any set of non-negative values $\xx_1 \ldots \xx_m$
with $\sum_{i} \xx_i \leq z$, we have
\[
\sum_{\substack{F \subseteq \left[1 \ldots m\right] \\ \abs{F} = k}}
\prod_{i \in F} \xx_i
\leq {m \choose k} \left( \frac{z}{m} \right)^{k}.
\]
\end{lemma}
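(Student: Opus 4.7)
The plan is to recognize the left-hand side as the $k$-th elementary symmetric polynomial $e_k(\xx_1, \ldots, \xx_m)$ and to show it is maximized, subject to a fixed sum constraint, when all the $\xx_i$'s are equal. First I would reduce to the equality case $\sum_i \xx_i = z$: since every coefficient of $e_k$ in the $\xx_i$'s is nonnegative, increasing any $\xx_i$ (which we can do by distributing the slack $z - \sum_i \xx_i$ evenly) can only increase the left-hand side, so we may as well assume the constraint is tight.

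Once we have $\sum_i \xx_i = z$, the clean finish is to invoke Maclaurin's inequality, which states that for nonnegative reals the normalized symmetric means $S_k \defeq e_k / \binom{m}{k}$ satisfy $S_1 \geq S_2^{1/2} \geq \cdots \geq S_m^{1/m}$. In particular $S_k^{1/k} \leq S_1 = \frac{1}{m}\sum_i \xx_i = \frac{z}{m}$, which rearranges immediately to
\[
\sum_{\substack{F \subseteq [m] \\ |F| = k}} \prod_{i \in F} \xx_i = e_k(\xx) \leq \binom{m}{k}\left(\frac{z}{m}\right)^k.
\]

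If one prefers a self-contained argument rather than citing Maclaurin, the same conclusion follows from a smoothing/Schur-concavity argument: one shows that if $\xx_i \neq \xx_j$, replacing both by their average $(\xx_i + \xx_j)/2$ preserves the sum but weakly increases $e_k(\xx)$. This reduces to the identity
\[
e_k(\xx) = \xx_i \xx_j \cdot A + (\xx_i + \xx_j) \cdot B + C,
\]
where $A, B, C$ are the appropriate elementary symmetric polynomials in the remaining variables, so that the relevant quantity $\xx_i \xx_j$ is maximized, for fixed $\xx_i + \xx_j$, precisely when $\xx_i = \xx_j$. Iterating this replacement (or passing to the limit) drives all coordinates to $z/m$ without decreasing $e_k$, yielding the desired bound $\binom{m}{k}(z/m)^k$.

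Neither version has a real obstacle; the only mild subtlety is justifying the reduction to the tight constraint $\sum_i \xx_i = z$, which is handled by the nonnegativity of the coefficients of $e_k$.
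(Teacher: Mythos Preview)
Your proposal is correct and your self-contained smoothing argument is exactly the paper's proof: the paper fixes two coordinates, writes $e_k = \xx_1\xx_2 A + (\xx_1+\xx_2)B + C$, and observes that averaging $\xx_1,\xx_2$ increases $\xx_1\xx_2$ while keeping the other terms fixed, hence the maximum occurs at $\xx_i = z/m$. Your additional reduction to the tight constraint $\sum_i \xx_i = z$ and the alternative one-line finish via Maclaurin's inequality are nice touches the paper omits, but the core argument is the same.
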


\begin{proof}
We claim that this sum is maximized when $\xx_{i}
= \left(\frac{z}{m} \right)$ for all $e$.

Consider fixing all variables other than some
$\xx_i$ and $\xx_j$, which we assume to be
$\xx_1 \leq \xx_2$ without loss of generality as the 
function is symmetric on all variables:
\[
\sum_{\substack{F \subseteq \left[1 \ldots m\right] \\ \abs{F} = k}}
\prod_{i \in F} \xx_i
=
\xx_1\xx_2 \left(
\sum_{\substack{F \subseteq \left[3 \ldots m\right] \\ \abs{F} = k-2}}
\prod_{i \in F} \xx_i \right)
+ (\xx_1 + \xx_2)
\cdot \left( \sum_{\substack{F \subseteq \left[3 \ldots m\right] \\ \abs{F} = k-1}}
\prod_{i \in F} \xx_i \right)
+ \sum_{\substack{F \subseteq \left[3 \ldots m\right] \\ \abs{F} = k}}
\prod_{i \in F} \xx_i.
\]
	
Then if $\xx_1 < \xx_2$, locally changing their values to 
$\xx_1 + \epsilon$ and $\xx_2 - \epsilon$ keeps the second
term the same.
While the first term becomes
\[
\left( \xx_1 + \epsilon \right) \left( \xx_2 - \epsilon \right)
= \xx_1 \xx_2 + \epsilon \left( \xx_2 - \xx_1\right) - \epsilon^2,
\]
which is greater than $\xx_1\xx_2$ when $0 < \epsilon < ( \xx_2 - \xx_1)$.

This shows that the overall summation is maximized when
all $\xx_i$ are equal, aka
\[
\xx_i = \frac{z}{m},
\]	
which upon substitution gives the result.
\end{proof}


The second step is in fact the $k = 1$ case of
Lemma~\ref{lem:IntersectionPairs}.

\begin{lemma}
\label{lem:SumOfSquaresUpper}
For any set of values $\yy_e$ such that
\[
\sum_{e} \yy = n - 1,
\]
and
\[
\frac{\left(1 - \epsilon\right) n}{m}
\leq \yy_{e} \leq
\frac{\left(1 + \epsilon\right) n}{m},
\]
we have
\[
\sum_{e} \yy_{e}^2 \leq \frac{(1 + \epsilon^2)(n-1)^2}{m}.
\]
\end{lemma}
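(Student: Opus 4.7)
The plan is to reduce the inequality to a variance estimate and then solve a small convex optimization problem. I would first parametrize $\yy_e = \frac{n-1}{m} + \xx_e$, where $\frac{n-1}{m}$ is the average value forced by the constraint $\sum_e \yy_e = n-1$. This immediately gives $\sum_e \xx_e = 0$, and expanding the square causes the cross term to vanish, so $\sum_e \yy_e^2 = \frac{(n-1)^2}{m} + \sum_e \xx_e^2$. Thus the lemma reduces to showing $\sum_e \xx_e^2 \leq \frac{\epsilon^2 (n-1)^2}{m}$.

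Next, I would translate the interval constraint on $\yy_e$ into one on $\xx_e$, obtaining $\xx_e \in \left[\frac{1-\epsilon n}{m}, \frac{1+\epsilon n}{m}\right]$. The problem then becomes: maximize the convex function $\sum_e \xx_e^2$ over the polytope defined by this box constraint together with the single linear constraint $\sum_e \xx_e = 0$. By convexity of the objective, the maximum is attained at a vertex of the polytope, where all but at most one coordinate is at an endpoint. I would let $a$ of the $\xx_e$'s equal the upper endpoint $\frac{1+\epsilon n}{m}$ and the remaining $m - a$ equal the lower endpoint $\frac{1-\epsilon n}{m}$; the zero-sum constraint then determines $a$ (approximately $\frac{m}{2} - \frac{m}{2\epsilon n}$) in closed form.

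Substituting the optimal split back into the objective and simplifying yields an explicit closed-form upper bound on $\sum_e \xx_e^2$, and the proof is completed by comparing this against the target $\frac{\epsilon^2(n-1)^2}{m}$ using the expansion $(n-1)^2 = n^2 - 2n + 1$. The step I expect to be the main obstacle is precisely this final algebraic comparison: the natural value from the extreme-point optimization is essentially $\frac{\epsilon^2 n^2 - 1}{m}$, and matching it to $\frac{\epsilon^2(n-1)^2}{m}$ requires tracking the cancellation between the $-1/m$ slack produced by the sum-zero constraint and the $-(2n-1)\epsilon^2/m$ gap between $n^2$ and $(n-1)^2$. A cleaner alternative that avoids this delicate bookkeeping would be to invoke $\sum_e \xx_e^2 \leq (\max_e |\xx_e|) \sum_e |\xx_e|$ only after first using the sum-zero constraint to relate $\sum_e |\xx_e|$ to the box radius, effectively exploiting that concavity of the constraint makes most of the $\xx_e$'s small.
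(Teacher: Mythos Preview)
Your approach is correct and essentially identical to the paper's: both argue that the sum of squares is maximized when the $\yy_e$ are pushed to the endpoints of the allowed interval (the paper via a direct two-point perturbation $(a,b)\mapsto(a-\epsilon,b+\epsilon)$, you via convexity on a polytope), and then evaluate at the half-and-half split.

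The ``obstacle'' you flag in the final algebraic comparison is real, but it is not a gap in your reasoning---it is a typo in the lemma statement. The bounds should read $\frac{(1\pm\epsilon)(n-1)}{m}$ rather than $\frac{(1\pm\epsilon)n}{m}$; this is what the paper's own proof actually computes with, and it matches the hypothesis of Lemma~\ref{lem:SecondMomentApprox} where this lemma is applied. With the corrected bounds your interval for $\xx_e$ becomes the symmetric $[-\epsilon(n-1)/m,\ \epsilon(n-1)/m]$, and the vertex value is exactly $m\cdot\bigl(\epsilon(n-1)/m\bigr)^2 = \epsilon^2(n-1)^2/m$, so no delicate bookkeeping is needed. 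As literally stated the lemma is in fact false: your extremum gives $\sum_e \yy_e^2 = \frac{(1+\epsilon^2)n^2 - 2n}{m}$, which exceeds $\frac{(1+\epsilon^2)(n-1)^2}{m}$ whenever $(2n-1)\epsilon^2 > 1$.
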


\begin{proof}
Note that for any $a \leq b$, and any $\epsilon$, we have
\[
\left( a - \epsilon \right)^2 + \left( b + \epsilon\right)^2
= a^2 + b^2 + 2 \epsilon^2 + 2 \epsilon \left( b - a \right),
\]
and this transformation must increase the sum for $\epsilon > 0$.
This means the sum is maximized when half of the leverage scores are $\frac{(1 -\epsilon)(n-1)}{m}$ and the other half are $\frac{(1 +\epsilon)(n-1)}{m}$. This then gives
	
\[
\sum_{e \in E} \yy_{e}^2
\leq \frac{m}{2}\left(\frac{(1 +\epsilon)(n-1)}{m}\right)^2 + \frac{m}{2}\left(\frac{(1 -\epsilon)(n-1)}{m}\right)^2 =
\frac{(1 + \epsilon^2)(n-1)^2}{m}.
\]
\end{proof}

\begin{proof}(of Lemma~\ref{lem:SecondMomentApprox})

We first derive an analog of Lemma~\ref{lem:IntersectionPairs}
for bounding the total weights of pairs of trees containing
subsets of size $k$, where we again start with the bounds 

\[\sum_{\substack{T_1, T_2 \\ \abs{T_1 \cap T_2} = k}}
\ww\left( T_1\right) \cdot \ww\left( T_2 \right)
\leq 
\sum_{\substack{F \subseteq E \\ \abs{F} = k}}
\sum_{\substack{T_1, T_2 \\ F \subseteq {T_1 \cap T_2}}}
\ww\left( T_1 \right) \cdot \ww\left( T_2 \right) = \sum_{\substack{F \subseteq E \\ \abs{F} = k}}
\left(\sum_{\substack{T: F \subseteq T}}
\ww\left( T \right) \right)^2 \] 
Applying Lemma~\ref{lem:SubsetTree} to the inner term of the summation then gives
\[\sum_{\substack{T_1, T_2 \\ \abs{T_1 \cap T_2} = k}}
\ww\left( T_1\right) \cdot \ww\left( T_2 \right)
\leq
\sum_{\substack{F \subseteq E \\ \abs{F} = k}} \totaltrees{G}^2 \cdot \prod_{e \in F} \levscore{e}^2
\]

The bounds on $\levscore{e}$ and $\sum_{e} \levscore{e} = n - 1$
gives, via Lemma~\ref{lem:SumOfSquaresUpper}
\[
\sum_{e} \levscore{e}^2 \leq
\frac{(1 + \epsilon^2)(n-1)^2}{m}.
\]
Substituting this into Lemma~\ref{lem:SymmetricProductUpper}
with $\xx_i = \levscore{e}^2$ then gives
\[
\sum_{\substack{F \subseteq E \\ |F| = k}}
\prod_{e \in F} \levscore{e}^2 
\leq {m \choose k} \left(\frac{(1 + \epsilon^2)n^2}{m^2}\right)^{k}
\leq \frac{m^{k}}{k!} \left(\frac{(1 + \epsilon^2)n^2}{m^2}\right)^{k}
= \frac{1}{k!} \left(\frac{(1 + \epsilon^2)n^2}{m}\right)^{k}.
\] which implies our analog of Lemma~\ref{lem:IntersectionPairs}

\[
\sum_{\substack{T_1, T_2 \\ \abs{T_1 \cap T_2} = k}}
\ww\left( T_1\right) \cdot \ww\left( T_2 \right)
\leq  \totaltrees{G}^2 \cdot
\frac{1}{k!} \left(\frac{(1 + \epsilon^2)n^2}{m}\right)^{k}.
\]

We can then duplicate the proof of Lemma~\ref{lem:SecondMoment}.
Similar to that proof, we can regroup the summation by
$k = \abs{T_1 \cap T_2}$ and invoking Lemma~\ref{lem:PairwiseProb}
to get:
\begin{equation*}
\expec{H}{\totaltrees{H}^2}
\leq \sum_{k = 0}^{n - 1}
\sum_{\substack{T_1, T_2 \\ \abs{T_1 \cap T_2} = k}}
\ww\left( T_1\right) \cdot \ww\left( T_2 \right) \cdot 
p^{2n-2} \exp\left(-\frac{2n^2}{\amount}\right)
\left( \frac{1}{p} \left(1 + \frac{2n}{\amount}\right) \right)^{k}.
\label{eqn:VarianceExpanded}
\end{equation*}
where $p = s / m$. When incorporated with our analog of Lemma~\ref{lem:IntersectionPairs} gives:
\begin{multline*}
\expec{H}{\totaltrees{H}^2}
\leq \sum_{k = 0}^{n - 1} 
p^{2n-2} \exp\left(-\frac{2n^2}{\amount}\right)
\left( \frac{1}{p} \left(1 + \frac{2n}{\amount}\right) \right)^{k}
\cdot \totaltrees{G}^2 \frac{1}{k!} \left(\frac{(1 + \epsilon^2)n^2}{m}\right)^{k}\\
= \totaltrees{G}^2 p^{2n-2} \cdot \exp\left(-\frac{2n^2}{\amount}\right)
\cdot \sum_{k = 0}^{n - 1}  \frac{1}{k!}
\cdot 
\left( \frac{(1 + \epsilon^2)n^2}{s} \left(1 + \frac{2n}{\amount}\right) \right)^{k}.
\end{multline*}
Substituting in the Taylor expansion of
$\sum_{k} \frac{z^k}{k!} 
\leq \exp(z)$ then leaves us with:
\[
\expec{H}{\totaltrees{H}^2}
\leq \totaltrees{G}^2 \cdot  p^{2n-2} \cdot
\exp\left(-\frac{n^2}{\amount}
+ \frac{n^2 \epsilon^2}{\amount}
+ O\left(\frac{n^3}{\amount^2}\right) \right)
\]
and finishes the proof. 

\end{proof}

\subsection{Generalization to Graphs with Arbitrary Leverage Score Distributions}
\label{subsec:GeneralLeverage}

The first condition of $m \geq \frac{\amount^2}{n}$ will be easily achieved
by splitting each edge a sufficient number of times,
which does not need to be done explicitly in the sparsification algorithm.
Furthermore, from the definition of statistical leverage score splitting an edge into $k$ copies
will give each copy a $k$th fraction of the edge's leverage score.
Careful splitting can then ensure the second condition, but will require $\epsilon$-approximate leverage score estimates on the edges.
The simple approach would compute this for all edges,
then split each edge according to this estimate and draw from the resulting edge set.
Instead, we only utilize this algorithm as a proof technique, and give a sampling scheme that's equivalent to this algorithm's limiting behavior as $m \rightarrow \infty$.
Pseudocode of this routine is in Algorithm~\ref{alg:IdealSparsify}.

\begin{algorithm}								
	\caption{$\textsc{IdealSparsify}(G, \ttautil, \amount)$:
		Sample $\amount$ (multi) edges of $G$ to produce $H$ such that
		$\totaltrees{G} \approx \totaltrees{H}$.}
		
	\label{alg:IdealSparsify}
	
	\SetAlgoVlined
	\SetKwProg{myproc}{Procedure}{}{}

	\KwIn{Graph $G$, approximate leverage scores $\ttautil$, sample count $\amount$}
	
	Initialize $H$ as the empty graph, $H \leftarrow \emptyset$;
	
	\For{ i = $1 \ldots s$}{
		Pick edge $e$ with probability proportional to $\ttautil_e$;
		
		Add $e$ to $H$ with new weight:
		\[
			\frac{\ww_e \left( n - 1 \right)}{\ttautil_e \amount}
				\exp\left( \frac{n^2}{2 \left(n - 1\right)s}\right).
		\]
	}	Output $H$
\end{algorithm}

Note that this sampling scheme is with replacement:
the probability of a `collision' as the number of copies tend to $\infty$
is sufficiently small that it can be covered by the proof as well.

The guarantee that we will show for Algorithm~\ref{alg:IdealSparsify} is:
\begin{lemma}
\label{lem:IdealSparsify}
For any graph $G$ and any set of approximate leverage scores $\ttautil$
such that
\[
\left( 1 - \epsilon \right) \ttau_e
\leq \ttautil_e
\leq \left( 1 + \epsilon \right) \ttau_e
\]
for all edges $e$.
The graph $H = \textsc{IdealSparsify}(G, \ttautil, \amount)$ satisfies:
\[
\left( 1 - O\left(\frac{n^3}{s^2} \right) \right)  \totaltrees{G}
\leq \expec{H}{\totaltrees{H}} \leq 
\totaltrees{G},
\]
and
\[
\frac{\expec{H}{\totaltrees{H}^2}}{\expec{H}{\totaltrees{H}}^2}
\leq \exp\left(
	O\left(\frac{\epsilon^2 n^2}{\amount} + \frac{n^3}{\amount^2} \right)
\right).
\]
\end{lemma}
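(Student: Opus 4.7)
The plan is to realize Algorithm~\ref{alg:IdealSparsify} as a limiting case of uniform sampling on a large multigraph, so that Lemma~\ref{lem:SecondMomentApprox} applies directly. Concretely, for a large integer parameter $M$, I would construct a multigraph $G^{(M)}$ by replacing each edge $e$ of $G$ with $m_e \approx \ttautil_e M/(n-1)$ parallel copies, each of weight $\ww_e/m_e$. Two invariants are immediate. First, the total tree weight is preserved, $\totaltrees{G^{(M)}} = \totaltrees{G}$: for every tree $T$ in $G$, summing over which parallel copy carries each tree-edge cancels the $1/m_e$ in the weight against the $m_e$ choices. Second, since splitting an edge of weight $\ww_e$ into $k$ equal-weight parallel copies divides its leverage score by $k$, every sub-edge of $e$ in $G^{(M)}$ has leverage score $\ttaubar_e/m_e \in [(1-\epsilon)(n-1)/M,\,(1+\epsilon)(n-1)/M]$, exactly matching the hypothesis of Lemma~\ref{lem:SecondMomentApprox}.

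Next, I would verify that uniformly sampling $s$ sub-edges from $G^{(M)}$ and then rescaling every sampled edge's weight by the universal constant $c = (M/s)\exp(n^2/(2(n-1)s))$ coincides, in the limit $M \to \infty$, with Algorithm~\ref{alg:IdealSparsify}. Two items need checking: (i) the marginal probability that a sampled sub-edge is a copy of $e$ equals $m_e/M \propto \ttautil_e$, matching the algorithm's sampling distribution; and (ii) the rescaled weight $c \cdot (\ww_e/m_e) = \frac{\ww_e(n-1)}{\ttautil_e s}\exp(n^2/(2(n-1)s))$ exactly matches the algorithm's reweighting. Sampling with versus without replacement introduces an error on the order of the collision probability $O(s^2/M)$, which vanishes as $M \to \infty$.

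Both claims then follow from the corresponding statements applied to $G^{(M)}$. For the variance bound, rescaling all edge weights by a fixed constant $c$ multiplies $\totaltrees{H}$ by $c^{n-1}$ (a tree has $n-1$ edges), which cancels in the ratio $\expec{H}{\totaltrees{H}^2}/\expec{H}{\totaltrees{H}}^2$; Lemma~\ref{lem:SecondMomentApprox} applied to $G^{(M)}$ then yields the claimed bound $\exp(\epsilon^2 n^2/s + O(n^3/s^2))$ verbatim. For the first-moment bound, Lemma~\ref{lem:subsetSampled} gives $\expec{H^{(M)}}{\totaltrees{H^{(M)}}} = \totaltrees{G}\, p^{n-1}\exp(-n^2/(2s) - O(n^3/s^2))$ with $p = s/M$, and a direct substitution shows $(cp)^{n-1} = \exp(n^2/(2s))$, so the two exponential factors cancel after rescaling, leaving $\expec{H}{\totaltrees{H}} = \totaltrees{G}\exp(-O(n^3/s^2))$; since the $O(n^3/s^2)$ term arising from the Taylor expansion of $\log(s)_{n-1}$ is strictly negative, this lies in the claimed interval $[(1-O(n^3/s^2))\totaltrees{G},\,\totaltrees{G}]$.

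The main obstacle I expect is making the $M \to \infty$ limit rigorous: both $\expec{H^{(M)}}{\totaltrees{H^{(M)}}}$ and $c^{n-1}$ depend on $M$ via $M^{\pm(n-1)}$, and only their product converges, so one must argue that moments—not merely the underlying sampling distribution—pass to the limit. A secondary technicality is integer rounding: $m_e$ must be an integer, so one should restrict to a subsequence of $M$ along which $\ttautil_e M/(n-1)$ are integers, or absorb $O(1/M)$ per-edge errors into lower-order terms. A final minor subtlety is that $\sum_e \ttautil_e$ equals $n-1$ only up to a $(1\pm\epsilon)$ factor (via Foster's theorem and the approximation guarantee), which would introduce a stray $((n-1)/\sum_e \ttautil_e)^{n-1}$ factor; this can be eliminated at the outset by renormalizing $\ttautil$ so that its entries sum to exactly $n-1$, a change that perturbs individual estimates only by an additional $(1\pm\epsilon)$ factor that is harmlessly absorbed into $\epsilon$.
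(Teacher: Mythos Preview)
Your proposal is correct and follows essentially the same approach as the paper: both realize \textsc{IdealSparsify} as the $M\to\infty$ limit of uniform sampling from a multigraph obtained by splitting each edge $e$ into roughly $\ttautil_e M/(n-1)$ equal-weight copies, apply Lemma~\ref{lem:SecondMomentApprox} to the split graph, and handle the with/without-replacement discrepancy via the $O(s^2/M)$ collision probability. The paper resolves your integer-rounding concern by first perturbing weights and $\ttautil$ to rationals (Lemma~\ref{lem:perturb}) and then taking $M$ as a multiple of the common denominator, which is equivalent to your subsequence idea; your explicit remarks on why the rescaling cancels in the variance ratio and on the sign of the $O(n^3/s^2)$ term are details the paper leaves implicit.
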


Our proof strategy is simple: claim that this algorithm is statistically
close to simulating splitting each edge into a very large number of copies.
Note that these proofs are purely for showing the convergence of statistical
processes, so all that's needed is for the numbers that arise in
this proof (in particular, $m$) to be finite.

We first show that $G$ and $\ttautil$ can be perturbed to become
rational numbers.
\begin{lemma}
\label{lem:perturb}
For any graph $G$ and any set of $\ttautil$ such that
$(1 - \epsilon) \ttaubar^{(G)}_e \leq \ttautil_e \leq (1 + \epsilon) \ttaubar^{(G)}_e$
for all edges $e$ for some constant $\epsilon > 0$,
and any perturbation threshold $\delta$,
we can find graph $G'$ with all edge weights rationals, and
$\ttautil'$ with all entries rational numbers such that:
\begin{enumerate}
	\item $\totaltrees{G} \leq \totaltrees{G'} \leq (1 + \delta) \totaltrees{G}$, and
	\item $(1 - 2\epsilon) \ttaubar^{(G')}_e \leq \ttautil'_e
		\leq (1 + 2\epsilon) \ttaubar^{(G')}_e$ for all edges $e$.
\end{enumerate}
\end{lemma}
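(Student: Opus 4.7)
The proof will rely on the fact that $\totaltrees{G}$ is a polynomial in the edge weights and that $\ttaubar^{(G)}_e = \ww_e \cdot \er^G(u,v)$ is a rational function of the edge weights (being a ratio of two spanning-tree polynomials via $\er^G(u,v) = \totaltrees{G/e}/\totaltrees{G}$). Consequently, both quantities depend continuously on the edge weights $\ww$, and moreover do so uniformly on any compact subset of the positive orthant.

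My plan is to carry out the perturbation in two stages. First, I would replace each edge weight $\ww_e$ by a rational $\ww'_e$ chosen so that $\ww_e \leq \ww'_e \leq (1 + \eta) \ww_e$ for a parameter $\eta > 0$ to be fixed at the end; since $\totaltrees{\cdot}$ is a multilinear polynomial whose monomials all have $(n-1)$ edges, monotonicity in each $\ww_e$ immediately yields $\totaltrees{G} \leq \totaltrees{G'} \leq (1+\eta)^{n-1} \totaltrees{G}$, which gives the desired $(1+\delta)$ factor once $\eta$ is small enough (e.g.\ $\eta \leq \delta/n$). By the same continuity argument, the leverage scores satisfy $\ttaubar^{(G')}_e = (1 \pm \eta') \ttaubar^{(G)}_e$ for some $\eta'$ that tends to $0$ with $\eta$, uniformly over the finitely many edges. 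Second, I would choose a rational $\ttautil'_e$ with $|\ttautil'_e - \ttautil_e| \leq \eta'' \ttau^{(G)}_e$ for some small $\eta''$. Chaining the three approximations,
\[
\ttautil'_e = \bigl(1 \pm \eta''\bigr) \ttautil_e = \bigl(1 \pm \eta''\bigr)\bigl(1 \pm \epsilon\bigr) \ttaubar^{(G)}_e = \bigl(1 \pm \eta''\bigr)\bigl(1 \pm \epsilon\bigr)\bigl(1 \pm \eta'\bigr) \ttaubar^{(G')}_e,
\]
and any choice of $\eta'', \eta'$ small enough (depending only on $\epsilon$, e.g.\ $\eta'' + \eta' + \eta''\eta' \leq \epsilon/2$) makes the whole product lie inside $[1-2\epsilon, 1+2\epsilon]$.

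The main obstacle, such as it is, is purely bookkeeping: one must verify that the continuity modulus $\eta'$ of leverage scores with respect to edge-weight perturbations can be controlled by $\eta$ without any dependence on the specific $G$ beyond what is already fixed. This follows because $\ttaubar^{(G)}_e$ is a ratio of two strictly positive polynomials (both tree counts are strictly positive as long as $G$ is connected, which it must be for the hypothesis on $\ttau$ to even make sense), so on any neighborhood of $\ww$ contained in the positive orthant the ratio is Lipschitz in $\ww$; picking $\eta$ smaller than the reciprocal of this Lipschitz constant times $\epsilon/4$ suffices. Since the rationals are dense in the reals, both $\ww'_e$ and $\ttautil'_e$ can be chosen rational with these prescribed tolerances, completing the construction.
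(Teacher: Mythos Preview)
Your proposal is correct and follows essentially the same approach as the paper's own proof, which simply invokes the density of the rationals together with the observation that perturbing edge weights by a factor $1\pm\alpha$ changes leverage scores by $1\pm O(\alpha)$ and the total tree weight by $(1\pm\alpha)^{n-1}$. You have supplied the details (multilinearity of $\totaltrees{\cdot}$, leverage scores as ratios of positive polynomials) that the paper leaves implicit, but the underlying argument is the same.
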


\begin{proof}
	This is a direct consequence of the rational numbers being everywhere dense,
	and that perturbing edge weights by a factor of $1 \pm \alpha$ perturbs
	leverage scores by a factor of up to $1 \pm O(\alpha)$, and
	total weights of trees by a factor of $(1 \pm \alpha)^{n - 1}$.
\end{proof}

Having all leverage scores as integers means that we can do an exact
splitting by setting $m$, the total
number of split edges, to a multiple of the common denominator
of all the $\ttautil'_e$ values times $n - 1$.
Specifically, an edge with approximate leverage score $\ttautil'_e$ becomes
\[
\ttautil'_e \cdot \frac{m}{n - 1}
\]
copies, each with weight
\[
\frac{\ww_e \left( n - 1 \right)}{\ttautil'_e m},
\]
and `true' leverage score
\[
\frac{\levscore{e} \left( n - 1 \right)}{\ttautil'_e m}.
\]
In particular, since
\[
\left( 1 - 2 \epsilon \right) 
\leq \frac{\levscore{e}}{\ttautil_e} \leq \left( 1 + 2 \epsilon \right),
\]
this splitted graph satisfies the condition of
Lemma~\ref{lem:SecondMomentApprox}.
This then enables us to obtain the guarantees of Lemma~\ref{lem:IdealSparsify}
by once again letting $m$ tend to $\infty$.

\begin{proof}(of Lemma~\ref{lem:IdealSparsify})
We first show that Algorithm~\ref{alg:IdealSparsify} works for the graph with
rational weights and approximate leverage scores as generated by
Lemma~\ref{lem:perturb}.

The condition established above means that we can apply
Lemma~\ref{lem:SecondMomentApprox} to the output of picking
$\amount$ random edges among these $m$ split copies.
This graph $H'$ satisfies
\[
\expec{H'}{\totaltrees{H'}} = \totaltrees{G'} \left( \frac{s}{m} \right)^{n - 1}
\exp\left( -\frac{n^2}{2s} - O\left(\frac{n^3}{s^2}\right) \right),
\]
and
\[
\frac{\expec{H'}{\totaltrees{H'}^2}}{\expec{H'}{\totaltrees{H'}}^2}
\leq \exp{\left(\frac{n^2\epsilon^2}{\amount} + O\left(\frac{n^3}{\amount^2} \right)\right)}.
\]
The ratio of the second moment is not affected by rescaling,
so the graph
\[
H'' \leftarrow \frac{m}{s} \exp\left( \frac{n^2}{2s \left( n - 1\right)} \right)
\]
meets the requirements on both the expectation and variances.
Furthermore, the rescaled weight of an single edge being picked is:
\[
\frac{\ww_e \left( n - 1 \right)}{\ttautil'_e m}
\cdot \frac{m}{s} \exp\left( \frac{n^2}{2s \left( n - 1\right)} \right)
= \frac{\ww_e \left( n - 1 \right)}{\ttautil'_e s}
\exp\left( \frac{n^2}{2s \left( n - 1\right)} \right),
\]
which is exactly what Algorithm~\ref{alg:IdealSparsify} assigns.

It remains to resolve the discrepancy between sampling with and without
replacement:
the probability of the same edge being picked twice in two different steps
is at most $1 / m$, so the total probability of a duplicate sample is
bounded by $s^2 / m$. We then give a finite bound on the size of $m$ for which this probability becomes negligible in our routine.
The rescaling factor of a single edge is (very crudely) bounded by 
\[
\frac{\left( n - 1 \right)}{\ttautil'_e s}
 \exp\left( \frac{n^2}{2s \left( n - 1\right)} \right)
\leq \exp\left(n^3 \right) \frac{1}{\min_e \ttautil'_e},
\]
which means that any of the $H''$ returned must satisfy
\[
\totaltrees{H''}
\leq \exp\left(n^4 \right) \left( \frac{1}{\min_e \ttautil'_e}\right)^{n}
\totaltrees{G'},
\]
which is finite.
As a result, as $m \rightarrow \infty$, the difference that this causes
to both the first and second moments become negligible.


The result for $H \leftarrow \textsc{IdealSparsify}(G, \ttautil, \amount)$
then follows from the infinitesimal perturbation made to $G$, as the rational
numbers are dense everywhere.
\end{proof}

\subsection{Incorporating Crude Edge Sampler Using Rejection Sampling}
\label{subsec:RejectionSample}

Under Lemma~\ref{lem:IdealSparsify} we assumed access to $\epsilon$-approximate leverage scores, which could be computed with $m$ calls to our assumed subroutine $\textsc{ApproxLeverage}_G$, where $m$ here is the number of edges of $G$. However, we roughly associate $\textsc{ApproxLeverage}_G$ with Lemma~\ref{lem:ERDS} that requires $\tilde{O}(\epsilon^{-2})$ time per call (and we deal with the w.h.p. aspect in the proof of Theorem~\ref{thm:SparsifyDeterminant}), and to achieve our desired sparsification of $O(n^{1.5})$ edges, we will need $\epsilon = n^{-1/4}$ for the necessary concentration bounds. Instead, we will show that we can use rejection sampling to take $\amount$
edges drawn from approximate leverage scores using a cruder
distribution $\pp_e$, which will only require application of $\textsc{ApproxLeverage}_G$ with error $\epsilon$ for an expected $O(\amount)$ number of edges.

Rejection sampling is a known technique that allows us to sample from some distribution
$f$ by instead sampling from a distribution $g$ that approximates $f$ and accept
the sample with a specific probability based on the probability of drawing that
sample from $f$ and $g$.

More specifically, suppose we are given two probability distributions
$f$ and $g$ over the same state space $X$, such that for all $x\in X$
we have $Cg(x) \geq f(x)$ for some constant $C$.
Then we can draw from $f$ by instead drawing $x \sim g$, and accepting the draw with probability $\frac{f(x)}{Cg(x)}$. 

This procedure only requires a lower bound on $g$ with respect to $f$,
but in order to accept a draw with constant probability,
there need to be weaker upper bound guarantees.
Our guarantees on $\levscoreapprox{e}$ will fulfill these requirements,
and the rejection sampling will accept a constant fraction of the draws.
By splitting into a sufficient number of edges,
we ensure that drawing the same multi-edge from any split edge will
occur with at most constant probability.

Specifically, each sample is drawn via. the following steps:
\begin{enumerate}
\item Draw a sample according the distribution $g$, $e$.
\item Evaluate the values of $f(e)$ and $g(e)$.
\item Keep the sample with probability $f(e) / g(e)$.
\end{enumerate} 
As the running time of $\textsc{ApproxLeverage}_G(e, \epsilon)$ will ultimately depend
on the value of $\epsilon$ apply this algorithmic framework, 
we also need to perform rejection sampling twice, once with constant
error, and once with leverage scores extracted from the true
approximate distribution.
Pseudocode of this routine is shown in Algorithm~\ref{alg:DetSparsify}.

\begin{algorithm}								
	\caption{$\textsc{DetSparsify}(G, \amount, \textsc{SampleEdge}_G()), \rho,
		\textsc{ApproxLeverage}_G(u, v, \epsilon))$:
		Sample $\amount$ (multi) edges of $G$ to produce $H$ such that
		$\totaltrees{G} \approx \totaltrees{H}$.}

	\label{alg:DetSparsify}
	
	\SetAlgoVlined
	\SetKwProg{myproc}{Procedure}{}{}
	
	\KwIn{Graph $G$.\\
	Sample count $\amount$, leverage score approximation error $0 < \epsilon < 1/2$,\\
	$\textsc{SampleEdge}_G()$ that samples an edge
	 $e$ from a probability distribution $\pp$
	 ($\sum_{e} \pp_e = 1$), and returning the corresponding value of $\pp_e$.\\
	 $\rho$ that bounds the under-sampling rate of $\textsc{SampleEdge}_G()$.\\
	 $\textsc{ApproxLeverage}_G(u, v, \epsilon)$ that returns the approximate
	 leverage score of an edge $u, v$ in $G$ to an error of $\epsilon$.
	}
	
	Initialize $H$ as the empty graph, $H \leftarrow \emptyset$;
	
	\While{$H$ has fewer than $\amount$ edges}{
		$e, \pp_e \leftarrow \textsc{SampleEdge}_G()$.
		
		Let $\pp_e' \leftarrow \frac{2}{n - 1} \textsc{ApproxLeverage}_G(u, v, 0.1)$
		
		Reject $e$ with probability $1 - \pp_e' / (4 \rho \cdot \pp_e)$.

		Let $\pp_e'' \leftarrow 
		\frac{1}{n - 1} \textsc{ApproxLeverage}_G(u, v, \epsilon)$

		Reject $e$ with probability $1 - \pp_e'' / \pp_e'$.
	
		Add $e$ to $H$ with new weight
		\[\frac{\ww_e}{\pp_e'' s}
			\exp\left( \frac{n^2}{2(n - 1)\amount}\right).
		\]
	}	Output $H$
\end{algorithm}

We first show that this routine will in fact sample edges according to $\epsilon$-approximate leverage scores, as was assumed in $\textsc{IdealSparsify}$

\begin{lemma}
\label{lem:RejectionProbGood}
The edges are being sampled with probability
proportional to $\ttautil^{(G, \epsilon)}$, the leverage score estimates
given by $\textsc{ApproxLeverage}_G(\cdot, \epsilon)$.
\end{lemma}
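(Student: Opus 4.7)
The plan is to unwind the two-stage rejection sampling and show that the overall probability of accepting a given edge $e$ in one iteration of the \textbf{while} loop is proportional to $\pp_e''$, from which the conclusion follows since $\pp_e'' = \frac{1}{n-1} \textsc{ApproxLeverage}_G(u,v,\epsilon)$ is exactly a rescaling of $\ttautil^{(G,\epsilon)}_e$ and the multi-iteration nature of the loop only repeats the same single-iteration conditional distribution.

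First I would verify that both rejection probabilities are well-defined, i.e.\ that $\pp_e'/(4\rho \cdot \pp_e) \in [0,1]$ and $\pp_e''/\pp_e' \in [0,1]$. For the first, the input guarantee $\ttaubar_e/(n-1) \leq \rho \cdot \pp_e$ combined with the $(1 \pm 0.1)$ accuracy of the constant-error leverage estimate gives $\pp_e' \leq \frac{2.2 \ttaubar_e}{n-1} \leq 2.2 \rho \pp_e$, so $\pp_e'/(4\rho \pp_e) \leq 0.55$. For the second, the $(1\pm 0.1)$ estimate satisfies $\pp_e' \geq \frac{1.8 \ttaubar_e}{n-1}$, while the $(1 \pm \epsilon)$ estimate with $\epsilon < 1/2$ gives $\pp_e'' \leq \frac{1.5 \ttaubar_e}{n-1}$, so $\pp_e''/\pp_e' \leq 5/6 < 1$. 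This step is routine but needs to be stated so the rejection-sampling procedure is actually legal.

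Next I would compute the single-iteration acceptance probability for a fixed edge $e$. By the chain rule for the three independent events (drawing $e$, surviving the first rejection, surviving the second rejection),
\[
\Pr[\text{edge } e \text{ accepted in one iteration}]
= \pp_e \cdot \frac{\pp_e'}{4\rho \cdot \pp_e} \cdot \frac{\pp_e''}{\pp_e'}
= \frac{\pp_e''}{4\rho}.
\]
The $\pp_e$ and $\pp_e'$ factors cancel, so the acceptance probability depends only on $\pp_e''$ up to the global constant $1/(4\rho)$. Summing over all edges, the probability that any edge is accepted in a given iteration is $\frac{1}{4\rho}\sum_{e} \pp_e''$, and hence the conditional distribution of the accepted edge given acceptance is
\[
\frac{\pp_e''}{\sum_{e'} \pp_{e'}''},
\]
which is exactly proportional to $\ttautil^{(G,\epsilon)}_e$ by definition of $\pp_e''$.

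Finally, since each iteration of the \textbf{while} loop is independent and conditioned on acceptance produces an edge from this distribution, every one of the $\amount$ edges that ends up in $H$ is an independent draw from the distribution proportional to $\ttautil^{(G,\epsilon)}$. There is no real obstacle here: the only delicate point is making sure that the first rejection step (which uses only the crude sampler and the cheap constant-accuracy leverage estimate) produces the intermediate distribution proportional to $\pp_e'$ without ever paying for an $\epsilon$-accurate leverage computation, which is why the two-stage structure is needed. The actual algebra is the cancellation above, and the conclusion of the lemma follows immediately.
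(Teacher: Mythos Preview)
Your proposal is correct and follows essentially the same approach as the paper: verify that both rejection ratios lie in $[0,1]$ using the accuracy guarantees on $\pp_e'$ and $\pp_e''$, then observe (either by invoking the standard rejection-sampling guarantee as the paper does, or by the explicit chain-rule cancellation you wrote out) that the accepted edge is distributed proportionally to $\pp_e'' \propto \ttautil^{(G,\epsilon)}_e$. Your version is a touch more explicit about the cancellation and the conditional-on-acceptance distribution, but the content is the same.
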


Note that this algorithm does not, at any time, have access to the full
distribution $\ttautil^{(G, \epsilon)}$.

\begin{proof} Our proof will assume the known guarantees of rejection sampling, which is to say that the following are true:
	
	\begin{enumerate}
		\item Given distributions $\pp$ and $\pp'$, sampling an edge $e$ from $\pp$ and accepting with probability $\pp_e'/(4\rho \cdot \pp_e)$ is equivalent to drawing an edge from $\pp'$ as long as $\pp_e'/(4\rho \cdot \pp_e) \in [0,1] $ for all $e$.
		
		\item Given distributions $\pp'$ and $\pp''$, sampling an edge $e$ from $\pp'$ and accepting with probability $\pp_e''/\pp_e'$ is equivalent to drawing an edge from $\pp''$ as long as $\pp_e''/\pp_e' \in [0,1]$ for all $e$.
	\end{enumerate}
	As a result, we only need to check that  $\pp_e' / (4 \rho \pp_e)$
	and $\pp_e'' / \pp_e'$ are at most $1$.
	
	The guarantees of $\textsc{SampleEdge}_G()$  gives
	\[
		\frac{\levscoreexact{e}}{n - 1} \leq \rho \pp_e.
	\]
	As $\pp_e'$ was generated with error $1.1$, we have
	\[
		\pp_e' \leq \frac{2.2 \levscoreexact{e} }{(n - 1)} 
		\leq 2.2 \rho \pp_e,
	\]
	so $\pp_e' / (4 \rho \pp_e) \leq 1$. To show $\pp_e'' / \pp_e'\leq 1 $, once again the guarantees of
	 $\textsc{SampleEdge}_G()$ gives:
	\[
		\pp_e'' \leq \left( 1 + \epsilon \right) \frac{\levscoreexact{e}}{n - 1}
		\leq 2 \cdot 0.9 \frac{\levscoreexact{e}}{n - 1}
		\leq \pp_e'.
	\]

\end{proof}

It remains to show that this rejection sampling process still makes
sufficiently progress, yet also does not call $\textsc{ApproxLeverage}_G(e, \epsilon)$
(the more accurate leverage score estimator) too many times.
\begin{lemma}
\label{lem:RejectionProgress}
	At each step, the probability of \textsc{DetSparsify}
	calling $\textsc{ApproxLeverage}_G(e, \epsilon)$ is at most
	$\frac{1}{\rho}$, while the probability of it adding an edge
	to $H$ is at least $\frac{1}{8 \rho}$.
\end{lemma}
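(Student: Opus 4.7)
The plan is to compute both probabilities directly as sums over edges, then use Foster's theorem together with the accuracy guarantees on $\pp_e'$ and $\pp_e''$ to bound them.

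For the first claim, observe that in a single iteration the routine calls $\textsc{ApproxLeverage}_G(e,\epsilon)$ only if the first rejection step accepts $e$. Conditioning on the draw from $\textsc{SampleEdge}_G$ and summing over all edges, the probability of this event is
\[
\sum_{e} \pp_e \cdot \frac{\pp_e'}{4\rho \pp_e} \;=\; \frac{1}{4\rho}\sum_{e}\pp_e'.
\]
Since $\pp_e' = \tfrac{2}{n-1}\textsc{ApproxLeverage}_G(u,v,0.1)$ is a $(1\pm 0.1)$-approximation of $\tfrac{2\levscoreexact{e}}{n-1}$, I will apply Foster's theorem (Fact~\ref{fact:foster}) to get $\sum_e \pp_e' \leq \tfrac{2.2}{n-1}\sum_e \levscoreexact{e} = 2.2$, so the total probability is at most $\tfrac{2.2}{4\rho} < \tfrac{1}{\rho}$.

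For the second claim, the probability that an edge is actually added to $H$ in a given iteration is, by the same argument applied to both rejection steps,
\[
\sum_{e} \pp_e \cdot \frac{\pp_e'}{4\rho\pp_e}\cdot\frac{\pp_e''}{\pp_e'} \;=\; \frac{1}{4\rho}\sum_{e} \pp_e''.
\]
Since $\pp_e'' = \tfrac{1}{n-1}\textsc{ApproxLeverage}_G(u,v,\epsilon)$ is a $(1\pm\epsilon)$-approximation of $\tfrac{\levscoreexact{e}}{n-1}$, Foster's theorem gives $\sum_e \pp_e'' \geq (1-\epsilon)\tfrac{1}{n-1}\sum_e \levscoreexact{e} = 1-\epsilon \geq \tfrac{1}{2}$ since $\epsilon < \tfrac{1}{2}$. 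Therefore the probability of adding an edge in a single iteration is at least $\tfrac{1}{8\rho}$.

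There is no real obstacle here beyond confirming that the rejection probabilities $\pp_e'/(4\rho\pp_e)$ and $\pp_e''/\pp_e'$ indeed lie in $[0,1]$ (already verified in the proof of Lemma~\ref{lem:RejectionProbGood}), so that the probabilities quoted above are genuine. The only subtle point is that we must be careful to exploit the \emph{global} identity $\sum_e \levscoreexact{e} = n-1$ rather than any per-edge bound; without Foster's theorem the approximation guarantees alone would be useless for summing across the edge set.
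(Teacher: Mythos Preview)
Your proof is correct and follows essentially the same approach as the paper: both arguments sum the acceptance probabilities over all edges, cancel the factor of $\pp_e$, bound the resulting sum of $\pp_e'$ (respectively $\pp_e''$) via the approximation guarantees, and finish with Foster's theorem. The only cosmetic difference is that you cancel $\pp_e$ first and then bound $\sum_e \pp_e'$ and $\sum_e \pp_e''$ directly, whereas the paper substitutes the bound $\pp_e' \leq \tfrac{2.2\,\levscoreexact{e}}{n-1}$ before summing; the logic is identical.
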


\begin{proof}
	The proof utilizes the fact $\sum_{e} \levscoreexact{e} = n - 1$
	(Fact~\ref{fact:foster}) extensively.
	
	If the edge $e$ is picked,
	$\textsc{ApproxLeverage}_G(e, \epsilon)$ is called with probability
	\[
		\frac{\pp_e'}{4 \rho \cdot \pp_e}
		\leq \frac{2.2 \levscoreexact{e} }{4 \rho \cdot \pp_e \cdot (n - 1)}
	\]
	Summing over this over all edge $e$ by the probability of picking them
	gives:
	\[
		\sum_{e} \pp_e \frac{2.2 \levscoreexact{e} }{4 \rho \cdot \pp_e \cdot (n - 1)}
		= \frac{2.2 \sum_{e}  \levscoreexact{e} }{4 \rho \cdot (n - 1)}
		\leq \frac{1}{\rho}.
	\]
	
	On the other hand, 	the probability of picking edge $e$,
	and not rejecting it is:
	\[
	\pp_e \cdot \frac{\pp_e'}{4 \rho \cdot \pp_e} \cdot \frac{\pp_e''}{\pp_e'}
	= \frac{\ttautil^{(G, \epsilon)}}{4 \rho (n - 1)}.
	\] where this follows by cancellation and how we set $\pp_e''$ in our algorithm. Summing over all edges then gives the probability of not rejecting an edge to be 
	
	\[ \sum_{e} \frac{\ttautil^{(G, \epsilon)}}{4 \rho (n - 1)} \geq  \sum_{e} \frac{(1 - \epsilon)\levscore{e}}{4 \rho (n - 1)} =  \frac{(1 - \epsilon)\sum_{e}\levscore{e}}{4 \rho (n - 1)} \geq  \frac{1}{8\rho}\] 
\end{proof}

\begin{proof}(of Theorem~\ref{thm:SparsifyMain})
	Lemma~\ref{lem:RejectionProbGood} implies that edges are sampled in $\textsc{DetSparsify}$ with probability proportional to $\epsilon$-approximate leverage scores guaranteed by $\textsc{ApproxLeverage}_{G}(\cdot,\epsilon)$. Therefore, we can apply Lemma~\ref{lem:IdealSparsify} to achieve the desired expectation and concentration bounds. Finally, Lemma~\ref{lem:RejectionProgress} implies that we expect to sample at most $O(\amount \cdot \rho)$ edges, each of which require a call to $\textsc{SampleEdge}_{G}(e)$ and $\textsc{ApproxLeverage}_{G}$ with constant error. It additionally implies that we expect to make $O(\amount)$ calls to $\textsc{ApproxLeverage}_{G}$ with $\epsilon$ error.
	
\end{proof}

Directly invoking this theorem leads to the sparsification algorithm.
\begin{proof}(of Theorem~\ref{thm:SparsifyDeterminant})
Consider invoking Theorem~\ref{thm:SparsifyMain} with parameters
\begin{align*}
s & \leftarrow O\left(n^{1.5} \delta^{-2}\right),\\
\epsilon & \leftarrow n^{-1/4}.
\end{align*}
This gives:
\[
\frac{\epsilon^2 n^2}{s},  \frac{n^3}{s^2} \leq \delta,
\]
which then implies
\[
\left( 1 - O\left( \delta^2 \right)\right) \totaltrees{G}
\leq \expec{H}{\totaltrees{H}} \leq \left( 1 + O\left( \delta^2 \right)\right) \totaltrees{G},
\]
and
\[
\expec{H}{\totaltrees{H}^2} \leq \left( 1 + O\left( \delta^2 \right) \right) \expec{H}{\totaltrees{H}}^2.
\]
The second condition is equivalent to $\var{H}{\totaltrees{H}} \leq \delta^2 \expec{H}{\totaltrees{H}}$,
which by Chebyshev inequality gives that with constant probability we have
\[
\left( 1 - O\left( \delta \right)\right) \totaltrees{G}
\leq \totaltrees{H} \leq \left( 1 + O\left( \delta \right)\right) \totaltrees{G}.
\]
Combining this with the bounds on $\expec{H}{\totaltrees{H}}$,
and adjusting constants gives the overall bound.

Constructing the probability distribution $\pp$ for sampling edges only requires computing constant approximate leverage scores for all edges, and then sampling proportionally for each edge, giving a constant value for $\rho$. By Lemma~\ref{lem:ERDS}, this requires $\tilde{O}(m)$ time.
The running time then is dominated by the $O(s)$ calls made to the
effective resistance oracle with error $\epsilon = n^{-1/4}$.
Invoking Lemma~\ref{lem:ERDS} gives that this cost is bounded by
\[
O\left(n\epsilon^{-4} + s \epsilon^{-2} \right)
= O\left(n^2 \delta^{-2} \right).
\]

Furthermore, because Lemma~\ref{lem:ERDS} holds w.h.p. we can absorb the probability of failure into our constant probability bound
\end{proof}

Another immediate consequence of our sparsification routine in Theorem~\ref{thm:SparsifyMain}, along with bounds on total variation distance that we prove in Section~\ref{sec:TVBound}, is that we can give a faster spanning tree sampling algorithm for dense graphs by plugging the sparsified graph into previous algorithms for generating random spanning trees.

\begin{proof}(of Corollary~\ref{cor:AlgoOneShot})
	As in the proof of Theorem~\ref{thm:SparsifyDeterminant}, we invoke Theorem~\ref{thm:SparsifyMain} with parameters
	\begin{align*}
	s & \leftarrow O\left(n^{1.5} \delta^{-2}\right),\\
	\epsilon & \leftarrow n^{-1/4}.
	\end{align*}
	
	giving \[
	\frac{\expec{H}{\totaltrees{H}^2}}{\expec{H}{\totaltrees{H}}^2}
	\leq 1 + \delta^2.\]
	Applying Lemma~\ref{lem:VarianceTV}, which is proven in Section~\ref{subsec:EasierTVBound}, we then have that drawing a tree from $H$ according to the $\ww$-uniform distribution gives a total variation distance of $\delta$ from drawing a tree according to the $\ww$-uniform distribution of $G$. 
	The running time of drawing $H$ is dominated by the $O(s)$ calls made to the
	effective resistance oracle with error $\epsilon = n^{-1/4}$.
	Invoking Lemma~\ref{lem:ERDS} gives that this cost is bounded by
	\[
	O\left(n\epsilon^{-4} + s \epsilon^{-2} \right)
	= O\left(n^2 \delta^{-2} \right).
	\]
	
	Furthermore, because Lemma~\ref{lem:ERDS} holds w.h.p. we can absorb the probability of failure into our total variation distance bound (where we implicitly assume that $\delta$ is at most polynomially small).
	
	We then use the $\tilde{O}(m^{1/3}n^{5/3})$ time algorithm in ~\cite{DurfeeKPRS16} with $m = O(n^{1.5}\delta^{-2})$ to draw a tree from $H$. This then achieves our desired running time and total variation distance bound.
	
\end{proof}
	\section{Implicit Sparsification of the Schur Complement}
\label{sec:ImplicitSchur}

Note that the determinant sparsification routine in Theorem~\ref{thm:SparsifyMain}
only requires an oracle that samples edges by an approximate distribution to resistance,
as well as access to approximate leverage scores on the graph.
This suggests that a variety of naturally dense objects, such as random
walk matrices~\cite{ChengCLPT15,JindalKPS17:arxiv} and
Schur complements~\cite{KyngLPSS16,DurfeeKPRS16} can also be sparsified in ways that preserve
the determinant (of the minor with one vertex removed) or the spanning tree distributions.
The latter objects, Schur complements, have already been shown to lead to speedups
in random spanning tree generation algorithms recently~\cite{DurfeeKPRS16}.

Furthermore the fact that Schur complements preserve effective resistances
exactly (\ref{fact:SchurResistance}) means that we can directly invoke the effective
resistances data structure as constructed in Lemma~\ref{lem:ERDS}
to produce effective resistance estimates on any of its Schur complements.
As a result, the main focus of this section is an efficient way of producing samples from
a distribution that approximates drawing a multi-edge from the Schur complement
with probabilities proportional to its leverage score.
Here we follow the template introduced in~\cite{KyngLPSS16} of only eliminating
$(1 + \alpha)$-diagonally-dominant subsets of vertices, as it in turn allows the
use of walk sampling based implicit sparsification similar to those
in~\cite{ChengCLPT15,JindalKPS17:arxiv}.

$(1 + \alpha)$-diagonally-dominant subsets have been used in Schur complement
based linear system solvers to facilitate the convergence of iterative
methods in the $\LL_{[V_2, V_2]}$ block~\cite{KyngLPSS16}.
Formally, the condition that we require is:


\begin{definition}
	\label{def:AlphaDD}
	In a weighted graph $G = (V, E, \ww)$,
	a subset of vertices $V_2 \subseteq V$ is $(1 + \alpha)$-diagonally-dominant,
	or $(1 + \alpha)$-DD if for every $u \in V_2$ with weighted degree $\deg_u$ we have:
	\[
	\sum_{v \sim u, v \notin V_2} \ww_{uv}
	\geq \frac{1}{1 + \alpha} \deg_u
	= \frac{1}{1 + \alpha} \sum_{v \sim u} \ww_{uv}.
	\]
\end{definition}
It was shown in \cite{KyngLPSS16} that large sets
of such vertices
can be found by trimming a uniformly random sample.
\begin{lemma}
	\label{lem:FindingAlphaDDSubsets}
	(Lemma 3.5. of~\cite{KyngLPSS16} instantiated on graphs)\\
	There is a routine $\textsc{AlmostIndependent}(G, \alpha)$ that
	for a graph $G$ with $n$ vertices, and a parameter $\alpha \geq 0$,
	returns in $O(m)$ expected time a subset $V_2$ with  $|V_2| \geq n/(8(1 + \alpha))$
	such that $\LL_{G,[V_2, V_2]}$ is $(1+\alpha)$-DD.
\end{lemma}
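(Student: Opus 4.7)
The plan is to use a standard random-sampling-with-trimming argument, following~\cite{KyngLPSS16}. First, I would sample each vertex independently into a candidate set $V_2'$ with probability $p = \Theta(1/(1+\alpha))$. Then, for each $u \in V_2'$, I would compute the in-weight $\sum_{v \sim u,\, v \in V_2'} \ww_{uv}$ (the total weight of edges from $u$ to other vertices of $V_2'$) and delete $u$ from the set whenever the $(1+\alpha)$-DD condition of Definition~\ref{def:AlphaDD} fails; the surviving subset is $V_2$. If the resulting $V_2$ turns out to be too small, I would simply restart.

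For the size analysis, linearity of expectation gives $\mathbb{E}[|V_2'|] = pn$; and conditional on $u \in V_2'$, each other neighbor of $u$ lies in $V_2'$ independently with probability $p$, so the conditional expected in-weight at $u$ is $p \cdot \deg_u$. Markov's inequality then bounds the probability that $u$ violates the DD condition by a quantity proportional to $p(1+\alpha)$. Choosing $p$ to be a sufficiently small constant multiple of $1/(1+\alpha)$ makes this failure probability a constant bounded away from $1$, so that $\mathbb{E}[|V_2|] = \Omega(n/(1+\alpha))$; a standard Chebyshev / repeated-trials argument then turns this into $|V_2| \geq n/(8(1+\alpha))$ with probability making the expected number of restarts $O(1)$.

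For the runtime: sampling $V_2'$ costs $O(n)$; computing all the in-weights costs $O(m)$ via a single pass over the edges (each edge contributes to at most two vertex sums); and trimming is a final $O(n)$ pass. Each round thus costs $O(m)$, and with $O(1)$ expected rounds the total expected runtime is $O(m)$, as claimed.

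The main bookkeeping obstacle is tuning the sampling probability $p$ and the acceptance threshold so that simultaneously (i) the Markov bound on the DD failure probability is bounded away from $1$, and (ii) the expected surviving size after trimming meets the $n/(8(1+\alpha))$ target. A concrete choice such as $p = 1/(4(1+\alpha))$ together with bounding the DD failure probability by $1/4$ suffices; the verification reduces to the routine Markov calculation outlined above.
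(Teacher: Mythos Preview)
The paper does not give its own proof here; the lemma is simply quoted from~\cite{KyngLPSS16}. Your sample-then-trim outline is exactly the standard argument used there, and the runtime accounting is fine.

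There is, however, a genuine arithmetic gap in your Markov step. The $(1+\alpha)$-DD condition (Definition~\ref{def:AlphaDD}) asks that the \emph{out}-weight at $u$ be at least $\frac{1}{1+\alpha}\deg_u$, i.e.\ that the in-weight be at most $\frac{\alpha}{1+\alpha}\deg_u$. Markov therefore gives
\[
\Pr\left[\text{in-weight}(u) > \tfrac{\alpha}{1+\alpha}\deg_u \ \middle|\ u \in V_2'\right]
\;\le\; \frac{p\,\deg_u}{\tfrac{\alpha}{1+\alpha}\deg_u}
\;=\; \frac{p(1+\alpha)}{\alpha},
\]
not $p(1+\alpha)$ as you wrote. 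With your concrete choice $p = 1/(4(1+\alpha))$ the bound becomes $1/(4\alpha)$, which is vacuous once $\alpha < 1/4$. To make the trimming step work one must instead take $p = \Theta(\alpha/(1+\alpha))$, and this yields $|V_2| = \Omega\bigl(n\alpha/(1+\alpha)\bigr)$ rather than $\Omega\bigl(n/(1+\alpha)\bigr)$. Indeed the complete graph shows this is tight: any $(1+\alpha)$-DD subset of $K_n$ has size at most $\frac{\alpha}{1+\alpha}(n-1)+1$, so the stated bound $n/(8(1+\alpha))$ cannot hold literally for small $\alpha$. For the constant value $\alpha = 1/10$ actually used throughout the paper the distinction is immaterial (both bounds are $\Theta(n)$), but your proposal as written does not establish the lemma for general $\alpha$, and the missing $1/\alpha$ factor is a real error rather than a loose constant.
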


Given such a subset $V_2$, we then proceed to sample edges in $\sc{G}{V_1}$
via the following simple random walk sampling algorithm:
\begin{enumerate}
	\item Pick a random edge in $G$.
	\item Extend both of its endpoints in random walks until they first
	reach somewhere in $V_1$.
\end{enumerate}
Incorporating this scheme into the determinant preserving
sparsification schemes then leads these guarantees:
\begin{theorem}
	\label{thm:SchurSparse}
	Conditioned on Lemma~\ref{lem:ERDS} holding, there is a procedure $\textsc{SchurSparse}$ that takes
	a graph $G$, and an $1.1$-DD subset of vertices $V_2$,
	returns a graph $H^{V_1}$ in $\tilde{O}(n^2 \delta^{-1})$ expected time
	such that the distribution over $H^{V_1}$ satisfies:
	\[
	\totaltrees{\sc{G}{V_1}} \exp\left(-\delta\right)
	\leq \expec{H^{V_1}}{\totaltrees{H^{V_1}}} \leq
	\totaltrees{\sc{G}{V_1}} \exp\left(\delta\right),
	\]
	and
	\[
	\frac{\expec{H^{V_1}}{\totaltrees{H^{V_1}}^2}}{\expec{H^{V_1}}{\totaltrees{H^{V_1}}}^2}
	\leq \exp\left( \delta \right).
	\]	
	Furthermore, the number of edges of $H^{V_1}$ can be set to
	anywhere between $O(n^{1.5}\delta^{-1})$ and $O(n^2\delta^{-1})$
	without affecting the final bound.
\end{theorem}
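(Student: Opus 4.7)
The plan is to reduce to Theorem~\ref{thm:SparsifyMain} applied to the implicitly represented dense graph $\sc{G}{V_1}$. This requires supplying two oracles: (a) a leverage-score approximator $\textsc{ApproxLeverage}_{\sc{G}{V_1}}$, and (b) an edge sampler $\textsc{SampleEdge}_{\sc{G}{V_1}}$ together with an under-sampling ratio $\rho$. Given these oracles, I will choose $s = \Theta(n^{1.5}\delta^{-1})$ and $\epsilon = n^{-1/4}$ so that the quantities $\epsilon^2 n^2 / s$ and $n^3/s^2$ appearing in Theorem~\ref{thm:SparsifyMain} are both $O(\delta)$. This yields the stated $\exp(\pm \delta)$ expectation and $\exp(\delta)$ variance bounds after rescaling $\delta$ by a constant. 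The claim that larger sparsifier sizes up to $s = O(n^2 \delta^{-1})$ are permitted follows from rerunning the argument with correspondingly larger $\epsilon$ (up to $\epsilon = O(\delta^{1/2})$).

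For oracle (a), Fact~\ref{fact:SchurResistance} gives $\er^{\sc{G}{V_1}}(u,v) = \er^G(u,v)$, so a single call to Lemma~\ref{lem:ERDS} on $G$ at accuracy $n^{-1/4}$ constructs an embedding in $\tilde{O}(m + n \epsilon^{-4}) = \tilde{O}(n^2)$ time; thereafter any leverage score $\ttaubar^{\sc{G}{V_1}}_e = \ww^{\sc{G}{V_1}}_e \cdot \er^G(u,v)$ is estimated to relative error $\epsilon$ in $\tilde{O}(\epsilon^{-2})$ time, with a cheaper $O(1)$-time version available at constant error. For oracle (b), I adopt the standard walk-based implicit sampler of~\cite{KyngLPSS16,JindalKPS17:arxiv,ChengCLPT15}: pick an edge $(x,y) \in G$ with probability proportional to $\ww_{xy}$, and from each endpoint that lies in $V_2$ simulate a weighted random walk in $G$ until it first reaches $V_1$; the terminal pair $(u,v) \in V_1$ together with the induced edge weight (computable from the walk transitions) is returned as a sample of $\sc{G}{V_1}$. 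Standard arguments show that this samples multi-edges according to $\pp_e \propto \ww^{\sc{G}{V_1}}_e$, and the $1.1$-DD property of $V_2$ guarantees that each walk step exits to $V_1$ with probability at least $1/1.1$, so each sample terminates in $O(1)$ expected steps.

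The hard part is bounding $\rho$: the sampler draws edges with probability proportional to $\ww^{\sc{G}{V_1}}_e$, whereas Theorem~\ref{thm:SparsifyMain} requires $\ttaubar_e / (|V_1|-1) \leq \rho \, \pp_e$, i.e.\ we must uniformly control $\er^{\sc{G}{V_1}}(u,v) \cdot W / (|V_1|-1)$ for $W = \sum_{e'} \ww^{\sc{G}{V_1}}_{e'}$. The plan is to exploit that $1.1$-DD caps the effective resistance of any pair $(u,v) \in V_1 \times V_1$ contributed by a walk through $V_2$ in terms of the reciprocal conductances of the eliminated paths; combined with Foster's theorem applied to $\sc{G}{V_1}$ this will give a bound on $\rho$ small enough that the total sampling cost $O(s \rho)$ is $\tilde{O}(n^2 \delta^{-1})$. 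Adding the $\tilde{O}(n^2)$ embedding cost and the $\tilde{O}(s \epsilon^{-2}) = \tilde{O}(n^2 \delta^{-1})$ cost of high-accuracy leverage-score queries then gives the stated $\tilde{O}(n^2 \delta^{-1})$ total running time.
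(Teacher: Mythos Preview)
Your overall reduction to Theorem~\ref{thm:SparsifyMain} and the leverage-score oracle via Fact~\ref{fact:SchurResistance} and Lemma~\ref{lem:ERDS} match the paper, and your parameter choice $s=\Theta(n^{1.5}\delta^{-1}),\ \epsilon=n^{-1/4}$ is the small-$s$ endpoint of the range the paper allows (the paper's pseudocode happens to pick the other endpoint, $\epsilon=0.1,\ s=n^2\delta^{-1}$). The gap is in your sampler and the bound on~$\rho$.

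First, a factual error: picking the initial edge of $G$ with probability proportional to $\ww_{xy}$ and walking both endpoints into $V_1$ does \emph{not} sample multi-edges of $\sc{G}{V_1}$ proportionally to their weight. A direct calculation (sum over the $k$ possible positions of the initial edge inside the walk $u_0,\dots,u_k$) gives $\pp_{u_0,\dots,u_k}=k\cdot \ww^{\sc{G}{V_1}}_{u_0,\dots,u_k}\big/\sum_{e\in G}\ww_e$, i.e.\ proportional to walk-length times weight.

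Second, and more seriously, even if $\pp_e\propto \ww^{\sc{G}{V_1}}_e$ held exactly, the requirement $\ttaubar_e/(|V_1|-1)\le \rho\,\pp_e$ would force $\rho\ge \er^{\sc{G}{V_1}}(u,v)\cdot W/(|V_1|-1)$, and this quantity is \emph{not} controlled by the $1.1$-DD assumption on $V_2$: edges already present in $G[V_1]$ are in $\sc{G}{V_1}$ unchanged, and a near-bridge in $G[V_1]$ has $\er\cdot W$ as large as $\Theta(W/\ww_e)$. So your plan to ``exploit that $1.1$-DD caps the effective resistance'' cannot close the argument; the $1.1$-DD condition only bounds walk lengths, not resistances of pairs in $V_1$.

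The paper resolves this by a different sampler, $\textsc{SampleEdgeSchur}$ (Algorithm~\ref{algo:SampleEdgeSchur}): the initial edge is drawn from $G$ proportional to its \emph{leverage score} $\ttautil^G_e$, not its weight. Then Fact~\ref{fact:SchurResistance} plus Rayleigh monotonicity give $\er^{\sc{G}{V_1}}(u_0,u_k)=\er^G(u_0,u_k)\le \sum_i 2\ttautil^G_{u_iu_{i+1}}/\ww_{u_iu_{i+1}}$, and multiplying by $\ww^{\sc{G}{V_1}}_{u_0,\dots,u_k}$ yields exactly (up to constants) the sampling probability, so $\rho=O(1)$ (Lemma~\ref{lem:implicit-sampling}). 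The paper explicitly flags this modification as ``necessary to get $\rho$ to a constant.'' With $\rho=O(1)$ in hand, the rest of your cost accounting goes through.
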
 



We let this subset of vertices produced to be $V_2$,
and let its complement be $V_1$.
Our key idea is to view $\sc{G}{V_1}$ as a multi-graph where each multi-edge
corresponds to a walk in $G$ that starts and ends in $V_1$, but has all
intermediate vertices in $V_2$.
Specifically a length $k$ walk
\[
u_0, u_1, \ldots u_k,
\]
with $u_0, u_k \in V_1$ and $u_i \in V_2$ for all $0 < i < k$,
corresponds to a multi-edge between $u_0$ and $u_k$ in $\sc{G}{V_1}$
with weight given by
\begin{equation}\label{eq:schur-weights}
\ww^{\sc{G}{V_1}}_{u_0, u_1, \ldots u_k}
\defeq
\frac{\prod_{0 \leq i < k} \ww^{G}_{u_i u_{i + 1}}}{\prod_{0 < i < k} \deg^{G}_{u_i}}.
\end{equation}
We check formally that this multi-graph defined on $V_1$ is exactly the same
as $\sc{G}{V_1}$ via the Taylor expansion of $\LL_{[V_2, V_2]}^{-1}$ based Jacobi iteration.

\begin{lemma}
\label{lem:SchurWeights}
Given a graph $G$ and a partition of its vertices into $V_1$ and $V_2$,
the graph $G^{V_1}$ formed by all the multi-edges corresponding to walks
starting and ending at $V_1$, but stays entirely within $V_2$ with weights
given by Equation~\ref{eq:schur-weights} is exactly $\sc{G}{V_1}$.
\end{lemma}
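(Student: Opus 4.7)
The plan is to verify the identity entrywise via the Neumann series expansion of $\LL_{[V_2,V_2]}^{-1}$. Write $\LL_{[V_2,V_2]} = \DD - \AA$, where $\DD$ is the diagonal matrix of the $G$-weighted degrees restricted to $V_2$ and $\AA$ is the $V_2\times V_2$ adjacency block of $G$. Because each vertex in $V_2$ sends a positive share of its weight to $V_1$ (under the $(1+\alpha)$-DD hypothesis used throughout this section, this share is at least $1/(1+\alpha)$), the row sums of $\DD^{-1}\AA$ are strictly less than $1$, so $\LL_{[V_2,V_2]}^{-1} = \sum_{j\ge 0}(\DD^{-1}\AA)^{j}\DD^{-1}$ converges absolutely.

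Next, I would compute the $(u_0,u_k)$ entry of $\LL_{[V_1,V_2]}\LL_{[V_2,V_2]}^{-1}\LL_{[V_2,V_1]}$ for $u_0,u_k\in V_1$ with $u_0\ne u_k$. The $(v_0,v_j)$-entry of $(\DD^{-1}\AA)^{j}\DD^{-1}$ expands as a sum over length-$j$ walks $v_0,v_1,\dots,v_j$ through $V_2$ with weight $\prod_{i=0}^{j-1}\ww^G_{v_i v_{i+1}}\big/\prod_{i=0}^{j}\deg^G_{v_i}$. Multiplying on the left by $-\LL_{[V_1,V_2]}$ (which supplies the factor $\ww^G_{u_0 v_0}$) and on the right by $-\LL_{[V_2,V_1]}$ (which supplies $\ww^G_{v_j u_k}$), the two minus signs cancel, and after renaming $k=j+2$ and $u_i = v_{i-1}$ for $1\le i\le k-1$, the contribution of a single walk $u_0,u_1,\dots,u_k$ with interior in $V_2$ becomes
\[
\frac{\prod_{i=0}^{k-1}\ww^G_{u_i u_{i+1}}}{\prod_{i=1}^{k-1}\deg^G_{u_i}},
\]
which is exactly $\ww^{\sc{G}{V_1}}_{u_0,\dots,u_k}$ from \eqref{eq:schur-weights}. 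Folding in the direct-edge case $k=1$ through the $\LL_{[V_1,V_1]}$ term of the Schur complement, the off-diagonal entry $(\sc{G}{V_1})_{u_0 u_k}$ equals $-\sum_{\text{walks}}\ww^{\sc{G}{V_1}}_{u_0,\dots,u_k}$, which matches the off-diagonal Laplacian entry of the multigraph $G^{V_1}$.

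To finish, I would observe that both $\sc{G}{V_1}$ and the multigraph Laplacian of $G^{V_1}$ have zero row sums---the former because Schur complementing a graph Laplacian yields a graph Laplacian (the all-ones vector remains in the kernel of the full $\LL$, and Schur complementation preserves this), and the latter trivially---so two symmetric matrices with zero row sums that agree off the diagonal must be equal. The main obstacle is the bookkeeping in the walk expansion: one must verify that the $j+1$ degree factors produced by $(\DD^{-1}\AA)^{j}\DD^{-1}$ line up with precisely the $k-1$ interior vertices $u_1,\dots,u_{k-1}$ and that the endpoints $u_0,u_k\in V_1$ pick up no spurious $\deg$-factor. Closed walks with $u_0=u_k$ contribute only to the diagonal and are handled automatically by the zero-row-sum argument, so they require no separate treatment.
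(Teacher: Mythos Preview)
Your proposal is correct and follows essentially the same approach as the paper: both write $\LL_{[V_2,V_2]}=\DD-\AA$, expand $\LL_{[V_2,V_2]}^{-1}$ as the Neumann/Jacobi series $\sum_{j\ge 0}(\DD^{-1}\AA)^j\DD^{-1}$, and read off the walk weights from the resulting matrix product. Your treatment is in fact slightly more careful than the paper's, since you explicitly close the diagonal via the zero-row-sum argument, whereas the paper only writes out the off-diagonal identity; the one minor caveat is that the lemma as stated does not assume the $(1+\alpha)$-DD hypothesis, so for full generality you would want to justify convergence via positive definiteness of $\LL_{[V_2,V_2]}$ (equivalently, spectral radius of $\DD^{-1/2}\AA\DD^{-1/2}$ below $1$) rather than uniform strict row-sum bounds.
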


\begin{proof}
Consider the Schur complement:
\[
\sc{G}{V_1} = \LL_{[V_1, V_1]} -  \LL_{[V_2, V_1]} \LL_{[V_2, V_2]}^{\dag} \LL_{[V_1, V_2]}.
\]
If there are no edges leaving $V_2$, then the result holds trivially.
Otherwise, $\LL_{[V_2, V_2]}$ is a strictly diagonally dominant matrix,
and is therefore full rank.
We can write it as
\[
\LL_{\left[V_2, V_2\right]} = \DD - \AA
\]
where $\DD$ is the diagonal of $\LL_{[V_2,V_2]}$
and $\AA$ is the negation of the off-diagonal entries,
and then expand $\LL_{[V_2, V_2]}^{-1}$ via the Jacobi series:
\begin{multline}
\label{eq:Jacobi}
\LL_{[V_2, V_2]}^{-1}
= \left( \DD - \AA \right)^{-1}
= \DD^{-1/2} \left( \II - \DD^{-1/2} \AA \DD^{-1/2} \right)^{-1} \DD^{-1/2} \\
= \DD^{-1/2} \left[ \sum_{k = 0}^{\infty} \left( \DD^{-1/2} \AA \DD^{-1/2}\right)^{k} \right] \DD^{-1/2}
= \sum_{k = 0}^{\infty} \left( \DD^{-1}  \AA \right)^{k} \DD^{-1}.
\end{multline}
Note that this series converges because
the strict diagonal dominance of $\LL_{[V_2, V_2]}$
implies $(AD^{-1})^k$ tends to zero as $k \to \infty$.
Substituting this in place of $\LL_{[V_2, V_2]}^{-1}$ gives:
\[
\sc{G}{V_1}
= \LL_{\left[V_1,V_1 \right]}
- \sum_{k = 0}^{\infty} \LL_{\left[V_1, V_2 \right]}
\left( \DD^{-1}  \AA \right)^{k} \DD^{-1}
\LL_{\left[V_2, V_1 \right]}.
\]
As all the off-diagonal entries in $\LL$ are non-positive,
we can replace $\LL_{[V_1, V_2]}$ with $-\LL_{[V_1, V_2]}$ to make
all the terms in the trailing summation positive.
As these are the only ways to form new off-diagonal entries,
the identity based on matrix multiplication of
\[
\left[ \left( -\LL_{\left[V_1, V_2 \right]} \right)
\left( \DD^{-1}  \AA \right)^{k} \DD^{-1}
\left( -\LL_{\left[V_2, V_1 \right]} \right) \right]_{u_0, u_k}
= \sum_{u_1 \ldots u_{k - 1}}
\frac{\prod_{0 \leq i < k} \ww^{G}_{u_i u_{i + 1}}}{\prod_{0 < i < k} \deg^{G}_{u_i}}
\]
gives the required identity.
\end{proof}

This characterization of $\sc{G}{V_1}$, coupled with the $(1 + \alpha)$-diagonal-dominance
of $V_2$, allows us to sample the multi-edges in $\sc{G}{V_1}$ in the same way as the (short)
random walk sparsification algorithms from~\cite{ChengCLPT15,JindalKPS17:arxiv}.

\begin{algorithm}								
\caption{$\textsc{SampleEdgeSchur}(G = (V, E, \ww), V_1)$:
\label{algo:SampleEdgeSchur}
samples an edge from $\sc{G}{V_1}$}
\SetAlgoVlined
\SetKwProg{myproc}{Procedure}{}{}

\KwIn{Graph ${G}$, vertices $V_1$ to complement onto,
	and (implicit) access to a $2$-approximation of the leverage
	scores of $G$, $\ttautil^{G}$.}
\KwOut{A multi-edge $e$ in $\sc{G}{V_1}$ corresponding to a walk $u_0, u_1, \ldots u_k$,
	and the probability of it being picked in this distribution $\pp_{u_0, u_1, \ldots u_k}$}

Sample an edge $e$ from $G$ randomly with probability drawn
from $\ttautil^{G}_e$\;
Perform two independent random walks from the endpoints of $e$
until they both reach some vertex in $V_1$, let the walk be $u_0 \ldots u_k$\;
Output edge $u_0 u_k$ (corresponding to the path $u_0, u_1, \ldots u_k$) with
\begin{align*}
\ww_{u_0 \ldots u_k} & \leftarrow \frac{\prod_{0 \leq i < k} \ww^{G}_{u_i u_{i + 1}}}{\prod_{0 < i < k} \deg^{G}_{u_i}},
\qquad \text{(same as Equation~\ref{eq:schur-weights})}\\
\pp_{u_0 \ldots u_k} & \leftarrow \frac{1}{\sum_{e'} \ttautil^{G}_{e'}}
\sum_{0 \leq i < k} \ttautil^{G}_{u_i u_{i + 1}}
\cdot \left( \prod_{0 \leq j < i}
\frac{\ww^{G}_{u_{j} u_{j + 1}}}{\deg_{u_{j + 1}}}
\cdot \prod_{i + 1 \leq j < k} \frac{\ww^{G}_{u_{j} u_{j + 1}}}{\deg_{u_{j}}} \right).
\end{align*}
\end{algorithm}

\begin{lemma}\label{lem:implicit-sampling}
Given any graph $G = (V, E, \ww)$, an $(1 + \alpha)$-DD subset $V_2$,
and access to $2$-approximations of statistical leverage scores on $G$, $\ttautil^{G}$,
$\textsc{SampleEdgeSchur}$ returns edges in $G$ according to the distribution
$\pp_e$ in $O(\alpha)$ expected time per sample.
Furthermore, the distribution that it samples edges in $\sc{G}{V_1}$ from,
$\pp$, satisfies
\[
O\left(1 \right) \cdot \pp_{u_0, \ldots u_k}
 \geq \frac{\ttaubar^{\sc{G}{V_1}}_{u_0, \ldots u_k}}{n - 1}.
\]
for every edge in $\sc{G}{V_1}$ corresponding to the walk $u_0, \ldots u_k$.
\end{lemma}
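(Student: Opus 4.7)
The plan is to verify the three assertions of the lemma—that the algorithm samples according to the claimed $\pp$, that each sample runs in $O(\alpha)$ expected time, and the lower bound relating $\pp_{u_0,\ldots,u_k}$ to the leverage score of the multi-edge in $\sc{G}{V_1}$—in that order. The first two are essentially bookkeeping; the substance lies in the third.

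For the distributional claim, I will apply the law of total probability, conditioning on which edge $u_i u_{i+1}$ of the walk was initially drawn. Given that $u_i u_{i+1}$ was picked (probability $\ttautil^{G}_{u_i u_{i+1}}/\sum_{e'} \ttautil^G_{e'}$), the backward walk from $u_i$ must traverse $u_{i-1},\ldots,u_0$ and the forward walk from $u_{i+1}$ must traverse $u_{i+2},\ldots,u_k$, contributing step-transition factors of $\ww^{G}_{u_j u_{j+1}}/\deg_{u_{j+1}}$ and $\ww^{G}_{u_j u_{j+1}}/\deg_{u_j}$ respectively. Summing over $i \in \{0,\ldots,k-1\}$ recovers the $\pp_{u_0,\ldots,u_k}$ formula written in $\textsc{SampleEdgeSchur}$. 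For the running-time bound, the $(1+\alpha)$-DD property guarantees that at any $u \in V_2$ the random walk exits $V_2$ with probability at least $1/(1+\alpha)$, so each walk length is stochastically dominated by a geometric random variable with expectation $O(1+\alpha)$; each step costs $O(1)$ amortized time given standard preprocessing of neighbor lists.

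The key lower bound is handled by first simplifying the $i$-th summand. The two random-walk factors cover divisions by $\deg_{u_j}$ for every $j \in \{1,\ldots,k-1\}$ except nothing is missing from the set; combined with the weight factors $\ww^G_{u_j u_{j+1}}$ for $j \neq i$, these reassemble into the Schur multi-edge weight of Equation~\ref{eq:schur-weights}. Thus the $i$-th summand equals
\[
\frac{\ttautil^G_{u_i u_{i+1}}}{\ww^G_{u_i u_{i+1}} \cdot \sum_{e'} \ttautil^G_{e'}} \cdot \ww^{\sc{G}{V_1}}_{u_0,\ldots,u_k}.
\]
Applying the $2$-approximation hypothesis $\ttautil^G_e \geq \tfrac{1}{2} \ww^G_e \er^G(u,v)$ in the numerator and Foster's theorem (Fact~\ref{fact:foster}) together with the matching upper $2$-approximation in the denominator to get $\sum_{e'} \ttautil^G_{e'} \leq 2(n-1)$, each summand is bounded below by a constant times $\er^G(u_i,u_{i+1}) \cdot \ww^{\sc{G}{V_1}}_{u_0,\ldots,u_k}/(n-1)$.

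The main obstacle—or the structurally interesting step—is finishing by summing these lower bounds over $i$. The triangle inequality for effective resistance, valid because $\er^G$ is a metric on vertices, yields $\sum_{i=0}^{k-1} \er^G(u_i,u_{i+1}) \geq \er^G(u_0,u_k)$, and then Fact~\ref{fact:SchurResistance} lets me rewrite this as $\er^{\sc{G}{V_1}}(u_0,u_k)$. The resulting lower bound is a constant multiple of $\ww^{\sc{G}{V_1}}_{u_0,\ldots,u_k} \cdot \er^{\sc{G}{V_1}}(u_0,u_k)/(n-1)$, which is $\ttaubar^{\sc{G}{V_1}}_{u_0,\ldots,u_k}/(n-1)$ up to that constant. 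This is precisely the asserted inequality, and completes the plan.
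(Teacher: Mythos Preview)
Your proposal is correct and follows essentially the same route as the paper: both rewrite the $i$-th summand of $\pp_{u_0,\ldots,u_k}$ as $\ttautil^G_{u_i u_{i+1}} \cdot \ww^{\sc{G}{V_1}}_{u_0,\ldots,u_k} / (\ww^G_{u_i u_{i+1}} \sum_{e'} \ttautil^G_{e'})$, then combine the $2$-approximation of leverage scores, Foster's theorem, the triangle inequality for effective resistance (which the paper phrases as Rayleigh monotonicity), and Fact~\ref{fact:SchurResistance}. The only minor addition in the paper is the remark that $\pp_{u_0,\ldots,u_k}$ can be computed in $O(k)$ time via prefix/suffix products, which you may want to note for the running-time claim.
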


The guarantees of this procedure are analogous to the random walk sampling
sparsification scheme from~\cite{ChengCLPT15,JindalKPS17:arxiv},
with the main difference being the terminating condition for the
walks leads to the removal of an overhead related to the number
of steps in the walk.
The modification of the initial step to picking the initial edge
from $G$ by resistance is necessary to get $\rho$ to a constant,
as the about $n^{1.5}$ samples limits the amount of overhead
that we can have per sample.

\begin{proof}
We first verify that $\pp$ is indeed a probability on
the multi-edges of $\sc{G}{V_1}$, partitioned by the
walks that they correspond to in $G$, or formally
\[
\sum_{\substack{u_0, u_1, \ldots u_k:\\
	u_0, u_k \in V_1,\\
	u_i \in V_2 ~\forall 1\leq i<k}}
\pp_{u_0, u_1 \ldots u_k} = 1.
\]
To obtain this equality, note that for any
random walk starting at vertex $i$, the total
probabilities of walks starting at $i$ and ending in $V_1$ is upper bounded by $1$.
Algebraically this becomes:
\[
\sum_{u_1, u_2, \ldots u_k} \prod_{0 \leq i < k} \frac{\ww_{u_i u_{i + 1}}}{\deg_{u_{i}}} = 1,
\]
so applying this to both terms of each edge $e$ gives
that the total probability mass over any starting edge
is $\frac{\ttautil^{G}_e}{\sum_{e'} \ttautil^{G}_{e'}}$,
and in turn the total.

For the running time, since $V_2$ is $(1 + \alpha)$-almost independent,
each step of the walk takes expected time $O(\alpha)$.
Also, the value of $\pp_{u_0, u_1, \ldots u_k}$ can be computed
in $O(k)$ time by computing prefix/suffix products of the transition
probabilities along the path (instead of evaluating each summand in $O(k)$
time for a total of $O(k^2)$).

Finally, we need to bound the approximation of $\pp$ compared
to the true leverage scores $\ttaubar$.
As $\ttautil^{G}_e$ is a $2$-approximation of the true leverage
scores, $\sum_{e} \ttautil^{G}_e$ is within a constant factor of $n$.
So it suffices to show
\[
O\left(1\right) \cdot
\sum_{0 \leq i < k} \ttautil^{G}_{u_i u_{i + 1}}
\left( \prod_{0 \leq j < i}
\frac{\ww^{G}_{u_{j} u_{j + 1}}}{\deg_{u_{j + 1}}}
\cdot \prod_{i + 1 \leq j < k} \frac{\ww^{G}_{u_{j} u_{j + 1}}}{\deg_{u_{j}}} \right)
\geq \er^{\sc{G}{V_1}}\left(u_0, u_k\right) \cdot \ww_{u_0, u_1, \ldots u_k}.
\]
Here we invoke the equivalence of effective resistances in $G$
and $\sc{G}{V_1}$ given by Fact~\ref{fact:SchurResistance}
in the reverse direction.
Then by Rayleigh's monotonicity principle, we have
\[
\er^{\sc{G}{V_1}} \left(u_0, u_k\right)
= \er^{G}\left(u_0, u_k\right)
\leq \sum_{0 \leq i < k}
\frac{2 \ttautil^{G}_{u_i u_{i + 1}}}{\ww_{u_i u_{i + 1}}},
\]
which when substituted into the expression for $\ww_{u_0, u_1, \ldots u_k}$
from Equation~\ref{eq:schur-weights} gives
\[
\left( \sum_{0 \leq i < k}
\frac{2 \ttautil^{G}_{u_i u_{i + 1}}}{\ww_{u_i u_{i + 1}}} \right)
\ww_{u_0, u_1, \ldots u_k}
= 
\sum_{0 \leq i < k} 2 \ttautil^{G}_{u_i u_{i + 1}}
\left( \prod_{0 \leq j < i}
\frac{\ww^{G}_{u_{j} u_{j + 1}}}{\deg_{u_{j + 1}}}
\cdot \prod_{i + 1 \leq j < k} \frac{\ww^{G}_{u_{j} u_{j + 1}}}{\deg_{u_{j}}} \right).
\]

\end{proof}


This sampling procedure can be immediately combined with
Theorem~\ref{thm:SparsifyMain} to give algorithms for generating
approximate Schur complements.
Pseudocode of this routine is in Algorithm~\ref{algo:SchurSparse}.

\begin{algorithm}	
\label{algo:SchurSparse}							
	\caption{$\textsc{SchurSparse}(G, V_1, \delta)$}
	\label{alg:schurSparse}
	\SetAlgoVlined
	\SetKwProg{myproc}{Procedure}{}{}

	\KwIn{Graph ${G}$, $1.1$-DD subset of vertices $V_2$ and error parameter $\delta$}
	\KwOut{Sparse Schur complement of $\sc{G}{V_1}$}
	Set $\epsilon \leftarrow 0.1$\;
	Set $\amount \leftarrow n^{2} \delta^{-1}$\;
	Build leverage score data structure on $G$ with errors $0.1$ (via Lemma~\ref{lem:ERDS})\;
	Let $H^{V_1} \leftarrow \textsc{DetSparsify}(\sc{G}{V_1}, s,
		\textsc{SampleEdgeSchur}(G,V_1),
		\textsc{LeverageApprox}_G, \epsilon)$\;
	Output $H^{V_1}$\;
	
\end{algorithm}

\begin{proof}(Of Theorem~\ref{thm:SchurSparse})
Note that the choices of $\epsilon$ and $s$ must ensure that
\begin{align*}
\frac{n^2\epsilon^2}{\amount} & = \delta\\
\frac{n^3}{\amount^2} & \leq \delta
\end{align*}

This is then equivalent to $\amount \geq n^{1.5}\delta^{-1}$ and $\frac{\amount}{\epsilon^2} = n^2\delta^{-1}$. This further implies that $\epsilon \geq n^{1/4}$.
Our $\epsilon$ and $\amount$ in $\textsc{SchurSparse}$ meet these conditions (and the ones specifically chosen in the algorithm will also be necessary for one of our applications).
The guarantees then follow from putting the quality of the sampler
from Lemma~\ref{lem:implicit-sampling} into the requirements of
the determinant preserving sampling procedure from Theorem~\ref{thm:SparsifyMain}. Additionally, Lemma~\ref{lem:implicit-sampling} only requires access to $2$-approximate leverage scores, which can be computed by Lemma~\ref{lem:ERDS} in $\tilde{O}(m)$ time. Furthermore, Lemma~\ref{lem:implicit-sampling} gives that our $\rho$ value is constant, and our assumption in Theorem~\ref{thm:SchurSparse} that we are given an $1.1$-DD subset $V_2$ implies that our expected $O(\amount \cdot \rho)$ calls to $\textsc{SampleEdgeSchur}$ will require $O(1)$ time.
The only other overheads are the computation and invocations 
of the various copies of approximate resistance data structures.
Since $m \leq n^2$ and $\epsilon \geq n^{1/4}$, Lemma~\ref{lem:ERDS}
gives that this cost is bounded by
$\tilde{O}(m + n^2 + \frac{\amount}{\epsilon^2}) = \tilde{O}(n^2\delta^{-1})$.
\end{proof}

	\section{Approximate Determinant of SDDM Matrices}
\label{sec:determinant_algo}


In this section, we provide an algorithm for computing an approximate determinant of SDDM matrices, which are minors of graph Laplacians
formed by removing one row/column.



Theorem~\ref{thm:SparsifyDeterminant} allows us to sparsify a dense graph while still approximately preserving the determinant of the graph minor. If there were some existing algorithm for computing the determinant that had good dependence on sparsity, we could achieve an improved runtime for determinant computation by simply invoking such an algorithm on a minor of the sparsified graph.\footnote{To get with high probability one could use standard boosting tricks involving taking the median of several estimates of the determinant obtained in this fashion.} Unfortunately, current determinant computation algorithms (that achieve high-accuracy) are only dependent on $n$, so simply reducing the edge count does not directly improve the runtime for determinant computation. Instead the algorithm we give will utilize Fact~\ref{fact:detMinor} 
\[
\detp{\left(\LL\right)}
= \det{\left(\LL_{[V_2, V_2]}\right)}
\cdot \detp{\left(\sc{\LL}{V_1}\right)}.
\] (where we recall that $\detp$ is the determinant of the matrix minor) to recursively split the matrix. Specifically, we partition the vertex set based upon the routine $\textsc{AlmostIndependent}$ from Lemma~\ref{lem:FindingAlphaDDSubsets}, then compute Schur complements according to $\textsc{SchurSparse}$ in Theorem~\ref{thm:SchurSparse}. Our algorithm will take as input a Laplacian matrix. However, this recursion naturally produces two matrices, the second of which is a Laplacian and the first of which is a submatrix of a Laplacian. Therefore, we need to convert
$\LL_{[V_2,V_2]}$ into a Laplacian. We do this by adding one vertex with
appropriate edge weights such that each row and column sums to $0$.
Pseudocode of this routine is in Algorithm~\ref{alg:addRowColumn},
and we call it with the parameters
$\LL^{V_2} \leftarrow \textsc{AddRowColumn}(\LL_{[V_2,V_2]})$.

\begin{algorithm}								
	\caption{$\textsc{AddRowColumn}(\MM):$ complete $\MM$
	into a graph Laplacian by adding one more row/column}
	\label{alg:addRowColumn}
	\SetAlgoVlined
	\SetKwProg{myproc}{Procedure}{}{}

	\KwIn{SDDM Matrix $\MM$}
	\KwOut{Laplacian matrix $\LL$ with one extra row / column than $\MM$}
	Let $n$ be the dimension of $\MM$\;
	\For{$i = 1$ to $n$}{
		Sum non-zero entries of row $i$, call $\ss_i$\;
		Set $\LL(n + 1,i), \LL(i, n + 1) \leftarrow -\ss_i$\;
	}
	Let $\LL(n + 1, n + 1) \leftarrow \sum_{i = 1}^n \ss_i$\;
	Output $\LL$\;
	
\end{algorithm}

The procedure $\textsc{AddRowColumn}$ outputs a Laplacian $\LL^{V_2}$ such that $\LL_{[V_2,V_2]}$ can be obtained if one removes this added row/column. This immediately gives $\detp(\LL^{V_2}) = \det(\LL_{[V_2,V_2]})$ by definition, and we can now give our determinant computation algorithm of the minor of a graph Laplacian.

\begin{algorithm}								
	\caption{$\textsc{DetApprox}(\LL, \delta, \overline{n}):$
			Compute $\detp(\LL)$ with error parameter $\delta$}
	\label{alg:detApprox}
	\SetAlgoVlined
	\SetKwProg{myproc}{Procedure}{}{}

	\KwIn{Laplacian matrix $\LL$, top level error threshold $\delta$,
			and top level graph size $\overline{n}$}
	\KwOut{Approximate $\detp(\LL)$}

	\uIf{this is the top-level invocation of this function in the recursion tree}{
	$\delta' \gets \Theta(\delta^2 / \log^3 n)$
	}
	\Else{
	$\delta' \gets \delta$
	}
	
	\If{$\LL$ is $2 \times 2$}{
	\Return{the weight on the (unique) edge in the graph}
	}
	
	$V_2 \gets \textsc{AlmostIndependent}(\LL, \frac{1}{10})$
	\hfill \{Via Lemma~\ref{lem:FindingAlphaDDSubsets}\}\\
	$V_1 \gets V \setminus V_2$ \;
	$\LL^{V_1} \gets \textsc{SchurSparse}(\LL,V_1, \delta')$;
	\hfill \{$\abs{V_1} / \overline{n}$ is the value of $\beta$ in Lemma~\ref{lem:detErrorBoundEachLevel}.\}\\
	$\LL^{V_2}\gets \textsc{AddRowColumn}(\LL_{[V_2,V_2]})$\;
	Output $\textsc{DetApprox}(\LL^{V_1}, \delta' \abs{V_1} / \overline{n}, \overline{n})
		 \cdot\textsc{DetApprox}(\LL^{V_2}, \delta' \abs{V_2} / \overline{n}, \overline{n}) $\;
	
\end{algorithm}

Our analysis of this recursive routine consists of bounding the
distortions incurred at each level of the recursion tree.
This in turn uses the fact that the number of vertices across all calls within a level and the total ``amount'' of $\delta$ across all calls within a level both remain unchanged from one level to the next.
This can be summarized by the following Lemma which bounds the error accumulated within one level of recursion in our algorithm.

\begin{lemma}\label{lem:detErrorBoundEachLevel}
	Suppose we are given some small $\delta \geq 0$ and non-negative $\beta_1,...,\beta_k$ such that $\sum_{i=1}^k \beta_i = O(1)$,  along with Laplacian matrices $\LL(1),
	\ldots, \LL(k)$ and each having a corresponding vertex partition $V_1(i), V_2(i)$, where
\begin{align*}
		\LL(i)
		& =
		\left[
		\begin{array}{cc}
		\LL\left(i\right)_{\left[V_1(i),V_1(i)\right]} &
		\LL\left(i\right)_{\left[V_1(i),V_2(i)\right]} \\
		\LL\left(i\right)_{\left[V_2(i),V_1(i)\right]} &
		\LL\left(i\right)_{\left[V_2(i),V_2(i)\right]}
		\end{array}
		\right].
\end{align*}
Let $\LL^{V_1(i)}$ denote the result of running
$\textsc{SchurSparse}$ to remove the $V_2(i)$ block in each of these matrices:\footnote{This Lemma only applies when the matrices are fixed with respect to the randomness used in the invocations of \textsc{SchurSparse} mentioned in the Lemma. In other words, it only applies when the result of running \textsc{SchurSparse} on each of these $\LL(i)$ matrices is independent of the result of running it on the other matrices. This is why the Lemma only immediately bounds error within a level of the recursion---where this independence holds---rather than for the entire algorithm.}
\[
\LL^{V_1\left(i\right)}
\defeq
\textsc{SchurSparse}\left(\LL(i),V_1(i),\beta_i\delta\right).
\]
Then conditioning upon a with high probability event\footnote{namely, the event that all the leverage score estimation calls to Lemma~\ref{lem:ERDS} from $\textsc{SchurSparse}$ succeed} in each of these
calls to $\textsc{SchurSparse}$, for any $p$ we have
with probability at least $1 - p$:
\[
\prod_{i=1}^k \detp\left(\LL\left(i\right)\right)
= \left(1 \pm O\left(\sqrt{\delta / p}\right) \right)
\prod_{i=1}^k \det\left(\LL_{\left[V_2(i),V_2(i)\right]}(i)\right)
\cdot \detp\left(\LL^{V_1\left(i\right)}\right).
\]
\end{lemma}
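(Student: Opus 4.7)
The plan is to reduce the lemma to a variance bound on a product of independent random variables, and then apply Chebyshev. First, Fact~\ref{fact:detMinor} applied to each $\LL(i)$ rewrites the left-hand side as
\[
\prod_{i=1}^k \detp(\LL(i))
= \prod_{i=1}^k \det\bigl(\LL(i)_{[V_2(i),V_2(i)]}\bigr) \cdot \prod_{i=1}^k \detp\bigl(\sc{\LL(i)}{V_1(i)}\bigr),
\]
so it is enough to show that $\prod_i \detp(\LL^{V_1(i)})$ approximates $\prod_i \detp(\sc{\LL(i)}{V_1(i)})$ within a multiplicative factor of $1 \pm O(\sqrt{\delta/p})$ with probability at least $1 - p$. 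Via the matrix tree theorem this is exactly a statement about $X_i \defeq \totaltrees{H^{V_1(i)}}$.

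Next I would plug in the guarantees of Theorem~\ref{thm:SchurSparse}, called with parameter $\beta_i \delta$, for each $i$. These give $\expec{}{X_i} = \detp(\sc{\LL(i)}{V_1(i)}) \cdot e^{\pm \beta_i \delta}$ and $\expec{}{X_i^2}/\expec{}{X_i}^2 \le e^{\beta_i\delta}$. The footnote in the statement ensures that the invocations of \textsc{SchurSparse} at a single level are run on fixed inputs with independent internal randomness, so the random variables $X_1,\ldots,X_k$ are mutually independent. Hence, writing $Y \defeq \prod_i X_i$, independence gives
\[
\expec{}{Y} = \prod_{i=1}^k \expec{}{X_i}, \qquad \expec{}{Y^2} = \prod_{i=1}^k \expec{}{X_i^2}.
\]

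Now I would telescope. Using $\sum_i \beta_i = O(1)$, the variance ratio satisfies
\[
\frac{\expec{}{Y^2}}{\expec{}{Y}^2}
= \prod_{i=1}^k \frac{\expec{}{X_i^2}}{\expec{}{X_i}^2}
\le \prod_{i=1}^k e^{\beta_i \delta}
= e^{\delta \sum_i \beta_i}
= e^{O(\delta)},
\]
so $\var{}{Y} \le (e^{O(\delta)} - 1)\expec{}{Y}^2 = O(\delta)\expec{}{Y}^2$ for $\delta$ small. Chebyshev's inequality then yields that with probability at least $1 - p$ we have $|Y - \expec{}{Y}| \le \sqrt{\var{}{Y}/p} = O(\sqrt{\delta/p})\expec{}{Y}$, i.e., $Y = (1 \pm O(\sqrt{\delta/p}))\expec{}{Y}$. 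On the other hand, the expectation telescopes as $\expec{}{Y} = \prod_i \detp(\sc{\LL(i)}{V_1(i)}) \cdot \prod_i e^{\pm \beta_i \delta} = \prod_i \detp(\sc{\LL(i)}{V_1(i)}) \cdot e^{\pm O(\delta)}$, and the $e^{\pm O(\delta)}$ factor gets absorbed into $1 \pm O(\sqrt{\delta/p})$ (for $p \le 1$ and $\delta$ small). Combining these with the matrix tree identity above and multiplying through by $\prod_i \det(\LL(i)_{[V_2(i),V_2(i)]})$ yields the stated conclusion.

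The main obstacle is really only a bookkeeping one: we must be sure that using Chebyshev---rather than a sharper concentration inequality---produces the correct $\sqrt{\delta/p}$ dependence that the statement demands, and that the bias $e^{\pm O(\delta)}$ in $\expec{}{Y}$ can be absorbed into this factor. Both follow from $\sum_i \beta_i = O(1)$, which is exactly what keeps the total variance summed across the level at $O(\delta)$ (paralleling how the algorithm schedules error budgets proportionally to the vertex count $|V_1|/\overline{n}$). The independence hypothesis on the $\LL(i)$'s is essential here; without it the product formulas for $\expec{}{Y}$ and $\expec{}{Y^2}$ would fail, and is the reason the lemma is restricted to a single recursion level.
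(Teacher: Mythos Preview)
Your proposal is correct and follows essentially the same approach as the paper: reduce via Fact~\ref{fact:detMinor} to comparing $\prod_i \detp(\LL^{V_1(i)})$ with $\prod_i \detp(\sc{\LL(i)}{V_1(i)})$, use independence together with the guarantees of Theorem~\ref{thm:SchurSparse} to telescope the first two moments (so that the variance ratio is $e^{O(\delta)}$ via $\sum_i \beta_i = O(1)$), and finish with Chebyshev. Your explicit remark that the $e^{\pm O(\delta)}$ bias in the mean is absorbed by the $O(\sqrt{\delta/p})$ deviation is a point the paper leaves somewhat implicit but handles the same way.
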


Here the $\beta_i$ corresponds to the $\abs{V_1} / n$ and $\abs{V_2} / n$ values
that $\delta$ is multiplied against in each call parameter
to \textsc{SchurSparse}.
An example of the main steps in this determinant
approximation algorithm, as well as the graphs corresponding
to applying Lemma~\ref{lem:detErrorBoundEachLevel} to one of
the layers is in Figure~\ref{fig:RecursiveDet}.

\begin{figure}[ht]

\begin{center}

\begin{tikzpicture}
		
\draw (0,4) ellipse (8 and 0.5);
\node at (0, 4) {G on $\overline{n}$ vertices with Laplacian $\LL$};
	
\draw (-3,1) ellipse (5 and 0.5);
\node at (-3, 1) {$\LL(1)$};
\draw[vecArrow] (-3, 3.5) to (-3, 1.5);
\node at (-3, 2.5) [fill = white]{$\textsc{SchurSparse}
(\LL^{G}, V_1, \beta \delta)$};

\draw (5,1) ellipse (3 and 0.5);
\node at (5, 1) {$\LL(2)$};
\draw[vecArrow] (5, 3.5) to (5, 1.5);
\node at (5, 2.5) [fill = white]{$\textsc{AddRowColumn}(\LL^{G}_{[V_2, V_2]})$};

\draw[vecNarrowArrow] (-6, 0.5) to (-6, -1.5);
\draw (-6, -2) ellipse (2 and 0.5);
\node at (-6, -2) {$\LL^{V_1(1)}$, $\abs{V_1(1)} = \beta_1 n$};
\node at (-5, -0.5) [fill = white]{$\textsc{SchurSparse}
	(\LL(1), V_1(1), \beta_1 \delta)$};

\draw[vecNarrowArrow] (-1.5, 0.5) to (-1.5, -1.5);
\draw (-1, -2) ellipse (3 and 0.5);
\node at (-1, -2) {$\LL^{V_2(1)}$};

\draw[vecNarrowArrow] (4, 0.5) to (4, -1.5);
\draw (4, -2) ellipse (2 and 0.5);
\node at (4, -2) {$\LL^{V_1(2)}$, $\abs{V_1(2)} = \beta_2 n$};
\node at (4, -0.5) [fill = white]{$\textsc{SchurSparse}
	(\LL(2), V_1(2), \beta_2 \delta)$};

\draw[vecNarrowArrow] (7, 0.5) to (7, -1.5);
\draw (7, -2) ellipse (1 and 0.5);
\node at (7, -2) {$\LL^{V_2(2)}$};
		
\end{tikzpicture}
\end{center}
	
\caption{Two layers of the call Structure of
		the determinant approximation algorithm \textsc{DetApprox}
		 (algorithm~\ref{alg:detApprox}),
	with the transition from the first to the second
	layer labeled as in Lemma~\ref{lem:detErrorBoundEachLevel}.}
	\label{fig:RecursiveDet}
\end{figure}
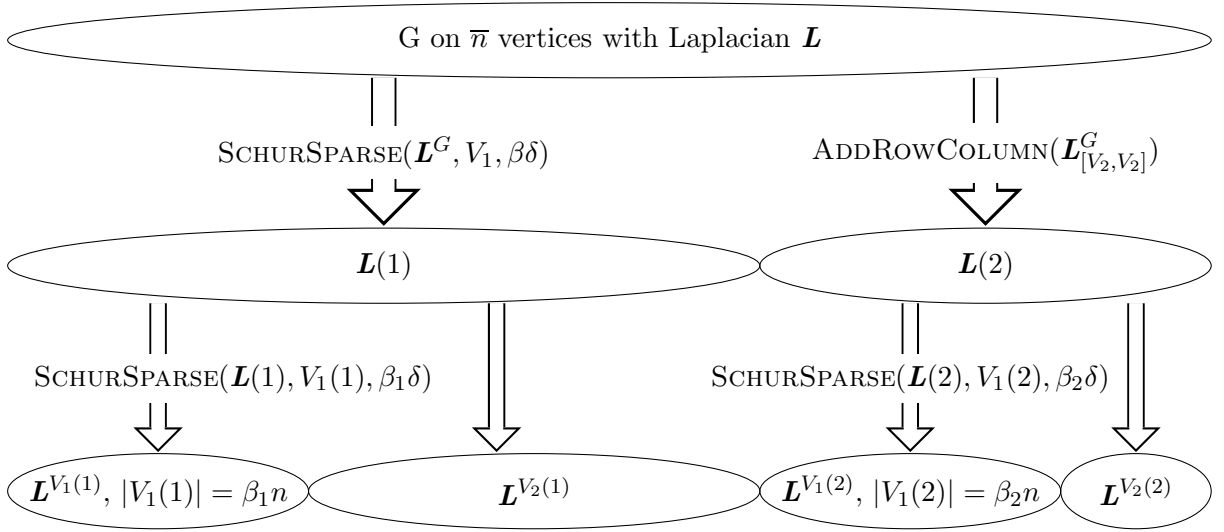

Applying Lemma~\ref{lem:detErrorBoundEachLevel} to all the layers
of the recursion tree gives the overall guarantees.

\begin{proof}[Proof of Theorem~\ref{thm:detApproxIntro}]
~~
	
\textbf{Running Time:}
Let the number of vertices and edges in the current graph corresponding
to $\LL$ be $n$ and $m$ respectively.
Calling $\textsc{AlmostIndependent}$ takes expected time ${O}(m)$
and guarantees
\[
\frac{n}{16} \leq \abs{V_2} \leq \frac{n}{8},
\]
which means the total recursion terminates in $O(\log{n})$ steps.

For the running time,
note that as there are at most $O(n)$ recursive calls,
the total number of vertices per level of the recursion is $O(n)$.
The running time on each level are also dominated by the calls
to $\textsc{SchurSparse}$, which comes out to
\[
\tilde{O}\left( \abs{V_1\left(i\right)}^2 \frac{n}{\delta' \abs{V_1\left(i\right)}} \right)
= \tilde{O}\left( \abs{V_1\left( i\right)} n \delta^{-2} \right),
\]
and once again sums to $\tilde{O}(n^2 \delta^{-2})$.
We note that this running time can also be obtained from more standard
analyses of recursive algorithms,
specifically applying guess-and-check to a running
time recurrence of the form of:
\[
T\left(n, \delta\right)
= T\left(\theta n, \theta \delta\right)
+ T\left(\left(1 - \theta \right)n + 1, \left( 1 - \theta \right) \delta\right)
+ \tilde{O}(n^2\delta^{-1}).
\]
	
	
	\textbf{Correctness}. As shown in the running time analysis, our recursion tree has depth at most $O(\log{n})$, and there are at most $O(n)$ total vertices at any given level. We associate each level of the recursion in our algorithm with the list of matrices which are given as input to the calls making up that level of recursion. For any level in our recursion, consider the product of $\detp$ applied to each of these matrices. We refer to this quantity for level $j$ as $q_j$. Notice that $q_0$ is the determinant we wish to compute and $q_{\text{\# levels}-1}$ is what our algorithm actually outputs. As such, it suffices to prove that for any $j$, $q_j = (1 \pm \frac{\delta}{\text{\# levels}})q_{j-1}$ with probability of failure at most $\frac{1}{10 \cdot \text{\# levels}}$. However, by the fact that we set $\delta' = \Theta(\delta^2 / \log^3 n)$ in the top level of recursion with sufficiently small constants, this immediately follows from Lemma~\ref{lem:detErrorBoundEachLevel}.
	
A minor technical issue is that Lemma~\ref{lem:detErrorBoundEachLevel} only gives guarantees conditioned on a WHP event. However, we only need to invoke this Lemma a logarithmic number of times, so we can absorb this polynomially small failure probability into the our total failure probability without issue.
	
Standard boosting techniques---such as running $O(\log{n})$ independent
instances and taking the medians of the estimates---
give our desired with high probability statement.
\end{proof}

It remains to bound the variances per level of the recursion.

\begin{proof}(Of Lemma~\ref{lem:detErrorBoundEachLevel})
As a result of Fact~\ref{fact:detMinor}	
\[
\prod_{i=1}^k \detp\left(\LL(i)\right)\
= \prod_{i=1}^k \det\left(\LL\left(i\right)_
	{\left[V_2\left(i\right),V_2\left(i\right)\right]}\right)
\detp\left(\sc{\LL\left(i\right)}{V_1\left(i\right)}\right).
\]
Consequently, it suffices to show that
with probability at least $1 - p$
\[
\prod_{i=1}^k
\detp\left(\sc{\LL\left(i\right)}{V_1\left(i\right)}\right)
= \left(1 \pm O\left( \sqrt{\delta / p}\right)\right)
\prod_{i=1}^k \detp \left(\LL^{V_1\left(i\right)} \right).
\] 
Recall that $\LL^{V_1(i)}$ denotes the random variable that is
the approximate Schur complement generated through
the call to $\textsc{SchurSparse}(\LL(i), V_1(i), \beta_i \delta)$.

Using the fact that our calls to $\textsc{SchurSparse}$ are independent along with the assumption of $\sum_{i=1}^k \beta_i = O(1)$, we can apply the guarantees of Theorem~\ref{thm:SchurSparse} to obtain
\begin{multline*}
\expec{\LL^{V_1\left(1\right)} \ldots \LL^{V_1\left(k\right)}}{
\prod_{i=1}^k \detp\left(\LL^{V_1\left(1\right)}\right)}
= \prod_{i=1}^{k} \expec{\LL^{V_1\left(i\right)}}
{\detp\LL^{V_1\left(i\right)}}
\\ = \left(1 \pm O\left({\delta}\right)\right)
\prod_{i=1}^k \detp\left(\sc{\LL(i)}{V_1(i)}\right),
\end{multline*}
and 
\begin{multline*}
\frac{
\expec{\LL^{V_1\left(1\right)} \ldots \LL^{V_1\left(k\right)}}
{\prod_{i=1}^k \detp\left(\LL^{V_1\left(i\right)}\right)^2}}
{\expec{\LL^{V_1\left(1\right)} \ldots \LL^{V_1\left(k\right)} }
{\prod_{i=1}^k \detp\left(\LL^{V_1\left(i\right)}\right)^2}
}
= \prod_{i=1}^k
\frac{
\expec{\LL^{V_1\left(i\right)}}{ \detp\left(\LL^{V_1\left(i\right)}\right)^2}}
{\expec{\LL^{V_1\left(i\right)}}{ \detp\left(\LL^{V_1\left(i\right)}\right)}^2
}
\\
\leq \prod_{i=1}^k \exp{\left(O\left(\beta_i \delta\right)\right)}
\leq \exp{\left(O\left(\delta\right)\right)}.
\end{multline*}
	
By assumption $\delta$ is small,
so we can approximate $\exp{\left(O(\delta)\right)}$
with $1 + O(\delta)$, which with bound above gives 
\[
\var{\LL^{V_1\left(1\right)} \ldots \LL^{V_1\left(k\right)}}
{\prod_{i=1}^k \detp\left(\LL^{V_1\left(i\right)}\right)}
\leq O\left(\delta\right)
 \expec{\LL^{V_1\left(1\right)} \ldots \LL^{V_1\left(k\right)}}{\prod_{i=1}^k \detp\left( \LL^{V_1\left(i\right)} \right)}^2,
\]
Then applying the approximation on $\expec{}{\prod_{i=1}^k \detp\left(\textsc{SchurSparse}(\LL(i),V_1(i),\beta_i\delta)\right)}$ gives 
\[
\var{\LL^{V_1\left(1\right)} \ldots \LL^{V_1\left(k\right)}}{
\prod_{i = 1}^{k} \detp\left(\LL^{V_1\left(i\right)}\right)}
\leq O\left(\delta\right) 
	\left(\prod_{i=1}^k \detp\left(\sc{\LL(i)}{V_1(i)}\right) \right)^2.
\]
At which point we can apply
Chebyshev's inequality to obtain our desired result.
	
\end{proof}


\section{Random Spanning Tree Sampling}
\label{sec:spanning_tree}

In this section we will give an algorithm for generating a random spanning tree from a weighted graph, that uses $\textsc{SchurSparse}$ as a subroutine, and ultimately prove Theorem~\ref{thm:spanningTreeAlgoIntro}.


In order to do so, we will first give an $O(n^{\omega})$ time recursive algorithm using Schur complement that exactly generates a random tree from the $\ww$-uniform distribution. The given algorithm is inspired by the one introduced in \cite{ColbournDN89}, and its variants utilized in  \cite{ColbournMN96,HarveyX16,DurfeeKPRS16}. However, we will (out of necessity for our further extensions) reduce the number of branches in the recursion to two, by giving an efficient algorithmic implementation of a bijective mapping between spanning
trees in $G$ and spanning trees in $\sc{G}{V_2}$ when $V_1$,
the set of vertices removed, is an independent set.
We note that this also yields an alternative algorithm for
generating random spanning trees from the $\ww$-uniform
distribution in $O(n^{\omega})$ time.


The runtime of this recursion will then be
achieved similar to our determinant algorithm.
We reduce $\delta$ proportional to the decrease in the number
of vertices for every successive recursive call in exactly
the same was as the determinant approximation algorithm from
Section~\ref{sec:determinant_algo}.
As has been previously stated and which is proven in Section~\ref{subsec:EasierTVBound}, drawing a random spanning
tree from a graph after running our sparsification routine
 which takes $\widetilde{O}(n^2\delta^{-1})$,
will have total variation distance $\sqrt{\delta}$
from the $\ww$-uniform distribution.

Similar to our analysis of the determinant algorithm,
we cannot directly apply this bound to each tree because
the lower levels of the recursion will contribute far
too much error when $\delta$ is not decreasing at a
proportional rate to the total variation distance.
Thus we will again need to give better bounds on the
variance across each level, allowing stronger bounds on the contribution to total variation distance of the entire level.

This accounting for total variance is more difficult here
due to the stronger dependence between the recursive calls.
Specifically, the input to the graph on $V_2$ depends on
the set of edges chosen in the first recursive call on
$V_1$, specifically $\sc{G}{V_1}$, or a sparsified version of it.

Accounting for this dependency will require proving additional
concentration bounds shown in Section~\ref{sec:cond_conc},
which we specifically achieve by sampling $\amount = O(n^2\delta^{-1})$
edges in each call to $\textsc{SchurSparse}$.
While this might seem contradictory to the notion of ``sampling",
we instead consider this to be sampling from the graph in which
all the edges generated from the Schur complement are kept separate
and could be far more than $n^2$ edges.

\subsection{Exact $O(n^{\omega})$ Time Recursive Algorithm}
\label{subsec:ExactSpanningAlgo}

We start by showing an algorithm that samples trees from
the exact $\ww$-uniform distribution via the computation
of Schur complements.
Its pseudocode is in Algorithm~\ref{alg:exactTree},
and it forms the basis of our approximate algorithm:
the faster routine in Section~\ref{subsec:RandSpanningTreeAlgo}
is essentially the same as inserting sparsification steps
between recursive calls.

\begin{algorithm}		
	\caption{$\textsc{ExactTree}(G):$ Take a graph and output a tree randomly from the $\ww$-uniform distribution}
	\label{alg:exactTree}
	\SetAlgoVlined
	\SetKwProg{myproc}{Procedure}{}{}

	\KwIn{Graph $G$}
	\KwOut{A tree randomly generated from the $\ww$-uniform distribution of $G$}
	If there is only one edge $e$ in $G$, return $G$ \;
	Partition $V$ evenly into $V_1$ and $V_2$\;
	$T_1 = \textsc{ExactTree}(\schur(G,V_1))$\;
	\For{each $e \in T_1$}{
		with probability $\frac{\ww_e(G)}{\ww_e(\schur(G,V_1))}$,
		$G \gets G/e$, $T \gets T \cup e$
	\label{ln:ExactTreeSplit}\;}
	Delete the remaining edges, i.e., $G \gets G \setminus E(V_1)$\;
	$T_2 = \textsc{ExactTree}(\schur(G,V_2))$\;
	$T \gets T \cup \textsc{ProlongateTree}(G, V_1 \sqcup V_2, T_2)$\;	
	Output $T$\;
\end{algorithm}

The procedure $\textsc{ProlongateTree}$ is invoked when
$V_1 = V \setminus V_2$ maps a tree $T_2$ from the
Schur complement $\sc{G}{V_2}$ to a tree back in $G$.
It crucially uses the property that $V_1$ is an independent
set, and its pseudocode is given in Algorithm~\ref{alg:ProlongateTree}.

\begin{algorithm}				
\caption{$\textsc{ProlongateTree}(G, V_1 \sqcup V_2, T_2)$:
prolongating a tree on $\sc{G}{V_2}$ to a tree on $G$.
}
\label{alg:ProlongateTree}

\SetAlgoVlined
\SetKwProg{myproc}{Procedure}{}{}
	
\KwIn{A graph $G$, a splitting of vertices $V_1 \sqcup V_2$
such that $V_1$ is an independent set, tree $T_2$ of $\sc{G}{V_2}$.}
\KwOut{A tree in $G$}

$T \gets \emptyset$\;

\For{each $e = xy \in T_2$}{
	Create distribution $\lambda_e$,
		set $\lambda_e(\emptyset) = \ww_e(G)$\;
	\For{each $v \in V_1$ such that $(v, x), (v, y) \in E(G)$}{
		Set $\lambda_e(v) = \ww_{(v, x)}(G) \ww_{(v, y)}(G) \deg_v(G)^{-1}$\;
	}
	Randomly assign $f(e)$ to $\{\emptyset \cup V_1\}$ with probability
		proportional to $\lambda$\;
}

\For{each $v \in V_1$}{
	\For{ each $e = (x,y) \in T_2$ such that $(v,x),(v,y) \in E(G)$}
			{\If{$f(e) \neq v$}{Contract $x$ and $y$ \;
			}
		}
		\For{each contracted vertex $X$ in the neighborhood of $v$}{
		Connect $X$ to $v$ with edge $(v,u) \in G$ with probability
		proportional to $w_G((v,u))$ \label{step:ProlongateTree:SampleStar}\;
		$T \gets T \cup (v,u)$\;
	}
}
	
Output $T$\;
	
\end{algorithm}

\begin{lemma}\label{lem:ExactTree}
The procedure $\textsc{ExactTree}(G)$ will generate a random tree
of $G$ from the $\ww$-uniform distribution in $O(n^\omega)$ time.
\end{lemma}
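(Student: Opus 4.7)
The proof proceeds by strong induction on $n = |V|$; the base case $n=2$ is immediate since the unique edge is returned with probability $1$. For the inductive step, fix any spanning tree $T^*$ of $G$, partition its edges by endpoints into $T^*_{V_1}$ (internal to $V_1$), $T^*_{V_2}$ (internal to $V_2$), and $T^*_{\mathrm{cross}}$, and aim to show that the algorithm outputs $T^*$ with probability $\ww(T^*)/\totaltrees{G}$. By the inductive hypothesis, both recursive calls return $\ww$-uniform trees on their respective input graphs.

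The first phase analysis relies on viewing $\sc{G}{V_1}$ as the multigraph of Lemma~\ref{lem:SchurWeights}, whose multi-edges correspond to either a direct $V_1$-$V_1$ edge in $G$ or a walk in $G$ through $V_2$ vertices with weight $\prod \ww^{G}_{u_i u_{i+1}}/\prod \deg^G_{u_i}$. A multi-edge of $\sc{G}{V_1}$ sampled with probability proportional to its weight is the direct $G$-edge with conditional probability $\ww_e(G)/\ww_e(\sc{G}{V_1})$, which matches the contract probability in line~\ref{ln:ExactTreeSplit}. Combining Fact~\ref{fact:schurPreservesProb} with the inductive guarantee on $T_1$, I would show that the random set of contracted edges has the correct $\ww$-uniform marginal distribution on the $V_1$-internal slice of a random spanning tree of $G$.

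For the second phase, the deletion step makes $V_1$ an independent set, so $\sc{G}{V_2}$ is a multigraph whose multi-edges are either direct $V_2$-$V_2$ edges in $G$ or two-hop paths $x$-$v$-$y$ through some $v \in V_1$. By induction, $T_2$ is $\ww$-uniform on $\sc{G}{V_2}$, and $\textsc{ProlongateTree}$ must expand it back to a spanning tree of $G$. The central identity to verify is that, for each $v \in V_1$, the joint distribution over (a) which $T_2$-edges are realized as walks through $v$ (determined by $\lambda_e$, which sums to exactly $\ww_e(\sc{G}{V_2})$) and (b) the star edges sampled in step~\ref{step:ProlongateTree:SampleStar} telescopes to the correct probability: the factor $\ww_{(v,x)}\ww_{(v,y)}/\deg_v$ in the two-hop weight is precisely the product of a walk-selection probability and the subsequent star-selection probability proportional to $\ww_{(v,u)}$, while the contractions of edges with $f(e) \neq v$ ensure that the sampled star edges form a tree. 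Multiplying the probabilities across the two phases and applying the matrix-tree identity $\totaltrees{G} = \detp(\LL^G)$ alongside Fact~\ref{fact:detMinor} yields the claimed $\ww(T^*)/\totaltrees{G}$.

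For the running time, forming a Schur complement via dense linear algebra is the per-call bottleneck at $O(n^\omega)$; $\textsc{ProlongateTree}$ only costs $O(n^2)$. The recurrence $T(n) = T(|V_1|) + T(|V_2|) + O(n^\omega)$ with $|V_1|,|V_2| \leq n/2$ solves to $T(n) = O(n^\omega)$ because the sum $\sum_k 2^k (n/2^k)^\omega$ is geometric with ratio $2^{1-\omega} < 1$. The main obstacle I expect is the rigorous verification of $\textsc{ProlongateTree}$: the logic of contracting the non-chosen $v$-neighboring edges and then independently attaching $v$ via a single weight-proportional star edge per contracted neighbor is a clever factorization of the joint distribution over $V_1$-incident tree edges, and making this bijection fully rigorous requires a careful case analysis of how each $V_1$-vertex attaches to components of the partially built tree, which I have only sketched at a high level here.
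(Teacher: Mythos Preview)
Your proposal is correct and follows essentially the same approach as the paper, which factors correctness into Lemma~\ref{lem:SamplingSubset} (phase~1) and Lemma~\ref{lem:ProlongateTree} (phase~2). For the obstacle you flag in \textsc{ProlongateTree}, the paper's argument is by induction on $|V_1|$, peeling off one independent vertex $v$ at a time via the decomposition $\sc{G}{V_2} = G[V_2] + K(v_1) + \cdots + K(v_k)$ and reducing to the single-vertex case, which makes the telescoping clean; also, you do not need Fact~\ref{fact:detMinor} or the matrix-tree theorem in the final step, since the product of the two phase probabilities cancels directly to $\ww(T^*)/\totaltrees{G}$.
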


The algorithm we give is similar to the divide and
conquer approaches of~\cite{ColbournDN89,ColbournMN96,HarveyX16,DurfeeKPRS16}. The two main facts used by these approaches can be summarized as follows:
\begin{enumerate}
\item Schur complements preserves the leverage score of original edges,
and
\item The operation of taking Schur complements,
and the operation of deleting or contracting an edge are associative.
\end{enumerate}

We too will make use of these two facts.
But unlike all previous approaches, at every stage
we need to recurse on only two sub-problems.
All previous approaches have a branching factor of at least four. 

We can do this by exploiting the structure of the Schur complement when one eliminates an independent set of vertices.
We formalize this in Lemma~\ref{lem:ProlongateTree}.

Before we can prove the lemma, we need to state an important property of
Schur complements that follows from Fact~\ref{fact:schurPreservesProb}.
Recall the notation from Section~\ref{sec:background} that for a weighted
graph $G = (V, E, \ww)$, $\Pr_{T}^{G}(\cdot)$ denotes the probability of
$\cdot$ over trees $T$ picked from the $\ww$-uniform distribution on
spanning trees of $G$.

\begin{lemma}
\label{lem:SamplingSubset}
Let $G$ be a graph with a partition of vertices
$V = V_1 \sqcup V_2$.
Then for any set of edges $F$ contained in $G[V_1]$,
the induced subgraph on $V_1$, we have:
\[
\Pr_T^{G}\left( T \cap E\left(G\left[V_1\right]\right) = F \right)
= \Pr_T^{\schur{(G,V_1)}}
	\left(T \cap E\left(G\left[V_1\right]\right) = F \right),
\]
where the edges in $\sc{G}{V_1}$ are treated as the sum
of $G[V_1]$ and $G_{sc}[V_1]$, the new edges added to the
Schur complement.
\end{lemma}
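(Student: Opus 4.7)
The plan is to derive the equality from Fact~\ref{fact:schurPreservesProb} by an inclusion-exclusion (Möbius inversion) argument over super-sets of $F$ inside $E(G[V_1])$. Fact~\ref{fact:schurPreservesProb} gives, for every $F' \subseteq E(G[V_1])$,
\[
\Pr_T^G(F' \subseteq T) = \Pr_T^{\sc{G}{V_1}}(F' \subseteq T),
\]
so the two distributions agree on all super-set events supported inside $E(G[V_1])$. The lemma strengthens this from containment to exact equality $T \cap E(G[V_1]) = F$, which is the natural next step.

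First I would rewrite the event $\{T \cap E(G[V_1]) = F\}$ as
\[
\{F \subseteq T\} \setminus \bigcup_{e \in E(G[V_1]) \setminus F} \{F \cup \{e\} \subseteq T\}.
\]
Standard inclusion-exclusion over the boolean lattice of super-sets then yields
\[
\Pr_T^G\!\left(T \cap E(G[V_1]) = F\right)
= \sum_{F \subseteq F' \subseteq E(G[V_1])} (-1)^{|F' \setminus F|}\, \Pr_T^G(F' \subseteq T),
\]
and the analogous identity holds for $\Pr_T^{\sc{G}{V_1}}$, with the same index set (subsets of $E(G[V_1])$, not of the whole edge set of $\sc{G}{V_1}$) since the event on the left-hand side is insensitive to which Schur complement edges of $G_{sc}[V_1]$ appear in $T$. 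Applying Fact~\ref{fact:schurPreservesProb} to each $F'$ in the sum matches the two expansions term by term, proving the lemma.

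The only real subtlety is bookkeeping on the $\sc{G}{V_1}$ side: $\sc{G}{V_1}$ is a multigraph whose edge set is partitioned into the original edges of $G[V_1]$ and the new edges of $G_{sc}[V_1]$, and the event $\{T \cap E(G[V_1]) = F\}$ must be read as picking out exactly the original-edge part of the tree (the convention fixed by the lemma statement). Once this convention is in force, the inclusion-exclusion runs only over $F' \subseteq E(G[V_1])$, so every probability appearing is one to which Fact~\ref{fact:schurPreservesProb} directly applies, and the argument closes without needing any new structural fact about Schur complements.
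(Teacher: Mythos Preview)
Your proposal is correct and is essentially the same argument as the paper's: both derive the exact-intersection probabilities from the containment probabilities supplied by Fact~\ref{fact:schurPreservesProb}. The only cosmetic difference is that the paper phrases the inversion as a downward induction on $|F|$ (writing $\Pr(F\subseteq T)=\Pr(T\cap E(G[V_1])=F)+\sum_{F'\supsetneq F}\Pr(T\cap E(G[V_1])=F')$ and peeling off the top term), whereas you write the closed-form M\"obius sum directly; these are equivalent formulations of the same inclusion--exclusion.
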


\begin{proof}
If $F$ contains a cycle, then $\Pr_T^{G}\left( T \cap E\left(G\left[V_1\right]\right) = F \right) =0= \Pr_T^{\schur{(G,V_1)}}(T \cap E\left(G\left[V_1\right]\right)=F)$. Therefore, we will assume $F$ does not contain any cycle, and we will prove  by induction on the size of $F$. If $|F| > |V_1|-1,$ then $F$ will have to contain a cycle. When  $|F| = |V_1|-1,$ then $F$ will have to be the edge set of a tree in $\schur{(G,V_1)}.$  Then by Fact~\ref{fact:schurPreservesProb}, the corollary holds.
Now suppose that the corollary holds for all $F$ with $|F| = |V_1| - 1 -k$.
Now consider some $F$ with $|F| = |V_1| - 1 - (k+1)$.
We know
\[
\Pr^{G}_{T}\left(F \subseteq T\right)
= \Pr^{G}_{T}\left(F = \left(T \cap E\left(G\left[V_1\right]\right)\right)\right)
+ \sum_{F' \supset F} \Pr^{G}_{T}
	\left(F' = \left(T \cap E\left(G\left[V_1\right]\right)\right)\right).
\]	
Since $\abs{F'} > \abs{F}$, by assumption 
\[
\sum_{F' \supset F} \Pr_T^{G}
	\left(F' = \left(T \cap E\left(G\left[V_1\right]\right)\right)\right)
= \sum_{F' \supset F} \Pr_{T}^{\schur\left(G,V_1\right)}
	\left(F' = \left(T \cap E\left(G\left[V_1\right]\right)\right)\right),
\]
then by Fact~\ref{fact:schurPreservesProb} we have
$\Pr_{T}^{G}(F \subseteq T) = \Pr_{T}^{\schur(G,V_1)}(F \subseteq T)$,
which implies
 \[
 \Pr_T^{G}\left(F  = \left(T \cap E\left(G\left[V_1\right]\right)\right)\right)
 = \Pr_{T}^{\schur{\left(G,V_1\right)}}
	 \left(F= \left(T \cap E\left(G\left[V_1\right]\right)\right)\right).
 \]
	
\end{proof}

The tracking of edges from various layers of the Schur complement
leads to another layer of overhead in recursive algorithms.
They can be circumvented by merging the edges, generating a
random spanning tree, and the `unsplit' the edge by
random spanning.
The following is a direct consequence of the definition of $\ww(T)$:
\begin{lemma}
\label{lem:UnSplit}
Let $\widehat{G}$ be a multi-graph, and $G$ be the simple
graph formed by summing the weights of overlapping edges.
Then the procedure of:
\begin{enumerate}
\item Sampling a random spanning tree from $G$, $T$.
\item For each edge $e \in T$, assign it to an original edge
from $\widehat{G}$, $\widehat{e}$ with probability
\[
\frac{\ww_{\widehat{e}}\left(\widehat{G} \right)}{\ww_{e}\left( G \right)}.
\]
\end{enumerate}
Produces a $\ww$-uniform spanning tree from $\widehat{G}$,
the original multi-graph.
\end{lemma}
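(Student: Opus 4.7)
The plan is to compute, for an arbitrary fixed spanning tree $\widehat{T}$ of $\widehat{G}$, the probability that the two-step procedure produces exactly $\widehat{T}$, and verify it equals $\ww(\widehat{T})/\totaltrees{\widehat{G}}$. Let $\pi : E(\widehat{G}) \to E(G)$ denote the natural projection that collapses parallel edges, and observe that $\pi$ sends any spanning tree $\widehat{T}$ of $\widehat{G}$ to a spanning tree $T := \pi(\widehat{T})$ of $G$ (the cycle structure is preserved since parallel copies occupy the same cut). The procedure outputs $\widehat{T}$ precisely when step~(1) samples $T$ and step~(2) assigns each $e \in T$ to the unique preimage $\widehat{e} \in \widehat{T} \cap \pi^{-1}(e)$.

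First, using that step~(1) produces $T$ with probability $\ww(T)/\totaltrees{G}$ and that the assignments in step~(2) are independent across the edges of $T$, I would compute
\[
\Pr[\text{procedure outputs } \widehat{T}]
= \frac{\ww(T)}{\totaltrees{G}} \prod_{e \in T} \frac{\ww_{\widehat{e}}(\widehat{G})}{\ww_e(G)}
= \frac{\prod_{\widehat{e} \in \widehat{T}} \ww_{\widehat{e}}(\widehat{G})}{\totaltrees{G}}
= \frac{\ww(\widehat{T})}{\totaltrees{G}},
\]
where the $\ww_e(G)$ factors in the denominators cancel with those in $\ww(T) = \prod_{e \in T} \ww_e(G)$.

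Next I would verify the normalization identity $\totaltrees{G} = \totaltrees{\widehat{G}}$. Substituting $\ww_e(G) = \sum_{\widehat{e} \in \pi^{-1}(e)} \ww_{\widehat{e}}(\widehat{G})$ into $\ww(T) = \prod_{e \in T} \ww_e(G)$ and distributing the product over the sum yields
\[
\ww(T) = \sum_{\widehat{T}:\, \pi(\widehat{T}) = T} \ww(\widehat{T}),
\]
since each term in the expansion picks exactly one preimage per tree edge and thereby specifies a distinct spanning tree of $\widehat{G}$ that projects onto $T$. Summing this identity over all spanning trees $T$ of $G$ and using that every spanning tree of $\widehat{G}$ projects onto some such $T$ gives $\totaltrees{G} = \totaltrees{\widehat{G}}$, which combined with the previous display yields the desired $\ww$-uniform distribution on $\widehat{G}$.

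There is no real obstacle here: the argument is a direct unfolding of definitions once the projection $\pi$ and the correspondence between per-edge preimage choices and spanning trees of $\widehat{G}$ lying over $T$ are set up. The only place meriting care is confirming that the expansion $\prod_e \sum_{\widehat{e}} \ww_{\widehat{e}}$ enumerates the fiber of $\pi$ over $T$ bijectively, which is immediate because choosing one preimage per edge of $T$ and grouping the choices into a subgraph yields a well-defined spanning tree of $\widehat{G}$.
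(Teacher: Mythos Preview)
Your proof is correct. The paper does not actually give a proof of this lemma; it simply states that it is ``a direct consequence of the definition of $\ww(T)$'' and moves on. Your argument is precisely the direct verification the paper is gesturing at: compute the probability of outputting a fixed $\widehat{T}$, cancel the $\ww_e(G)$ factors, and note that $\totaltrees{G} = \totaltrees{\widehat{G}}$ by expanding each edge weight as a sum over parallel copies. There is nothing to add.
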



This then leads to the following proto-algorithm:
\begin{enumerate}
\item Partition the vertices (roughly evenly) into
\[
V = V_1 \sqcup V_2.
\]
\item Generate a $\ww$-uniform tree of $\sc{G}{V_1}$,
and create $F_1 = T \cap E(G[V_1])$ by re-sampling
edges in $G[V_1]$ using Lemma~\ref{lem:UnSplit}.
By Lemma~\ref{lem:SamplingSubset}, this subset is precisely
the intersection of a random spanning tree with $G[V_1]$.
\item This means we have `decided' on all edges in $G[V_1]$.
So we can proceed by contracting all the edges of $F_1$,
and deleting all the edges corresponding to $E(G[V_1]) \backslash F$.
Let the resulting graph be $G'$ and let $V_1'$ be the remaining vertices
in $V_1$ after this contraction.
\item Observe that $V_1'$ is an independent set, and its
complement is $V_2$.
We can use another recursive call to generate a
$\ww$-uniform tree in $\schur(G',V_2)$.
Then we utilize the fact that $V_1'$ is an independent set to lift
this to a tree in $G'$ efficiently via Lemma~\ref{lem:ProlongateTree}.
\end{enumerate}

Our key idea for reducing the number of recursive calls of the algorithm,
that when $V_1$ (from the partition of vertices $V = V_1 \sqcup V_2$)
is an independent set, we can directly lift a tree from $\schur(G,V_2)$
to a tree in $G$.
This will require viewing $G_{\schur}[V_2]$ as a sum of cliques,
one per vertex of $V_1$, plus the original edges in $G[V_2]$.

\begin{fact}\label{fact:schurCompleteNeighborGraph}
Given a graph $G$ and a vertex $v$, the graph
$\schur(G,V\setminus v)$ is the induced graph $G[V \setminus \{v\}]$
plus a weighted complete graph $K(v)$ on the neighbors of $v$.
This graph $K(v)$ is formed by adding one edge $xy$ for every
pair of $x$ and $y$ incident to $v$ with weight
\[
\frac{\ww_{(v,x)}\ww_{(v,y)}}{deg_v},
\]
where $\deg_v \defeq \sum_{x} \ww_{(v, x)}$
is the weighted degree of $v$ in $G$.	
\end{fact}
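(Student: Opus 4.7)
The plan is to apply the definition of the Schur complement directly with $V_2 = \{v\}$ and $V_1 = V \setminus \{v\}$. In this case, $\LL_{[V_2, V_2]}$ is the $1 \times 1$ block containing only the scalar $\deg_v$, so $\LL_{[V_2,V_2]}^{-1} = 1/\deg_v$. The block $\LL_{[V_1, V_2]}$ is the column vector whose $x$-entry equals $-\ww_{(v,x)}$ when $x$ is a neighbor of $v$ and $0$ otherwise. Consequently, the rank-one correction $\LL_{[V_1, V_2]} \LL_{[V_2,V_2]}^{-1} \LL_{[V_2, V_1]}$ is the matrix with $(x,y)$ entry equal to $\ww_{(v,x)} \ww_{(v,y)} / \deg_v$ whenever both $x$ and $y$ are neighbors of $v$, and zero otherwise.

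Next, I would verify by a direct entrywise comparison that subtracting this rank-one correction from $\LL_{[V_1, V_1]}$ produces the Laplacian of $G[V \setminus \{v\}]$ plus the claimed clique $K(v)$. For an off-diagonal entry $(x,y)$ with $x \neq y$, the expression evaluates to $-\ww_{(x,y)}^G - \ww_{(v,x)} \ww_{(v,y)}/\deg_v$, which is precisely the negated combined weight of the edge in $G[V \setminus \{v\}] \cup K(v)$. For a diagonal entry $x$, the expression is $\deg_x^G - \ww_{(v,x)}^2 / \deg_v$, and I would check that this equals the weighted degree of $x$ in $G[V \setminus \{v\}] \cup K(v)$, namely $(\deg_x^G - \ww_{(v,x)}) + \sum_{y \sim v,\, y \neq x} \ww_{(v,x)} \ww_{(v,y)} / \deg_v$. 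Using $\sum_{y \sim v} \ww_{(v,y)} = \deg_v$ collapses the second sum to $\ww_{(v,x)} - \ww_{(v,x)}^2/\deg_v$, and the two sides coincide.

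This is a routine calculation with no genuine obstacle; the only care needed is tracking signs and confirming that the diagonal entries of the resulting matrix match the weighted degrees in the claimed graph. Once both the off-diagonal and diagonal entries agree, the resulting matrix is the Laplacian of $G[V \setminus \{v\}] + K(v)$ by the defining structure of a graph Laplacian, which completes the proof.
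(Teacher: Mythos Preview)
Your proposal is correct. The paper states this as a \textbf{Fact} without proof, treating it as standard; your direct entrywise verification from the definition of the Schur complement is exactly the routine calculation one would expect, and all your sign and degree computations check out.

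If you want a cross-reference within the paper: this fact is also the $|V_2|=1$ special case of Lemma~\ref{lem:SchurWeights}, which the paper does prove via the Jacobi series expansion of $\LL_{[V_2,V_2]}^{-1}$. In that language, the only walks from $V_1$ to $V_1$ through $V_2=\{v\}$ are the length-$0$ walks (edges already in $G[V_1]$) and the length-$2$ walks $x\to v\to y$, whose weight $\ww_{(v,x)}\ww_{(v,y)}/\deg_v$ recovers exactly your rank-one correction. Your direct computation is the more elementary route and perfectly adequate here.
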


\begin{lemma}\label{lem:ProlongateTree}
Let $G$ be a graph on $n$ vertices and $V_1$ an independent set.
If $T$ is drawn from the $\ww$-uniform distribution of $\schur(G,V_2)$,
then in $O(n^2)$ time $\textsc{ProlongateTree}(G, V_1 \sqcup V_2, T_2)$ returns a
tree from the $\ww$-uniform distribution of $G$.	
\end{lemma}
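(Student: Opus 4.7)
Fix a spanning tree $\hat T$ of $G$ and write $F=\hat T\cap E(G[V_2])$ for its $V_2$-$V_2$ edges and $U_v=\{u\in V_2:(v,u)\in\hat T\}$ for the neighbors of each $v\in V_1$ in $\hat T$; since $V_1$ is independent, $\ww(\hat T)=\prod_{e\in F}\ww_e(G)\cdot\prod_{v\in V_1}\prod_{u\in U_v}\ww_{(v,u)}$. My goal is to prove $\Pr[\textsc{ProlongateTree}\text{ outputs }\hat T]=\ww(\hat T)/\totaltrees{G}$. The high level plan is to reinterpret the source-assignment loop as sampling a $\ww$-uniform spanning tree of a multigraph, reinterpret the star-stub loop as a weighted star selection around each $v$, and then verify the probabilities match via the weighted matrix-tree/Cayley identity applied to each $K(v)$.

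\textbf{Step 1: multigraph view of $\sc{G}{V_2}$.} Because $V_1$ is an independent set, iterating Fact~\ref{fact:schurCompleteNeighborGraph} one vertex of $V_1$ at a time shows that, as a multigraph, $\sc{G}{V_2}$ is the disjoint union of the original $G[V_2]$ together with one weighted clique $K(v)$ on $N(v)\cap V_2$ for each $v\in V_1$, whose edge $(x,y)$ has weight $\ww_{(v,x)}\ww_{(v,y)}/\deg_v$. By Lemma~\ref{lem:UnSplit} applied to this multigraph, the source-assignment loop converts the $\ww$-uniform spanning tree $T_2$ of the simple-graph $\sc{G}{V_2}$ into a $\ww$-uniform spanning tree $(T_2,f)$ of the multigraph: the joint law of $(T_2,f)$ is $\prod_e\ww_e^{f(e)}\big/\totaltrees{\sc{G}{V_2}}$.

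\textbf{Step 2: characterize configurations producing $\hat T$.} The algorithm outputs $\hat T$ iff (i) the edges with $f(e)=\emptyset$ in $T_2$ are exactly $F$, and (ii) for each $v\in V_1$ the clusters of $N(v)\cap V_2$ (formed by contracting those edges of $T_2$ with both endpoints in $N(v)\cap V_2$ and source $\neq v$) each contain exactly one element of $U_v$, with the weighted star-stub sampling selecting that element. A key observation is that the cluster structure at $v$ is determined by the edges of $T_2$ outside $K(v)$, and independent of the choice $S_v=T_2\cap K(v)$ itself. Writing the joint probability as the multigraph weight times $\prod_v\prod_X \ww_{(v,u_X)}/\sum_{u'\in X}\ww_{(v,u')}$, where $u_X$ denotes the chosen representative in cluster $X$, and summing over valid configurations, I expect the contribution of each $v$ to factor as
\[
\Sigma_v\;=\;\sum_{S_v}\;\prod_{(x,y)\in S_v}\frac{\ww_{(v,x)}\ww_{(v,y)}}{\deg_v}\;\cdot\;\prod_X\frac{\ww_{(v,u_X)}}{\sum_{u'\in X}\ww_{(v,u')}},
\]
where $S_v$ ranges over sub-forests of $K(v)$ consistent with the spanning-tree condition on $T_2$ and with the prescribed cluster structure.

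\textbf{Step 3: the key identity and conclusion.} The crux of the proof is the identity $\Sigma_v=\prod_{u\in U_v}\ww_{(v,u)}/\deg_v$. I would prove this by using the weighted matrix-tree theorem on $K(v)$, equivalent to the weighted Cayley formula $\sum_{S}\prod_u \ww_{(v,u)}^{\deg_S(u)}=\bigl(\sum_u\ww_{(v,u)}\bigr)^{|U_v|-2}\prod_u\ww_{(v,u)}$ summed over spanning trees $S$ of $K_{|U_v|}$: the numerator matches $\prod_u\ww_{(v,u)}$ while the cluster-denominator telescopes against both the $(\sum)^{|U_v|-2}$ factor and the $1/\deg_v^{|U_v|-1}$ factor from the $K(v)$-weights, leaving exactly $\prod_u\ww_{(v,u)}/\deg_v$. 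Multiplying across all $v$ and combining with $\totaltrees{G}=\prod_{v\in V_1}\deg_v\cdot\totaltrees{\sc{G}{V_2}}$ (from iterating Fact~\ref{fact:detMinor} on the independent set $V_1$) then yields
\[
\Pr[\text{output}=\hat T]\;=\;\frac{\prod_{e\in F}\ww_e(G)\cdot\prod_{v\in V_1}\prod_{u\in U_v}\ww_{(v,u)}/\deg_v}{\totaltrees{\sc{G}{V_2}}}\;=\;\frac{\ww(\hat T)}{\totaltrees{G}}.
\]
For the running time, the source-assignment loop examines each of the $O(n)$ edges of $T_2$ against the $O(|V_1|)=O(n)$ potential sources, contributing $O(n^2)$; for each $v\in V_1$ the cluster computation via union-find and the weighted star-stub sampling take $O(|N(v)|)$ time, summing to $O(n^2)$.

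\textbf{Main obstacle.} The hardest step is establishing the identity $\Sigma_v=\prod_{u\in U_v}\ww_{(v,u)}/\deg_v$ in the presence of the coupling between different $v$'s: a $T_2$-edge whose source is $v'\neq v$ but whose endpoints are both in $N(v)\cap V_2$ influences the cluster structure at $v$, so the natural factorization across $v\in V_1$ is not immediate. I expect to handle this by first conditioning on the partition of $V_2$ into $F$-components and on the "meta-tree" structure $T_2/F$ on components, then recognizing the conditional sum over each $S_v$ as summing weighted spanning trees of $K(v)$ on the image of $U_v$ in the meta-tree, so that the matrix-tree identity applies locally at each $v$ and yields the desired factorization.
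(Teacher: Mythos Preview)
Your target identity—that the algorithm outputs $\hat T$ with probability $\ww(\hat T)/\totaltrees{G}$, equivalently that the sum over compatible $(T_2,f)$ and star-stub choices equals $\prod_{e\in F}\ww_e\cdot\prod_{v\in V_1}\bigl(\prod_{u\in U_v}\ww_{(v,u)}\bigr)/\deg_v$—is correct, and your running-time analysis is fine. But the route you propose does not close. The sum does not factor as $\prod_v\Sigma_v$ with each $\Sigma_v$ a local sum over $S_v$: an edge $e$ with both endpoints in $N(v)\cap N(v')$ can have source $v$, $v'$, or $\emptyset$, so the variables $S_v$ and $S_{v'}$ are entangled both through the global spanning-tree constraint $T_2=F\cup\bigcup_w S_w$ and through the cluster structure (an $S_{v'}$-edge inside $N(v)$ is contracted when processing $v$). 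Your Cayley sketch also mismatches the domain of summation—$S_v$ ranges over certain forests on the $F$-components meeting $N(v)$, constrained by a global tree condition, not over spanning trees of $K_{|U_v|}$—and the ``conditioning on $T_2/F$'' remedy you suggest does not, as stated, decouple the vertices.

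The paper avoids this coupling entirely by inducting on $|V_1|$. Because $V_1$ is independent, $\sc{G}{V_2\cup\{v_k\}}=G[V_2\cup\{v_k\}]+K(v_1)+\cdots+K(v_{k-1})$, so one can handle $v_k$ alone (lifting a $\ww$-uniform tree of $\sc{G}{V_2}$ to one of $\sc{G}{V_2\cup\{v_k\}}$) and then invoke the inductive hypothesis on $V_1\setminus\{v_k\}$. The single-vertex step is not proved by a Cayley-type identity but by Lemma~\ref{lem:SamplingSubset}: unsplitting $T_2$ into $G[V_2]$-edges versus $K(v)$-edges already gives the correct marginal on $T\cap E(G[V_2])$ (via Fact~\ref{fact:schurPreservesProb}), after which contracting/deleting those edges leaves only a weighted multi-star at $v$, where the stub-sampling step is immediate. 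This decomposition is what makes the argument short and sidesteps any global combinatorial sum over tree shapes.
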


\begin{proof}
The running time of $\textsc{ProlongateTree}$ is $O(n^2)$ as $T_2$ has $\leq n-1$ edges and  $|V_1|\leq n.$ 	
	
Now we will show the correctness. Let $V_1= \{v_1,...,v_k\}$.
We will represent $\schur(G,V_2)$ as a multi-graph arising by Schur
complementing out the vertices in $V_1$ one by one and keeping the new
edges created in the process separate from each other as a multi-graph.
We represent this multi-graph as
\[
\schur(G,V_2)
=
G\left[V_2\right] + K\left(v_1\right) + ... + K\left(v_k\right),
\]
where $G[V_2]$ is the induced subgraph on $V_2$ and $K(v_i)$
is the weighted complete graph on the neighbors of $v_i$.
Then
\begin{itemize}
\item By the unsplitting procedure from Lemma~\ref{lem:UnSplit},
the function $f$ maps $T_2$ to a tree in the multi-graph
$G[V_2] + K(v_1) + ... + K(v_k)$, and
\item the rest of the sampling steps maps this tree to one in $G$.
\end{itemize}

We will now prove correctness by induction on the size of
the independent set $V_1$.
The case of $|V_1| = 0$ follows from $\sc{G}{V_2} = G$.
If $|V_1| = 1$, i.e, $V_1 =\{ v \}$ for some vertex $v$,
then $\schur(G, V_2)$ is $G[V_2] + K(v)$.
Given a tree $T_2$ of $\schur(G, V_2)$, the creation of $f$
will first map $T_2$ to a tree in the multigraph
$G[V_2] + K(v)$ by randomly deciding for each edge $e \in T$
to be in $G(V_1)$ or $K(v)$ depending on it's weight.
If we let $T'(V_2) = T' \cap G[V_2]$, then by Lemma~\ref{lem:SamplingSubset},
\[
\Pr_T^{G} \left(T \cap E({G\left[V_2\right]}) = T'\left(V_2\right)\right)
= \Pr_T^{G\left[V_2\right] + K\left(v\right)}
\left(T \cap E({G\left[V_2\right]}) = T'\left(V_2\right)\right).
\]
Therefore, we can contract all the edges of $T'(V_2) \cap G\left[V_2\right]$
and delete all other edges of $G\left[V_2\right]$.
This results in a multi-graph star with $v$ at the center.
Now, $\textsc{ProlongateTree}$ does the following
to decide on the remaining edges.
For every multi-edge of the star graph obtained by contracting
or deleting edges in $G[V_2]$, we choose exactly one edge,
randomly according to its weight.
This process generates a random tree of multi-graph star.

Now we assume that the lemma holds for all $V_1'$
with $\abs{V_1'} < k$. 
Let $V_1 = \{v_1,...,v_k\}$.
The key thing to note is that when $V_1$ is an independent set,
we can write
\[
\sc{G}{V_2}
= G\left[V_2\right] + K\left(v_1\right) + \ldots + K\left(v_k\right),
\]
and
\[
\sc{G}{V_2 \cup v_k}
= G\left[V_2 \cup v_k\right] + K\left(v_1\right)
+ \ldots + K\left(v_{k-1}\right).
\]
Therefore, by the same reasoning as above,
we can take a random tree $T'$ of the multi-graph
$G[V_2] + K(v_1) + ... + K(v_k)$ and map it to a tree on
$G[V_2 \cup v_k] + K(v_1) + ... + K(v_{k-1})
= \sc{G}{V_2 \cup v_k}$
by our procedure $\textsc{ProlongateTree}$.
We then apply our inductive hypothesis on the set $V_1 \setminus \{v_k\}$
to map $\schur{(G,V_2 \cup v_k)}$ to a tree of $G$
by $\textsc{ProlongateTree}$, which implies the lemma.

\end{proof}

We also remark that the running time of \textsc{ProlongateTree}
can be reduced to $O(m \log{n})$ using dynamic trees,
which can be abstracted as a data structure supporting
operations on rooted forests~\cite{SleatorT85,AlstrupHLT05}.
We omit the details here as this does not bottleneck the running time.

With this procedure fixed, we can now show the overall guarantees
of the exact algorithm.
\begin{proof} {of Lemma~\ref{lem:ExactTree}}
Correctness follows immediately from
Lemmas~\ref{lem:SamplingSubset}~and~\ref{lem:ProlongateTree}.
The running time of $\textsc{ProlongateTree}$ is $O(n^2)$ and
contracting or deleting all edges contained in $G[V_1]$ takes $O(m)$ time.
Note that in this new contracted graph, the vertex set containing $V_1$
is an independent set.
Furthermore, computing the Schur complement takes $O(n^\omega)$ time,
giving the running time recurrence
\[
T\left( n \right)
= 2 T\left( n / 2 \right) + O\left(n^\omega\right)
= O\left(n^\omega \right).
\]
\end{proof}


\subsection{Fast Random Spanning Tree Sampling using Determinant Sparsification of Schur complement}
\label{subsec:RandSpanningTreeAlgo}

Next, we note that the most expensive operation from the exact
sampling algorithm from Section~\ref{subsec:ExactSpanningAlgo}
was the Schur complement procedure.
Accordingly, we will substitute in our sparse Schur complement
procedure to speed up the running time.

However, this will add some complication in applying
Line~\ref{ln:ExactTreeSplit} of \textsc{ExactTree}.
To address this, we need the observation that the
$\textsc{SchurSparse}$ procedure can be extended to
distinguish edges from the original graph, and the
Schur complement in the multi-graph that it produces.
\begin{lemma}
\label{lem:ApproxSchurSplit}
The procedure $\textsc{SchurSparse}(G, V_1, \delta)$ given in
Algorithm~\ref{algo:SchurSparse} can be modified to record whether an edge in
its output, $H^{V_1}$ is a rescaled copy of an edge from the original
induced subgraph on $V_1$, $G[V_1]$, or one of the new edges generated
from the Schur complement, $G_{SC}(V_1)$.
\end{lemma}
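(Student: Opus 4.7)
The plan is to observe that the labeling is already implicit in how edges are generated by $\textsc{SampleEdgeSchur}$ (Algorithm~\ref{algo:SampleEdgeSchur}), and thus the modification requires only an extra bookkeeping field that is carried through $\textsc{DetSparsify}$ into the final output $H^{V_1}$. Concretely, every edge added to $H^{V_1}$ arises from a call to $\textsc{SampleEdgeSchur}$, which internally produces a walk $u_0, u_1, \ldots, u_k$ with $u_0, u_k \in V_1$ and all intermediate vertices in $V_2$. By the multi-graph characterization of $\sc{G}{V_1}$ from Lemma~\ref{lem:SchurWeights}, the walks of length $k = 1$ are exactly the edges of $G[V_1]$, while walks of length $k \geq 2$ correspond to the new Schur-complement multi-edges in $G_{SC}(V_1)$.

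The modification is therefore as follows. First, I would augment $\textsc{SampleEdgeSchur}$ so that, alongside the edge $u_0 u_k$, its weight, and its sampling probability $\pp_{u_0, \ldots, u_k}$, it also returns a single bit indicating whether $k = 1$ (label ``original'') or $k \geq 2$ (label ``Schur''). Next, I would pass this bit up through $\textsc{DetSparsify}$ (Algorithm~\ref{alg:DetSparsify}), attaching it to each edge that survives rejection sampling and is inserted into $H$. Finally, $\textsc{SchurSparse}$ simply forwards these labels with $H^{V_1}$. Since the labels are determined at the moment an edge is drawn, the rejection sampling machinery in $\textsc{DetSparsify}$ is entirely unaffected: rejection probabilities depend only on $\pp_e$, $\pp_e'$, $\pp_e''$, all of which are computed from the walk itself, not from the label.

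The overhead is trivial: the bit is produced in $O(1)$ time per call to $\textsc{SampleEdgeSchur}$ (just check whether the walk had any intermediate step), and stored in $O(1)$ additional space per edge of $H^{V_1}$. Hence the running time and the distributional guarantees of Theorem~\ref{thm:SchurSparse} are preserved verbatim. There is no significant obstacle here; the main point is really a definitional one, namely that the random-walk based sampler already knows whether it produced an edge directly from $G[V_1]$ (by sampling it and immediately stopping because both endpoints lie in $V_1$) or whether it fused a path through $V_2$ into a new Schur edge. The lemma thus follows by a direct inspection of the sampler and its use inside $\textsc{DetSparsify}$.
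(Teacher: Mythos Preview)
Your proposal is correct and matches the paper's own proof essentially verbatim: the paper observes that every edge in $H^{V_1}$ comes from a walk produced by $\textsc{SampleEdgeSchur}$, and that such a walk belongs to $G[V_1]$ precisely when its length is $1$ and to $G_{SC}(V_1)$ otherwise. Your write-up is more detailed (explicitly threading the label through $\textsc{DetSparsify}$ and noting the $O(1)$ overhead), but the argument is the same.
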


\begin{proof}
The edges for $H^{V_1}$ are generated by the
random walks via $\textsc{SampleEdgeSchur}(G, V_1)$, whose
pseudocode is given in Algorithm~\ref{algo:SampleEdgeSchur}.
Each of these produces a walk between two vertices in $V_1$,
and such a walk belongs to $G[V_1]$ if it is length $1$, 
and $G_{SC}(V_1)$ otherwise.
\end{proof}

We can now give our algorithm for generating random
spanning trees and prove the guarantees that lead to the main
result from Theorem~\ref{thm:spanningTreeAlgoIntro}.

\begin{algorithm}								
	\caption{$\textsc{ApproxTree}(G,\delta, \overline{n})
		$ Take a graph and output a tree randomly from a distribution $\delta$-close to the $\ww$-uniform distribution}
	\label{alg:approxTree}
	\SetAlgoVlined
	\SetKwProg{myproc}{Procedure}{}{}

	\KwIn{Graph $G$, error parameter $\delta$, and initial number of vertices $\overline{n}$}
	\KwOut{A tree randomly generated from a distribution $\delta$-close to the $\ww$-uniform distribution of $G$}
	$V_2 \gets \textsc{AlmostIndependent}(G, \frac{1}{10})$;
	\hfill \{Via Lemma~\ref{lem:FindingAlphaDDSubsets}\}\\
	$H_1 \gets \textsc{SchurSparse}(G ,V_1, \delta \cdot |V_1| / \overline{n}))$,
		while tracking whether the edge is from $G[V_1]$ via
		the modifications from Lemma~\ref{lem:ApproxSchurSplit} \label{ln:ApproxTreeG1}\;
	$T_1 = \textsc{ApproxTree}(H_1, \delta, \overline{n})$\;
	$G' \gets G$~\label{ln:GPrimeStart}\;
	\For{each $e \in T_1$}{
		\If{\label{ln:ApproxTreeSplit}$\textsc{RAND}[0, 1]
	\leq \ww_e^{ori}(G_1)  / \ww_{e}(G_1)$}
	{ 
\nonl
\hfill \{$\ww_e^{ori}(G_1)$ is  calculated using the weights tracked from Line~\ref{ln:ApproxTreeG1})\;
		$G' \gets G'/ \{e\}$~\label{ln:GPrimeContract}\;
		$T \gets T \cup \{e\}$\;
	}
	}
	Delete all edges between (remaining) vertices in $V_1$ in $G'$,
		$G' \gets G' \setminus E(G'[V_1])$~\label{ln:GPrimeEnd}\;
	$H_2 \gets \textsc{SchurSparse}(G',V_2, \delta \cdot |V_2| / \overline{n})$  \label{ln:ApproxTreeG2}\; 
	$T_2 = \textsc{ApproxTree}(H_2, \delta, n)$\;
	$T \gets T \cup \textsc{ProlongateTree}(G, V_1 \sqcup V_2, T_2)$
	\label{ln:MapT2Back}\;	
	Output $T$\;
	
\end{algorithm}

Note that the splitting on Line~\ref{ln:GPrimeContract} is
mapping $T_1$ first back to a tree on a the sparsified multi-graph
of $\sc{G}{V_1}$: where the rescaled edges that originated from
$G[V_1]$ are tracked separately from the edges that arise from
new edges involving random walks that go through vertices in $V_2$.

The desired runtime will follow equivalently to the analysis of the determinant algorithm in Section~\ref{sec:determinant_algo} as we are decreasing $\delta$ proportionally to the number of vertices. It remains to bound the distortion to the spanning tree distribution
caused by the calls to \textsc{SchurSparse}.

Bounds on this distortion will not follow equivalently to that of the determinant algorithm, which also substitutes \textsc{SchurSparse} for exact Schur complements, due to the dependencies in our recursive structure. In particular, while the calls to \textsc{SchurSparse} are independent, the graphs that they are called upon depend on the randomness in Line~\ref{ln:ApproxTreeSplit} and \textsc{ProlongateTree}, which more specifically, are simply the resulting edge contractions/deletions in previously visited vertex partitions within the recursion. Each subgraph \textsc{SchurSparse} is called upon is additionally dependent on the vertex partitioning from \textsc{AlmostIndependent}.

The key idea to our proof will then be a layer-by-layer analysis of distortion incurred by $\textsc{SchurSparse}$ at each layer to the probability of sampling a \textit{fixed} tree. By considering an alternate procedure where we consider exactly sampling a random spanning tree after some layer, along with the fact that our consideration is restricted to a \textit{fixed} tree, this will allow us to separate the randomness incurred by calls to $\textsc{SchurSparse}$ from the other sources of randomness mentioned above. Accordingly, we will provide the following definition.

\begin{definition}
	\label{lem:Truncated}
	For any $L \geq 0$, the \textbf{level-$L$ truncated algorithm}
	is the algorithm given by modifying
	$\textsc{ApproxTree}(G, \delta, \overline{n})$
	so that all computations of sparsified Schur complements
	are replaced by exact calls to Schur complements
	(aka. $\sc{G}{V_1}$ or $\sc{G'}{V_2})$) after level $l$.
	
	The tree distribution $\mathcal{T}^{(L)}$ is defined
	as the output of the level-$L$ truncated algorithm.
\end{definition}

Note that in particular, $\mathcal{T}^{(0)}$ is the tree distribution
produced by $\textsc{ExactTree(G)}$, or the $\ww$-uniform distribution;
while $\mathcal{T}^{(O(\log{n}))}$ is the distribution outputted
by $\textsc{ApproxTree}(G, \delta)$.

The primary motivation of this definition is that we can separate the randomness between $\mathcal{T}^{(l)}$ and $\mathcal{T}^{(l+1)}$ by only the calls to $\textsc{SchurSparse}$ at level $l+1$, which will ultimately give the following lemma that we prove at the end of this section 

\begin{lemma}
	\label{lem:DistPerLayer}
	For an invocation of $\textsc{ApproxTree}$ on a graph $G$
	with variance bound $\delta$, for any layer $L > 0$, we have
	\[
	d_{TV}\left( \mathcal{T}^{\left( L - 1 \right)},
	\mathcal{T}^{\left( L \right)} \right) \leq O(\sqrt{\delta}).
	\]
\end{lemma}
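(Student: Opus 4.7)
The plan is to couple $\mathcal{T}^{(L-1)}$ and $\mathcal{T}^{(L)}$ on all randomness used outside the level-$L$ calls to $\textsc{SchurSparse}$: the $\textsc{AlmostIndependent}$ partitions throughout, the sparsification coins at levels $1, \ldots, L-1$, all edge-splitting decisions in Line~\ref{ln:ApproxTreeSplit}, and the prolongation coins in Line~\ref{ln:MapT2Back}. After level $L$ both algorithms take exact Schur complements, so by Lemma~\ref{lem:ExactTree} the downstream computation produces a $\ww$-uniform spanning tree of whatever graph is handed in (then stitched back through the deterministic $\textsc{ProlongateTree}/$unsplitting steps). Conditioning on this coupled prefix randomness, the level-$L$ inputs $G_1, \ldots, G_k$ become fixed, and the only remaining discrepancy is that $\mathcal{T}^{(L-1)}$ passes each $G_i$ (via its exact Schur complement) down the recursion, while $\mathcal{T}^{(L)}$ passes $H_i = \textsc{SchurSparse}(G_i, V_1(i), \beta_i\delta)$, where $\beta_i = |V_1(i)|/\bar{n}$ (or the $V_2$ analogue). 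By convexity of $d_{TV}$, it suffices to bound the \emph{conditional} distance by $O(\sqrt{\delta})$.

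Because the vertex sets of the level-$L$ subproblems $G_i$ are disjoint (each $\textsc{ApproxTree}$ call partitions its input into $V_1 \sqcup V_2$), the $\textsc{SchurSparse}$ calls at level $L$ use independent coins, and $\sum_i \beta_i \leq O(1)$ at any level. View the tuple $(H_1, \ldots, H_k)$ as a single joint sparsifier of the disjoint union $G^{\star} = G_1 \sqcup \cdots \sqcup G_k$, whose ``spanning trees'' are spanning forests with one tree per component and whose weights multiply across components. Combining independence with the per-call variance guarantee from Theorem~\ref{thm:SchurSparse} gives
\[
\frac{\expec{}{\prod_i \totaltrees{H_i}^2}}{\expec{}{\prod_i \totaltrees{H_i}}^2}
= \prod_i \frac{\expec{}{\totaltrees{H_i}^2}}{\expec{}{\totaltrees{H_i}}^2}
\leq \prod_i \exp\pr{\beta_i \delta}
\leq \exp\pr{O(\delta)}
\leq 1 + O(\delta).
\]

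The final step is to invoke a product-form extension of Lemma~\ref{lem:VarianceTV} on $G^{\star}$: a forest-count variance ratio of $1 + O(\delta)$ yields a total variation bound of $\sqrt{O(\delta)} = O(\sqrt{\delta})$ between the $\ww$-uniform forest distribution on $G^{\star}$ and the ``sample $(H_1,\ldots,H_k)$, then draw a $\ww$-uniform forest'' distribution. Pushing this through the deterministic post-level-$L$ prolongation/unsplitting operations and then averaging (again by convexity) over the coupled prefix randomness delivers the claimed per-layer bound. The main obstacle is verifying the hypothesis of Lemma~\ref{lem:VarianceTV} in the product setting: one must check that the importance-sampling rescaling performed by $\textsc{DetSparsify}$ inside $\textsc{SchurSparse}$ satisfies
\[
\ww^{H_i}\pr{\hat T_i}
= \ww^{G_i}\pr{\hat T_i} \cdot \prob{}{\hat T_i \subseteq H_i}^{-1}
\cdot \frac{\expec{}{\totaltrees{H_i}}}{\totaltrees{G_i}}
\]
for every candidate tree $\hat T_i$. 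This identity extends multiplicatively across independent subproblems by linearity of expectation, and should follow from the explicit rescaling prescribed in $\textsc{IdealSparsify}$ and the walk-based edge identification from Lemma~\ref{lem:ApproxSchurSplit}; the conditional concentration bounds developed in Section~\ref{sec:cond_conc} are likely needed here to control fixed-subtree probabilities cleanly under the walk-sampled Schur complement.
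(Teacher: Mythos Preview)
Your coupling argument has a genuine gap stemming from the depth-first structure of \textsc{ApproxTree}. You claim that after conditioning on the prefix randomness (partitions, level-$1,\ldots,L-1$ sparsification coins, splitting and prolongation coins), the level-$L$ inputs $G_1,\ldots,G_k$ become fixed. This is false. Within each depth-$(L-1)$ call, the input $G'$ to the \emph{second} level-$L$ \textsc{SchurSparse} call is formed from $T_1$, the output of the first recursive branch; and $T_1$ in $\mathcal{T}^{(L)}$ is a $\ww$-uniform tree of the level-$L$ sparsifier $H_1$, whereas in $\mathcal{T}^{(L-1)}$ it is a $\ww$-uniform tree of the exact Schur complement. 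Thus $G'$ is distributed differently in the two algorithms even after your coupling, and the second level-$L$ call has a different (random) input. Worse, this propagates upward: because the depth-$(L-2)$ parent forms \emph{its} $G'$ from the output of the entire left subtree (which contains level-$L$ choices), even the depth-$(L-1)$ inputs on the right side of the recursion tree fail to be fixed. So you cannot treat the level-$L$ calls as independent sparsifications of a fixed product graph $G^\star$, and the product-form application of Lemma~\ref{lem:VarianceTV} does not go through.

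The paper handles exactly this dependency by switching the conditioning: instead of fixing prefix randomness, fix the target tree $\hatT$. Claim~\ref{ln:Independent} shows that $\hatT$ uniquely determines $G'$ at every node, so the sequence of exact Schur complements $\vec G^{(L)}$ at level~$L$ becomes deterministic given $\hatT$ (and the partitions and the intermediate sparsifier realizations up to level~$L-1$). This is what lets Corollary~\ref{cor:LevelLProbability} write both $\Pr^{\mathcal T^{(L-1)}}(\hatT)$ and $\Pr^{\mathcal T^{(L)}}(\hatT)$ as sums with the \emph{same} coefficient $p^{(\leq L)}(\cdot)$ times either $\Pr^{\vec G^{(L)}}(\vec T^{(L)})$ or its sparsified expectation. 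The remaining step is then not Lemma~\ref{lem:VarianceTV} but Lemma~\ref{lem:invVarTVBound}, which needs concentration of the \emph{inverse} probabilities $\Pr^{H}(\hatT)^{-1}=\totaltrees{H}/\ww(\hatT)$ conditioned on $\hatT\subseteq H$; this is precisely why Section~\ref{sec:cond_conc} (Lemma~\ref{lem:InvConc} and Corollary~\ref{cor:InvConcVec}) is required, and why the forward variance bound from Theorem~\ref{thm:SchurSparse} alone is insufficient.
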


To begin, we consider the differences between
$\mathcal{T}^{(0)}$ and $\mathcal{T}^{(1)}$ and the probability
of sampling a fixed tree $\hatT$ on a recursive call on $G$.
The most crucial observation is that the two recursive calls
to $\textsc{ApproxTree}(G_1, \delta, \overline{n})$
and $\textsc{ApproxTree}(G_2, \delta, \overline{n})$
can be viewed as independent:
\begin{claim}
\label{ln:Independent}
For a call to $\textsc{ApproxTree}(G, \delta, \overline{n})$
(Algorithm~\ref{alg:approxTree})
to return $\hatT$, there is only one possible choice of
$G'$ as generated via Lines~\ref{ln:GPrimeStart} to~\ref{ln:GPrimeEnd}. 
\end{claim}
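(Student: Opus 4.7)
The plan is to exhibit $G'$ as an explicit deterministic function of the returned tree: specifically $G' = (G / F) \setminus (E(G[V_1]) \setminus F)$, where $F \defeq \hatT \cap E(G[V_1])$. Since the right-hand side depends only on $G$ and $\hatT$, the choice of $G'$ is uniquely pinned down by the output.

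Let $F^\star$ denote the set of edges that actually get contracted during the loop in Lines~\ref{ln:GPrimeStart}--\ref{ln:GPrimeEnd}; by inspection, $G'$ is constructed from $G$ by contracting $F^\star$ and then deleting the remainder of $E(G[V_1])$. So the task reduces to showing $F^\star = \hatT \cap E(G[V_1])$. The forward inclusion is essentially immediate: every edge of $F^\star$ is simultaneously added to $T$ in Line~\ref{ln:GPrimeContract} (hence lies in $\hatT$), and the tracking from Lemma~\ref{lem:ApproxSchurSplit} together with the contraction probability $\ww_e^{ori}(G_1)/\ww_e(G_1)$ ensures that only (rescaled copies of) edges of $G[V_1]$ can ever be contracted, so $F^\star \subseteq E(G[V_1])$.

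The main step is the reverse inclusion, which requires arguing that no edge of $E(G[V_1])$ enters $T$ after the contraction loop. The only subsequent addition is the call to $\textsc{ProlongateTree}$ in Line~\ref{ln:MapT2Back}, which by Lemma~\ref{lem:ProlongateTree} is applied to $G'$ with the surviving (quotiented) $V_1$ as an independent set. Opening up Algorithm~\ref{alg:ProlongateTree}, every edge that it appends to $T$ is of one of two types: either it is an edge $xy \in T_2$ corresponding to the assignment $f(e) = \emptyset$, in which case $x,y \in V_2$ and the edge lies in $E(G[V_2])$; or it is a ``star'' edge $(v,u)$ chosen in Step~\ref{step:ProlongateTree:SampleStar}, in which case $v$ lies in the surviving $V_1$ and $u$ is a neighbor of $v$ in $G'$, hence necessarily in $V_2$ by the independent-set property. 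Neither type lies in $E(G[V_1])$, giving $\hatT \cap E(G[V_1]) \subseteq F^\star$.

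Combining the two inclusions yields $F^\star = \hatT \cap E(G[V_1])$, from which the displayed formula for $G'$ above follows, proving the claim. I expect the only real bookkeeping step to be the case analysis of $\textsc{ProlongateTree}$'s output, and in particular the verification that the quotiented $V_1$ really is independent in $G'$ after the explicit deletion at the end of Line~\ref{ln:GPrimeEnd}; this is what forces the star edges to be $(V_1, V_2)$-crossing rather than possibly living inside $E(G[V_1])$.
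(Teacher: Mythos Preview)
Your proposal is correct and follows exactly the same route as the paper: identify the contracted set $F^\star$ with $\hatT \cap E(G[V_1]) = E(\hatT[V_1])$, and conclude that $G'$ is a deterministic function of $\hatT$. The paper's own proof is a one-sentence assertion of this identity (``the edges removed from Line~\ref{ln:GPrimeContract} are precisely the edges in $T$ with both endpoints contained in $V_1$''); you supply the two-inclusion argument that the paper leaves implicit, in particular the observation that $\textsc{ProlongateTree}$ (applied to $G'$, where the surviving $V_1$ is independent) can only contribute $V_2$--$V_2$ or $V_1$--$V_2$ edges, so every edge of $\hatT$ inside $E(G[V_1])$ must already have been picked up in the contraction loop.
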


\begin{proof}
Note that the edges removed from Line~\ref{ln:GPrimeContract} are
precisely the edges in $T$ with both endpoints contained in $V_1$,
$E(T[V_1])$.
For a fixed $\hatT$, this set is unique, so $G'$ is unique as well.
\end{proof}

This allows us to analyze a truncated algorithm by splitting the
probabilities into those that occur at level $l$ or above.
Specifically, at the first level, this can be viewed as
pairs of graphs $\sc{G}{V_1}$ and $\sc{G}{V_2}$ along with
the `intended' trees from them: 
\begin{definition}
\label{def:PAboveOneLevel}
We define the level-one probabilities of returning a pair
of trees $T_1$ and $T_2$ that belong a pair of graphs
$G_1$, $G_2$,
\[
p^{\left(\leq 1\right)} \left(\left(G, G_1, G_2\right),
	\left(T_1, T_2 \right), \hatT \right).
\]
as the product of:
\begin{enumerate}
\item The probability (from running \textsc{AlmostIndependent}) that $G$ is partitioned into $V_1 \sqcup V_2$ so that
$\sc{G}{V_1} = G_1$ and $\sc{G'}{V_2} = G_2$, where $G'$ is $G$ with the edges $T \cap G[V_1]$ contracted and all other edges in $G[V_1]$ are deleted. 
\item The probability that $T_1$ is
mapped to $\hatT[V_1]$ in Line~\ref{ln:ApproxTreeSplit}.
\item The probability that $T_2$ is mapped to
$\hatT / \hatT[V_1]$ by the call to \textsc{ProlongateTree}
on Line~\ref{ln:MapT2Back}. 
\end{enumerate}
\end{definition}

This definition then allows us to formalize the splitting
of probabilities above and below level $1$. 
More importantly, we note that if we instead call \textsc{SchurSparse} to generate $G_1$ and $G_2$, this will not affect the level-one probability because (1) both the calls to \textsc{AlmostIndependent} and \textsc{ProlongateTree} do not depend on $G_1$ and $G_2$, and (2) we can consider $T_1$ to be drawn from the multi-graph of $G_1$ where we track which edges are from the original graph and which were generated by the Schur complement.

Consequently, the only difference between the distributions $\mathcal{T}^{(0)}$ and  $\mathcal{T}^{(1)}$ will be the distortion of drawing $T_1$ and $T_2$ from $G_1$ and $G_2$ vs the sparsified version of $G_1$ and $G_2$.
This handling of sparsifiers of the Schur complements is further
simplified with by the following observation:
\begin{claim}\label{claim:equivToIdealSparsify}
The output of $\textsc{SchurSparse}(G, V', \delta)$ is
identical to the output of
\[
\textsc{IdealSparsify}\left(\sc{G}{V'}, \ttautil, n^2 \delta^{-1}\right),
\]
for some set of $1.1$-approximate statistical leverage
scores of $\sc{G}{V'}$, $\ttautil$.
\end{claim}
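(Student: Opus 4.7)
The plan is to argue that \textsc{SchurSparse}, viewed purely as a stochastic procedure that outputs a sparsified multi-graph, is distributionally identical to \textsc{IdealSparsify} applied to the exact Schur complement with an appropriate choice of approximate leverage scores $\ttautil$. Inspecting Algorithm~\ref{algo:SchurSparse}, the parameters are $\epsilon=0.1$ and $s = n^2 \delta^{-1}$, matching the sample count required on the right-hand side, so it suffices to match the per-sample distribution and per-sample reweighting.

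First I would identify the candidate $\ttautil$. The inner call in \textsc{SchurSparse} is $\textsc{DetSparsify}(\sc{G}{V'}, s, \textsc{SampleEdgeSchur}(G,V'), \textsc{ApproxLeverage}_G, 0.1)$. For any multi-edge of $\sc{G}{V'}$ corresponding to a walk $u_0,\ldots,u_k$, the leverage-score oracle \textsc{ApproxLeverage}$_G$ is invoked on the original graph $G$, and by Lemma~\ref{lem:ERDS} returns a $(1\pm 0.1)$-approximation of $\ww_{u_0\cdots u_k}^{\sc{G}{V'}}\cdot \er^G(u_0,u_k)$. Invoking Fact~\ref{fact:SchurResistance} (effective resistances are preserved by Schur complements), this quantity equals the true leverage score $\ttaubar^{\sc{G}{V'}}_{u_0\cdots u_k}$ up to $(1\pm 0.1)$, so the values returned constitute a valid $1.1$-approximation $\ttautil$ of leverage scores in $\sc{G}{V'}$. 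This is the $\ttautil$ I would take for the right-hand side.

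Next I would argue that the resulting distribution over individual samples coincides. Lemma~\ref{lem:RejectionProbGood} says that the composition of \textsc{SampleEdgeSchur} with the two-stage rejection step produces an edge distribution proportional to $\ttautil_e$, i.e.\ exactly the sampling rule used in \textsc{IdealSparsify} (``pick $e$ with probability proportional to $\ttautil_e$''). For the weights, a direct calculation shows the two algorithms write the same value: in \textsc{DetSparsify} the weight is $\frac{\ww_e}{\pp_e'' s}\exp\!\bigl(\tfrac{n^2}{2(n-1)s}\bigr)$ with $\pp_e'' = \ttautil_e/(n-1)$, which simplifies to $\frac{\ww_e(n-1)}{\ttautil_e s}\exp\!\bigl(\tfrac{n^2}{2(n-1)s}\bigr)$, exactly the reweighting in Algorithm~\ref{alg:IdealSparsify}. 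Since both algorithms draw $s$ independent samples with replacement under the same per-sample law and the same reweighting rule, the output multi-graphs are identically distributed.

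The main subtlety I expect is the bookkeeping of multi-edges: the ``edge'' of $\sc{G}{V'}$ that \textsc{IdealSparsify} is nominally sampling corresponds not to a pair $(u_0,u_k)$ but to a walk $u_0,u_1,\ldots,u_k$, and the leverage score must be interpreted at that multi-edge level. The argument above handles this because \textsc{SampleEdgeSchur} naturally returns a walk (not just endpoints), and the leverage-score-on-$G$ oracle evaluated at that walk matches the multi-edge leverage score in $\sc{G}{V'}$ via Fact~\ref{fact:SchurResistance}. Once this identification is made explicit, the remaining verification is the routine algebraic matching above, and the claim follows.
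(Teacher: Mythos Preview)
Your proposal is correct and follows the same approach as the paper, which does not give a detailed proof but simply refers back to the rejection-sampling analysis of Section~\ref{subsec:RejectionSample} and the Schur-complement sampler of Section~\ref{sec:ImplicitSchur}. You have made explicit precisely the verification those pointers entail: identifying $\ttautil$ via \textsc{ApproxLeverage}$_G$ together with Fact~\ref{fact:SchurResistance}, invoking Lemma~\ref{lem:RejectionProbGood} to match the per-sample law, and checking the reweighting formulas coincide.
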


This can be seen by revisiting the Schur complement sparsification
and rejection sampling algorithms from Section~\ref{sec:ImplicitSchur}
and~\ref{subsec:RejectionSample} which show that this statement
also extends to the approximate Schur complements produced
on lines~\ref{ln:ApproxTreeG1} and~\ref{ln:ApproxTreeG2}
of Algorithm~\ref{alg:approxTree}.

This means we can let $\mathcal{H}_1$ and $\mathcal{H}_2$
denote the distribution produced by $\textsc{IdealSparsify}$
on $G_1$ and $G_2$ respectively.

\begin{lemma}
\label{lem:Level0Probability}
There exists a collection of graphs and tree pairs
$(\vec{\mathcal{G}}, \vec{\mathcal{T}} )^{\leq 1}$
such that for any tree $\hatT$, with the probabilities
given above in Definition~\ref{def:PAboveOneLevel} we have:
\[
\Pr^{\mathcal{T}^{\left(0\right)}}\left( \hatT \right)
= \sum_{\left( \left(G, G_1, G_2 \right),
	\left( T_1, T_2 \right) \right)
	\in \left( \mathcal{G}, \mathcal{T}\right)^{\left(\leq 1\right)}}
p^{\left(\leq 1\right)} \left( \left(G, G_1, G_2 \right),
	\left(T_1, T_2\right), \hatT\right)
\cdot \Pr^{G_1}\left(T_1\right)
\cdot \Pr^{G_2}\left(T_2\right).
\]
and
\begin{multline*}
\Pr^{\mathcal{T}^{\left(1\right)}}\left( \hatT \right)
= \sum_{\left( \left(G, G_1, G_2 \right),
	\left( T_1, T_2 \right) \right)
	\in \left( \mathcal{G}, \mathcal{T}\right)^{\left(\leq 1\right)}}
p^{\left(\leq 1\right)} \left(\left(G, G_1, G_2 \right),
	\left(T_1, T_2\right), \hatT\right)\\
\cdot \expec{H_1 \in \mathcal{H}_1}{\Pr^{H_1}\left(T_1\right)}
\cdot \expec{H_2 \in \mathcal{H}_2}{\Pr^{G_2}\left(T_2\right)}.
\end{multline*}
\end{lemma}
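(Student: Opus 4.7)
The plan is to build the collection $(\vec{\mathcal{G}}, \vec{\mathcal{T}})^{\leq 1}$ by enumerating all triples $(G, G_1, G_2)$ that can arise as the pair of ``exact Schur complements'' at level one, together with all possible pairs of intermediate trees $(T_1, T_2)$. Concretely, for each partition $V = V_1 \sqcup V_2$ that $\textsc{AlmostIndependent}$ can output, each choice of a subset $F_1 \subseteq E(G[V_1])$ (the eventual $T \cap G[V_1]$), and each pair of spanning trees $T_1$ of $\sc{G}{V_1}$ and $T_2$ of $\sc{G'}{V_2}$ (where $G'$ is $G$ with $F_1$ contracted and the remaining edges of $G[V_1]$ deleted), we include the tuple $((G, \sc{G}{V_1}, \sc{G'}{V_2}), (T_1, T_2))$. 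Tuples not compatible with the target $\hatT$ will be killed by $p^{(\leq 1)} = 0$.

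Next, I unfold the definition of the level-$0$ truncated algorithm, which is precisely $\textsc{ExactTree}$ composed with the level-one splitting scaffolding of $\textsc{ApproxTree}$. By Claim~\ref{ln:Independent}, for any realization of $\hatT$ there is a unique $G'$ (hence a unique $G_2 = \sc{G'}{V_2}$) consistent with the partition and with the output being $\hatT$; this lets me treat $G_2$ as a deterministic function of $(G, V_1, V_2, \hatT)$ rather than as a random variable. The probability that the algorithm outputs $\hatT$ decomposes as a sum over the partition, then over $T_1$, then over $T_2$: the partition contributes the probability from $\textsc{AlmostIndependent}$; the generation of $T_1$ contributes $\Pr^{G_1}(T_1)$; the splitting loop on Line~\ref{ln:ApproxTreeSplit} contributes the probability that $T_1$ maps to $\hatT[V_1]$; the generation of $T_2$ contributes $\Pr^{G_2}(T_2)$; and $\textsc{ProlongateTree}$ contributes the probability of $T_2$ mapping to $\hatT / \hatT[V_1]$. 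Grouping the three factors that do not involve $T_i \sim G_i$ into a single quantity yields exactly $p^{(\leq 1)}((G, G_1, G_2), (T_1, T_2), \hatT)$, establishing the first identity.

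For the second identity I repeat the same unfolding for the level-$1$ truncated algorithm. The only difference is that, instead of sampling $T_1$ directly from $\sc{G}{V_1}$ and $T_2$ from $\sc{G'}{V_2}$, the algorithm first produces $H_1 \sim \mathcal{H}_1$ and $H_2 \sim \mathcal{H}_2$ via $\textsc{SchurSparse}$, and then samples $T_1 \sim \Pr^{H_1}$, $T_2 \sim \Pr^{H_2}$. By Claim~\ref{claim:equivToIdealSparsify}, $\mathcal{H}_1$ is determined solely by $G_1$ and $\mathcal{H}_2$ solely by $G_2$, and the two sparsification calls are performed independently and independently of the randomness of $\textsc{AlmostIndependent}$, the splitting loop, and $\textsc{ProlongateTree}$. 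Consequently, conditional on the partition and $\hatT$, the quantity $\Pr^{G_i}(T_i)$ in the level-$0$ expansion is replaced by $\expec{H_i \sim \mathcal{H}_i}{\Pr^{H_i}(T_i)}$, and the two expectations factor. Since the combinatorial weight $p^{(\leq 1)}$ depends only on the partition, the mapping randomness, and $\hatT$ (and in particular not on $H_1$ or $H_2$), it passes outside the expectations, giving the claimed formula.

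I expect the only real obstacle to be bookkeeping the interpretation of $T_1$ as living in a multi-graph whose edges are split according to whether they originate from $G[V_1]$ or from the Schur-complement cliques; this is what makes the ``map $T_1$ to $\hatT[V_1]$'' probability well-defined in both the exact and sparsified settings, and it is justified by the edge-tracking modification of $\textsc{SchurSparse}$ from Lemma~\ref{lem:ApproxSchurSplit} together with the un-splitting identity of Lemma~\ref{lem:UnSplit}. Once this interpretation is set up carefully, the identity is simply linearity of expectation over independent sources of randomness, and no further computation is needed.
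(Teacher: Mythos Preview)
Your proposal is correct and matches the paper's approach: the paper does not give a standalone proof of this lemma, but the surrounding discussion (Claim~\ref{ln:Independent}, Definition~\ref{def:PAboveOneLevel}, Claim~\ref{claim:equivToIdealSparsify}, and the paragraph between them) outlines exactly the argument you have written out, namely unfolding the level-$0$ and level-$1$ truncated algorithms, using the uniqueness of $G'$ to decouple the two recursive calls, and observing that the level-one probability $p^{(\leq 1)}$ is unaffected by sparsification so that the only change between the two expansions is $\Pr^{G_i}(T_i) \to \expec{H_i \sim \mathcal{H}_i}{\Pr^{H_i}(T_i)}$. Your final paragraph on the multi-graph bookkeeping via Lemma~\ref{lem:ApproxSchurSplit} and Lemma~\ref{lem:UnSplit} is exactly the point the paper makes in justifying why the splitting step on Line~\ref{ln:ApproxTreeSplit} contributes the same factor in both settings.
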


We can then in turn extend this via induction to multiple levels.
It is important to note that in comparing the distributions
$\mathcal{T}^{(L - 1)}$ and $\mathcal{T}^{(L)}$ for $L \geq 1$ both
will make calls to \textsc{IdealSparsify} through level $L$.
We will then need to additionally consider the possible graphs generated
by sparsification through level $L$,
then restrict to the corresponding exact graphs at level $L + 1$.

\begin{definition}
\label{def:LevelL}
We will use $\vec{\mathcal{G}}^{(\leq L)}, \vec{\mathcal{T}}^{(L)}$
to denote a sequence of graphs on levels up to $L - 1$, plus the peripheral
exact Schur complements on level $L$, along with the spanning trees generated
on these peripheral graphs.

As these graphs and trees can exist on different vertex sets, we will use
$(\vec{\mathcal{G}}, \vec{\mathcal{T}})^{(\leq L)}$ to denote the set of graph/tree
pairs that are on the same set of vertices.
For a sequence of graphs $\vec{G}^{\leq L}$ and a
sequence of trees on their peripherals, $\vec{T}^{L}$, we will use
\[
p^{\left(\leq L\right)} \left(\vec{G}^{\left( \leq L \right)},
	\vec{T}^{\left( L \right)}, \hatT \right)
\]
to denote the product of the probabilities of the level-by-level
vertex split and resulting trees mapping back correctly as defined
in Definition~\ref{def:PAboveOneLevel},
times the probabilities that the subsequent graphs are generated
as sparsifiers of the ones above

Furthermore, we will use $\vec{G}^{(L)}$ to denote just the peripheral
graphs, and $\vec{\mathcal{H}}(\vec{G}^{(L)})$ to denote the
product distribution over sparsifiers of these graphs, and
$\vec{H}^{(L)}$ to denote one particular sequence of such sparsifiers
on this level.
We can also define the probabilities of trees being picked
in a vector-wise sense:
\[
\Pr^{\vec{G}^{\left( L \right)}}\left(\vec{T}^{\left( L \right)}\right)
\defeq \prod_{j} \Pr^{\vec{G}^{\left( L \right)}_j}
	\left(\vec{T}^{\left( L \right)}_j\right),
\qquad 
\Pr^{\vec{H}^{\left(L \right)}}\left(\vec{T}^{\left( L \right)}\right)
\defeq \prod_{j} \Pr^{\vec{H}^{\left( L \right)}_j}
	\left(\vec{T}^{\left( L \right)}_j\right).
\]
\end{definition}

Applying Lemma~\ref{lem:Level0Probability} inductively then allows
us to extend this to multiple levels.

\begin{corollary}
\label{cor:LevelLProbability}
There exists a collection of graphs and tree pairs
$(\vec{\mathcal{G}}, \vec{\mathcal{T}} )^{(\leq L)}$
such that for any tree $\hatT$ we have:
\[
\Pr^{\mathcal{T}^{\left(L - 1\right)}}\left( \hatT \right)
= \sum_{\left( \vec{G}^{\left(\leq L \right)}, \vec{T}^{\left( L \right)} \right)
	\in \left( \mathcal{G}, \mathcal{T}\right)^{\left(\leq L \right)}}
p^{\left(\leq L \right)} \left(
\vec{G}^{\left(\leq L \right)}, \vec{T}^{\left( L \right)}, \hatT\right)
\cdot \Pr^{\vec{G}^{\left( L \right)}}\left(\vec{T}^{\left( L \right)}\right),
\]
and
\[
\Pr^{\mathcal{T}^{\left( L \right)}}\left( \hatT \right)
= \sum_{\left( \vec{G}^{\left(\leq L \right)}, \vec{T}^{\left( L \right)} \right)
	\in \left( \mathcal{G}, \mathcal{T}\right)^{\left(\leq L \right)}}
p^{\left(\leq L \right)} \left(
\vec{G}^{\left(\leq L \right)}, \vec{T}^{\left( L \right)}, \hatT\right)
\cdot
\expec{\vec{H}^{\left( L \right)} \sim \vec{\mathcal{H}}\left(G^{\left( L \right)}\right)}
{\Pr^{\vec{H}^{\left( L \right)}}\left(\vec{T}^{\left( L \right)}\right)}.
\]
\end{corollary}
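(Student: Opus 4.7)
The plan is induction on $L$, with the base case $L = 1$ being exactly Lemma~\ref{lem:Level0Probability} after unpacking notation: a length-one sequence in $(\vec{\mathcal{G}}, \vec{\mathcal{T}})^{(\leq 1)}$ is precisely a triple $(G, G_1, G_2)$ together with a pair $(T_1, T_2)$, and the weight $p^{(\leq 1)}$ coincides with the product of the partition-plus-mapping probabilities from Definition~\ref{def:PAboveOneLevel} (there are no sparsification probabilities to include yet, since $\vec{G}^{(1)}$ is the exact Schur complement in both displayed identities).

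For the inductive step I assume the corollary at level $L-1$, so both $\Pr^{\mathcal{T}^{(L-2)}}(\hatT)$ and $\Pr^{\mathcal{T}^{(L-1)}}(\hatT)$ are already expressed as sums over $(\vec{G}^{(\leq L-1)}, \vec{T}^{(L-1)})$ weighted by $p^{(\leq L-1)}$ and a factor $\prod_j \Pr^{\vec{G}^{(L-1)}_j}(\vec{T}^{(L-1)}_j)$ or its sparsified analogue. Applying Lemma~\ref{lem:Level0Probability} separately to each peripheral subproblem $\vec{G}^{(L-1)}_j$ replaces each factor by a sum over (a) the vertex partition $V_1(j)\sqcup V_2(j)$ produced by \textsc{AlmostIndependent}, (b) a pair of level-$L$ peripheral trees on the two new Schur complements of $\vec{G}^{(L-1)}_j$, and (c) the probabilities arising from Line~\ref{ln:ApproxTreeSplit} and \textsc{ProlongateTree} that map these two trees back to $\vec{T}^{(L-1)}_j$. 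Because the two recursive calls inside any one subproblem act on disjoint vertex sets, and because by Claim~\ref{ln:Independent} the intermediate graph $G'$ is uniquely determined once the fixed tree $\hatT$ is specified, these per-$j$ refinements are mutually independent, so the product over $j$ turns into a joint sum indexed by $(\vec{G}^{(\leq L)}, \vec{T}^{(L)}) \in (\vec{\mathcal{G}}, \vec{\mathcal{T}})^{(\leq L)}$.

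Collecting the weights, the old factor $p^{(\leq L-1)}$ times the newly exposed partition/mapping probabilities---and, in the sparsified variant, the probabilities that level-$L$ inputs are drawn as sparsifiers of the ones above---telescopes exactly into $p^{(\leq L)}$ as defined in Definition~\ref{def:LevelL}. For the sparsified identity I additionally use Claim~\ref{claim:equivToIdealSparsify}, which identifies each $\textsc{SchurSparse}$ output at level $L$ with an independent draw from $\textsc{IdealSparsify}$, so that $\vec{\mathcal{H}}(\vec{G}^{(L)}) = \prod_j \mathcal{H}_j(\vec{G}^{(L)}_j)$ is a product distribution and the expectation commutes with the product $\prod_j \Pr^{\vec{H}^{(L)}_j}(\vec{T}^{(L)}_j)$. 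This yields the two displayed equations at level $L$.

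The main obstacle I anticipate is the bookkeeping: one must verify that the refinement of each peripheral subproblem produces exactly the set $(\vec{\mathcal{G}}, \vec{\mathcal{T}})^{(\leq L)}$ with no over- or under-counting, and that the combined weight factors cleanly as $p^{(\leq L)}$. The structural ingredients that make this work are the disjointness of vertex sets across sibling recursive calls (which gives independence), Claim~\ref{ln:Independent} (which eliminates branching in the intermediate contractions/deletions once $\hatT$ is fixed), and Claim~\ref{claim:equivToIdealSparsify} (which makes the sparsification distribution at level $L$ factor across peripheral subproblems); the remaining work is purely a careful re-indexing.
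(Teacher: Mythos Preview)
Your proposal is correct and takes essentially the same approach as the paper: the paper's entire proof is the one-line remark that applying Lemma~\ref{lem:Level0Probability} inductively extends the result to multiple levels, and your argument is precisely this induction with the bookkeeping spelled out. The structural ingredients you identify (disjointness across sibling subproblems, Claim~\ref{ln:Independent} pinning down $G'$ once $\hatT$ is fixed, and Claim~\ref{claim:equivToIdealSparsify} giving the product structure of $\vec{\mathcal{H}}(\vec{G}^{(L)})$) are exactly the ones implicitly used.
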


This reduces our necessary proof of bounding the total variation
distance between ${\mathcal{T}^{(L - 1)}}$ and ${\mathcal{T}^{(L)}}$ to
examining the difference between
\[
\Pr^{\vec{G}^{\left( L \right)}}\left(\vec{T}^{\left( L \right)}\right)
\qquad 
\text{ and}
\qquad 
\expec{\vec{H}^{\left( L \right)} \sim \vec{\mathcal{H}}\left(G^{\left( L \right)}\right)}
{\Pr^{\vec{H}^{\left( L \right)}}\left(\vec{T}^{\left( L \right)}\right)}.
\]

Recalling the definition of $\Pr^{\vec{H}^{(L)}}(\vec{T}^{(L)})$:
we have that the inverse of each probability in the expectation is
\[
\Pr^{\vec{H}_j^{\left( L \right)}}
\left(\vec{T}_j^{\left( L \right)}\right)^{-1}
= \frac{\totaltrees{\vec{H}_j^{\left( L \right)}}}
	{\ww^{\vec{H}_j^{\left( L \right)}}
		\left(\vec{T}_j^{\left( L \right)}\right)},
\]
and we have concentration bounds for the total trees in $\vec{H}_j^{(L)}$.
However, it is critical to note that this probability is $0$
(and cannot be inverted) when $\vec{T}_j^{(L)}$ is not contained
in $\vec{H}_j^{(L)}$ for some $j$.

This necessitates extending our concentration bounds to random graphs where
we condition upon a certain tree remaining in the graph.
This will be done in the following Lemma, proven in Section~\ref{sec:cond_conc},
and we recall that we set $\amount$ such that $\delta = O(\frac{n^2}{\amount})$
in \textsc{SchurSparse}.

\begin{lemma}\label{lem:InvConc}
	Let $G$ be a graph on $n$ vertices and $m$ edges,
	$\ttautil$ be an $1.1$-approximate estimates of leverage scores,
	$\amount$ be a sample count such that $\amount \geq 4n^2$
	and $m \geq \frac{\amount^2}{n}$.
	Let $\mathcal{H}$ denote the distribution over the outputs of
	$\textsc{IdealSparsify}(G, \ttautil, s)$, and for a
	any fixed spanning $\hatT$, let $\mathcal{H}|_{T}$ denote the
	distribution formed by conditioning on the graph containing $\hatT$.
	Then we have:
	\[
	\prob{H \sim \mathcal{H}}{\hatT \subseteq H}^{-1} \cdot 
	\expec{H|_{\hatT} \sim \mathcal{H}_{|_{\hatT}}}{
		\Pr^{H|_{\hatT}}\left(\hatT \right)^{-1}}
	= \left(1 \pm O\left(\frac{n^2}{\amount}\right)\right)
	\Pr^{G}\left(\hatT \right)^{-1},
	\]
	and
	\[
	\prob{H \sim \mathcal{H}}{\hatT \subseteq H}^{-2} \cdot 
	\var{H|_{\hatT} \sim \mathcal{H}_{|_{\hatT}}}{
		\Pr^{H|_{\hatT}}\left(\hatT \right)^{-1}}
	\leq O\left(\frac{n^2}{\amount}\right)
	\Pr^{G}\left(\hatT \right)^{-2}.
	\] 
\end{lemma}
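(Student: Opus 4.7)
The plan is to rewrite conditional expectations over $\mathcal{H}|_{\hatT}$ as unconditional expectations over $\mathcal{H}$ weighted by the indicator $\mathbf{1}_{\hatT \subseteq H}$, via the identity $\mathbb{E}_{\mathcal{H}|_{\hatT}}[X] \cdot \Pr_{\mathcal{H}}[\hatT \subseteq H] = \mathbb{E}_{\mathcal{H}}[X \cdot \mathbf{1}_{\hatT \subseteq H}]$. Using $\Pr^H(\hatT)^{-1} = \mathcal{T}_H / \ww^H(\hatT)$ together with the tree expansion $\mathcal{T}_H = \sum_T \mathbf{1}_{T \subseteq H}\,\ww^H(T)$, both the first and second moments of $\Pr^H(\hatT)^{-1}$ under $\mathcal{H}|_{\hatT}$ reduce to sums over spanning trees of $G$ of joint-inclusion expectations of the form $\mathbb{E}_{\mathcal{H}}\!\left[\mathbf{1}_{T \cup \hatT \subseteq H} \cdot \ww^H(T)/\ww^H(\hatT)\right]$, with an analogous triple-tree sum for the second moment.

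For the first moment, I would fix a pair $(T, \hatT)$ and observe that $\ww^H(T)/\ww^H(\hatT)$ factors as $\prod_{e \in T \setminus \hatT} n_e(H) c_e \cdot \prod_{e \in \hatT \setminus T} (n_e(H) c_e)^{-1}$, where $c_e$ is the deterministic rescaled weight assigned by \textsc{IdealSparsify} and $n_e(H)$ is the multiplicity of $e$ in $H$. The constants $c_e$ collect to give $\ww^G(T)/\ww^G(\hatT)$ up to the common per-edge rescaling, leaving only a joint expectation of multiplicities under the multinomial distribution conditioned on $n_e(H) \geq 1$ for every $e \in \hatT$. I would perform this computation in the limiting-splitting framework of Section~\ref{subsec:GeneralLeverage}: splitting each edge into arbitrarily many copies pushes sampling to a without-replacement regime in which each $n_e(H) \in \{0, 1\}$, entirely eliminating the inverted multiplicity factors. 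The resulting joint-inclusion probability then parallels Lemma~\ref{lem:PairwiseProb} and depends only on $k = |T \cap \hatT|$, so summing over $T$ organized by $k$ via the analog of Lemma~\ref{lem:IntersectionPairs} (under the approximate leverage score uniformity from Section~\ref{subsec:ErrorSparsification}), followed by a Taylor expansion of $\exp$, produces the claimed first-moment equality $\mathbb{E}_{\mathcal{H}|_{\hatT}}[\Pr^H(\hatT)^{-1}] = (1 \pm O(n^2/\amount)) \Pr[\hatT \subseteq H] \cdot \Pr^G(\hatT)^{-1}$.

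For the variance bound, I would repeat the argument with $\Pr^H(\hatT)^{-2}$ expanded as a double sum over pairs $(T_1, T_2)$ of spanning trees, now organizing the triple-inclusion probabilities by the three intersection sizes $|T_1 \cap \hatT|$, $|T_2 \cap \hatT|$, and $|T_1 \cap T_2|$, and applying a three-way analog of the tree-counting bound. The main obstacle throughout is controlling the inverted multiplicity factors $(n_e(H))^{-1}$ coming from $\ww^H(\hatT)^{-1}$: these do not factor across edges under the multinomial distribution, and their behavior conditional on $\hatT \subseteq H$ is precisely what drives the $O(n^2/\amount)$ error term. The splitting reduction circumvents the inverted multiplicities cleanly, but verifying that the induced error stays at $O(n^2/\amount)$ (rather than accumulating per-edge across the $n-1$ edges of $\hatT$) and that it combines compatibly in the second moment so that the centered variance is bounded by $O(n^2/\amount) \cdot \Pr[\hatT \subseteq H]^2 \cdot \Pr^G(\hatT)^{-2}$ is the most delicate piece of the argument.
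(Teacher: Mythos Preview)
Your proposal is correct and matches the paper's approach closely. The paper likewise reduces (via the edge-splitting argument of Section~\ref{subsec:GeneralLeverage}) to a uniform without-replacement model in which $\ww^H(\hatT)$ is deterministic, then bounds $\expec{}{\totaltrees{H|_{\hatT}}}$ by organizing trees according to $k=|T\cap\hatT|$ and bounds the second moment by summing over pairs $(T_1,T_2)$ indexed by the three intersection sizes $|T_1\cap\hatT|$, $|T_2\cap\hatT|$, $|(T_1\setminus\hatT)\cap(T_2\setminus\hatT)|$---exactly the three-way tree-counting analog you describe.
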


Due to the independence of each call to \textsc{IdealSparsify},
we can apply these concentration bounds across the product
\[
\Pr^{\vec{H}^{\left( L \right)}}
	\left(\vec{T}^{\left( L \right)}\right)
=
\prod_{j} \Pr^{\vec{H}^{\left( L \right)}_j}
	\left(\vec{T}^{\left( L \right)}_j\right)
\]
and use the fact that $\delta$ decreases proportionally to vertex size in our algorithm:

\begin{corollary}
\label{cor:InvConcVec}
For any sequence of peripheral graphs $\vec{G}^{(l)}$,
with associated sparsifier distribution $\mathcal{H}^{S}$,
and any sequence of trees $\vec{T}^{(L)}$ as defined in Definition~\ref{def:LevelL}
such that $\Pr^{\vec{G}^{(L)}}(\vec{T}^{(L)}) > 0$, we have
\begin{multline*}
\prob{\vec{H}^{\left( L \right)} \sim \vec{\mathcal{H}}\left(G^{\left( L \right)}\right)}
{\Pr^{\vec{H}^{\left( L \right)}}\left(\vec{T}^{\left( L \right)} \right) > 0}^{-1}
\cdot 
\expec{\vec{H}^{\left( L \right)} \sim \vec{\mathcal{H}}\left(G^{\left( L \right)}\right)
	\left| \Pr^{\vec{H}^{\left( L \right)}} \left(\vec{T}^{\left( L\right)} \right) > 0\right.}
{\Pr^{\vec{H}^{\left( L \right)}}\left(\vec{T}^{\left( L \right)} \right)^{-1}}\\
= \left(1 \pm \delta \right)
\Pr^{\vec{G}^{\left( L \right)}}\left(\vec{T}^{\left( L \right)} \right)^{-1},
\end{multline*}
and
\begin{multline*}
\prob{\vec{H}^{\left( L \right)} \sim \vec{\mathcal{H}}\left(G^{\left( L \right)}\right)}
{\Pr^{\vec{H}^{\left( L \right)}}\left(\vec{T}^{\left( L \right)} \right) > 0}^{-2}
\cdot 
\expec{\vec{H}^{\left( L \right)} \sim \vec{\mathcal{H}}\left(G^{\left( L \right)}\right)
	\left| \Pr^{\vec{H}^{\left( L \right)}}\left(\vec{T}^{\left( L \right)} \right) > 0\right.}
{\Pr^{\vec{H}^{\left( L \right)}}\left(\vec{T}^{\left( L \right)} \right)^{-2}}\\
\leq \left(1 + \delta \right)
\Pr^{\vec{G}^{\left( L \right)}}\left(\vec{T}^{\left( L \right)} \right)^{-2}.
\end{multline*}
\end{corollary}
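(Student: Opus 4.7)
The plan is to exploit the independence of the sparsifier draws $\vec{H}^{(L)}_j$ across the different peripheral graphs $\vec{G}^{(L)}_j$ and reduce the vector-wise bound to a per-coordinate application of Lemma~\ref{lem:InvConc}. Since each call to \textsc{IdealSparsify} at level $L$ uses its own independent randomness (by Claim~\ref{claim:equivToIdealSparsify} applied within each branch, and by Claim~\ref{ln:Independent} to decouple the branches), the joint law $\vec{\mathcal{H}}(\vec{G}^{(L)})$ is a product measure. Consequently $\Pr^{\vec{H}^{(L)}}(\vec{T}^{(L)}) = \prod_j \Pr^{\vec{H}^{(L)}_j}(\vec{T}^{(L)}_j)$ factors across $j$, the event $\{\Pr^{\vec{H}^{(L)}}(\vec{T}^{(L)}) > 0\}$ equals $\bigcap_j \{\vec{T}^{(L)}_j \subseteq \vec{H}^{(L)}_j\}$, and the conditional expectation of both $\Pr^{\vec{H}^{(L)}}(\vec{T}^{(L)})^{-1}$ and its square factor as products over $j$.

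For the first-moment claim, I apply Lemma~\ref{lem:InvConc} coordinate-wise. Let $\delta_j$ denote the error budget passed to the $j$-th call to \textsc{SchurSparse}; the setting $s_j = n_j^2 \delta_j^{-1}$ used inside \textsc{SchurSparse} (see Algorithm~\ref{algo:SchurSparse}) makes $n_j^2/s_j = \delta_j$, so the lemma yields $\prob{}{\vec{T}^{(L)}_j \subseteq \vec{H}^{(L)}_j}^{-1} \cdot \expec{}{\Pr^{\vec{H}^{(L)}_j}(\vec{T}^{(L)}_j)^{-1} \mid \vec{T}^{(L)}_j \subseteq \vec{H}^{(L)}_j} = (1 \pm O(\delta_j)) \Pr^{\vec{G}^{(L)}_j}(\vec{T}^{(L)}_j)^{-1}$. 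Taking the product over $j$ and invoking independence yields the desired identity with overall multiplicative error $\prod_j (1 \pm O(\delta_j))$. Since \textsc{ApproxTree} sets $\delta_j = \delta \cdot |V_j|/\overline{n}$ at each recursive call, and since $\sum_j |V_j| = O(\overline{n})$ on any fixed recursion level, we have $\sum_j \delta_j = O(\delta)$; hence $\prod_j (1 \pm O(\delta_j)) = 1 \pm O(\delta)$ for small $\delta$.

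For the second-moment claim, I first upgrade Lemma~\ref{lem:InvConc}'s variance bound to a second-moment bound via $E[X^2] = \mathrm{Var}(X) + E[X]^2$ applied to $X = \Pr^{\vec{H}^{(L)}_j|_{\vec{T}^{(L)}_j}}(\vec{T}^{(L)}_j)^{-1}$; combining with the already-established first-moment bound on $E[X]$ yields $\prob{}{\vec{T}^{(L)}_j \subseteq \vec{H}^{(L)}_j}^{-2} \cdot \expec{}{\Pr^{\vec{H}^{(L)}_j}(\vec{T}^{(L)}_j)^{-2} \mid \vec{T}^{(L)}_j \subseteq \vec{H}^{(L)}_j} \leq (1 + O(\delta_j)) \Pr^{\vec{G}^{(L)}_j}(\vec{T}^{(L)}_j)^{-2}$ per coordinate. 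Multiplying across $j$ by independence and again using $\sum_j \delta_j = O(\delta)$ gives $\prod_j (1 + O(\delta_j)) \leq \exp(O(\delta)) = 1 + O(\delta)$, which is the claimed bound. The only real obstacle is the accounting that keeps $\sum_j \delta_j = O(\delta)$ on a fixed recursion level rather than letting the per-coordinate errors compound multiplicatively; this is exactly the role of the $|V_j|/\overline{n}$ rescaling of $\delta$ in each recursive call of \textsc{ApproxTree}, directly mirroring the variance accounting used for \textsc{DetApprox} in Section~\ref{sec:determinant_algo}.
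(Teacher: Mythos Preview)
Your proposal is correct and follows essentially the same approach as the paper: factor via independence of the \textsc{IdealSparsify} calls, apply Lemma~\ref{lem:InvConc} per coordinate with $n_j^2/s_j = \delta_j = \delta\,|V_j|/\overline{n}$, and sum the errors using $\sum_j |V_j| = O(\overline{n})$. You supply more detail than the paper's terse proof, in particular the explicit conversion of the per-coordinate variance bound in Lemma~\ref{lem:InvConc} to the second-moment bound needed here via $\expec{}{X^2} = \var{}{X} + \expec{}{X}^2$, which the paper leaves implicit.
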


\begin{proof}
The independence of the calls to \textsc{IdealSparsify},
and the definition of
\[
	\Pr^{\vec{G}^{ \left( L \right) }}\left(\vec{T}^{ \left( L \right) }\right)
	\defeq \prod_{j} \Pr^{\vec{G}^{ \left( L \right) }_j}
		\left(\vec{T}^{ \left( L \right) }_j\right),
	\qquad 
	\Pr^{\vec{H}^{ \left( L \right) }}\left(\vec{T}^{ \left( L \right) }\right)
	\defeq \prod_{j} \Pr^{\vec{H}^{ \left( L \right) }_j}
		\left(\vec{T}^{ \left( L \right) }_j\right).
\]
Applying Lemma~\ref{lem:InvConc} to each call of \textsc{IdealSparsify},
where $\amount$ was set such that $\delta / \overline{n} = \frac{n^2}{\amount}$ gives
gives that the total error bounded by
\[
\exp\left(\sum_{j}
\frac{\abs{V\left(G^{\left( L \right)} \right)}}{\overline{n}}\right),
\]
and the bound then follows form the total size of each level of the recursion
being $O(\bar{n})$.
\end{proof}

It then remains to use concentration bounds on the inverse of the desired probability to bound the total variation distance, which can be done by the following lemma which can be viewed as an extension of Lemma~\ref{lem:VarianceTV},
and is also proven in Section~\ref{sec:TVBound}.
\begin{lemma}\label{lem:invVarTVBound}
Let $\mathcal{U}$ be a distribution over a universe of elements, $u$,
each associated with random variable $P_u$ such that
\[
\expec{u \sim \mathcal{U}}{\expec{}{P_u}} = 1,
\]
and for each $P_u$ we have
\begin{enumerate}
\item $P_u \geq 0$, and
\item $
\prob{}{P_u > 0 }^{-1} \cdot
\expec{p \sim P_u \left| p > 0 \right.}{p^{-1}}
= 1 \pm \delta,
$
and
\item $
\prob{}{P_u > 0 }^{-2}
\expec{p \sim P_{u} \left| p > 0 \right.}{p^{-2}}
\leq 1 + \delta,
$
\end{enumerate}
then
\[
\expec{u \sim \mathcal{U}}{\abs{1 - \expec{}{P_u}}}
	\leq O\left(\sqrt{\delta}\right).
\]
\end{lemma}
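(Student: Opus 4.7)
My plan is to prove the bound via a two-step strategy: first establish a pointwise lower bound $\expec{}{P_u} \geq 1 - O(\delta)$ by a single application of Cauchy--Schwarz to each individual $P_u$, then combine this pointwise lower bound with the mean constraint $\expec{u \sim \mathcal{U}}{\expec{}{P_u}} = 1$ through a simple balance-of-mass argument. This strategy will in fact yield the stronger bound $O(\delta)$, which implies the claimed $O(\sqrt{\delta})$ since $\delta$ is small.

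For the pointwise step, I would fix a single $u$, set $q_u = \prob{}{P_u > 0}$, and use $P_u \geq 0$ so that $P_u \mathbf{1}_{P_u > 0} = P_u$. The factorization $\mathbf{1}_{P_u > 0} = (P_u^{1/2}\mathbf{1}_{P_u > 0}) \cdot (P_u^{-1/2}\mathbf{1}_{P_u > 0})$ combined with Cauchy--Schwarz gives
\[
q_u^2 \;\leq\; \expec{}{P_u} \cdot \expec{}{P_u^{-1}\mathbf{1}_{P_u > 0}} \;=\; \expec{}{P_u} \cdot q_u \cdot \expec{p \sim P_u | p > 0}{p^{-1}}.
\]
Applying the upper side of condition~2, $\expec{p \sim P_u | p > 0}{p^{-1}} \leq q_u(1+\delta)$, and dividing would yield the pointwise lower bound $\expec{}{P_u} \geq 1/(1+\delta) \geq 1 - \delta$.

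For the averaging step, I would let $\mu_u = \expec{}{P_u}$. The pointwise bound $\mu_u \geq 1 - \delta$ forces $(1-\mu_u)_+ \leq \delta$ uniformly, so $\expec{u}{(1-\mu_u)_+} \leq \delta$. Since $\expec{u}{\mu_u - 1} = 0$, splitting into positive and negative parts would give $\expec{u}{(\mu_u - 1)_+} = \expec{u}{(1-\mu_u)_+} \leq \delta$, hence
\[
\expec{u}{|\mu_u - 1|} \;=\; \expec{u}{(\mu_u - 1)_+} + \expec{u}{(1-\mu_u)_+} \;\leq\; 2\delta \;=\; O(\sqrt{\delta}).
\]

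The step I initially expect to be the main obstacle is a matching pointwise \emph{upper} bound on $\mu_u$, but this turns out to be impossible in general: one can construct examples satisfying all three conditions for which $\mu_u$ exceeds $1$ by a constant (e.g.\ $P_u = 1/\delta$ with probability $\delta$ and $P_u = 1/(1+\delta)$ otherwise, for which $\mu_u \approx 2$). The averaging step circumvents this elegantly: any large-$\mu_u$ outliers must carry $\mathcal{U}$-mass $\Theta(\delta)$ in order to balance the pointwise lower bound against the constraint $\expec{u}{\mu_u} = 1$, so their contribution to $\expec{u}{|\mu_u-1|}$ remains $O(\delta)$. I note that condition~3 is not explicitly required by this argument---the upper half of condition~2 already suffices (and is in fact implied by condition~3 via Jensen's inequality), so the $\sqrt{\delta}$ slack in the claim reflects a loose use of the hypotheses rather than a genuine necessity.
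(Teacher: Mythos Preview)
Your proof is correct and yields the stronger bound $O(\delta)$. The Cauchy--Schwarz step $q_u^2 \leq \expec{}{P_u}\cdot q_u\,\expec{p\mid p>0}{p^{-1}}$ together with the upper half of condition~2 indeed gives the pointwise lower bound $\expec{}{P_u}\geq 1/(1+\delta)$, and the balance-of-mass argument from $\expec{u}{\mu_u}=1$ then closes with $\expec{u}{|\mu_u-1|}\leq 2\delta$. Your observation that condition~3 is unnecessary (and in fact implies the needed half of condition~2 via Jensen) is also correct.

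This is a genuinely different and much more elementary route than the paper's. The paper never establishes a pointwise lower bound on $\expec{}{P_u}$; instead it works directly with the deviation $|p_{u+}^{-1}-p|$ inside the conditional expectation, proves a Chebyshev-type tail bound on $p$ from the inverse-moment conditions (Lemma~\ref{lem:chebyshevInv}), uses a dyadic bucketing argument to control the bounded part (Lemma~\ref{lem:DistrInv}), and handles the unbounded ``bad'' region $\{p:|p-p_{u+}^{-1}|>\tfrac12 p_{u+}^{-1}\}$ separately via the global constraint $\expec{u}{\expec{}{P_u}}=1$. That machinery uses condition~3 essentially (to get the variance bound feeding Chebyshev) and delivers only $O(\sqrt{\delta})$. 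Your argument replaces all of this with a single Cauchy--Schwarz and a two-line averaging step, obtaining the sharper $O(\delta)$. The paper's approach does give finer information about the distribution of $p$ around $p_{u+}^{-1}$, but for the stated lemma that information is not needed, and your proof is strictly preferable.
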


To utilize this lemma, we observe that the values
\[
p^{\left(\leq L\right)}
\left(\vec{G}^{\left( \leq L \right)},
	\vec{T}^{\left( L \right)}, \hatT\right)
\cdot\Pr^{\vec{G}^{\left( L \right)}}
	\left(\vec{T}^{\left( L \right) }\right)
\]
forms a probability distribution over tuples
$\vec{G}^{(\leq L)}, \vec{T}^{(L)}, \hatT$,
while the distribution $\mathcal{H}(\vec{G}^{(L)})$, once
rescaled, can play the role of $P_u$.
Decoupling the total variation distance per tree
into the corresponding terms on pairs of
$\vec{G}^{(\leq L)}, \vec{T}^{(L)}$ then allows us to
bound the overall total variation distance between
$\mathcal{T}^{(L - 1)}$ and $\mathcal{T}^{(L)}$.

\begin{proof}[Proof of Lemma~\ref{lem:DistPerLayer}]
By the definition of total variation distance
\[
d_{TV}\left( \mathcal{T}^{\left(L - 1\right)},
\mathcal{T}^{\left(L\right)} \right)
= \sum_{\hatT} \abs{\Pr^{\mathcal{T}^{\left(L - 1\right)}}\left( \hatT \right)
	 - \Pr^{\mathcal{T}^{\left( L \right)}}\left( \hatT \right)}.
\]
	
By Corollary~\ref{cor:LevelLProbability} and triangle inequality
we can then upper bound this probability by
\begin{multline*}
d_{TV}\left( \mathcal{T}^{\left(L - 1\right)}, \mathcal{T}^{\left(L\right)} \right)
\leq \sum_{\hatT} \sum_{\left( \vec{G}^{\left( \leq L \right)},
		\vec{T}^{\left(L \right)}\right)
			\in \left( \mathcal{G}, \mathcal{T}\right)^{\left( \leq L \right)}}
p^{\left(\leq L\right)} \left(\vec{G}^{\left( \leq L \right)},
	\vec{T}^{\left( L \right)}, \hatT\right)\\ 
\cdot
\abs{\Pr^{\vec{G}^{\left( L \right)}}\left(\vec{T}^{\left( L \right) }\right)
	- \expec{\vec{H}^{\left( L \right)} \sim \vec{\mathcal{H}}\left(G^{\left( L \right)}\right)}
	{\Pr^{\vec{H}^{\left(L\right)}}\left(\vec{T}^{\left(L\right)}\right)}}.
\end{multline*}

The scalar $p^{(\leq L)} (\vec{G}^{( \leq L )},
\vec{T}^{( L )}, \hatT)$
is crucially the same for each, and the inner term in the summation is equivalent to
\[
\abs{p^{\left(\leq L\right)} \left(\vec{G}^{\left( \leq L \right)},
	\vec{T}^{\left( L \right)}, \hatT\right)\cdot\Pr^{\vec{G}^{\left( L \right)}}\left(\vec{T}^{\left( L \right) }\right)
	- p^{\left(\leq L\right)} \left(\vec{G}^{\left( \leq L \right)},
	\vec{T}^{\left( L \right)}, \hatT\right)\cdot\expec{\vec{H}^{\left( L \right)} \sim \vec{\mathcal{H}}\left(G^{\left( L \right)}\right)}
	{\Pr^{\vec{H}^{\left(L\right)}}\left(\vec{T}^{\left(L\right)}\right)}}
\]

Our goal is to use Lemma~\ref{lem:invVarTVBound} where $\mathcal{U}$
here is the distribution over tuples $(\vec{G}^{(L)}, \vec{T}^{(L)},
\hatT)$ with density equaling:
\[ p^{\left(\leq L\right)} \left(\vec{G}^{\left( \leq L \right)},
\vec{T}^{\left( L \right)}, \hatT\right)
\cdot\Pr^{\vec{G}^{\left( L \right)}}
	\left(\vec{T}^{\left( L \right) }\right),
\]
and $P_{u}$ is the distribution over the corresponding value of
$\mathcal{H}(\vec{G}^{(L)})$, with the same density, and
values equaling to:
\[
\Pr^{\vec{G}^{( L )}} \left(\vec{T}^{( L ) }\right)^{-1}
\Pr^{\vec{H}^{( L )}} \left(\vec{T}^{( L ) }\right).
\]
Note that the fact that each $\vec{T}^{L}$ maps back to some tree
$\hat{T}$ imply that $\mathcal{U}$ is a distribution, as well as
$\expec{u \sim \mathcal{U}}{\expec{}{P_u}} = 1$.
A rescaled version of Corollary~\ref{cor:InvConcVec} then gives
the required conditions for Lemma~\ref{lem:invVarTVBound},
which in turn gives the overall bound.

\end{proof}




\begin{proof}[Proof of Theorem~\ref{thm:spanningTreeAlgoIntro}]
The running time follows the same way as the analysis of the determinant
estimation algorithm in the Proof of Theorem~\ref{thm:detApproxIntro}
at the end of Section~\ref{sec:determinant_algo}.
	
For correctness, the total variation distance bound is
implied by appropriately setting $\delta$, and then invoking
the per-layer bound from Lemma~\ref{lem:DistPerLayer}.
Note that factors of $\log{n}$ are absorbed by the $\widetilde{O}$
notation.
	
Finally, note that for simplicity our analysis of total variation
distance does not account for the failure probability of Lemma~\ref{lem:ERDS}.
To account for these, we can simply use the fact that only
$O(n\log{n})$ calls to $\textsc{SchurSparse}$ are made.
Hence, the probability of any call failing is polynomially small,
which can be absorbed into the total variation distance.
\end{proof}

	\section{Conditional Concentration Bounds}
\label{sec:cond_conc}

In this section, we extend our concentration bounds to conditioning on
a certain tree being in the sampled graph,
specifically with the goal of proving Lemma~\ref{lem:InvConc}.
By edge splitting arguments similar to those
in Section~\ref{subsec:GeneralLeverage}, it suffices to analyze
the case where all edges have about the same leverage score.

\begin{lemma}\label{lem:InvConcUniform}
Let $G$ be a graph on $n$ vertices and $m$ edges such that
all edges have statistical leverage scores
$\ttaubar_e \leq \frac{2n}{m}$,
and $\amount$ be a sample count such that $\amount \geq 4n^2$
and $m \geq \frac{\amount^2}{n}$.
Let $H$ be a subgraph containing $\amount$ edges picked at
random without replacement, and let $\mathcal{H}$ denote
this distribution over subgraphs on $\amount$ edges.
Furthermore for any fixed spanning tree, $\hatT$,
let $\mathcal{H}|_{T}$ denote the distribution induced
by those in $\mathcal{H}$ that contain $\hatT$, and use
$H|_{\hatT}$ to denote such a graph, then
\[
\prob{H \sim \mathcal{H}}{\hatT \subseteq H}^{-1}
\cdot \expec{H|_{\hatT} \sim \mathcal{H}_{|_{\hatT}}}{
	\Pr^{H|_{\hatT}}\left(\hatT \right)^{-1}}
= \left(1 \pm O\left(\frac{n^2}{\amount}\right)\right)
\Pr^{G}\left(\hatT \right)^{-1},
\]
and
\[
\prob{H \sim \mathcal{H}}{\hatT \subseteq H}^{-2} \cdot 
\var{H|_{\hatT} \sim \mathcal{H}_{|_{\hatT}}}{
	\Pr^{H|_{\hatT}}\left(\hatT \right)^{-1}}
\leq O\left(\frac{n^2}{\amount}\right)
\Pr^{G}\left(\hatT \right)^{-2}.
\] 
\end{lemma}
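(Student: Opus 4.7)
The plan is to exploit the observation that conditioning on $\hatT \subseteq H$ is equivalent to forcing the edges of $\hatT$ into $H$ and drawing the remaining $\amount - (n-1)$ edges uniformly at random from $E(G) \setminus \hatT$; write $H'$ for this residual uniform $(\amount - n + 1)$-sample from the $m - n + 1$ remaining edges. Since the statement involves no edge reweighting and $\ww(\hatT)$ is deterministic under the conditioning, we have $\Pr^{H|_{\hatT}}(\hatT)^{-1} = \totaltrees{H}/\ww(\hatT)$ with only $\totaltrees{H}$ varying. Both desired bounds then unwind to the moment estimates $\expec{\mathcal{H}_{|_{\hatT}}}{\totaltrees{H}} = (1 \pm O(n^2/\amount))\Pr[\hatT \subseteq H]\totaltrees{G}$ and the relative second-moment bound $\expec{\mathcal{H}_{|_{\hatT}}}{\totaltrees{H}^2} \leq (1 + O(n^2/\amount))\expec{\mathcal{H}_{|_{\hatT}}}{\totaltrees{H}}^2$. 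The hypothesis $\amount \geq 4n^2$ is precisely what converts each $\exp(O(n^2/\amount))$ factor into a multiplicative $1 + O(n^2/\amount)$ error.

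For the first moment, expand $\expec{\mathcal{H}_{|_{\hatT}}}{\totaltrees{H}} = \sum_T \ww(T) q_{k(T)}$ with $k(T) = |T \cap \hatT|$ and $q_k = (\amount-n+1)_{n-1-k}/(m-n+1)_{n-1-k}$. The ratio $q_k/q_0$ is monotonically increasing in $k$ and well-approximated by $(1/p')^k$ with $p' = (\amount - n+1)/(m-n+1)$; setting $r = 1/p' - 1$ and expanding $(1+r)^k = \sum_i \binom{k}{i} r^i$, the sum collapses to $\sum_i r^i \sum_{F \subseteq \hatT,\, |F|=i}\sum_{T \supseteq F}\ww(T)$ via the identity $\binom{k(T)}{i} = |\{F \subseteq T \cap \hatT : |F|=i\}|$. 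Invoking Lemma~\ref{lem:SubsetTree} together with $\ttaubar_e \leq 2n/m$ bounds each inner sum by $\totaltrees{G}(2n/m)^i$, and the binomial theorem then resums everything to $\totaltrees{G}(1 + 2nr/m)^{n-1} \leq \totaltrees{G}\exp(O(n^2/\amount))$. The matching lower bound follows from the trivial $q_k \geq q_0$, while a direct comparison of the two falling factorials shows $q_0 = (1 \pm O(n^2/\amount))\Pr[\hatT \subseteq H]$, closing the first-moment claim.

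For the variance, mirror the proof of Lemma~\ref{lem:SecondMoment}: expand $\expec{\mathcal{H}_{|_{\hatT}}}{\totaltrees{H}^2} = \sum_{T_1,T_2}\ww(T_1)\ww(T_2)\Pr[(T_1 \cup T_2) \setminus \hatT \subseteq H']$ and bound the inner probability by the direct analog of Lemma~\ref{lem:PairwiseProb} on the reduced universe with parameters $(\amount - n+1, m - n+1, p')$, obtaining $(p')^{2(n-1)}\exp(O(n^2/\amount))(1/p')^{k+q}$ with $k = |T_1 \cap T_2|$ and $q = |(T_1 \cup T_2) \cap \hatT|$. Expand both $(1+r)^k$ and $(1+r)^q$ binomially, so that the residual double sum becomes one indexed by pairs of subsets $(F_0, F')$ with $F_0$ of size $i$ sitting in $T_1 \cap T_2$ and $F' \subseteq \hatT$ of size $j$ sitting in $T_1 \cup T_2$. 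For each such pair, decompose $F' = F'_1 \cup F'_2$ across the $3^{|F'|}$ possible assignments of its edges to $T_1$ versus $T_2$ and apply Lemma~\ref{lem:SubsetTree} independently to each tree; summing the $\binom{m}{i}$ choices of $F_0$ and $\binom{n-1}{j}$ choices of $F'$ collapses the whole expression into a single $\exp(O(n^2/\amount))$ factor, yielding the target relative second-moment estimate. The main obstacle is precisely this combinatorial accounting: the three-way interactions among $T_1$, $T_2$, and $\hatT$ force one to track overlaps between $F_0$ and $F'$ and unequal splits of $F'$ between $T_1$ and $T_2$, and it is critical that every exponential factor arising from the ratios of falling factorials, the two binomial expansions, and Lemma~\ref{lem:SubsetTree} cumulate to no more than a single $\exp(O(n^2/\amount))$.
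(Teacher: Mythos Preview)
Your overall reduction is exactly the paper's: condition on $\hatT$, observe that only $\totaltrees{H|_{\hatT}}$ is random, and reduce both claims to first- and second-moment estimates for $\totaltrees{H|_{\hatT}}$ (the paper's Lemmas~\ref{lem:condExpBound} and~\ref{lem:condVarBound}). Your first-moment argument via the binomial identity $\binom{k(T)}{i}=|\{F\subseteq T\cap\hatT:|F|=i\}|$ is a clean variant of the paper's direct geometric-series bound on $\sum_k p^{-k}\sum_{|T\cap\hatT|=k}\ww(T)$; both work, and your comparison $q_0=(1\pm O(n^2/\amount))\Pr[\hatT\subseteq H]$ is correct.

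For the second moment there is a genuine gap in the argument as sketched. Your parameterization uses $k=|T_1\cap T_2|$ and $q=|(T_1\cup T_2)\cap\hatT|$, so your $F_0$ ranges over all size-$i$ subsets of $E$ (the ``$\binom{m}{i}$ choices'' you cite). But $F_0$ can contain edges of $\hatT$, and when you apply Lemma~\ref{lem:SubsetTree} to $T_1\supseteq F_0\cup F'_1$ you only get $(2n/m)^{|F_0\cup F'_1|}$, where $|F_0\cup F'_1|$ can be as small as $i$ whenever $F'_1\subseteq F_0$. The effect is that one of your factors becomes essentially $(1+3r)^{n-1}$ with $r=1/p'-1\approx m/\amount$, and this blows up once $m$ is substantially larger than $\amount$ --- and the hypothesis only forces $m\ge \amount^2/n$, with no upper bound on $m$. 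The fix is to split $F_0$ into its $\hatT$-part and its $(E\setminus\hatT)$-part; then the in-$\hatT$ part contributes only $\binom{n-1}{\cdot}$ choices, and every surviving factor is $1+O(n/\amount)$ raised to power $O(n)$. But carrying this out is precisely the content of the paper's Lemma~\ref{lem:disjointTrees}, which parameterizes from the start by $k_1=|T_1\cap\hatT|$, $k_2=|T_2\cap\hatT|$, and $k=|(T_1\cap T_2)\setminus\hatT|$, so that the set playing the role of your $F_0$ is automatically disjoint from $\hatT$ and the three sums factor cleanly into three geometric series, each $1+O(n^2/\amount)$. You correctly flag the overlap tracking as the main obstacle, but the specific collapse you claim does not hold without this extra decomposition.
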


Note that the `uniform leverage score' requirement here is not
as strict as the analysis from Lemma~\ref{lem:SecondMomentApprox}.
This is because we're eventually aiming for a bound of
$\amount \approx n^2$ samples.
This also means that constant factor leverage score approximations
suffices for this routine.

The starting point of this proof is the observation that because
we're doing uniform sampling, the only term in
\[
\Pr^{H|_{\hatT}}\left(\hatT\right)
 = \frac{\ww^{H|_{\hatT}}\left( \hatT \right) }{\totaltrees{H|_{\hatT}}}
 = \frac{\ww^{G}\left( \hatT \right) }{\totaltrees{H|_{\hatT}}}
\]
that is dependent on $H|_{\hatT}$ is $\totaltrees{H|_{\hatT}}$.
The proof will then follow by showing concentration of this variable
which will be done similarly to the concentration of $\totaltrees{H}$
that was done in Section~\ref{sec:overview} and~\ref{sec:sparsification}.

The primary difficulty of extending the proof will come from the fact that
trees will have different probabilities of being in the sampled graph depending on how many edges they share with $\hatT$.
Much of this will be dealt with by the assumption that $\amount \geq 4n^2$,
which makes the exponential terms in the probabilities associated with a
tree being in a sampled graph negligible.
Additionally, this assumption implies that for any fixed tree $\hatT$
the expected number of edges it shares with a random tree is close to $0$.
As a result, trees that intersect with $\hatT$ will have negligible contributions, and our analysis can follow similarly to that in Section~\ref{sec:overview} and~\ref{sec:sparsification}.

We further note that due to the larger sample count of $\amount \geq 4n^2$,
the concentration bounds in this section will also hold,
and would in fact be slightly simpler to prove,
if the edges were sampled independently with probability $\amount / m$.
We keep our assumption of sampling $\amount$ edges globally without
replacement though in order to avoid changing our algorithm,
and the analysis will not require much additional work. 

The section will be organized as follows: In Section~\ref{subsec:conditionalExpect} we give upper and lower bounds on the expectation of $\totaltrees{H|_{\hatT}}$. In Section~\ref{subsec:conditionalVariance} we give an upper bound on the variance of $\totaltrees{H|_{\hatT}}$. In Section~\ref{subsec:invConcentrationProof} we combine the bounds from the previous two sections to prove Lemma~\ref{lem:InvConcUniform}.

\subsection{Upper and Lower Bounds on Conditional Expectation}\label{subsec:conditionalExpect}

In order to prove upper and lower bounds on $\expec{H|_{\hatT}}{\totaltrees{H|_{\hatT}}}$,
we will first give several helpful definitions,
corollaries, and lemmas to assist in the proof.
Our examination of $\expec{H|_{\hatT}}{\totaltrees{H|_{\hatT}}}$
will require approximations of $\prob{H|_{\hatT}}{T \subseteq H|_{\hatT}}$,
and, as we are now fixing $n-1$ edges and drawing $\amount - n + 1$ edges
from the remaining $m - n + 1$ edges, each edge will now have probability $\frac{\amount - n + 1}{m - n + 1}$ of being in the sampled graph.
We will denote this probability with
\[
\widehat{p} \defeq \frac{\amount - n + 1}{m - n + 1}.
\]
It will often be easier to exchange $\widehat{p}$ for
\[
p \defeq \frac{\amount}{m},
\]
the probability of a single edge being picked without
the conditioning on $\hatT$.
The errors of doing so is governed by:
\begin{equation}
\label{eq:ProbEstimate}
\left(1 - \frac{n}{\amount}\right) p
= \frac{\amount - n}{m}
\leq \frac{\amount - n +1}{m - n + 1} = \hat{p}
\leq \frac{s}{m} = p.
\end{equation}
We remark that these errors turn out to be acceptable even when
$\widehat{p}$ is raised to the $O(n)$ power.

Furthermore, our assumption of $\amount \geq 4n^2$ implies that
we expect a randomly chosen tree not to intersect with $\hatT$.
This will often implicitly show up in the form of the geometric
series below, for which a bound is immediately implied by our assumption.
\begin{lemma}\label{lem:geoSeries}
If $\amount \geq 4n^2$, then
\[
\sum_{k=1}^{\infty} \left(\frac{2n^2}{\amount}\right)^k
= O\left(\frac{n^2}{\amount}\right).
\]
\end{lemma}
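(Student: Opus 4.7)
The statement is a direct calculation on a geometric series, so the plan is very short. I would set $r = 2n^2/s$ and observe that the hypothesis $s \geq 4n^2$ immediately gives $r \leq 1/2$. Under this bound, the geometric series $\sum_{k=1}^\infty r^k$ converges absolutely and equals $r/(1-r)$.

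Plugging in $r \leq 1/2$ yields $r/(1-r) \leq 2r$, so
\[
\sum_{k=1}^{\infty}\left(\frac{2n^2}{s}\right)^k \;\leq\; \frac{2n^2/s}{1-2n^2/s} \;\leq\; \frac{4n^2}{s},
\]
which is the desired $O(n^2/s)$ bound.

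There is essentially no obstacle: the only point to check is that $s \geq 4n^2$ really does force the common ratio to be at most $1/2$ so that the closed form for the geometric sum applies and the trailing denominator can be bounded by a constant. I would include one sentence remarking that this is exactly the quantitative reason the authors impose the $s \geq 4n^2$ assumption throughout Section~\ref{sec:cond_conc}, namely so that expected contributions from trees intersecting $\hatT$ in even a single edge are already a constant factor smaller than the leading term.
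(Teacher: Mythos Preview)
Your proposal is correct; the paper itself states this lemma without proof, and your geometric-series computation with $r = 2n^2/s \leq 1/2$ yielding $r/(1-r) \leq 2r$ is precisely the intended one-line justification.
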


The change in our sampling procedure will alter the formulation of $\prob{H|_{\hatT}}{T \subseteq H|_{\hatT}}$, so we first want to write  $\expec{H|_{\hatT}}{\totaltrees{H|_{\hatT}}}$ in terms of values that we are familiar with while only losing small errors. Additionally, many of the exponential terms in the previous analysis will immediately be absorbed into approximation error by our assumption that $\amount \geq 4n^2$.

\begin{lemma}\label{lem:reduceExpect} 	Let $G$ be a graph on $n$ vertices and $m$ edges and $\amount$ a value such that $m \geq \frac{\amount^2}{n}$,
Fix some tree $\hatT \in G$.
For a random subset of $\amount \geq 4n^2$ edges containing
$\hatT$, $H|_{\hatT} \supseteq \hatT$, we have
\[
\expec{H|_{\hatT}}{\totaltrees{H|_{\hatT}}}
= \left(1 - O\left(\frac{n^2}{\amount}\right)\right)
\sum_{k=0}^{n-1} {p}^{n-1-k}
\sum_{\substack{T :\, \abs{T \cap \widehat{T}} = k}} \ww(T),
\]	
where $p = s / m$ is the probability of each edge
being picked in the sample.
\end{lemma}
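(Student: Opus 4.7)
The plan is to write the conditional expectation as a sum over trees grouped by intersection size with $\hatT$, and then approximate each conditional probability $\Pr[T \subseteq H|_{\hatT}]$ by $p^{n-1-k}$ where $k = |T \cap \hatT|$. Concretely, by linearity of expectation,
\[
\expec{H|_{\hatT}}{\totaltrees{H|_{\hatT}}}
= \sum_{T}\ww(T)\cdot\prob{}{T\subseteq H|_{\hatT}}
= \sum_{k=0}^{n-1}\sum_{T:\,|T\cap\hatT|=k}\ww(T)\cdot\prob{}{T\subseteq H|_{\hatT}}.
\]
Since $H|_{\hatT}$ is formed by including the $n-1$ edges of $\hatT$ and picking $s-(n-1)$ more edges uniformly without replacement from the remaining $m-(n-1)$ edges, a tree $T$ with $|T\cap\hatT|=k$ is contained in $H|_{\hatT}$ exactly when its $n-1-k$ edges outside $\hatT$ land in the sample. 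This probability is the falling-factorial ratio
\[
\prob{}{T\subseteq H|_{\hatT}} = \frac{(s-n+1)_{n-1-k}}{(m-n+1)_{n-1-k}}.
\]

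Next, I would estimate this ratio to high accuracy. By exactly the same computation used in Lemma~\ref{lem:subsetSampled}, this falling-factorial ratio equals $\widehat{p}^{\,n-1-k}$ up to a multiplicative factor $\exp(-\Theta((n-1-k)^2/(s-n+1)))$ (with lower-order corrections), where $\widehat{p} = (s-n+1)/(m-n+1)$. Since $n-1-k \leq n$ and $s \geq 4n^2$, this factor is $1 - O(n^2/s)$. I would then replace $\widehat{p}$ by $p = s/m$ using equation~(\ref{eq:ProbEstimate}): the upper bound $\widehat{p}\leq p$ is immediate, and the lower bound $\widehat{p}\geq (1-n/s)p$ gives $\widehat{p}^{\,n-1-k}\geq p^{\,n-1-k}(1-n/s)^{n-1-k}\geq p^{\,n-1-k}(1-O(n^2/s))$, again using $s\geq 4n^2$.

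Combining both approximations yields
\[
\prob{}{T\subseteq H|_{\hatT}}
= \left(1 - O\!\left(\tfrac{n^2}{s}\right)\right) p^{\,n-1-k},
\]
uniformly in $k$ and in the choice of $T$. Substituting into the grouped sum gives the claimed identity. The only step requiring care is the probability estimate in the third paragraph: I need to confirm that the falling-factorial approximation error, which grows with $(n-1-k)^2/s$, and the $\widehat{p}\to p$ replacement error, which grows as $(n-1-k)\cdot n/s$, are both dominated by $n^2/s$ uniformly over $0\le k\le n-1$. This is exactly where the assumption $s \geq 4n^2$ is used; it ensures these errors telescope into a single $1-O(n^2/s)$ prefactor that can be pulled outside the summation.
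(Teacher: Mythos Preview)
Your proposal is correct and follows essentially the same approach as the paper's proof: express the conditional inclusion probability as the falling-factorial ratio $(s-n+1)_{n-1-k}/(m-n+1)_{n-1-k}$, approximate it by $\widehat{p}^{\,n-1-k}$ with an $\exp(-O(n^2/s))$ correction via the computation of Lemma~\ref{lem:subsetSampled}, and then swap $\widehat{p}$ for $p$ using Equation~(\ref{eq:ProbEstimate}) and $s\geq 4n^2$. The order in which you group by $k$ versus apply linearity differs slightly from the paper, but the argument is the same.
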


\begin{proof} Given that all edges of $\hatT$ are in $H|_{\widehat{T}}$,
the remaining $\amount - n + 1$ edges are chosen uniformly from all
$m - n + 1$ edges not in $\hatT$.
Accordingly, for any tree $T \in G$, the probability $\prob{H|_{\widehat{T}}}{T\subseteq H|_{\widehat{T}}}$
is obtained by dividing the number of subsets of $\amount - n + 1$
edges that contain all edges in $T \setminus \hatT$,
against the number of subsets of $\amount - n + 1$ edges from $m - n + 1$:
\[
\prob{H|_{\widehat{T}}}{T\subseteq H|_{\widehat{T}}}
= {{m - n + 1 - \abs{T \setminus \widehat{T}}}
\choose {\amount - n + 1 -  \abs{T \setminus \widehat{T}}}}
/ {m -n + 1\choose \amount - n + 1}
= \frac{ \left(\amount - n + 1\right)_{\abs{T \setminus \widehat{T}}} }
{\left(m - n + 1\right)_{ \abs{T \setminus \widehat{T}}} }.
\]	

Following the proof Lemma~\ref{lem:subsetSampled}, this reduces to
\[
	\prob{H|_{\widehat{T}}}{T\subseteq H|_{\widehat{T}}}
	=  \widehat{p}^{\abs{T \setminus \widehat{T}}}\exp\left(-\frac{\abs{T \setminus \widehat{T}}^2}{2\amount} - O\left(\frac{n^3}{\amount^2}\right) \right),
\]
which we can further reduce using the assumption of
$\amount \geq 4n^2$ to:
\[
\prob{H|_{\widehat{T}}}{T\subseteq H|_{\widehat{T}}}
=  \left(1 - O\left(\frac{n^2}{\amount}\right)\right)
	\widehat{p}^{\abs{T \setminus \widehat{T}}},
\]
and in turn obtain via linearity of expectation:
\[
\expec{H|_{\hatT}}{\totaltrees{H|_{\hatT}}} = \left(1 - O\left(\frac{n^2}{\amount}\right)\right)\sum_{T} \ww(T) \widehat{p}^{\left|T\setminus\widehat{T}\right|}.
\]
	
We then subdivide the summation based on the amount of edges
in the intersection of $T$ and $\hatT$ and move our $\widehat{p}$
term inside the summation
\[
\expec{H|_{\hatT}}{\totaltrees{H|_{\hatT}}}
= \left(1 - O\left(\frac{n^2}{\amount}\right)\right) \sum_{k=0}^{n-1} \widehat{p}^{n-1-k} \sum_{\substack{T :\, T \cap \widehat{T} = k}} \ww(T).
\]	
Finally, we can use Equation~\ref{eq:ProbEstimate} to
replace $\widehat{p}$ by $p$ because
\[
1 \geq \left( 1 - \frac{n}{\amount} \right)^{n}
\geq  \left( 1 - \frac{2 n^2}{\amount} \right)
\]
where $n^2 \amount < 0.1$.

	
\end{proof}

We will also require a strong lower bound of the expectation.
The following lemma shows that most of the trees do not intersect
with $\hatT$.
Restricting our consideration to such trees will be much
easier to work in obtaining the lower bound on
$\expec{H|_{\hatT}}{\totaltrees{H|_{\hatT}}}$.

\begin{lemma}\label{lem:noIntersection}
Let $G$ be a graph on $n$ vertices and $m$ edges such that
$m \geq 4n^2$ and all edges have statistical leverage scores
$\leq \frac{2n}{m}$.
For any tree $\hatT \in G$. 
\[
\sum_{\substack{T : \, \abs{T \cap \widehat{T}} = 0}} \ww(T)
\geq \left(1 - O\left(\frac{n^2}{\amount}\right)\right)\totaltrees{G}.
\] 
\end{lemma}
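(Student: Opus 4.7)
\medskip

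\noindent\textbf{Proof Proposal.} The plan is to bound the complementary sum of tree weights over trees that share at least one edge with $\hatT$, and show it is at most $O(n^2/\amount)$ times $\totaltrees{G}$. Concretely, since
\[
\totaltrees{G} = \sum_{T : |T \cap \hatT| = 0} \ww(T) + \sum_{T : |T \cap \hatT| \geq 1} \ww(T),
\]
it suffices to bound the second sum.

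For the second sum, I will apply a union bound over the edges of $\hatT$: every tree $T$ with $|T \cap \hatT| \geq 1$ contains at least one edge $e \in \hatT$, so
\[
\sum_{T : |T \cap \hatT| \geq 1} \ww(T) \;\leq\; \sum_{e \in \hatT} \;\sum_{T : e \in T} \ww(T).
\]
Invoking Lemma~\ref{lem:SubsetTree} in the $|F|=1$ case (or equivalently using that $\ttaubar_e$ is precisely the fraction of tree weight passing through $e$) gives $\sum_{T : e \in T} \ww(T) = \ttaubar_e \cdot \totaltrees{G}$. Using the uniform leverage-score hypothesis $\ttaubar_e \leq 2n/m$ and $|\hatT| = n-1$, this becomes
\[
\sum_{T : |T \cap \hatT| \geq 1} \ww(T) \;\leq\; (n-1)\cdot \frac{2n}{m} \cdot \totaltrees{G} \;=\; O\!\left(\frac{n^2}{m}\right) \totaltrees{G}.
\]

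Finally, I convert the $n^2/m$ bound into the claimed $n^2/\amount$ bound: since edges are sampled without replacement we always have $\amount \leq m$, so $n^2/m \leq n^2/\amount$. Rearranging then gives
\[
\sum_{T : |T \cap \hatT| = 0} \ww(T) \;\geq\; \left(1 - O\!\left(\frac{n^2}{\amount}\right)\right)\totaltrees{G},
\]
as required. There is no real obstacle here; the whole argument is a one-line first-moment estimate once Foster-type leverage score bounds and negative correlation (via Lemma~\ref{lem:SubsetTree}) are invoked, and I expect the formal write-up to be just a few lines.
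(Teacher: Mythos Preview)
Your argument is correct. Both your proof and the paper's reach the same $O(n^2/m)$ bound on the complementary weight and then weaken it to $O(n^2/\amount)$ via $\amount \leq m$; the difference is only in how that bound is obtained. The paper stratifies by the exact intersection size $k = |T \cap \hatT|$, bounds each stratum by $\totaltrees{G}(2n^2/m)^k$ using Lemma~\ref{lem:SubsetTree} over all $\binom{n-1}{k}$ sub-forests of $\hatT$, and then sums the resulting geometric series with Lemma~\ref{lem:geoSeries}. You instead go straight to a union bound over the $n-1$ edges of $\hatT$ and apply the $|F|=1$ case of Lemma~\ref{lem:SubsetTree}, which collapses the whole calculation to a single term $(n-1)\cdot 2n/m$. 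Your route is the more economical one for this particular lemma; the paper's stratification is not needed here but mirrors the structure used in the neighboring proofs (Lemmas~\ref{lem:reduceExpect}, \ref{lem:condExpBound}, \ref{lem:condVarBound}), where the per-$k$ decomposition is genuinely required.
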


\begin{proof}
By definition, we can classify the trees by their intersection
with $\hatT$:
\[
\totaltrees{G}
=
\sum_{k=0}^{n-1}
\sum_{\substack{T : \, \abs{T \cap \widehat{T}} = k}} \ww(T).
\]
Consider each inner summation and further separating into each
possible forest of $\hatT$ with $k$ edges gives:
\[
\sum_{\substack{T :\, \abs{T \cap \widehat{T}} = k}} \ww(T)
= \sum_{\substack{F \subseteq \hatT \\ \abs{F} = k}}
\sum_{\substack{T \\ F = {T \cap \widehat{T}}}} \ww(T)
\leq  \sum_{\substack{F \subseteq \hatT \\ \abs{F} = k}}
\sum_{\substack{T : \,F \subseteq T}} \ww(T).
\]	
		
Invoking Lemma~\ref{lem:SubsetTree} on the inner summation
and the fact that there are ${n - 1 \choose k}$ forests of
$\hatT$ with $k$ edges, gives an upper bound of
\[
\sum_{\substack{T :\, \abs{T \cap \widehat{T}} = k}} \ww(T)
\leq {n - 1 \choose k}\totaltrees{G}\left(\frac{2n}{m}\right)^k
\leq \totaltrees{G}\left(\frac{2n^2}{m}\right)^k.
\]

We will utilize this upper bound for all $k > 0$ and achieve
a lower bound from rearranging our initial summation
\[
\sum_{\substack{T : \, \abs{T \cap \widehat{T}} = 0}} \ww(T)
= \totaltrees{G} - \sum_{k=1}^{n-1}
\sum_{\substack{T : \, \abs{T \cap \widehat{T}} = k}} \ww(T)
\geq \totaltrees{G}\left(1 - \sum_{k=1}^{n-1}
\left(\frac{2n^2}{m}\right)^k \right).
\]
Applying the assumption of $m \geq 4n^2$ and
Lemma~\ref{lem:geoSeries} gives our desired result.	
\end{proof}

With the necessary tools in place, we will now give upper and
lower bounds on the expectation in terms of $\totaltrees{G}p^{n-1}$,
which we note is also a close approximation of $\expec{H}{\totaltrees{H}}$
by our assumption that $\amount \geq 4n^2$.

\begin{lemma}\label{lem:condExpBound}
Let $G$ be a graph on $n$ vertices and $m$ edges such
that all edges have statistical leverage scores $\leq \frac{2n}{m}$,
and let $\amount$ be such that $m \geq \frac{\amount^2}{n}$.
Fix some tree $\hatT \in G$. For a random subset of
$\amount \geq 4n^2$ edges that contain $\hatT$,
$H|_{\hatT} \subseteq \hatT$ we have: 
\[
\expec{H|_{\hatT}}{\totaltrees{H|_{\hatT}}}
= \left(1 \pm O\left(\frac{n^2}{\amount}\right)\right) \totaltrees{G}p^{n-1}.
\]	
\end{lemma}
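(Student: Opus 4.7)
The plan is to apply Lemma~\ref{lem:reduceExpect}, which already reduces the statement to showing that
\[
\sum_{k=0}^{n-1} p^{n-1-k} \sum_{\substack{T :\, \abs{T \cap \widehat{T}} = k}} \ww(T)
= \left(1 \pm O\!\left(\frac{n^2}{\amount}\right)\right) \totaltrees{G} \, p^{n-1}.
\]
The upper and lower bounds are established separately, using the same kind of forest-summation estimates that appeared in the proof of Lemma~\ref{lem:noIntersection}.

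For the upper bound, I would extract from the proof of Lemma~\ref{lem:noIntersection} the per-layer estimate $\sum_{T:\,|T\cap\hatT|=k}\ww(T) \le \totaltrees{G}(2n^2/m)^k$, which comes from partitioning by $F \subseteq \hatT$ with $|F|=k$, invoking Lemma~\ref{lem:SubsetTree}, and using the leverage score bound $\ttaubar_e \le 2n/m$ together with $\binom{n-1}{k}\le n^k$. Substituting this yields
\[
\sum_{k=0}^{n-1} p^{n-1-k} \sum_{\substack{T :\, \abs{T \cap \widehat{T}} = k}} \ww(T)
\;\le\; \totaltrees{G}\, p^{n-1} \sum_{k=0}^{n-1}\left(\frac{2n^2}{m p}\right)^{k}
= \totaltrees{G}\, p^{n-1} \sum_{k=0}^{n-1}\left(\frac{2n^2}{\amount}\right)^{k},
\]
and Lemma~\ref{lem:geoSeries} bounds the geometric tail by $1 + O(n^2/\amount)$.

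For the lower bound, I drop all terms with $k \ge 1$ (they are nonnegative) and keep only $k=0$. Lemma~\ref{lem:noIntersection} directly provides $\sum_{T:\,|T\cap\hatT|=0}\ww(T) \ge (1-O(n^2/\amount))\totaltrees{G}$, so
\[
\sum_{k=0}^{n-1} p^{n-1-k} \sum_{\substack{T :\, \abs{T \cap \widehat{T}} = k}} \ww(T)
\;\ge\; p^{n-1}\sum_{\substack{T :\, \abs{T \cap \widehat{T}} = 0}} \ww(T)
\;\ge\; \left(1 - O\!\left(\frac{n^2}{\amount}\right)\right) \totaltrees{G}\, p^{n-1}.
\]
Combining the two inequalities with the leading $(1 - O(n^2/\amount))$ factor produced by Lemma~\ref{lem:reduceExpect} (and absorbing the product of $1 \pm O(n^2/\amount)$ factors into a single such factor) yields the claim. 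There is no real obstacle: the only mildly delicate point is making sure the factor $p^{-k}$ in $(2n^2/(mp))^k = (2n^2/\amount)^k$ comes out cleanly so that the hypothesis $\amount \ge 4n^2$ makes the geometric series summable via Lemma~\ref{lem:geoSeries}, which it does by the substitution $p = \amount/m$.
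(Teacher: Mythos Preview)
Your proposal is correct and follows essentially the same approach as the paper's proof: both start from Lemma~\ref{lem:reduceExpect}, obtain the upper bound by using the per-layer estimate $\sum_{T:|T\cap\hatT|=k}\ww(T)\le\totaltrees{G}(2n^2/m)^k$ extracted from the proof of Lemma~\ref{lem:noIntersection} and summing the resulting geometric series via Lemma~\ref{lem:geoSeries}, and obtain the lower bound by retaining only the $k=0$ term and invoking Lemma~\ref{lem:noIntersection} directly.
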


\begin{proof} We will first prove the upper bound. From Lemma~\ref{lem:reduceExpect} we have
\[
\expec{H|_{\hatT}}{\totaltrees{H|_{\hatT}}}
\leq  \sum_{k=0}^{n-1} {p}^{n-1-k}
\sum_{\substack{T :\, \abs{T \cap \widehat{T}} = k}} \ww\left(T\right),
\]
while a proof similar to Lemma~\ref{lem:noIntersection} gives
\[
\sum_{\substack{T :\, \abs{T \cap \widehat{T}} = k}} \ww(T)
\leq \totaltrees{G}\left(\frac{2n^2}{m}\right)^k.
\]
	
Moving $p^{n-1}$ outside the summation and substituting $\frac{\amount}{m}$
for $p$ gives
\[
\expec{H|_{\hatT}}{\totaltrees{H|_{\hatT}}}
\leq  \totaltrees{G} p^{n-1}\sum_{k=0}^{n-1}
\left(\frac{2n^2}{\amount}\right)^k,
\]
and applying Corollary~\ref{lem:geoSeries} to upper bound the
summation gives
\[
\expec{H|_{\hatT}}{\totaltrees{H|_{\hatT}}}
\leq \left(1 + O\left(\frac{n^2}{\amount}\right)\right)
\totaltrees{G}p^{n-1}.
\]
	
For the lower bound, we again first using Lemma~\ref{lem:reduceExpect}
and then restrict to trees that do not intersect $\hatT$ using
Lemma~\ref{lem:noIntersection}.
Formally we have:
\begin{multline*}
\expec{H|_{\hatT}}{\totaltrees{H|_{\hatT}}}
= \left(1 - O\left(\frac{n^2}{\amount}\right)\right)
\sum_{k=0}^{n-1} {p}^{n-1-k}
\sum_{\substack{T :\, \abs{T \cap \widehat{T}} = k}} \ww\left(T\right) \\
\geq \left(1 - O\left(\frac{n^2}{\amount}\right)\right) {p}^{n-1} \sum_{\substack{T :\, \abs{T \cap \widehat{T}} = 0}} \ww\left(T\right)
\geq \left(1 - O\left(\frac{n^2}{\amount}\right)\right)p^{n-1}\totaltrees{G}.
\end{multline*}
\end{proof}

\subsection{Upper Bound on Conditional Variance}\label{subsec:conditionalVariance}

The bound on variance is by
upper bounding $\expec{H|_{\hatT}}{\totaltrees{H|_{\hatT}}^2}$
in a way similar to Lemma~\ref{lem:SecondMoment}.
Once again, the assumption of $\amount > 4n^2$ means the situation is
simpler because the exponential term is negligible.

As with the proof of Lemma~\ref{lem:SecondMoment}, we will often separate summations of pairs of trees based upon the number of edges in their
intersection, then frequently invoke Lemma~\ref{lem:SubsetTree}.
However there will be more moving pieces in each summation due to
intersections with $\hatT$,
so Lemma~\ref{lem:disjointTrees} proven later in this section,
which is analogous to Lemma~\ref{lem:IntersectionPairs},
will be much more involved. 

\begin{lemma}\label{lem:condVarBound}
Let $G$ be a graph on $n$ vertices and $m$ edges such that all edges
have statistical leverage scores $\leq \frac{2n}{m}$, and $\amount$
a sample count such that $m \geq \frac{\amount^2}{n}$.
For some tree $\hatT \in G$, let $H|_{\hatT}$  denote a random subset
of $\amount$ edges such that $\hatT \subseteq H|_{\hatT}$, then:
\[
\frac{\expec{H|_{\hatT}}{\totaltrees{H|_{\hatT}}^2}}
{\expec{H|_{\hatT}}{\totaltrees{H|_{\hatT}}}^2}
\leq \left(1 + O\left(\frac{n^2}{\amount}\right)\right).
\]	
\end{lemma}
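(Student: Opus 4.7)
The plan is to bound $\expec{H|_{\hatT}}{\totaltrees{H|_{\hatT}}^2}$ directly as a double sum over pairs of spanning trees and then compare to $\expec{H|_{\hatT}}{\totaltrees{H|_{\hatT}}}^2$ using Lemma~\ref{lem:condExpBound}. Write $\expec{H|_{\hatT}}{\totaltrees{H|_{\hatT}}^2} = \sum_{T_1,T_2} \ww(T_1)\ww(T_2) \prob{H|_{\hatT}}{T_1,T_2 \subseteq H|_{\hatT}}$. Conditioning on $\hatT \subseteq H|_{\hatT}$ reduces the sample to $\amount-(n-1)$ edges drawn uniformly without replacement from $m-(n-1)$ edges, so the inclusion probability is a ratio of falling factorials in $j \defeq |(T_1\cup T_2)\setminus \hatT| \leq 2(n-1)$. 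As in the proof of Lemma~\ref{lem:reduceExpect}, the assumption $\amount \geq 4n^2$ lets us simplify this to $(1 + O(n^2/\amount))\hat{p}^{j}$, where $\hat{p}=(\amount-n+1)/(m-n+1)$.

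The main combinatorial step is then to bound $\sum_{T_1,T_2}\ww(T_1)\ww(T_2)\hat{p}^{|(T_1\cup T_2)\setminus \hatT|}$. I would use the identity $|(T_1\cup T_2)\setminus \hatT| = |T_1\setminus \hatT| + |T_2\setminus (T_1\cup \hatT)|$ to factor the double sum as $\sum_{T_1}\ww(T_1)\hat{p}^{|T_1\setminus\hatT|}\cdot\bigl(\sum_{T_2}\ww(T_2)\hat{p}^{|T_2\setminus (T_1\cup\hatT)|}\bigr)$. For the inner sum, set $S = T_1 \cup \hatT$, write $\hat{p}^{|T_2\setminus S|} = \hat{p}^{n-1}\hat{p}^{-|T_2\cap S|}$, and partition according to $F = T_2 \cap S$:
\[
\sum_{T_2}\ww(T_2)\hat{p}^{-|T_2\cap S|} = \sum_{F \subseteq S}\hat{p}^{-|F|}\sum_{T_2: F \subseteq T_2}\ww(T_2) \leq \totaltrees{G}\sum_{F\subseteq S}\bigl(\hat{p}^{-1}\cdot \tfrac{2n}{m}\bigr)^{|F|},
\]
using Lemma~\ref{lem:SubsetTree} and $\ttaubar_e \leq 2n/m$. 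Since $m\hat{p} = \Theta(\amount)$ and $|S|\leq 2(n-1)$, the sum is at most $(1+O(n/\amount))^{2(n-1)}\leq \exp(O(n^2/\amount))\leq 1+O(n^2/\amount)$.

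Plugging this bound back in and again comparing with (a rearrangement of) Lemma~\ref{lem:reduceExpect} for $\sum_{T_1}\ww(T_1)\hat{p}^{n-1-a_1}$ yields
\[
\expec{H|_{\hatT}}{\totaltrees{H|_{\hatT}}^2} \leq (1+O(n^2/\amount))\,\hat{p}^{n-1}\totaltrees{G}\cdot \expec{H|_{\hatT}}{\totaltrees{H|_{\hatT}}}.
\]
Finally, Lemma~\ref{lem:condExpBound} gives $\hat{p}^{n-1}\totaltrees{G} \leq (1+O(n^2/\amount))\expec{H|_{\hatT}}{\totaltrees{H|_{\hatT}}}$ (using also $\hat{p}^{n-1} = (1\pm O(n^2/\amount))p^{n-1}$ from \eqref{eq:ProbEstimate}), so dividing through yields the claimed ratio bound.

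The main obstacle is the telescoping of error terms coming from three simultaneous sources: the falling-factorial versus $\hat{p}^j$ approximation, the factor $(1+2n/(m\hat{p}))^{|S|}$ in the inner sum, and the passage between $\hat{p}$ and $p$. Each must be checked to be at most $1+O(n^2/\amount)$, which is the reason the stronger hypothesis $\amount\geq 4n^2$ is needed here (as opposed to only $\amount\gg n^{1.5}$ in the unconditional Lemma~\ref{lem:SecondMoment}): the factor $(1+2n/(m\hat{p}))^{|S|}$ can be as large as $\exp(\Theta(n^2/\amount))$, which is only negligible in the present regime.
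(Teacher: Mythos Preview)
Your approach is correct and takes a genuinely different route from the paper's. Both proofs start by writing the second moment as a double sum over pairs $(T_1,T_2)$, bound the conditional inclusion probability by (essentially) $p^{|(T_1\cup T_2)\setminus\hatT|}$, and finish by comparing against the first-moment lower bound from Lemma~\ref{lem:condExpBound}. The difference lies in the combinatorial middle step. The paper uses the \emph{symmetric} identity $|(T_1\cup T_2)\setminus\hatT| = |T_1\setminus\hatT| + |T_2\setminus\hatT| - |(T_1\cap T_2)\setminus\hatT|$, decomposes the double sum by the triple $(k_1,k_2,k)=\bigl(|T_1\cap\hatT|,\,|T_2\cap\hatT|,\,|(T_1\setminus\hatT)\cap(T_2\setminus\hatT)|\bigr)$, and then invokes a separately stated auxiliary result (Lemma~\ref{lem:disjointTrees}) to bound the innermost sum; that lemma in turn splits over forests $F\subseteq E$ and $\hatF\subseteq\hatT$. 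Your \emph{asymmetric} identity $|(T_1\cup T_2)\setminus\hatT| = |T_1\setminus\hatT| + |T_2\setminus(T_1\cup\hatT)|$ lets you treat the double sum as an iterated sum and bound the inner sum over $T_2$ directly via $S=T_1\cup\hatT$, which avoids Lemma~\ref{lem:disjointTrees} altogether and is arguably cleaner. The paper's route has the advantage of making the three error sources (intersection with $\hatT$ on each side, and mutual intersection) explicit as three separate geometric series; yours hides them inside a single binomial factor $(1+2n/(m\hat{p}))^{|S|}$.

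One small slip: in your displayed equation the ``$=$'' should be ``$\leq$'' (or the inner sum should first be taken over $T_2$ with $T_2\cap S=F$ exactly, and only then relaxed to $F\subseteq T_2$); the final inequality is unaffected.
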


\begin{proof}
By analogous reasoning to the proof in Lemma~\ref{lem:reduceExpect},
for any pair of trees $T_1,T_2 \in G$ we have
\[
\prob{H|_{\widehat{T}}}{T_1,T_2\subseteq H|_{\widehat{T}}}
= {{m - n + 1 - \abs{(T_1 \cup T_2) \setminus \widehat{T}}}
\choose {\amount - n + 1 -  \abs{(T_1 \cup T_2) \setminus \widehat{T}}}}
/ {m -n + 1\choose \amount - n + 1}
= \frac{ \left(\amount - n + 1\right)_{\abs{(T_1 \cup T_2)
	\setminus \widehat{T}}} }
{\left(m - n + 1\right)_{ \abs{(T_1 \cup T_2)
	\setminus \widehat{T}}} }.
\]

As a consequence of Equation~\ref{eq:ProbEstimate}, specifically
the bound $\frac{s - k}{m - k} \leq \frac{s}{m}$ when $k \geq 0$,
we can obtain the upper bound
\[
\prob{H|_{\widehat{T}}}{T_1,T_2\subseteq H|_{\widehat{T}}}
\leq p^{\abs{\left(T_1 \cup T_2\right) \setminus \hatT}},
\]
and in turn summing over all pairs of trees:
\[
\expec{H|_{\hatT}}{\totaltrees{H|_{\hatT}}^2}
\leq \sum_{T_1,T_2 }
\ww\left(T_1\right)\ww\left(T_2\right)
{p}^{\abs{\left(T_1 \cup T_2\right)\setminus\widehat{T}}}.
\]

We note that
$|(T_1 \cup T_2)\setminus\widehat{T}|
= |T_1 \setminus\widehat{T}| + |T_2\setminus\widehat{T}|
 - |(T_1 \cap T_2)\setminus\widehat{T}|$.
Furthermore,
$|T_1 \setminus\widehat{T}| = n- 1 - |T_1 \cap \hatT$,
so we separate the summation as per usual by each possible size of
$|T_1 \cap \hatT|$ and $|T_2 \cap \hatT|$,
and bring the terms outside of the summation that only depend on these values.	
\[
\expec{H|_{\hatT}}{\totaltrees{H|_{\hatT}}^2}
\leq {p}^{2n-2}\sum_{k_1,k_2} {p}^{-k_1-k_2}
\sum_{\substack{T_1,T_2 \\
	\abs{T_1 \cap \widehat{T}} = k_1 \\
	\abs{T_2 \cap \widehat{T}} = k_2}}
\ww\left(T_1\right)\ww\left(T_2\right)
{p}^{-  \left(T_1 \cap T_2\right)\setminus\widehat{T}}.
\]	
In order to deal with the inner most summation we will need
to again separate based on the size of $|(T_1 \cup T_2) \setminus \hatT|$,
and we further note that $|(T_1 \cap T_2) \setminus \hatT|
= |(T_1 \setminus \hatT) \cap (T_2 \setminus \hatT)|$:  
\[
\expec{H|_{\hatT}}{\totaltrees{H|_{\hatT}}^2}
\leq {p}^{2n-2}
\sum_{k_1,k_2} {p}^{-k_1-k_2}
\sum_{k=0}^{n-1} p^{-k}
\sum_{\substack{T_1, T_2 \\
	\abs{T_1 \cap \widehat{T}} = k_1 \\
	\abs{T_2 \cap \widehat{T}} = k_2 \\
	\abs{(T_1 \setminus \widehat{T}) \cap (T_2 \setminus \widehat{T})} = k}}
\ww\left(T_1\right)\ww\left(T_2\right).
\]
The last term is bounded in Lemma \ref{lem:disjointTrees},
which is stated and proven immediately after this.
Incorporating the resulting bound, and grouping the terms
by the summations over $k_1$, $k_2$, and $k$ respectively gives:	\begin{multline*}
\expec{H|_{\hatT}}{\totaltrees{H|_{\hatT}}^2}
\leq {p}^{2n-2}
\sum_{k_1,k_2} {p}^{-k_1-k_2}
\sum_{k=0}^{n-1} p^{-k}
{m \choose k}{n \choose k_1}{n \choose k_2}
\left(\frac{2n}{m}\right)^{2k + k_1 + k_2} \totaltrees{G}^2 \\
= \totaltrees{G}^2{p}^{2n-2}
\left(\sum_{k_1 = 0}^{n-1} {p}^{-k_1} {n \choose k_1}
\left(\frac{2n}{m}\right)^{k_1} \right)
\left(\sum_{k_2 = 0}^{n-1} {p}^{-k_2} {n \choose k_2}
\left(\frac{2n}{m}\right)^{k_2}\right)
\left( \sum_{k=0}^{n-1} p^{-k} {m \choose k}
	\left(\frac{2n}{m}\right)^{2k}\right) .
\end{multline*}
We then plug in $\frac{\amount}{m}$ for $p$ in each summation
and use the very crude upper bound ${a \choose b} \leq a^b$:	
\[
\expec{H|_{\hatT}}{\totaltrees{H|_{\hatT}}^2}
\leq \totaltrees{G}^2{p}^{2n-2}\left(\sum_{k_1 = 0}^{n-1} \left(\frac{2n^2}{\amount}\right)^{k_1} \right)\left(\sum_{k_2 = 0}^{n-1} \left(\frac{2n^2}{\amount}\right)^{k_2}\right)\left( \sum_{k=0}^{n-1} \left(\frac{2n^2}{\amount}\right)^{k}\right).
\]	
Lemma~\ref{lem:geoSeries} then upper bounds each summation
by $1 + O(n^2 / s)$, giving
\[
\expec{H|_{\hatT}}{\totaltrees{H|_{\hatT}}^2}\leq \left(1 + O\left(\frac{n^2}{\amount}\right)\right)\totaltrees{G}^2{p}^{2n-2}.
\]	
\end{proof}

It remains to prove the following bound on the number of of pairs
of trees with a certain intersection size with $\hatT$, and each other.
The following Lemma is a generalization to Lemma~\ref{lem:IntersectionPairs},
and is proven analogously using the negative correlation of edges
in spanning trees from Fact~\ref{fact:negCorrelation}
and Lemma~\ref{lem:SubsetTree}.

\begin{lemma}\label{lem:disjointTrees}	
Let $G$ be  graph with $m$ edges and $n$ vertices such that every edges
has leverage score $ \leq \frac{2n}{m}$.
For any tree $\hatT \in G$ and any integers $k,k_1,k_2 \in [0,n-1]$,
\[
\sum_{\substack{T_1, T_2\\
	\abs{T_1 \cap \widehat{T}} = k_1 \\
	\abs{T_2 \cap \widehat{T}} = k_2 \\
	\abs{(T_1 \setminus \widehat{T}) \cap (T_2 \setminus \widehat{T})} = k}}
\ww\left(T_1\right)\ww\left(T_2\right)
\leq {m \choose k}{n \choose k_1}{n \choose k_2}
\left(\frac{2n}{m}\right)^{2k + k_1 + k_2} \totaltrees{G}^2.
\]	
\end{lemma}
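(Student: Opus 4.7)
The plan is to mimic the proof of Lemma~\ref{lem:IntersectionPairs}, but now we have three separate ``fixed subsets'' to account for: the intersection of $T_1$ with $\hatT$, the intersection of $T_2$ with $\hatT$, and the intersection of $T_1 \setminus \hatT$ with $T_2 \setminus \hatT$. The key observation is that these three subsets sit in disjoint edge sets ($\hatT$ for the first two, $E \setminus \hatT$ for the third), so we can handle them independently.

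Concretely, I would start by upper bounding the sum by relaxing the equality constraints to containment. For each triple $(F_1, F_2, F)$ with $F_1, F_2 \subseteq \hatT$, $|F_1| = k_1$, $|F_2| = k_2$, and $F \subseteq E \setminus \hatT$ with $|F| = k$, any pair $(T_1, T_2)$ satisfying the original equality constraints with $F_1 = T_1 \cap \hatT$, $F_2 = T_2 \cap \hatT$, $F = (T_1 \setminus \hatT) \cap (T_2 \setminus \hatT)$ also satisfies $T_1 \supseteq F_1 \cup F$ and $T_2 \supseteq F_2 \cup F$. Thus
\[
\sum_{\substack{T_1, T_2\\ \abs{T_1 \cap \widehat{T}} = k_1 \\ \abs{T_2 \cap \widehat{T}} = k_2 \\ \abs{(T_1 \setminus \widehat{T}) \cap (T_2 \setminus \widehat{T})} = k}} \ww(T_1)\ww(T_2) \leq \sum_{\substack{F_1 \subseteq \hatT,\, |F_1|=k_1 \\ F_2 \subseteq \hatT,\, |F_2|=k_2 \\ F \subseteq E \setminus \hatT,\, |F|=k}} \left(\sum_{T_1 \supseteq F_1 \cup F} \ww(T_1)\right)\left(\sum_{T_2 \supseteq F_2 \cup F} \ww(T_2)\right).
\]

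Next, I would apply Lemma~\ref{lem:SubsetTree} to each of the inner sums (noting $F_1 \cup F$ and $F_2 \cup F$ are valid edge subsets since $F_1, F \subseteq E$ are disjoint). This gives
\[
\sum_{T_1 \supseteq F_1 \cup F} \ww(T_1) \cdot \sum_{T_2 \supseteq F_2 \cup F} \ww(T_2) \leq \totaltrees{G}^2 \left(\prod_{e \in F_1} \ttaubar_e\right)\left(\prod_{e \in F_2} \ttaubar_e\right)\left(\prod_{e \in F} \ttaubar_e\right)^2.
\]
Using the uniform leverage score bound $\ttaubar_e \leq \frac{2n}{m}$, each such product contributes a factor of $(2n/m)^{k_1 + k_2 + 2k}$.

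Finally, I would count the number of valid triples: there are at most ${n \choose k_1}$ choices for $F_1$, at most ${n \choose k_2}$ choices for $F_2$, and at most ${m \choose k}$ choices for $F$. Multiplying everything together yields the claimed bound. I do not anticipate any serious obstacle; the main subtlety is just bookkeeping to ensure the disjointness of $F_1, F_2 \subseteq \hatT$ from $F \subseteq E \setminus \hatT$, which makes the ``independent application of Lemma~\ref{lem:SubsetTree} to each side'' step go through cleanly.
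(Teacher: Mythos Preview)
Your proposal is correct and follows essentially the same approach as the paper's proof: both relax the equality constraints to containment, introduce the auxiliary subsets $F_1,F_2\subseteq\hatT$ and $F\subseteq E\setminus\hatT$, apply Lemma~\ref{lem:SubsetTree} together with the leverage-score bound, and then count the choices by $\binom{n}{k_1}\binom{n}{k_2}\binom{m}{k}$. The only cosmetic difference is that the paper first fixes $F$ and splits the $T_1,T_2$ factors before introducing $F_1,F_2$, whereas you introduce the triple $(F_1,F_2,F)$ in one step; this changes nothing substantive.
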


\begin{proof}
We will first separate the summation over all possible forests
$F$ of size $k$ that could be the intersection of
$T_1 \setminus \widehat{T}$ and $T_2 \setminus \widehat{T}$:
\[
\sum_{\substack{T_1, T_2 \\
	\abs{T_1 \cap \widehat{T}} = k_1 \\
	\abs{T_2 \cap \widehat{T}} = k_2 \\
	\abs{(T_1 \setminus \widehat{T}) \cap (T_2 \setminus \widehat{T})} = k}}
\ww\left(T_1\right)\ww\left(T_2\right)
= \sum_{\substack{F \subseteq E \\
	\abs{F} = k}}\sum_{\substack{T_1,T_2 \\
	\abs{T_1 \cap \widehat{T}} = k_1 \\
	\abs{T_2 \cap \widehat{T}} = k_2 \\
	F =(T_1 \setminus \widehat{T}) \cap (T_2 \setminus \widehat{T}) }}
\ww\left(T_1\right)\ww\left(T_2\right).
\]
		
We first consider the inner summation,
and will relax the requirement to only needing
\[
F \subseteq (T_1 \setminus \widehat{T}) \cap (T_2 \setminus \widehat{T}),
\]
which we note is equivalent to
$F \subseteq (T_1 \setminus \widehat{T})$
and $F \subseteq (T_2 \setminus \widehat{T})$.
This then allows us to separate the summation again
for a particular $F$ into terms involving just $T_1$ and $T_2$:
\[
\sum_{\substack{T_1,T_2 \\
	\abs{T_1 \cap \widehat{T}} = k_1 \\
	\abs{T_2 \cap \widehat{T}} = k_2 \\
	F =\left(T_1 \setminus \widehat{T}\right)
		\cap \left(T_2 \setminus \widehat{T}\right) }}
\ww\left(T_1\right)\ww\left(T_2\right)   
\leq
\left(
\sum_{\substack{T_1:\,\abs{T_1 \cap \widehat{T}} = k_1 \\ 
	F \subseteq \left(T_1 \setminus \widehat{T}\right)  }}
\ww\left(T_1\right)
\right)
\left(\sum_{\substack{ T_2:\, \abs{T_2 \cap \widehat{T}}= k_2 \\
	F \subseteq \left(T_2 \setminus \widehat{T}\right) }}
\ww\left(T_2\right)
\right).
\]	
	
We further examine the first term in the product,
and the second will follow equivalently.
Once again, we will split the summation by all possible
forests $\hatF$ of $\hatT$ with size $k_1$ that
$T_1 \setminus \widehat{T}$ could intersect in,
and further relax to them only having to contain $\hatF$. 
\[
\sum_{\substack{T_1 : \,\abs{T_1 \cap \widehat{T}} = {k_1} \\
	F \subseteq \left(T_1 \setminus \widehat{T}\right)}}
\ww\left(T_1\right)
\leq
\sum_{\substack{\widehat{F} \subseteq \widehat{T} \\
	\abs{\widehat{F}} = k_1}} \sum_{\substack{T_1 \\
	\widehat{F} \subseteq (T_1 \cap \widehat{T}) \\
	F \subseteq (T_1 \setminus \widehat{T})}}
\ww\left(T_1\right).
\]

Since $T_1 \cap \widehat{T}$ and $T_1 \setminus \widehat{T}$
are disjoint, we can restrict to $\widehat{F}$ that are 
disjoint from $F$, as well as relaxing to requiring
$(\widehat{F} \cup F) \subseteq T_1$
(instead of $\widehat{F} \subseteq (T_1 \cap \hatT)$
and $F \subseteq (T_1 \setminus \hatT)$):
\[
\sum_{\substack{T_1 : \,\abs{T_1 \cap \widehat{T}} = {k_1} \\
	F \subseteq \left(T_1 \setminus \widehat{T}\right)}}
\ww\left(T_1\right)
\leq
\sum_{\substack{\widehat{F} \subseteq \widehat{T} \\
	\abs{\widehat{F}} = {k_1} \\
	\left(\widehat{F} \cap F\right) = \emptyset}}
\sum_{\left(\widehat{F} \cup F\right) \subseteq T}
\ww\left(T\right).
\]
The assumption of $\widehat{F}$ and $F$ being disjoint means
their union must have exactly $k + k_1$ edges.
We can then apply Lemma~\ref{lem:SubsetTree} to the inner summation
and use the fact that there are at most ${ n - 1 \choose k_1}$
sets $\hatF$ to achieve the upper bound 
\[
\sum_{\substack{T_1 : \,\abs{T_1 \cap \widehat{T}} = {k_1} \\
	F \subseteq \left(T_1 \setminus \widehat{T}\right)}}
\ww\left(T_1\right)
\leq
{n \choose {k_1}} \left( \frac{2n}{m}\right)^{k + {k_1}} \totaltrees{G}.
\]
Similarly, we can also obtain
\[
\sum_{\substack{T_2 : \,\abs{T_2 \cap \widehat{T}} = {k_2} \\
	F \subseteq \left(T_2 \setminus \widehat{T}\right)}}
\ww\left(T_2\right)
\leq {n \choose {k_2}} \left( \frac{2n}{m}\right)^{k + {k_2}}
\totaltrees{G},
\]
which, along with the fact that there are ${m \choose k}$
edge sets $F$ of size $k$, gives our desired bound.	
\end{proof}

\subsection{Concentration of Inverse Probabilities}\label{subsec:invConcentrationProof}

We now complete a proof of Lemma~\ref{lem:InvConcUniform}
using the concentration results on the number of trees in a sampled graph,
conditioned upon a certain tree being contained in the graph.

\begin{proof}[Proof of Lemma~\ref{lem:InvConcUniform}] 
The definition of
\[
\Pr^{H|_{\hatT}}\left(\hatT \right)^{-1} = \frac{\totaltrees{H|_{\hatT}}}{\ww(\hatT)}
\]
and Lemma~\ref{lem:subsetSampled} give
\[
\prob{H \sim \mathcal{H}}{\hatT \subseteq H}^{-1}
\cdot
\expec{H|_{\hatT} \sim \mathcal{H}_{|_{\hatT}}}{
	\Pr^{H|_{\hatT}}\left(\hatT \right)^{-1}}
= \left(\frac{1}{p}\right)^{n-1} 
\exp\left(\frac{n^2}{2\amount} + O\left(\frac{n^3}{\amount^2}\right) \right)
\frac{\expec{H|_{\hatT} \sim \mathcal{H}_{|_{\hatT}}}
	{\totaltrees{H|_{\hatT}}}}{\ww\left(\hatT\right)}.
\]
	
Our condition of $\amount \geq 4n^2$ allows us to bound
the term $\exp(n^2 / (2\amount) + O(n^3 / \amount^2))$
by $(1 + O(n^2 / \amount))$,
and incorporating our approximation of
${\expec{H|_{\hatT} \sim \mathcal{H}_{|_{\hatT}}}{
\totaltrees{H|_{\hatT}}}}$ from Lemma~\ref{lem:condExpBound} gives
\[
\prob{H \sim \mathcal{H}}{\hatT \subseteq H}^{-1}
\cdot
\expec{H|_{\hatT} \sim \mathcal{H}_{|_{\hatT}}}{
\Pr^{H|_{\hatT}}\left(\hatT \right)^{-1}}
= \left(1 \pm O\left(\frac{n^2}{\amount}\right)\right)
\cdot \frac{\totaltrees{G}}{\ww\left(\hatT\right)},
\]
and the definition of $\Pr^{G}\left(\hatT \right)^{-1}$ implies
the bounds on expectation.

For the variance bound, we use the identity
\[ \var{H|_{\hatT} \sim \mathcal{H}_{|_{\hatT}}}{
	\Pr^{H|_{\hatT}}\left(\hatT \right)^{-1}} = \expec{H|_{\hatT} \sim \mathcal{H}_{|_{\hatT}}}{
	\Pr^{H|_{\hatT}}\left(\hatT \right)^{-2}} - \expec{H|_{\hatT} \sim \mathcal{H}_{|_{\hatT}}}{
	\Pr^{H|_{\hatT}}\left(\hatT \right)^{-1}}^2,
\] 
which by the definition
\[
\Pr^{H|_{\hatT}}\left(\hatT \right)^{-1} = \frac{\totaltrees{H|_{\hatT}}}{\ww(\hatT)}
\]
reduces to
\[
\var{H|_{\hatT} \sim \mathcal{H}_{|_{\hatT}}}{
	\Pr^{H|_{\hatT}}\left(\hatT \right)^{-1}}
= \frac{
\expec{H|_{\hatT} \sim \mathcal{H}_{|_{\hatT}}}{\totaltrees{H|_{\hatT}}^2}
- \expec{H|_{\hatT} \sim \mathcal{H}_{|_{\hatT}}} {\totaltrees{H|_{\hatT}}}^2}
{\ww\left(\hatT\right)^2}
\leq O\left(\frac{n^2}{\amount}\right)
\cdot \frac{\totaltrees{G}^2p^{2n-2}}{\ww(\hatT)^2},
\] 
where the last inequality is from incorporating
Lemmas~\ref{lem:condExpBound}~and~\ref{lem:condVarBound}.
Applying Lemma~\ref{lem:subsetSampled}, and once again using 
the condition of $\amount \geq 4n^2$ to bound
\[
\exp\left(\frac{n^2}{2\amount} + O(\frac{n^3}{\amount^2} \right)
\leq \left(1 + O\left(\frac{n^2}{\amount}\right) \right)
\leq O\left(1\right)
\]
gives:
\[
\prob{H \sim \mathcal{H}}{\hatT \subseteq H}^{-2} \cdot 
\var{H|_{\hatT} \sim \mathcal{H}_{|_{\hatT}}}{
	\Pr^{H|_{\hatT}}\left(\hatT \right)^{-1}}
\leq
O\left(\frac{n^2}{\amount}\right) \cdot
\frac{\totaltrees{G}^2}{\ww(\hatT)^2},
\]
and the variance bound follows from
the definition of $\Pr^{G}\left(\hatT \right)^{-1}$.	
\end{proof}

	\section{Bounding Total Variation Distance}
\label{sec:TVBound}
In this section we will first bound the total variation distance between drawing a tree from the $\ww$-uniform distribution of $G$, and uniformly sampling $\amount$ edges, $H$, from $G$, then drawing a tree from the $\ww$-uniform distribution of $H$.
The first bound will only be based on a concentration for the number of trees
in $H$, and will give the $\tilde{O}(n^{13/6})$ time algorithm for sampling
spanning trees from Corollary~\ref{cor:AlgoOneShot}.

Next we will give a more general bound on the total variation distance
between two distributions based on concentration of inverse probabilities.
The resulting Lemma~\ref{lem:invVarTVBound} is used for proving the bound
on total variation distance in the recursive algorithm given in Section~\ref{sec:spanning_tree}.
However, as this bound requires a higher sample count of about $n^2$,
the direct derivation of TV distances from concentration bounds is still
necessary for uses of the $\tilde{O}(n^{1.5})$ edge sparsifier in
Corollary~\ref{cor:AlgoOneShot}.

\subsection{Simple Total Variation Distance Bound from Concentration Bounds}
\label{subsec:EasierTVBound}

We give here a proof of total variation distance being bounded based on the concentration
of spanning trees in the sampled graph.

\begin{proof}(of Lemma~\ref{lem:VarianceTV})
Substituting the definition of $p$ and $\tilde{p}$ into the
definition of total variation distance gives:
\[
d_{TV}\left(p,\tilde{p}\right) 
= \sum_{\hatT}
\abs{ \Pr^{G} \left( \hatT \right)
- \expec{H \sim \mathcal{H}}{\Pr^{H}\left( \hatT \right)}}.
\]
Substituting in the conditions of:
\begin{align*}
\Pr^{H}\left( \hatT \right)
& = \frac{\ww^{H}\left( T \right)}{\totaltrees{H}}, \qquad \text{(by definition of $\Pr^{H}(\hatT)$)}\\
\ww^{H}\left( \hatT \right)
& = \ww^{G} \left( \hatT \right) \cdot
\prob{H' \sim \mathcal{H}}{\hatT \subseteq H'}^{-1}
\cdot \frac{\expec{H' \sim \mathcal{H}}{\totaltrees{H'}}}{\totaltrees{G}},
	\qquad \text{(by given condition)}
\end{align*}
Using the fact that
\[
\expec{H \sim \mathcal{H}}{
	\one \left( \hatT \subseteq H \right)}
= \prob{H' \sim \mathcal{H}}{\hatT \subseteq H'},
\]
we can distribute the first term into:
\[
d_{TV}\left(p,\tilde{p}\right) 
= \sum_{\hat{T}}
\abs{ \expec{H \sim \mathcal{H}}{
	\one \left( \hatT \subseteq H \right) \cdot \prob{H' \sim \mathcal{H}}{\hatT \subseteq H'}^{-1}
	\cdot \Pr^{G} \left(\hatT \right)
	-	\Pr^{H}\left( \hatT \right)}},
\]
which by the condition on $\ww^{H}(\hatT)$ simplifies to:
\[
d_{TV}\left(p,\tilde{p}\right) 
= \sum_{\hat{T}}
\abs{ \expec{H \sim \mathcal{H}}{
	\one \left( \hatT \subseteq H \right) \cdot \frac{\ww^{H} \left( \hatT \right)}
	{\expec{H' \sim \mathcal{H}}{\totaltrees{H'}}}
-	\Pr^{H}\left( \hatT \right)}}.
\]
As $\one ( \hatT \subseteq H ) = 1$
iff $\Pr^{H}(\hatT) > 0$, this further simplifies into
\[
d_{TV}\left(p,\tilde{p}\right) 
= \sum_{\hat{T}}
\prob{H' \sim \mathcal{H'}}{\hatT \subseteq H'}
\abs{\expec{H \sim \mathcal{H}|_{T}}{
	\frac{\ww^{H} \left( \hatT \right)}{\expec{H' \sim \mathcal{H}}{\totaltrees{H'}}}
	-	\Pr^{H}\left( \hatT \right)}},
\]
which by triangle inequality gives:
\[
d_{TV}\left(p,\tilde{p}\right) 
= \sum_{\hat{T}}
\prob{H' \sim \mathcal{H'}}{\hatT \subseteq H'} \cdot
\expec{H \sim \mathcal{H}|_{T}}{
\abs{\frac{\ww^{H} \left( \hatT \right)}{\expec{H' \sim \mathcal{H}}{\totaltrees{H'}}}
		-	\Pr^{H}\left( \hatT \right)}},
\]
at which point we can rearrange the summation to obtain:
\[
d_{TV}\left(p,\tilde{p}\right) 
\leq \expec{H}{\sum_{\hatT \subseteq H}
\abs{\Pr^{H}\left(\hatT \right) 
- \frac{\ww^{H}\left(\hatT \right) }{ \expec{H'}{\totaltrees{H'}}
}}}
= \expec{H}{\sum_{\hatT \subseteq H} \ww^{H}\left( \hatT \right) \cdot 
\abs{\frac{1}{\totaltrees{H}} - \frac{1}{ \expec{H'}{\totaltrees{H'}}}}}.
\]
which by definition of $\totaltrees{H}$ simplifies to:
\[
d_{TV}\left(p,\tilde{p}\right) 
\leq \expec{H}{\abs{1 - \frac{\totaltrees{H} }{ \expec{H'}{\totaltrees{H'} }}}}.
\]
By the Cauchy-Schwarz inequality, which for distributions
can be instantiated as $\expec{X}{f(X)} \leq \sqrt{\expec{X}{f(X)^2}}$
for any random variable $X$ and function $f(X)$, we then get:
\[
d_{TV}\left(p,\tilde{p}\right) 
\leq \sqrt{\expec{H}{\left(1 - \frac{\totaltrees{H} }{ \expec{H'}{\totaltrees{H'} }}\right)^2}}
= \sqrt{\expec{H}{\left(\frac{\totaltrees{H}}{\expec{H'}{\totaltrees{H'} }}\right)^2} - 1}
= \sqrt{\delta}.
\]
\end{proof}

\subsection{Total Variation Distance Bound from Inverse Probability Concentration }\label{subsec:TVinverse}

We give here our proof of Lemma~\ref{lem:invVarTVBound}, that is a more general
bound on total variation distance based upon concentration results of the
inverse probabilities.

\begin{lemma}
\label{lem:chebyshevInv}
	Let $X$ be a random variable such that $X > 0$ over its entire support,
	and given some $\delta \geq 0$, such that $\expec{}{X} = (1 \pm \delta) \mu$ and $\var{}{X} \leq \delta \mu^2$, then \[\prob{}{|X^{-1} - \mu^{-1}| > 4k\sqrt{\delta}\mu^{-1}} \leq \frac{1}{k^2}\] if $1 
	< k < \delta^{-1/2}/4$
\end{lemma}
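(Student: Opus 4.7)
The plan is to apply Chebyshev's inequality to $X$ itself and then translate the additive concentration for $X$ into an additive concentration for $X^{-1}$, using the fact that the condition $k < \delta^{-1/2}/4$ forces $X$ to stay bounded away from zero on the good event.

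First I would apply Chebyshev to $X$: since $\var{}{X} \leq \delta \mu^2$, we have
\[
\prob{}{|X - \expec{}{X}| > k \sqrt{\delta}\,\mu} \;\leq\; \frac{1}{k^2}.
\]
Combining this with the hypothesis $|\expec{}{X} - \mu| \leq \delta \mu$ and the triangle inequality, and using $\delta \leq k\sqrt{\delta}$ (which holds since $k > 1 > \sqrt{\delta}$), on the complementary event of probability at least $1 - 1/k^2$ we get
\[
|X - \mu| \;\leq\; k\sqrt{\delta}\,\mu + \delta \mu \;\leq\; 2k\sqrt{\delta}\,\mu.
\]

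Next I would use the assumption $k < \delta^{-1/2}/4$, which gives $2k\sqrt{\delta} < 1/2$. Hence on the good event $X \geq \mu - 2k\sqrt{\delta}\,\mu \geq \mu/2 > 0$, so $X^{-1}$ is well-defined and bounded. Writing
\[
\bigl|X^{-1} - \mu^{-1}\bigr| \;=\; \frac{|X - \mu|}{X\mu} \;\leq\; \frac{2k\sqrt{\delta}\,\mu}{(\mu/2)\,\mu} \;=\; \frac{4k\sqrt{\delta}}{\mu}
\]
on this event yields the claimed tail bound.

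There isn't really a significant obstacle here since this is a standard reduction from multiplicative tail bounds on $X$ to additive tail bounds on $X^{-1}$; the only thing to be careful about is that the three small quantities ($\delta$ from the expectation bias, $k\sqrt{\delta}$ from Chebyshev, and the lower bound on $X$ needed for the reciprocal) all interact with the right constants to produce exactly $4k\sqrt{\delta}/\mu$, which is why the range $1 < k < \delta^{-1/2}/4$ is imposed in the hypothesis.
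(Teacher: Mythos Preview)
Your proposal is correct and follows essentially the same approach as the paper: apply Chebyshev to $X$, absorb the $\delta\mu$ bias into the $k\sqrt{\delta}\mu$ deviation to get $|X-\mu|\le 2k\sqrt{\delta}\,\mu$, and then invert. Your use of the identity $|X^{-1}-\mu^{-1}| = |X-\mu|/(X\mu)$ together with the lower bound $X\ge \mu/2$ is in fact a slightly cleaner way to reach the constant $4$ than the paper's separate estimates on $1/(1\pm 2k\sqrt{\delta})$, but the two arguments are the same in substance.
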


\begin{proof}
Chebyshev's inequality gives
\[
\prob{}{|X - (1 \pm \delta)\mu| > k \sqrt{\delta}\mu}\leq \frac{1}{k^2}.
\]
	
Furthermore, if we assume $X$ such that
\[
\abs{X - (1 \pm \delta)\mu} \leq k \sqrt{\delta}\mu
\] 
which reduces to
\[
\left(1 - 2k\sqrt{\delta}\right)\mu
\leq X
\leq \left(1 + 2k\sqrt{\delta}\right)\mu.
\]

Inverting and reversing the inequalities gives
\[
\frac{\mu^{-1}}{1 + 2k\sqrt{\delta}} \leq X^{-1} \leq \frac{\mu^{-1}}{1 - 2k\sqrt{\delta}}.
\]

Using the fact that
$\frac{1}{1 + \epsilon} = 1 - \frac{\epsilon}{1 + \epsilon} \leq 1 - \epsilon$
for $\epsilon > 0$, and
$\frac{1}{1 + \epsilon} = 1 + \frac{\epsilon}{1 - \epsilon} \leq 1 + 2\epsilon$
for $\epsilon \leq 1/2$, we can then conclude,
\[
\left(1 - 4k\sqrt{\delta}\right) {\mu^{-1}}\leq X^{-1}
\leq \left(1 + 4k\sqrt{\delta}\right) {\mu^{-1}},
\]
which implies
\[
\prob{}{\abs{X^{-1} - \mu^{-1}} > 4k\sqrt{\delta}\mu^{-1}}
\leq
\prob{}{\abs{X - \left(1 \pm \delta\right)\mu} > k \sqrt{\delta}\mu}
\]
and proves the lemma. 
\end{proof}

This bound does not allow us to bound $\expec{X}{|X - \mu}$ because
when $X$ close to $0$, the value of $X^{-1}$ can be arbitrarily
large, while this bound only bounds the probability of such events
by $O(\delta^{-1})$.
We handle this by treating the case of $X$ small separately,
and account for the total probability of such cases via
summations over $\mathcal{I}$ and $\xhat$.
First we show that once these distributions are truncated
to avoid the small $X$ case, its variance is bounded.

\begin{lemma}
	\label{lem:DistrInv}
	Let $Y$ be a random variable such that for parameters $\delta, \mu_{Y} > 0$ we have
	$0 < Y \leq 2 \mu_{Y}$ over its entire support,
	and that $\expec{}{Y^{-1}} = (1 \pm \delta) \mu_{Y}^{-1}$,
	$\var{}{Y^{-1}} \leq \delta \mu_{Y}^{-2}$, then
	\[
		\expec{}{\abs{Y - \mu_{Y}}} \leq O\left(\sqrt{\delta} \right) \mu_{Y}.
	\]
\end{lemma}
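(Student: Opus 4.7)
The plan is to reduce to Lemma~\ref{lem:chebyshevInv} by setting $X := Y^{-1}$ and $\mu := \mu_Y^{-1}$, then integrate the resulting tail bound, using the deterministic upper bound on $|Y - \mu_Y|$ to cut off the integral at exactly the right place.

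First, since $Y > 0$ on its support, $X := Y^{-1}$ is strictly positive, and the hypotheses $\expec{}{Y^{-1}} = (1 \pm \delta)\mu_Y^{-1}$ and $\var{}{Y^{-1}} \le \delta \mu_Y^{-2}$ translate exactly into $\expec{}{X} = (1 \pm \delta)\mu$ and $\var{}{X} \le \delta\mu^2$. Applying Lemma~\ref{lem:chebyshevInv} then yields, for every $1 < k < \delta^{-1/2}/4$,
\[
\prob{}{\abs{Y - \mu_Y} > 4k\sqrt{\delta}\,\mu_Y} \le \frac{1}{k^2},
\]
since $X^{-1} = Y$ and $\mu^{-1} = \mu_Y$.

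Next, I would use the two-sided bound $0 < Y \le 2\mu_Y$ to observe that $\abs{Y - \mu_Y} \le \mu_Y$ deterministically, so the tail function $t \mapsto \prob{}{\abs{Y - \mu_Y} > t}$ vanishes for $t \ge \mu_Y$. Under the substitution $t = 4k\sqrt{\delta}\,\mu_Y$ this corresponds to $k \ge 1/(4\sqrt{\delta})$, which is precisely the endpoint of the range over which Lemma~\ref{lem:chebyshevInv} is stated, so the truncation and the Chebyshev regime line up cleanly.

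Finally, I would bound the expectation by tail integration and the change of variables above:
\[
\expec{}{\abs{Y - \mu_Y}} = \int_0^{\mu_Y} \prob{}{\abs{Y - \mu_Y} > t}\, dt = 4\sqrt{\delta}\,\mu_Y \int_0^{1/(4\sqrt{\delta})} \prob{}{\abs{Y - \mu_Y} > 4k\sqrt{\delta}\,\mu_Y}\, dk.
\]
Splitting the integral at $k = 1$, the piece on $[0,1]$ contributes at most $4\sqrt{\delta}\,\mu_Y$ via the trivial bound $\prob{}{\cdot} \le 1$, while the piece on $(1, 1/(4\sqrt{\delta}))$ contributes at most $4\sqrt{\delta}\,\mu_Y \int_1^{\infty} k^{-2}\, dk = 4\sqrt{\delta}\,\mu_Y$. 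Summing gives the claimed $O(\sqrt{\delta})\,\mu_Y$ bound. There is no genuine obstacle beyond verifying that the Chebyshev regime and the deterministic cutoff match at $k \asymp \delta^{-1/2}$, so that no additional argument is needed for the ``small $Y$'' tail.
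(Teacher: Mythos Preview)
Your proof is correct and follows essentially the same approach as the paper: both arguments invoke Lemma~\ref{lem:chebyshevInv} with $X = Y^{-1}$, use the deterministic bound $\abs{Y - \mu_Y} \le \mu_Y$ to truncate the tail at the point where the Chebyshev regime ends, and then integrate the resulting tail bound. The only difference is cosmetic: the paper discretizes into dyadic buckets $2^i\sqrt{\delta}\,\mu_Y$ and sums the geometric series $\sum_i 2^{-i}$, whereas you use the continuous layer-cake formula and integrate $k^{-2}$ directly.
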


\begin{proof}
Since $\abs{Y - \mu_{Y}} \leq \mu_{Y}$, we can decompose this expected value
into buckets of $2$ via:
\[
\expec{}{\abs{Y - \mu_{Y}}}
\leq
\sum_{i = 0}^{\log \left(\delta^{-1/2} / 4\right)}
\prob{Y}{\abs{Y - \mu_{Y}} \geq 2^{i} \sqrt{\delta} \mu_{Y}}
	\cdot \left(2^{i} \sqrt{\delta} \mu_{Y} \right),
\]
where the last term is from the guarantee of $Y \leq 1$.
Lemma~\ref{lem:chebyshevInv} gives that each of the intermediate
probability terms is bounded by $O(2^{-2i})$, while the last one
is bounded by $\frac{1}{\delta}$, so this gives a total of
\[
\expec{}{\abs{Y - \mu}}
\leq\sum_{i = 0}^{\log \left(\delta^{-1/2}\right)}
\left(2^{i} \sqrt{\delta} \mu_{Y} \right) O\left(2^{-2i} \right)
\leq \sqrt{\delta} \mu_{Y}
\]
\end{proof}
\newcommand{\bad}{\mathit{BAD}}
We can now complete the proof via an argument similar to the proof
of Lemma~\ref{lem:VarianceTV} in Section~\ref{subsec:EasierTVBound}.
The only additional step is the definition of $\bad_u$, which
represents the portion of the random variable $P_u$ with high
deviation.


\begin{proof}[Proof of Lemma~\ref{lem:invVarTVBound}]
	For each $u$, we define a scaling factor corresponding
	to the probability that $P_u$ is non-zero:
	\[
	p_{u+} \defeq \prob{p \sim P_{u}}{p > 0 }.
	\]
	By triangle inequality, we have for each $P_{u}$
	\[
	\abs{1 - \expec{}{P_u}}
	\leq p_{u+} \cdot
	\expec{p \sim P_{u} \left| p > 0 \right.}
	{\abs{p_{u+}^{-1} - p}}.
	\]
	
	We will handle the case where $p$ is close and
	far from $p_{u+}^{-1}$ separately.
	This requires defining the portion of $P_u$
	with non-zero values, but large variance as
	\[
	\bad_{u} \defeq
	\left\{
	p \in \supp\left( P_u\right):
	\abs{p_{u+}^{-1} - p} > \frac{1}{2} p_{u+}^{-1} \right\}.
	\]
	Lemma~\ref{lem:chebyshevInv} gives that for each $u$,
	\[
	\prob{p \sim P_{u} \left| p > 0 \right.} {p \in \bad_u}
	\leq O\left(\sqrt{\delta}\right),
	\]
	which with the outer distribution
	and factoring the value of $p_{u+}^{-1}$ gives
	gives:
	\begin{align}
	\expec{u \sim \mathcal{U}}{
		p_{u+} \cdot
		\expec{p \sim P_{u} \left| p > 0 \right.}{
			{\one\left(p \in \bad_u\right) \cdot p_{u+}^{-1}}}}
	& \leq O\left(\sqrt{\delta}\right),
	\label{eq:pBad}\\
	\expec{u \sim \mathcal{U}}{
		p_{u+} \cdot 
		\expec{p \sim P_{u} \left| p > 0 \right.}
		{\one\left(p \notin \bad_u\right) \cdot p_{u+}^{-1} }}
	& \geq 1 - O\left(\sqrt{\delta}\right).
	\label{eq:pNotBad}
	\end{align}
	
	We then define the `fixed' distributions $\widetilde{P}_u$
	with the same distribution over $p$ as $P_u$, but
	whose values are set to $p_{u+}^{-1}$ whenever $p \in BAD_u$.
	Lemma~\ref{lem:DistrInv} then gives:
	\[
	\expec{p \sim \widetilde{P}_{u} \left| p > 0 \right.}
	{\abs{p_{u+}^{-1} - p}}
	\leq O\left(\sqrt{\delta} p_{u+}^{-1}\right),
	\]
	or taken over the support of $\mathcal{U}$, and written with
	indicator variables:
	\[
	\expec{u \sim \mathcal{U}}{
		p_{u+} \cdot 
		\expec{p \sim P_{u} \left| p > 0 \right.}
		{\one\left(p \notin \bad_u\right) \cdot \abs{p_{u+}^{-1} - p}}}
	\leq O\left(\sqrt{\delta}\right).
	\]
	Combining this with the lower bound on the mass of $p_{u+}^{-1}$
	on the complements of the bad sets from Equation~\ref{eq:pNotBad}
	via the triangle inequality $p \geq \pp_{u+}^{-1}
		-  |\pp_{u+}^{-1} - p|$ gives:
	\[
	\expec{u \sim \mathcal{U}}{
		p_{u+} \cdot 
		\expec{p \sim P_{u} \left| p > 0 \right.}
		{\one\left(p \notin \bad_u\right) \cdot p}}
	\geq 1 - O\left(\sqrt{\delta}\right),
	\]
	or upon taking complement again:
	\[
	\expec{u \sim \mathcal{U}}{
		p_{u+} \cdot 
		\expec{p \sim P_{u} \left| p > 0 \right.}
		{\one\left(p \in \bad_{u}\right) \cdot p}}
	\leq O\left(\sqrt{\delta}\right),
	\]
	which together with Equation~\ref{eq:pBad}
	and the non-negativity of $p_{u+}^{-1}$ and $p$ gives
	\[
	\expec{u \sim \mathcal{U}}{
		p_{u+} \cdot 
		\expec{p \sim P_{u} \left| p > 0 \right.}
		{\one\left(p \in \bad_{u}\right) \cdot \abs{p_{u+}^{-1} - p}}}
	\leq O\left(\sqrt{\delta}\right).
	\]
	Combining these two summations, and invoking the triangle
	inequality at the start then gives the bound.
\end{proof}

	\bibliographystyle{alpha}
	\bibliography{ref1}

\begin{appendix}

\section{Deferred Proofs}
\label{sec:deferred}
We now provide detailed proofs of the combinatorial
facts about random subsets of edges that are discussed
briefly in Section~\ref{sec:overview}.



\begin{proof}
	(of Lemma~\ref{lem:subsetSampled})
	
	This probability is obtained by dividing the number
	of subsets of $\amount$ edges that contain the $n - 1$ edges in $T$,
	against the number of subsets of $\amount$ edges from $m$,
	which using ${a \choose b} = \frac{(a)_{b}}{(b)_{b}}$,
	gives:
	\begin{align}
	{{m - n + 1} \choose {\amount - n + 1}} / {m \choose \amount}
	= \frac{\left(m - n + 1\right)_{\amount - n + 1} \left(\amount\right)_{\amount} }
	{\left(m\right)_{\amount} \left(\amount - n + 1\right)_{\amount - n + 1}},
	\end{align}
	and the two terms can be simplified by the rule
	$(a)_{b} / (a - k)_{b - k} = (a)_k$.
	
	Furthermore,
	
	\[(a)_b = a^b\left(1 - \frac{1}{a}\right) \cdots \left(1 - \frac{b-1}{a}\right) = a^b \exp{\left(\sum_{i=1}^{b-1} \ln{\left(1 - \frac{i}{a}\right)}\right)}\]
	
	We then use the Taylor expansion of $\ln(1-x) = -\sum_{i = 1}^{\infty} \frac{x^i}{i}$ to obtain
	
	\[ = a^b \exp{\left(-\frac{\sum_{i=1}^{b-1} i}{a} -\frac{\sum_{i=1}^{b-1} i^2}{2a^2} - \frac{\sum_{i=1}^{b-1} i^3}{3a^3} - .... \right)} = a^b \exp{\left(-\frac{b^2}{2a} - O\left(\frac{b^3}{a^2}\right)\right)} \]
	
	Substituting into $\frac{(\amount)_{n-1}}{(m)_{n-1}}$ gives
	
	\[ p^{n-1}\exp\left(-\frac{n^2}{2\amount} +\frac{n^2}{2m} - O\left(\frac{n^3}{\amount^2}\right) + O\left(\frac{n^3}{m^2}\right)\right) = p^{n-1}\exp\left(-\frac{n^2}{2\amount} - O\left(\frac{n^3}{\amount^2}\right) \right)\]
	where $\frac{n^2}{2m}$ is absorbed by $O\left(\frac{n^3}{\amount^2}\right)$ because $m \geq \frac{\amount^2}{n}$ was assumed.
	
\end{proof}

\begin{proof}(Of Lemma~\ref{lem:PairwiseProb})
	
As before, we have
\[
\prob{H}{T_1,T_2\in H}
=  p^{\abs{T_1 \cup T_2}}\exp\left(-\frac{\abs{T_1 \cup T_2}^2}{2\amount} - O\left(\frac{n^3}{\amount^2}\right) \right)
\]
Invoking the identity:
\[
\abs{T_1 \cup T_2}  = 2n - 2 - \abs{T_1 \cap T_2}
\]
gives
\[
\prob{H}{T_1,T_2\in H}
=p^{2n-2}  p^{-k}
\exp\left(-\frac{(2n-2 - k)^2}{2\amount}
- O\left(\frac{n^3}{\amount^2}\right) \right).
\]
Using the algebraic identity
\[
(2n - 2 - k)^2 \geq 4n^2 + 4nk
\]
and dropping the trailing (negative) lower order term gives:
\[
\prob{H}{T_1,T_2\in H}
\leq p^{2n-2} \cdot
p^{-k}\exp\left(-\frac{4n^2}{2\amount}
	+ \frac{4n k}{2\amount}\right),
\]
upon which we can pull out the $\frac{4n^2}{2s}$ term in the exponential to
get a term that only depends $k$.
Grouping the $p^{-k}$ term together with the
$\exp(\frac{2n}{s})^k$ term, and using the fact
that $\exp(t) \leq 1 + 2t$ when $t \leq 0.1$ then gives the result.
\end{proof}

\begin{proof}(of Lemma~\ref{lem:IntersectionPairs})
We first separate the summation in terms of all possible forests $F$ of size $k$ that any pair of trees could intersect in

\[
\sum_{\substack{T_1, T_2 \\ \abs{T_1 \cap T_2} = k}}
\ww\left( T_1 \right) \cdot \ww\left( T_2 \right)
=
\sum_{\substack{F \subseteq E \\ |F| = k}}\sum_{\substack{T_1, T_2 \\ F = T_1 \cap T_2}}
\ww\left( T_1 \right) \cdot \ww\left( T_2 \right)
\]

We then consider the inner summation, the number of pairs of trees $T_1,T_2$ with $T_1 \cap T_2 = F$ for some particular set $F$ of size $k$. This is upper bounded by the square of the number of trees containing $F$:

\[\sum_{\substack{T_1, T_2 \\ F = T_1 \cap T_2}}
\ww\left( T_1 \right) \cdot \ww\left( T_2 \right)
\leq
\sum_{\substack{T_1, T_2 \\ F \subseteq T_1 \cap T_2}}
\ww\left( T_1 \right) \cdot \ww\left( T_2 \right)
=
\left( \sum_{T: F \subseteq T} \ww\left( T \right) \right)^2
\] 

This allow us to directly incorporate the bounds from Lemma~\ref{lem:SubsetTree},
and in turn the assumption of
$\ttau_{e} \leq \frac{n}{m}$ to obtain the bound:
\[
\sum_{\substack{T_1, T_2 \\ F = T_1 \cap T_2}}
\ww\left( T_1 \right) \cdot \ww\left( T_2 \right)
\leq \left( \totaltrees{G} \left( \frac{n}{m} \right)^{k} \right)^2.
\]

Furthermore, the number of possible subsets of $F$
is bounded by ${m \choose k}$, which can be bounded
even more crudely by $\frac{m^k}{k!}$.
Incorporating this then gives:
\[
\sum_{\substack{T_1, T_2 \\ \abs{T_1 \cap T_2} = k}}
\ww\left( T_1 \right) \cdot \ww\left( T_2 \right)
\leq \frac{m^k}{k!} \cdot \left( \totaltrees{G} \left( \frac{n}{m} \right)^{k} \right)^2
= \totaltrees{G}^2 \cdot \frac{1}{k!} \left( \frac{n^2}{m} \right)^{k}.
\]
\end{proof}
\end{appendix}

\end{document}